\newcommand{\NN}{{\mathbb N}}
\newcommand{\ZZ}{{\mathbb Z}}
\newcommand{\bigo}{\ensuremath{\mathcal{O}}}
\newcommand{\dpw} {\text{d-pw}}
\newcommand{\ideg}{\text{indegree}}
\newcommand{\odeg}{\text{outdegree}}
\newcommand{\sproof}{\noindent{\bf Proof}\quad}
\newcommand{\eproof}{\hfill $\Box$}
\newcommand{\FI}{{\it first}}
\newcommand{\LA}{{\it last}}
\newcommand{\gansfuss}[1]{\mbox{``}{#1}\mbox{''}}
\newcommand{\pw} {\text{pw}}
\newcommand{\g} {{\it g}}
\newcommand{\q} {{\it q}}
\newcommand{\s} {{\it s}}
\newcommand{\un} {{\it und}}
\newcommand{\ac}{{\it active}}
\newcommand{\co} {\text{co-}}
\newcommand{\con} {\text{con-}}
\newcommand{\types}{{\it types}}
\newcommand{\ACT}{{\it active}}
\newtheorem{theorem}{Theorem}[section]
\newtheorem{example}[theorem]{Example}
\newtheorem{lemma}[theorem]{Lemma}
\newtheorem{definition}[theorem]{Definition}
\newtheorem{proposition}[theorem]{Proposition}
\newtheorem{observation}[theorem]{Observation}
\newtheorem{corollary}[theorem]{Corollary}
\newtheorem{claim}[theorem]{Claim}
\newtheorem{remark}[theorem]{Remark}
\newenvironment{proof}{\noindent{\bf Proof~}}{\null\hfill $\Box$\par\medskip}
\definecolor{light-gray}{gray}{0.5}
\begin{document}

\title{Characterizations and Directed Path-Width of Sequence Digraphs\thanks{A short version of 
this paper appeared in Proceedings of the {\em 12th Annual International Conference on Combinatorial Optimization and Applications} (COCOA 2018) \cite{GRR18}.}}

\author[1]{Frank Gurski}
\author[1]{Carolin Rehs}

\affil[1]{\small University of  D\"usseldorf,
Institute of Computer Science, Algorithmics for Hard Problems Group,\newline 
40225 D\"usseldorf, Germany}

\author[2]{Jochen Rethmann}

\affil[1]{\small Niederrhein University of Applied Sciences,
Faculty of Electrical Engineering and Computer Science,
47805 Krefeld, Germany}

\maketitle

\begin{abstract}
Computing the directed path-width of a directed graph is an NP-hard problem. 
Even for  digraphs of maximum semi-degree 3 the problem remains hard.  
We propose a decomposition of an input digraph $G=(V,A)$ by a number $k$ of sequences with entries 
from $V$, such that $(u,v)\in A$  if and only if in one of the sequences there is an 
occurrence of $u$ appearing before an occurrence of $v$. We present several graph theoretical
properties of these digraphs. Among these we give forbidden subdigraphs of digraphs which 
can be defined by $k=1$  sequence, which is a subclass of semicomplete digraphs. 
Given the decomposition of digraph $G$, we show an algorithm which computes the 
directed path-width of $G$ in time $\bigo(k\cdot (1+N)^k)$, where $N$ denotes the
maximum sequence length. This leads to an XP-algorithm w.r.t.\ $k$ for 
the directed path-width problem. Our result improves the algorithms of Kitsunai et al. 
for digraphs of large directed path-width which can be decomposed by a small number of sequence.

\bigskip
\noindent
{\bf Keywords:} 
digraphs; directed path-width; transitive tournaments; XP-algorithm
\end{abstract}

\section{Introduction}

Let $Q=\{q_1,\ldots,q_k\}$ be a set of $k$ sequences $q_i=(b_{i,1},\ldots,b_{i,n_i})$, $1\leq i \leq k$.
All $n_1+\ldots+n_k$ items $b_{i,j}$ are pairwise distinct. 
Further there is a function $t$ which assigns every item $b_{i,j}$ with a type $t(b_{i,j})$.
The sequence digraph $\g(Q) = (V,A)$ for a set $Q=\{q_1, \ldots ,q_k\}$
has a vertex for every type and an arc  $(u,v) \in A$ if and only if there is some sequence 
$q_i$ in $Q$ where an item of type $u$ is on the left of some item of type $v$.\footnote{A related but different
concept for undirected graphs is the notation of word-representable graphs. 
A graph $G =(V, E)$ is {\em word-representable} if there exists a word $w$
over the alphabet $V$ such that letters $x$ and $y$ alternate in $w$ if and
only if $\{x,y\}\in E$., see \cite{KP18} for a survey.}
%
The contributions of this paper concern graph theoretic properties of sequence digraphs
and an algorithm to compute from a given some set $Q$ on $k$ sequences the 
directed path-width of $\g(Q)$. 

This paper is organized as follows. 
In Section \ref{sec-pre}
we give preliminaries for graphs and digraphs. 
In Section \ref{sequencegraphsandsys}
we show how to define digraphs form sets of sequences, and vice versa.
Further we give methods in order to compute the sequence digraph and also its complement digraph.
In Section \ref{properties} we consider graph theoretic properties of sequence digraphs.
Therefore in Section \ref{classes} we introduce the class $S_{k,\ell}$
as the set of all sequence digraphs defined by sets $Q$ 
on at most $k$ sequences that together contain at most $\ell$ items of each type.
We show inclusions of these classes and necessary conditions 
for digraphs to be in $S_{k,\ell}$.
In Section \ref{sec-char-1s}  we give finite sets of  forbidden induced subgraphs
for all classes  $S_{1,1}$ and  $S_{1,2}$.
It turns out that $S_{1,1}$ is equal to the well known class of transitive
tournaments. Since only the first and the last item of each type in every $q_i\in Q$ are important
for the arcs in the  corresponding digraph all classes $S_{1,\ell}$, $\ell\geq 2$ are equal and
can be characterized by one set of four forbidden induced subdigraphs.
These characterizations lead to polynomial time  recognition algorithms for 
the corresponding graph classes. Furthermore we give characterizations
in terms of special tournaments and conditions for the complement digraph.
In Section \ref{sec-co} we consider the relations of sequence digraphs to directed co-graphs 
(defined in  \cite{BGR97})
and  their subclass oriented threshold graphs (defined in \cite{Boe15}). 
In Section \ref{sec-converse} we consider the closure of the classes $S_{k,\ell}$
with respect to the operations taking the converse digraph and taking the 
complement digraph.
In Section \ref{sec-sim} we characterize sequences whose defined
graphs can be obtained by the union of the graphs for the subsequences
which consider only  the first or last item of each type.

In Section \ref{sec-problems} we 
consider the directed path-width problem on sequence digraphs.
We show that for digraphs defined by $k=1$ sequence the directed path-width 
can be computed in polynomial time. Further we show that for sets $Q$
of sequences of bounded length, of bounded distribution of the items of every type 
onto the sequences, or bounded number of items of every type
computing the directed path-width of $\g(Q)$ is NP-hard.
We also introduce a method which computes 
from a given set $Q$ on $k$ sequences the 
directed path-width of $\g(Q)$, as well as a directed path-decomposition 
can be computed in time 
$\bigo(k\cdot(1+\max_{1\leq i \leq k}n_i)^k)$.
The main idea is to discover an optimal directed path-decomposition 
by scanning the $k$ sequences left-to-right and keeping in a state 
the numbers of scanned items of every sequence and the number of active types.

From a parameterized point of view our method leads to an XP-algorithm w.r.t.~parameter $k$.
While the existence of FPT-algorithms for computing directed path-width is open up to now,
there are  XP-algorithms for the directed path-width problem for some digraph $G=(V,A)$. 
The directed path-width can be computed in time 
$\bigo(\nicefrac{|A|\cdot |V|^{2\dpw(G)}}{(\dpw(G)-1)!})$ 
by \cite{KKKTT16} and in time 
$\bigo(\dpw(G)\cdot|A|\cdot |V|^{2\dpw(G)})$ 
by \cite{Nag12}. Further in \cite{KKT15} it is shown how to decide whether the directed
path-width of an $\ell$-semicomplete digraph is at most $w$ 
in time $(\ell+2w+1)^{2w}\cdot |V|^{\bigo(1)}$.
All these algorithms are exponential in the directed path-width of the
input digraph while our algorithm is exponential within the number of sequences.
Thus our result improves theses algorithms for digraphs of large directed 
path-width which can be decomposed by a small number of sequences (see Table \ref{F-co2x} for examples).
Furthermore the directed path-width can be computed in time 
$3^{\tau(\un(G))} \cdot |V|^{\bigo(1)}$, where  $\tau(\un(G))$ denotes the vertex
cover number of the underlying undirected graph of $G$, by \cite{Kob15}.
Thus our result improves this algorithm for digraphs of large $\tau(\un(G))$
which can be decomposed by a small number of sequences (see Table \ref{F-co2x} for examples).

\begin{table}[!ht]
$$
\begin{tabular}{|l|cccc|}
\hline
digraphs $G$    &  ~$\dpw(G)$~ &$\tau(\un(G))$ & ~~~$k$~~~  &  ~~~$\ell$~~~ \\
\hline
transitive tournaments  & 0 &  $n-1$ &  1 & 1 \\
union of $k'$ transitive tournaments   & 0 & $(\sum_{i=1}^{k'}n_i) -k'$ &  $k'$ & 1 \\
bidirectional complete digraphs $\overleftrightarrow{K_n}$ & $n-1$ & $n-1$ & 1 & 2 \\ 
semicomplete $\{\overrightarrow{C_3},D_0,D_4\}$-free & $[0,n-1]$ & $n-1$  & 1 &2\\
semicomplete $\{\co(2\overrightarrow{P_2}),\overrightarrow{C_3},D_4\}$-free & $[0,n-1]$ & $n-1$  & 1 &2\\
union of $k'$ semicomplete $\{\co(2\overrightarrow{P_2}),\overrightarrow{C_3},D_4\}$-free &$[0,n-1]$ &  $(\sum_{i=1}^{k'}n_i) -k'$& $k'$ & $2k'$\\
\hline
\end{tabular}
$$
\caption{Values of parameters within XP-algorithms for directed path-width.}
\label{F-co2x}
\end{table}

Since every single sequence $q_i$ defines 
a semicomplete $\{\co(2\overrightarrow{P_2}),\overrightarrow{C_3},D_4\}$-free digraph (cf.
Table \ref{F-co2} for the digraphs), we 
consider digraphs which can be obtained by the union of $k$ special semicomplete digraphs
which confirms the conjecture of \cite{KKT15} that semicompleteness
is a useful restriction when considering digraphs.\footnote{When considering the directed path-width of almost semicomplete digraphs
in \cite{KKT15} the class
of semicomplete digraphs was suggested to be \gansfuss{a promising stage for pursuing digraph analogues of
the splendid outcomes, direct and indirect, from the Graph Minors project}.}
In Section \ref{sec-concl}
we give conclusions and point out open questions.

\section{Preliminaries}\label{sec-pre}

We use the notations of Bang-Jensen and Gutin \cite{BG09} for graphs and digraphs.

\paragraph{Undirected graphs}
We work with finite undirected {\em graphs} $G=(V,E)$,
where $V$ is a finite set of {\em vertices} 
and $E \subseteq \{ \{u,v\} \mid u,v \in
V,~u \not= v\}$ is a finite set of {\em edges}.
For a vertex $v\in V$ we denote by $N_G(v)$ 
the set of all vertices which are adjacent to $v$ in $G$, 
i.e.~$N_G(v)=\{w\in V~|~\{v,w\}\in E\}$. 
Set $N_G(v)$ is called the set of all {\em neighbors} of $v$ 
in $G$ or {\em neighborhood} of $v$ in $G$.  
The {\em degree} of a vertex $v\in V$,  denoted by $\deg_G(v)$, 
is the number of neighbors of vertex $v$ in $G$, i.e.~$\deg_G(v)=|N_G(v)|$. 
The maximum vertex degree is $\Delta(G)=\max_{v\in V} \deg_G(v)$.
A graph $G'=(V',E')$ is a {\em subgraph} of graph $G=(V,E)$ if $V'\subseteq V$ 
and $E'\subseteq E$.  If every edge of $E$ with both end vertices in $V'$  is in
$E'$, we say that $G'$ is an {\em induced subgraph} of digraph $G$ and 
we write $G'=G[V']$.
For some undirected graph $G=(V,E)$ its complement graph is defined by
$\co G=(V,\{\{u,v\}~|~\{u,v\}\not\in E, u,v\in V, u\neq v\})$.

\paragraph{Special Undirected Graphs}
By $P_n=(\{v_1,\ldots,v_n\},\{\{v_1,v_2\},\ldots, \{v_{n-1},v_n\}\})$, $n \ge 2$,
we denote a path on $n$ vertices and by  $C_n=(\{v_1,\ldots,v_n\},\{\{v_1,v_2\},\ldots, \{v_{n-1},v_n\},\{v_n,v_1\}\})$,
$n \ge 3$, we denote a cycle on $n$ vertices.
Further by $K_n=(\{v_1,\ldots,v_n\},\{\{v_i,v_j\}~|~1\leq i<j\leq n\})$,
$n \ge 1$, we denote a complete graph on $n$ vertices and by $K_{n,m}=(\{v_1,\ldots,v_n,w_1,\ldots,w_m\},
\{\{v_i,w_j\}~|~ 1\leq i\leq n, 1\leq j \leq m\})$  a complete bipartite graph on $n+m$ vertices.

\paragraph{Directed graphs}
A {\em directed graph} or {\em digraph} is a pair  $D=(V,A)$, where $V$ is 
a finite set of {\em vertices} and  $A\subseteq \{(u,v) \mid u,v \in
V,~u \not= v\}$ is a finite set of ordered pairs of distinct\footnote{Thus we do not consider 
directed graphs with loops.} vertices called {\em arcs}. 
For a vertex $v\in V$, the sets $N_D^+(v)=\{u\in V~|~ (v,u)\in A\}$ and 
$N_D^-(v)=\{u\in V~|~ (u,v)\in A\}$ are called the {\em set of all out-neighbours}
and the {\em set of all in-neighbours} of $v$. 
The  {\em outdegree} of $v$, $\odeg_D(v)$ for short, is the number
of out-neighbours of $v$ and the  {\em indegree} of $v$, $\ideg_D(v)$ for short, 
is the number of in-neighbours of $v$ in $D$.
The {\em maximum out-degree} is $\Delta^+(D)=\max_{v\in V} \odeg_D(v)$ and
the {\em maximum in-degree} is $\Delta^-(D)=\max_{v\in V} \ideg_D(v)$.
The {\em maximum vertex degree} is $\Delta(G)=\max_{v\in V} \odeg_G(v)+\ideg_G(v)$
and the  {\em maximum semi-degree} is $\Delta^0(G)=\max\{\Delta^-(D),\Delta^+(D)\}$.

A vertex  $v\in V$ is  {\em out-dominating (in-dominated)} if
it is adjacent to every other vertex in $V$ and is a source (a sink, respectively).
A digraph $D'=(V',A')$ is a {\em subdigraph} of digraph $D=(V,A)$ if $V'\subseteq V$ 
and $A'\subseteq A$.  If every arc of $A$ with both end vertices in $V'$  is in
$A'$, we say that $D'$ is an {\em induced subdigraph} of digraph $D$ and we 
write $D'=D[V']$.
For some digraph $D=(V,A)$ its {\em complement digraph} is defined by
$\co D =(V,\{(u,v)~|~(u,v)\not\in A, u,v\in V, u\neq v\})$
and its {\em converse digraph} is defined by
$\con D=(V,\{(u,v)~|~(v,u)\in A, u,v\in V, u\neq v\})$.

%

Let $G=(V,A)$ be a digraph.

\begin{itemize}
\item $G$ is {\em edgeless} if for all $u,v \in V$, $u \neq v$, 
none of the two pairs $(u,v)$ and $(v,u)$ belongs to $A$.

\item $G$ is a {\em tournament} if for all $u,v \in V$, $u \neq v$, 
exactly one of the two pairs $(u,v)$ and $(v,u)$ belongs to $A$.

\item $G$ is {\em semicomplete} if for all $u,v \in V$, $u \neq v$, 
at least one of the two pairs $(u,v)$ and $(v,u)$ belongs to $A$.

\item $G$ is {\em (bidirectional) complete} if for all $u,v \in V$, $u \neq v$, 
both of the two pairs $(u,v)$ and $(v,u)$ belong to $A$.
\end{itemize}

\paragraph{Omitting the directions} 
For some given digraph $D=(V,A)$, we define
its underlying undirected graph by ignoring the directions of the 
edges, i.e.~$\un(D)=(V,\{\{u,v\}~|~(u,v)\in A, u,v\in V\})$.

\paragraph{Orientations} 
There are several ways to define a digraph $D=(V,A)$ from an undirected 
graph $G=(V,E)$, see \cite{BG09}.
If we replace every edge $\{u,v\}$ of $G$ by 
\begin{itemize}
\item
one of the arcs $(u,v)$ and $(v,u)$, we denote $D$ as an {\em orientation} of $G$.
Every digraph $D$  which can be obtained by an orientation of some undirected
graph $G$ is called an {\em oriented graph}.

\item
one or both of the arcs $(u,v)$ and $(v,u)$, we denote $D$ as a {\em biorientation} of $G$.
Every digraph $D$  which can be obtained by a biorientation of some undirected
graph $G$ is called a {\em bioriented graph}.

\item
both arcs $(u,v)$ and $(v,u)$, we denote $D$ as a {\em complete biorientation} of $G$.
Since in this case $D$ is well defined by $G$ we also denote
it by $\overleftrightarrow{G}$.
Every digraph $D$  which can be obtained by a complete biorientation of some undirected
graph $G$ is called a {\em complete bioriented graph}. 
\end{itemize}

\paragraph{Special Directed Graphs}We recall some special directed graphs.
We denote by $\overleftrightarrow{K_n}=(\{v_1,\ldots,v_n\},\{ (v_i,v_j)~|~1\leq i\neq j\leq n\})$,
$n \ge 1$ a bidirectional complete digraph on $n$ vertices.
By $\overrightarrow{P_n}=(\{v_1,\ldots,v_n\},\{ (v_1,v_2),\ldots, (v_{n-1},v_n)\})$, $n \ge 2$
we denote a directed path on $n$ vertices and by  
$\overrightarrow{C_n}=(\{v_1,\ldots,v_n\},\{(v_1,v_2),\ldots, (v_{n-1},v_n),(v_n,v_1)\})$, $n \ge 2$
we denote a directed cycle on $n$ vertices.

The {\em $k$-power graph} $G^k$ of a digraph $G$ is a graph with the 
same vertex set as $G$. There is an arc $(u,v)$ in $G^k$ if and only if there 
is a directed path from $u$ to $v$ of length at most $k$ in $G$. 

An {\em oriented forest (tree)} is the orientation of a forest (tree). A digraph is
an {\em out-tree (in-tree)} if it is an oriented tree in which there is
exactly one vertex of indegree (outdegree) zero. 

A {\em directed acyclic digraph (DAG for short)} is a digraph without any $\overrightarrow{C_n}$,
$n\geq 2$ as subdigraph.

\section{Sequence Digraphs and Sequence Systems}\label{sequencegraphsandsys}

\subsection{From Sequences to Digraphs}\label{SCgb} 

Let  $Q=\{q_1,\ldots,q_k\}$ be a set of $k$ sequences. 
Every sequence $q_i=(b_{i,1},\ldots,b_{i,n_i})$ consists of a number $n_i$ of items, such
that all $n=\sum_{i=1}^k n_i$ items are pairwise distinct. 
Further there is a function $t$ which assigns every item $b_{i,j}$ with a type $t(b_{i,j})$.
The set of all types of the items in some sequence $q_i$ is denoted by
$\types(q_i)= \{ t(b) ~|~ b \in q_i \}$.
For a set of sequences $Q = \{q_1, \ldots, q_k\}$ we denote
$\types(Q) = \types(q_1) \cup \cdots \cup \types(q_k)$.
For some sequence  $q_\ell=(b_{\ell,1}, \ldots,b_{\ell,n_{\ell}})$
we say item $b_{\ell,i}$ is {\it on the
left of} item $b_{\ell,j}$ in sequence $q_{\ell}$ if $i < j$. Item
$b_{\ell,i}$ is on the {\it position} $i$ in sequence $q_{\ell}$, since there are $i-1$
items on the left of $b_{\ell,i}$ in sequence $q_{\ell}$.  

In order to {\em insert} a new item $b$ on a position $j$ in
sequence $q_i$ we first move all items on positions $j'\geq j$ to position $j'+1$ starting at the
rightmost position $n_i$ and
then we insert $b$ at position $j$. In order to {\em remove}
an existing item $b$ at a position $j$ in sequence $q_i$ we move all items from
positions $j'\geq j+1$ to position $j'-1$ starting at position $j+1$.

For hardness results  in Section \ref{sec-h} we 
consider the distribution of the items of a type $t$
onto the sequences by
\[
   d_Q(t) = |\{ q\in Q \mid t \in \types(q) \}|
     \qquad \mbox{ and } \qquad
     d_Q = \max_{t \in \types(Q)} d_Q(t).
\]
For the number of items for type $t$ within the sequences we define
\[
   c_Q(t) = \sum_{q\in Q} |\{ b \in q \mid t(b)=t\}|
     \qquad \mbox{ and } \qquad
     c_Q = \max_{t \in \types(Q)} c_Q(t).
\]
Obviously it holds $d_Q\leq k$ and $1\leq d_Q\leq c_Q\leq n$.

\begin{definition}[Sequence Digraph]
The {\em sequence digraph} $\g(Q) = (V,A)$ for a set $Q=\{q_1, \ldots ,q_k\}$
has a vertex for every type, i.e.~$V=\types(Q)$ and an arc  $(u,v) \in A$ 
if and only if there is some sequence $q_i$ in $Q$
where an item of type $u$ is on the left of an item of type $v$.
More formally, there is an arc $(u,v) \in A$ if and only if there is some sequence
$q_i$ in $Q$, such that there are two items $b_{i,j}$ and $b_{i,j'}$ such that
\begin{enumerate}[(1)]
\item $1 \leq j < j' \leq n_i$,

\item $t(b_{i,j})=u$,

\item $t(b_{i,j'})=v$, and 

\item $u\neq v$.
\end{enumerate}
\end{definition}


Sequence digraphs have successfully been applied in order to model the stacking process of bins from
conveyor belts onto pallets with respect to customer orders, which is an important task in
palletizing systems used in centralized distribution centers \cite{GRW16b}. 
In this paper we show graph theoretic properties of sequence digraphs and their relation
to special digraph classes. 

In our examples we will use type identifications instead of
item identifications to represent a sequence $q_i\in Q$. For $r$ not
necessarily distinct types $t_1, \ldots, t_r$ let $[t_1,\ldots,t_r]$
denote some sequence $q_i=(b_{i,1}, \ldots, b_{i,r})$ of $r$ pairwise distinct items,
such that $t(b_{i,j}) = t_j$ for $j=1,\ldots,r$.
We use this notation for sets of sequences as well.

\begin{example}[Sequence Digraph]\label{EX6}
Figure \ref{F04} shows the sequence digraph $\g(Q)$ for $Q = \{q_1, q_2, q_3\}$ 
with sequences $q_1 = [a,a,d,e,d]$, $q_2 = [c,b,b,d]$, and 
$q_3 = [c,c,d,e,d]$.
\end{example}

\begin{figure}[hbtp]
\centering
\parbox[b]{.50\textwidth}{
\centerline{\epsfxsize=40mm \epsfbox{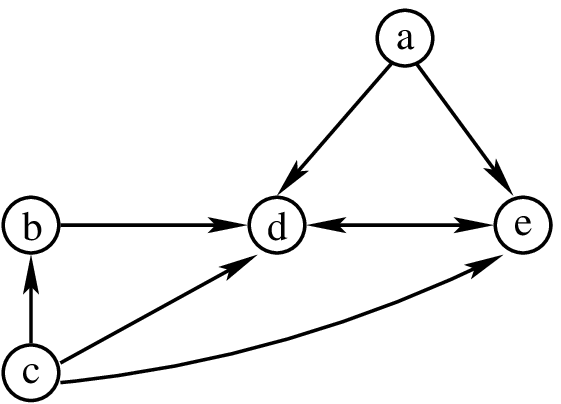}}
\caption{Sequence digraph $\g(Q)$ of Example \ref{EX6}.}
\label{F04}
}
\hfill
\parbox[b]{.48\textwidth}{
\centerline{\epsfxsize=45mm \epsfbox{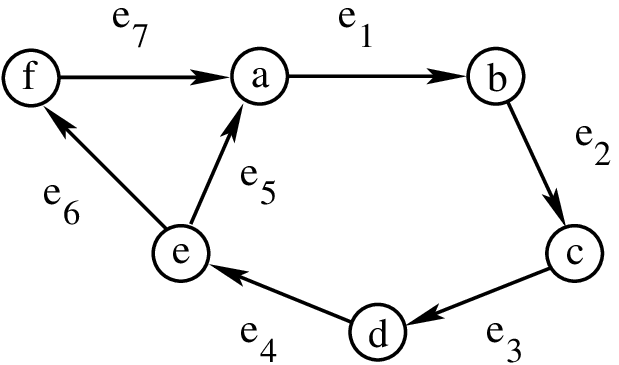}}
\caption{Digraph $G$ of Example \ref{EX3-new}.}
\label{F-ex-d}
}
\end{figure}

In Example \ref{EX6} all sequences of $Q$ contain consecutive items
of the same type. This is obviously not necessary to obtain digraph $\g(Q)$.
%
Let $C(q_i)$ be the subsequence of $q_i$
which is obtained from $q_i$ by removing all but one of consecutive items
of the same type for each type
and $C(Q)=\{C(q_1),\ldots,C(q_k)\}$.

\begin{observation}\label{main-con} 
Let $Q=\{q_1,\ldots,q_k\}$ be a set of $k$ sequences, 
then $\g(Q)=\g(C(Q))$.
\end{observation}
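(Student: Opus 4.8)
The plan is to show the two digraphs $\g(Q)$ and $\g(C(Q))$ have the same vertex set and the same arc set. For the vertex sets, note that removing consecutive duplicates of a type never eliminates that type entirely from a sequence: if an item of type $t$ occurs in $q_i$, at least one item of type $t$ survives in $C(q_i)$. Hence $\types(C(q_i)) = \types(q_i)$ for every $i$, so $\types(C(Q)) = \types(Q)$, and both digraphs have vertex set $V = \types(Q)$.

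For the arc sets, I would argue both inclusions. Since $C(q_i)$ is a subsequence of $q_i$, any pair of items appearing in the order $u$-before-$v$ in $C(q_i)$ also appears in that order in $q_i$; together with the fact that $\types(C(q_i)) \subseteq \types(q_i)$, this gives $A(\g(C(Q))) \subseteq A(\g(Q))$. For the reverse inclusion, suppose $(u,v) \in A(\g(Q))$ with $u \neq v$, witnessed by items $b_{i,j}$ of type $u$ and $b_{i,j'}$ of type $v$ with $j < j'$ in some $q_i$. Since $u \neq v$, the block of consecutive $u$'s containing $b_{i,j}$ and the block of consecutive $v$'s containing $b_{i,j'}$ are distinct blocks, and the $u$-block lies entirely to the left of the $v$-block. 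The contraction $C$ keeps (at least) one representative of each maximal block, so there is a surviving item of type $u$ to the left of a surviving item of type $v$ in $C(q_i)$, whence $(u,v) \in A(\g(C(Q)))$. Combining the two inclusions yields $A(\g(Q)) = A(\g(C(Q)))$, and therefore $\g(Q) = \g(C(Q))$.

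The argument is almost entirely bookkeeping; the only point requiring a moment's care is the case $u = v$, i.e. making sure that when we say two items of types $u$ and $v$ with $u \neq v$ witness an arc, the two items lie in different maximal blocks of equal consecutive types, so that contraction cannot destroy their relative order. Once that observation is pinned down, both inclusions follow directly from the definition of subsequence and the definition of $\g(\cdot)$, so I do not expect any real obstacle here.
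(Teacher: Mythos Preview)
Your argument is correct. The paper states this result as an \emph{Observation} without any proof; the surrounding text merely remarks that consecutive items of the same type are ``obviously not necessary'' for the digraph. Your write-up is therefore strictly more detailed than what the paper supplies, and it follows exactly the natural route one would take to justify the claim: match vertex sets via preservation of types, and match arc sets by the two inclusions. The one step you assert without justification---that the maximal $u$-block containing position $j$ lies entirely to the left of the maximal $v$-block containing position $j'$---is immediate once you note that maximal same-type blocks are disjoint intervals and that $j<j'$ forces the interval ordering; you may want to add a clause to that effect, but it is not a gap.
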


Next we give methods in order to compute the sequence digraph and also its complement digraph.
Therefore we define the position of the first item in some sequence $q_i\in Q$ of 
some type $t\in\types(Q)$  by
$\FI(q_i,t)$ and the position of the last item of type $t$ in
sequence $q_i$  by $\LA(q_i,t)$. For technical reasons, if 
there is no item for type $t$ contained in sequence $q_i$, then we define
$\FI(q_i,t) =n_i+1$, and $\LA(q_i,t) = 0$.

\begin{lemma}\label{fiandla-1}
Let  $Q=\{q_1\}$ be some set of one sequence, $\g(Q)=(V,A)$ the defined sequence digraph,
$\co(\g(Q))=(V,A^c)$ its complement digraph, and $u\neq v$ two vertices
of $V$.

\begin{enumerate}
\item \label{fiandla-1a}
There is an arc $(u,v)\in A$, if and only if 
$\FI(q_1,u)<\LA(q_1,v)$.


\item \label{fiandla-1b}
There is an arc $(u,v)\in A^c$, if and only if 
$\LA(q_1,v)<\FI(q_1,u)$.

\item \label{fiandla-1c}
If $(u,v)\in A^c$, then  $(v,u)\in A$.

\item \label{fiandla-1d}
There is an arc $(u,v)\in A$ and an arc $(v,u)\in A^c$, if and only if 
$\LA(q_1,u)<\FI(q_1,v)$.

\end{enumerate}
\end{lemma}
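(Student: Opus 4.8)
The proof proceeds by establishing the four parts in order, with each part building on the previous ones. The whole lemma is really just a careful unwinding of the definition of $\g(Q)$ specialized to the single-sequence case $Q=\{q_1\}$, together with the boundary conventions $\FI(q_1,t)=n_1+1$ and $\LA(q_1,t)=0$ when $t$ does not occur in $q_1$. I would state at the outset that when $t\in\types(q_1)$ we have $1\le \FI(q_1,t)\le\LA(q_1,t)\le n_1$, and that the convention is chosen precisely so that the inequalities below behave correctly in the degenerate cases.

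\textbf{Part \ref{fiandla-1a}.} The plan is to argue both directions directly from the definition. For the forward direction, if $(u,v)\in A$ then there exist items $b_{1,j}$ of type $u$ and $b_{1,j'}$ of type $v$ with $j<j'$; since $\FI(q_1,u)\le j$ and $\LA(q_1,v)\ge j'$ we get $\FI(q_1,u)\le j<j'\le\LA(q_1,v)$. Conversely, if $\FI(q_1,u)<\LA(q_1,v)$, then in particular both quantities lie strictly between $0$ and $n_1+1$, so both $u$ and $v$ actually occur in $q_1$ (otherwise one side would be $0$ or $n_1+1$ and the strict inequality would fail, using $u\neq v$); then the item at position $\FI(q_1,u)$ has type $u$, the item at position $\LA(q_1,v)$ has type $v$, and $\FI(q_1,u)<\LA(q_1,v)$ witnesses the arc $(u,v)\in A$. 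The one subtlety to check is the degenerate boundary behaviour, which is exactly why this should be spelled out rather than waved through.

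\textbf{Parts \ref{fiandla-1b}, \ref{fiandla-1c}, \ref{fiandla-1d}.} For \ref{fiandla-1b}, note $(u,v)\in A^c$ iff $(u,v)\notin A$ iff, by part \ref{fiandla-1a}, $\neg(\FI(q_1,u)<\LA(q_1,v))$, i.e.\ $\LA(q_1,v)\le\FI(q_1,u)$. It remains to upgrade $\le$ to $<$: since $u\neq v$, no single position can be simultaneously $\FI(q_1,u)$ and $\LA(q_1,v)$ with both $u,v$ present, and in the degenerate cases the boundary values $0$ and $n_1+1$ again force strict inequality — so $\LA(q_1,v)<\FI(q_1,u)$. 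For \ref{fiandla-1c}, suppose $(u,v)\in A^c$; by \ref{fiandla-1b}, $\LA(q_1,v)<\FI(q_1,u)$, and since $\FI(q_1,v)\le\LA(q_1,v)$ and $\FI(q_1,u)\le\LA(q_1,u)$ we chain to get $\FI(q_1,v)\le\LA(q_1,v)<\FI(q_1,u)\le\LA(q_1,u)$, hence $\FI(q_1,v)<\LA(q_1,u)$, which by \ref{fiandla-1a} gives $(v,u)\in A$. Finally for \ref{fiandla-1d}, by \ref{fiandla-1c} the condition ``$(u,v)\in A$ and $(v,u)\in A^c$'' is equivalent to just ``$(v,u)\in A^c$'' together with $(u,v) \in A$; applying \ref{fiandla-1b} to $(v,u)\in A^c$ gives $\LA(q_1,u)<\FI(q_1,v)$, and this inequality in turn implies $\FI(q_1,u)\le\LA(q_1,u)<\FI(q_1,v)\le\LA(q_1,v)$, so by \ref{fiandla-1a} the arc $(u,v)\in A$ comes for free. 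Conversely $\LA(q_1,u)<\FI(q_1,v)$ yields $(v,u)\in A^c$ by \ref{fiandla-1b} and $(u,v)\in A$ by the same chaining, so the equivalence closes.

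\textbf{Expected main obstacle.} There is no deep obstacle here; the only thing requiring genuine care is the bookkeeping around the sentinel values $\FI(q_1,t)=n_1+1$ and $\LA(q_1,t)=0$, making sure that every stated biconditional remains correct when one or both of $u,v$ is absent from $q_1$ (in which case the corresponding arcs in $A$ and $A^c$ also behave as the inequalities predict). I would handle this once, up front, with a short remark that $0\le\LA(q_1,t)$ and $\FI(q_1,t)\le n_1+1$ always, with equality to the extremes exactly when $t\notin\types(q_1)$, and then the four parts follow by the routine logical manipulation sketched above.
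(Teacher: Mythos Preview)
Your proposal is correct and follows essentially the same route as the paper's proof: part~\ref{fiandla-1a} directly from the definition, part~\ref{fiandla-1b} by negating~\ref{fiandla-1a}, and parts~\ref{fiandla-1c} and~\ref{fiandla-1d} by chaining the inequalities $\FI\le\LA$ together with~\ref{fiandla-1b}. The one place you over-engineer is the sentinel discussion: since $Q=\{q_1\}$ forces $V=\types(q_1)$, both $u$ and $v$ necessarily occur in $q_1$, so the degenerate cases $\FI=n_1+1$ or $\LA=0$ never arise here (they matter only in the multi-sequence Lemma~\ref{fiandl-k}).
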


\sproof
\begin{enumerate}
\item By the definition of the arcs of sequence digraphs.

\item It holds $(u,v)\in A^c$ if and only if $(u,v)\not\in A$. Further 
by  (1) and $u\neq v$ it holds  $(u,v)\not\in A$ 
if and only if  $\FI(q_1,u)>\LA(q_1,v)$.

\item If $(u,v)\in A^c$, then  our result (\ref{fiandla-1b}) implies $\LA(q_1,v)<\FI(q_1,u)$
which implies that $(v,u)\in A$.

\item Follows by  (\ref{fiandla-1b}) and (\ref{fiandla-1c}).\eproof
\end{enumerate}

\begin{lemma}\label{fiandl-k}  
Let  $Q=\{q_1,\ldots,q_k\}$ be some set of $k$ sequences, $\g(Q)=(V,A)$ the defined sequence digraph, 
$\co(\g(Q))=(V,A^c)$ its complement digraph, 
and $u\neq v$ two vertices of $V$.
\begin{enumerate}
\item \label{fiandla-k-a}
There is an arc $(u,v)\in A$, if and only if there is some $q_i\in Q$ such that
$\FI(q_i,u)<\LA(q_i,v)$.

\item \label{fiandla-k-b}
There is an arc $(u,v)\in A^c$, if and only if for every $q_i\in Q$ it holds
$\LA(q_i,v)<\FI(q_i,u)$.
\end{enumerate}
\end{lemma}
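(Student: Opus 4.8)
The plan is to reduce Lemma~\ref{fiandl-k} to the single-sequence case treated in Lemma~\ref{fiandla-1}, using the fact that an arc of $\g(Q)$ is created as soon as \emph{some} sequence witnesses it. First I would prove part~\ref{fiandla-k-a}. By the definition of the sequence digraph, $(u,v)\in A$ holds if and only if there is some $q_i\in Q$ and two items $b_{i,j},b_{i,j'}$ with $j<j'$, $t(b_{i,j})=u$ and $t(b_{i,j'})=v$; equivalently, some $q_i$ contains an item of type $u$ to the left of an item of type $v$. Now fix such a $q_i$ and apply Lemma~\ref{fiandla-1}\eqref{fiandla-1a} to the singleton set $\{q_i\}$: the digraph $\g(\{q_i\})$ has the arc $(u,v)$ exactly when $\FI(q_i,u)<\LA(q_i,v)$. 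Since having an item of type $u$ left of an item of type $v$ in $q_i$ is the same property whether we view $q_i$ as a member of $Q$ or as a singleton set, this gives the desired equivalence: $(u,v)\in A$ iff $\FI(q_i,u)<\LA(q_i,v)$ for some $q_i\in Q$.

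For part~\ref{fiandla-k-b}, I would simply negate. We have $(u,v)\in A^c$ iff $u\neq v$ and $(u,v)\notin A$. By part~\ref{fiandla-k-a}, $(u,v)\notin A$ means that for \emph{every} $q_i\in Q$ it fails that $\FI(q_i,u)<\LA(q_i,v)$, i.e.\ $\FI(q_i,u)\geq \LA(q_i,v)$ for all $i$. It remains to argue that $\FI(q_i,u)\geq\LA(q_i,v)$ can be strengthened to $\LA(q_i,v)<\FI(q_i,u)$, i.e.\ that equality never occurs. This is where the technical conventions and the hypothesis $u\neq v$ come in: if $u$ and $v$ both occur in $q_i$, then $\FI(q_i,u)$ and $\LA(q_i,v)$ are positions of distinct items, hence distinct integers, so $\FI(q_i,u)\geq\LA(q_i,v)$ forces strict inequality; if $u$ does not occur in $q_i$ then $\FI(q_i,u)=n_i+1>n_i\geq\LA(q_i,v)$; and if $v$ does not occur in $q_i$ then $\LA(q_i,v)=0<\FI(q_i,u)$ since positions are at least $1$ (and $\FI(q_i,u)\geq 1$ even under the convention). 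In all cases $\LA(q_i,v)<\FI(q_i,u)$, which is exactly the claimed condition for every $q_i\in Q$.

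The proof is essentially bookkeeping, so I do not expect a genuine obstacle; the only subtlety worth stating carefully is the passage from the weak inequality $\FI(q_i,u)\geq\LA(q_i,v)$ to the strict one in part~\ref{fiandla-k-b}, which relies on $u\neq v$ and on the boundary conventions $\FI(q_i,t)=n_i+1$, $\LA(q_i,t)=0$. Once that case analysis is in place, both statements follow, and in fact part~\ref{fiandla-k-b} could alternatively be obtained directly by applying Lemma~\ref{fiandla-1}\eqref{fiandla-1b} to each singleton $\{q_i\}$ and conjoining over $i$, which mirrors the relationship between parts \eqref{fiandla-1a} and \eqref{fiandla-1b} there.
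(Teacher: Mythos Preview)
Your proposal is correct and follows essentially the same approach as the paper: part~\ref{fiandla-k-a} is by the definition of the sequence digraph (the paper does not even route through Lemma~\ref{fiandla-1}, but the content is identical), and part~\ref{fiandla-k-b} is obtained by negating part~\ref{fiandla-k-a} and invoking $u\neq v$ to pass from weak to strict inequality. The paper compresses the latter step into a single sentence, whereas you spell out the case analysis (both types present, $u$ absent, $v$ absent) that justifies why $\FI(q_i,u)=\LA(q_i,v)$ is impossible; this added care is welcome but not a different argument.
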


\sproof
\begin{enumerate}
\item By the definition of the arcs of sequence digraphs.

\item It holds $(u,v)\in A^c$ if and only if $(u,v)\not\in A$. Further 
by (1) and $u\neq v$ we have  $(u,v)\not\in A$ 
if and only if for every $q_i\in Q$ it holds $\FI(q_i,u)>\LA(q_i,v)$.\eproof
\end{enumerate}

Properties (\ref{fiandla-1c}) and (\ref{fiandla-1d}) of Lemma \ref{fiandla-1}
can not be shown for $k>1$ sequences, since an arc $(u,v)\in A^c$ can be
obtained by two types $u$ and $v$ from two different sequences $q_i$ and $q_j$
such that $u\in\types(q_i)-\types(q_j)$ and  $v\in\types(q_j)-\types(q_i)$.
In such situations it is not necessary to have $(v,u)\in A$.

By Lemma \ref{fiandl-k}(\ref{fiandla-k-a})
only the first and the last item of each type in every $q_i\in Q$ are important
for the arcs in the  corresponding digraph. 
Let $M(q_i)$ be the subsequence of $q_i$
which is obtained from $q_i$ by removing all except the first and last item for each type
and $M(Q)=\{M(q_1),\ldots,M(q_k)\}$.

\begin{observation}\label{main_q} 
Let $Q=\{q_1,\ldots,q_k\}$ be a set of $k$ sequences, 
then $\g(Q)=\g(M(Q))$.
\end{observation}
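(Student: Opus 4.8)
The plan is to show set equality of the arc sets of $\g(Q)$ and $\g(M(Q))$ directly from the characterization in Lemma~\ref{fiandl-k}(\ref{fiandla-k-a}), after first checking that $M(Q)$ is a legitimate set of sequences in the sense of the definition. For the latter, note that $M(q_i)$ is a subsequence of $q_i$, so its items remain pairwise distinct across all sequences, the type function $t$ restricts to $M(Q)$ without change, and crucially $\types(M(q_i))=\types(q_i)$: every type of $q_i$ has at least one item, hence a first item, which survives in $M(q_i)$. Consequently $\types(M(Q))=\types(Q)$, so $\g(Q)$ and $\g(M(Q))$ have the same vertex set $V=\types(Q)$.

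Next I would pin down the key quantitative fact: for every sequence $q_i$ and every type $t\in\types(q_i)$, the first and last positions are preserved up to the renumbering caused by deletions. More precisely, the first item of type $t$ in $q_i$ is still the first item of type $t$ in $M(q_i)$, and likewise for the last item; only their numerical positions shift. The clean way to state what we actually need is an \emph{order-preservation} statement: for items $b,b'$ that both survive into $M(q_i)$, $b$ is on the left of $b'$ in $q_i$ if and only if $b$ is on the left of $b'$ in $M(q_i)$ — this is immediate because removing items never reorders the survivors. In particular, writing $\FI,\LA$ for positions in $q_i$ and $\FI',\LA'$ for positions in $M(q_i)$, we get $\FI(q_i,u)<\LA(q_i,v)$ if and only if the first $u$-item lies left of the last $v$-item in $q_i$, which (both items surviving) holds if and only if the first $u$-item lies left of the last $v$-item in $M(q_i)$, i.e.\ $\FI'(q_i,u)<\LA'(q_i,v)$; the boundary conventions $\FI=n_i+1$, $\LA=0$ for absent types are consistent under this equivalence since a type is absent from $q_i$ exactly when it is absent from $M(q_i)$.

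With that in hand the conclusion follows by a two-line chain: by Lemma~\ref{fiandl-k}(\ref{fiandla-k-a}), $(u,v)\in A(\g(Q))$ iff there is some $q_i\in Q$ with $\FI(q_i,u)<\LA(q_i,v)$; by the order-preservation equivalence this holds iff there is some $M(q_i)\in M(Q)$ with $\FI'(q_i,u)<\LA'(q_i,v)$; and by Lemma~\ref{fiandl-k}(\ref{fiandla-k-a}) applied to $M(Q)$ this holds iff $(u,v)\in A(\g(M(Q)))$. Hence the arc sets coincide and $\g(Q)=\g(M(Q))$. (Alternatively, one can bypass the $\FI/\LA$ bookkeeping entirely: an arc $(u,v)$ in $\g(Q)$ witnessed by some item of type $u$ left of some item of type $v$ in $q_i$ is also witnessed in $q_i$ by the \emph{first} $u$-item left of the \emph{last} $v$-item — those are "even more separated" — and both of those items survive in $M(q_i)$, giving the same witness there; the converse is trivial since $M(q_i)$ is a subsequence of $q_i$.)

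I do not expect a genuine obstacle here; this is essentially a bookkeeping lemma. The only point requiring a little care is making sure the first/last items are never among the deleted items — which is true by construction, since $M$ is defined to keep precisely the first and last item of each type — and handling the degenerate conventions for types absent from a given sequence, which I would dispatch with a single sentence as above.
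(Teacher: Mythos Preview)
Your proposal is correct and follows exactly the approach the paper takes: the paper states this as an observation justified in a single sentence by Lemma~\ref{fiandl-k}(\ref{fiandla-k-a}) (``only the first and the last item of each type in every $q_i\in Q$ are important for the arcs''), and you have simply fleshed out the same idea with more careful bookkeeping. Your alternative parenthetical argument is also fine and arguably closer in spirit to the paper's one-line justification.
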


%
%

Alternatively to Lemma \ref{fiandl-k}
the sequence digraph  can be obtained as follows.

\begin{proposition}\label{comp-sg}
Given a set $Q$ of sequences, the sequence digraph $\g(Q)=(V,A)$ can be computed in
time $\bigo(n + k\cdot |\types(Q)|^2)=\bigo(n + k\cdot |V|^2)$.
\end{proposition}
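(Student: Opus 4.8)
The plan is to compute $\g(Q)=(V,A)$ by first building the vertex set and then, for each sequence $q_i$ separately, extracting the relevant first/last-occurrence data and adding all arcs it induces. First I would scan every sequence once to collect $\types(Q)$; since the items carry type labels this takes $\bigo(n)$ time in total (using a suitable dictionary or, after an initial relabeling pass, an array indexed by types), and it yields $V$ with $|V|=|\types(Q)|$.

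Next, for each sequence $q_i$ I would make a single left-to-right pass to record, for every type $t\in\types(Q)$, the values $\FI(q_i,t)$ and $\LA(q_i,t)$, using the convention $\FI(q_i,t)=n_i+1$ and $\LA(q_i,t)=0$ when $t$ does not occur in $q_i$. Initializing these arrays costs $\bigo(|V|)$ per sequence and the pass costs $\bigo(n_i)$, so over all sequences this is $\bigo(n+k\cdot|V|)$. Then, still for the fixed $q_i$, I would iterate over all ordered pairs $(u,v)$ with $u\neq v$ and add the arc $(u,v)$ to $A$ exactly when $\FI(q_i,u)<\LA(q_i,v)$; by Lemma~\ref{fiandl-k}(\ref{fiandla-k-a}) the arc set produced by the union over all $i$ of these tests is precisely $A$. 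Each sequence contributes $\bigo(|V|^2)$ to this step, for a total of $\bigo(k\cdot|V|^2)$, and adding an arc already present is harmless (it is a set). Summing the two phases gives the claimed bound $\bigo(n+k\cdot|V|^2)=\bigo(n+k\cdot|\types(Q)|^2)$.

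The correctness argument is immediate from Lemma~\ref{fiandl-k}(\ref{fiandla-k-a}): that lemma states $(u,v)\in A$ iff some $q_i$ satisfies $\FI(q_i,u)<\LA(q_i,v)$, which is exactly the condition checked, and the boundary convention guarantees the test fails for types absent from $q_i$, so no spurious arcs arise. The only point needing a word of care is the data-structure detail that lets the initial type-collection and the per-sequence $\FI/\LA$ bookkeeping run in the stated time rather than, say, $\bigo(n\log n)$; this is handled by first normalizing the type identifiers to $\{1,\dots,|V|\}$ (possible in $\bigo(n)$ by sorting item--type pairs with radix/bucket sort, or simply assumed given) so that all subsequent accesses are array lookups. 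I expect this bookkeeping to be the only mild obstacle; everything else is a direct transcription of Lemma~\ref{fiandl-k}.
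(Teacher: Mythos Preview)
Your proof is correct and achieves the stated bound, but it proceeds differently from the paper. The paper gives an explicit incremental algorithm: for each sequence it scans left-to-right, maintains a list $L$ of types whose first item has already been seen, and at the moment it reaches the \emph{last} occurrence of a type $t$ it inserts all arcs $(t',t)$ with $t'\in L$; boolean arrays for $V$, $A$, and $L$-membership give $\bigo(1)$ tests, and the arc-insertion work is charged to $|A|\in\bigo(|V|^2)$ per sequence. Your approach instead precomputes the $\FI/\LA$ arrays and then does a brute-force $\bigo(|V|^2)$ all-pairs test per sequence, invoking Lemma~\ref{fiandl-k}(\ref{fiandla-k-a}) directly for correctness. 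Your route is conceptually cleaner---correctness is a one-line citation---while the paper's algorithm is closer to what one would implement and avoids the explicit quadratic loop over \emph{all} pairs (it only touches pairs that actually become arcs, though the worst-case bound is the same). Both arguments handle the data-structure subtleties (your type-normalization remark, the paper's boolean-array preprocessing) in comparable ways.
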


\begin{proof}
Digraph $\g(Q)=(V,A)$ can be computed in time                      
$\bigo(n + k\cdot |\types(Q)|^2)=\bigo(n + k\cdot |V|^2)$
by the algorithm {\sc Create Sequence Digraph} shown in
Figure \ref{fig:algorithm5}. A value is added to vertex set $V$ or arc set
$A$ only if it is not already contained. To check this efficiently in time
$\bigo(1)$ we have to implement $V$ and $A$ as boolean arrays.
Therefore we need some preprocessing phase where we run through
each sequence and seek for the types. This can be done
in time $\bigo(n+k\cdot |\types(Q)|)$. To implement
list $L$ efficiently, we have to use an additional boolean
array to test membership in $\bigo(1)$.
The inner loop sums up to $n = n_1 + \ldots + n_k$ steps.
Since $|A| \in \bigo(|V|^2)$, lines 6-8 will run in time $\bigo(|V|^2)$, 
so the overall running time is $\bigo(n + k \cdot |V|^2)$.
\end{proof}

\begin{figure}[htbp]
\hrule
{\strut\footnotesize \bf Algorithm {\sc Create Sequence Digraph}} 
\hrule
\begin{tabbing}
xxxx \= xxxx \= xxxx \= xxxx \= xxxx \= xxxx \= xxxx \=\kill
for $i := 1$ to $k$ do \\
\> add  $t(b_{i,1})$ to vertex set $V$, if it is not already contained \\
\> $L := (t(b_{i,1}))$ \>\>\>\>\>\> $\blacktriangleright$ $L$ contains
                   types of items up to item $b_{i,j}$ \\
\> for $j := 2$ to $n_i$ do \\
\>\> add $t(b_{i,j})$ to vertex set $V$, if it is not already contained \\
\>\> if ($j=\LA(q_i,t(b_{i,j}))$)\\
\>\>\> for each type $t' \in L$ do \\
\>\>\>\> if $t' \neq t(b_{i,j})$ add arc $(t',t(b_{i,j}))$ to  $A$, if it is
         not already contained \\
\>\> if ($j=\FI(q_i,t(b_{i,j}))$)\\
\>\>\>  add $t(b_{i,j})$ to list $L$
\end{tabbing}
\hrule
\caption{Create the sequence digraph $\g(Q)=(V,A)$ for some given set
  of sequences $Q$.}
\label{fig:algorithm5}
\end{figure}

\subsection{From Digraphs to Sequences}\label{sec-hardness}

\begin{definition}[Sequence System]
Let $G=(V,A)$ be some  digraph and $A=\{a_1,\ldots,a_\ell\}$ its arc set.
The {\em sequence system} $\q(G) = \{q_1,\ldots,q_\ell\}$ for $G$ is defined as follows.
\begin{enumerate}[(1)]
\item There are $2\ell$ items $b_{1,1},b_{1,2},\ldots,b_{\ell,1},b_{\ell,2}$.

\item Sequence $q_i=(b_{i,1},b_{i,2})$ for $1\leq i \leq \ell$.

\item The type of item $b_{i,1}$ is the first vertex of arc $a_i$
and the type of item $b_{i,2}$ is the second vertex of arc $a_i$ for
$1\leq i \leq \ell$. Thus  $\types(\q(G)) = V$.
\end{enumerate}
\end{definition}

\begin{example}[Sequence System]\label{EX3-new}
For the digraph $G$ of Figure \ref{F-ex-d}
the corresponding sequence system is given by $\q(G)=\{q_1,q_2,q_3,q_4,q_5,q_6,q_7\}$,
where $q_1 = [a,b]$,  $q_2 = [b,c]$,  $q_3 = [c,d]$, 
$q_4 = [d,e]$, $q_5 = [e,a]$, $q_6 = [e,f]$,  $q_7 = [f,a]$.
The sequence digraph of $\q(G)$ is $G$.
\end{example}

The definitions of the sequence system $\q(G)$ and the sequence digraph 
$\g(Q)$, defined in Section \ref{SCgb},  imply the following 
results.

\begin{observation}\label{prop}
For every digraph $G$ it holds $G = \g(\q(G))$.
\end{observation}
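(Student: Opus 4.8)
The statement $G = \g(\q(G))$ asserts that building the sequence system $\q(G)$ from a digraph $G$ and then reading back a sequence digraph recovers $G$ exactly. The plan is to verify this by unwinding both definitions and checking equality of vertex sets and arc sets separately. For the vertex sets, I would observe that $\q(G)$ has, by construction, $\types(\q(G)) = V$ (this is stated explicitly as item (3) in the definition of the sequence system), and the sequence digraph $\g(Q)$ has vertex set exactly $\types(Q)$; hence the vertex set of $\g(\q(G))$ is $\types(\q(G)) = V$, matching $G$.

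For the arc sets, I would argue by double inclusion. First, suppose $(u,v) \in A$, so $(u,v) = a_i$ for some $i$ with $1 \le i \le \ell$. Then $\q(G)$ contains the two-item sequence $q_i = (b_{i,1}, b_{i,2})$ with $t(b_{i,1}) = u$ and $t(b_{i,2}) = v$, and since $u \ne v$ (digraphs here have no loops) and $b_{i,1}$ is on the left of $b_{i,2}$ in $q_i$, the definition of the sequence digraph yields $(u,v)$ as an arc of $\g(\q(G))$. Conversely, suppose $(u,v)$ is an arc of $\g(\q(G))$. By the definition of sequence digraph there is some sequence $q_i \in \q(G)$ containing an item of type $u$ strictly to the left of an item of type $v$. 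But each $q_i$ has exactly two items, namely $b_{i,1}$ (at position $1$) and $b_{i,2}$ (at position $2$); so the only way to have one type strictly left of another within $q_i$ is $t(b_{i,1}) = u$ and $t(b_{i,2}) = v$, which by the definition of $\q(G)$ means $a_i = (u,v)$, hence $(u,v) \in A$.

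Combining the two inclusions gives equality of arc sets, and together with the equality of vertex sets this proves $G = \g(\q(G))$. I do not anticipate a genuine obstacle here: the whole content is that the length-two sequences of $\q(G)$ are rigid enough that the only ``left-of'' relation they encode is precisely the arc they were built from, and the only subtlety worth flagging is the no-loops convention ($u \ne v$), which makes condition (4) in the sequence-digraph definition automatically consistent with the construction.
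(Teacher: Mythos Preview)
Your proposal is correct and is essentially the explicit version of what the paper does: the paper states this as an observation that follows immediately from the definitions of $\q(G)$ and $\g(Q)$, without spelling out the double inclusion. Your argument simply unpacks those definitions, and the only point worth noting is that the equality of vertex sets rests on the paper's own assertion in item (3) of the definition of $\q(G)$ that $\types(\q(G)) = V$ (which tacitly assumes $G$ has no isolated vertices); since you cite that item directly, your treatment is in line with the paper's.
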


The related relation $Q=\q(\g(Q))$ is not true in general (e.g. not for $Q$ from
Example \ref{EX6}). But if the reduction to different consecutive items
$C(Q)$ contains exactly two items of different types 
the following equivalence holds true.


\begin{lemma}\label{le-q}
For every set $Q$ of sequences it holds $C(Q)=\q(\g(Q))$ 
if and only if  
each sequence  $q_i\in C(Q)$
contains exactly two items of different types.
\end{lemma}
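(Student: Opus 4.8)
The plan is to prove both directions of the equivalence, working with the characterization of $\q(\g(Q))$ that follows directly from the definitions. First I would recall what $\q(\g(Q))$ looks like: by the definition of the sequence system, $\q(\g(Q))$ is a set of two-item sequences, one for each arc of $\g(Q)$, where the sequence for arc $(u,v)$ is exactly $[u,v]$. So $\q(\g(Q))$ always consists of sequences each having exactly two items of different types. This immediately gives the ``only if'' direction: if $C(Q)=\q(\g(Q))$, then every sequence in $C(Q)$ equals some $[u,v]$ with $u\neq v$, hence contains exactly two items of different types.

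For the ``if'' direction, assume every $q_i\in C(Q)$ contains exactly two items of different types, say $C(q_i)=[u_i,v_i]$ with $u_i\neq v_i$. By Observation~\ref{main-con} we have $\g(Q)=\g(C(Q))$, so it suffices to show $C(Q)=\q(\g(C(Q)))$. The arcs of $\g(C(Q))$ are precisely the pairs $(u_i,v_i)$ for $1\leq i\leq k$ (each sequence $[u_i,v_i]$ contributes exactly the arc $(u_i,v_i)$, since it has only the single left-to-right pair of distinct types). Then $\q(\g(C(Q)))$ is, by definition, the set of two-item sequences $[x,y]$ ranging over the arcs $(x,y)$ of $\g(C(Q))$, i.e.\ exactly $\{[u_i,v_i] \mid 1\leq i\leq k\}$, which is $C(Q)$. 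The one subtlety to flag here is that $Q$ and hence $C(Q)$ are \emph{sets} of sequences: if two distinct indices $i\neq j$ yield the same pair $(u_i,v_i)=(u_j,v_j)$, then as a set $C(Q)$ already collapses these to one element, and so does $\q(\g(C(Q)))$ (which is indexed by the arc \emph{set} $A$), so the two sides still agree. I should also note the degenerate case where some $q_i$ has fewer than two items or only one type — but the hypothesis ``exactly two items of different types'' in $C(Q)$ rules this out, so $\g(C(Q))$ has no isolated behaviour to worry about.

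The step I expect to require the most care is matching up the arc set of $\g(C(Q))$ with the multiset/set of pairs $\{(u_i,v_i)\}$ and then unwinding the definition of $\q(\cdot)$ on that digraph without index bookkeeping errors; this is really just careful set manipulation rather than a deep argument. A clean way to organize it is: (a) characterize $\q(\g(Q))$ in general as consisting solely of length-two sequences with two distinct types — this settles ``only if'' and half the ``if''; (b) for ``if'', reduce to $C(Q)$ via Observation~\ref{main-con}, compute $A(\g(C(Q)))=\{(u_i,v_i): 1\le i\le k\}$, and observe $\q$ applied to this digraph returns exactly those length-two sequences, i.e.\ $C(Q)$. No display-math manipulation is needed, so the write-up stays short.
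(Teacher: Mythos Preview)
The paper states this lemma without proof, so there is no argument in the paper to compare against. Your proposal is correct and follows the only natural line: the ``only if'' direction is immediate from the shape of $\q(\cdot)$, and for ``if'' you reduce via Observation~\ref{main-con}, compute the arc set of $\g(C(Q))$ directly from the length-two sequences, and unwind the definition of $\q$.

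One small remark on your handling of duplicates: the claim that ``as a set $C(Q)$ already collapses these to one element'' is not literally true under the paper's formalism, since items are pairwise distinct and so two sequences with the same type pattern are still distinct elements of $C(Q)$. However, the equality $C(Q)=\q(\g(Q))$ in the lemma can only be meant up to identification by type patterns anyway (the two sides are built from disjoint item sets), so under that reading your argument goes through. It would be worth making this identification explicit in the write-up rather than appealing to set collapse.
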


Since $\q(\g(Q))=\q(\g(C(Q)))$ holds for every set of sequences $Q$, in Lemma \ref{le-q} we also 
can replace condition $C(Q)=\q(\g(Q))$  by condition $C(Q)=\q(\g(C(Q)))$.

\begin{lemma}\label{prop-b}
For every digraph $G=(V,A)$ with $\un(G)=(V,E)$ there is a set $Q$ of at most 
$|E|$ sequences such that $G=\g(Q)$.
\end{lemma}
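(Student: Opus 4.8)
The plan is to build a set $Q$ with one sequence for each edge of $\un(G)$ and argue that $\g(Q)$ reconstructs exactly the arcs of $G$. For every edge $\{u,v\}\in E$ there is at least one of the arcs $(u,v),(v,u)$ in $A$; create a single two-item sequence that records precisely the arcs present between $u$ and $v$. Concretely, if only $(u,v)\in A$, let the corresponding sequence be $[u,v]$; if only $(v,u)\in A$, let it be $[v,u]$; and if both $(u,v),(v,u)\in A$, use any one of these two sequences, say $[u,v]$ — but then we must be careful, since a single sequence $[u,v]$ only generates the arc $(u,v)$, not $(v,u)$. So in the bidirectional case I would instead use the sequence $[u,v,u]$ (three items, types $u,v,u$), whose sequence digraph contributes exactly the arcs $(u,v)$ and $(v,u)$. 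In all three cases the sequence uses only types in $\{u,v\}$ and generates exactly the arcs of $G$ between $u$ and $v$, no more.

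The key steps, in order: (i) fix the vertex set $V$ and note $|E|$ edges; (ii) for each edge $\{u,v\}$ define the sequence as above, producing a set $Q$ with exactly $|E|$ sequences (and $\types(Q)=V$, provided $G$ has no isolated vertices — if it does, one can either allow length-one sequences or simply note $V=\types(Q)$ still holds by adding such items harmlessly, or restrict attention to the non-isolated part and observe isolated vertices contribute nothing to $A$); (iii) verify $\g(Q)=G$ by double inclusion. For $\g(Q)\subseteq G$: any arc $(x,y)$ of $\g(Q)$ comes from some sequence $q_i$ in which an item of type $x$ appears left of an item of type $y$; by construction $q_i$ corresponds to some edge $\{u,v\}$ with $\{x,y\}\subseteq\{u,v\}$, so $x\ne y$ forces $\{x,y\}=\{u,v\}$, and inspection of the three cases shows the ordered pair $(x,y)$ is an arc of $G$. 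For $G\subseteq\g(Q)$: given $(x,y)\in A$, the edge $\{x,y\}$ is in $E$, so the corresponding sequence is one of $[x,y]$, $[y,x,y]$, or $[x,y,x]$ (depending on which of $(x,y),(y,x)$ lie in $A$); in each of these an item of type $x$ precedes an item of type $y$, so $(x,y)\in\g(Q)$.

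Alternatively — and more in the spirit of the earlier material — one can derive the statement directly from Observation~\ref{prop} together with Observation~\ref{main-con}: starting from $\q(G)$, which has $|A|$ sequences and satisfies $\g(\q(G))=G$, one merges the (at most two) two-item sequences associated with a common edge $\{u,v\}$ of $\un(G)$ into a single sequence on types $\{u,v\}$ without changing the generated digraph. Merging $[u,v]$ and $[v,u]$ into $[u,v,u]$ leaves the set of generated arcs on $\{u,v\}$ unchanged (both give $\{(u,v),(v,u)\}$), and by Observation~\ref{main-con} one may freely collapse consecutive repeated types, so the precise form of the merged sequence is immaterial. This reduces the number of sequences from $|A|$ to $|E|$ while preserving $\g(Q)=G$.

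I do not expect a serious obstacle here; the only point requiring a little care is the bidirectional-edge case, where a naive length-two sequence is insufficient and one must use a length-three sequence (or, equivalently, invoke $C(Q)$/$M(Q)$ reductions to see the shape does not matter). A secondary bookkeeping point is isolated vertices of $\un(G)$, which have no incident edge; these carry no arcs in $G$, so they may be handled either by allowing trivial one-item sequences for them or by simply remarking that the claim concerns the arc set and $V=\types(Q)$ can be arranged without affecting $A$.
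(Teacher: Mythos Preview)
Your proposal is correct and matches the paper's proof essentially verbatim: the paper starts from $\q(G)$ (Observation~\ref{prop}) to get $|A|$ sequences and then observes that two opposite arcs $(u,v),(v,u)$ can be merged into the single sequence $[u,v,u]$, reducing the count to $|E|$. Your ``alternative'' paragraph is exactly this argument, and your direct edge-by-edge construction is just the same idea stated without the detour through $\q(G)$; the care you take with the bidirectional case and with isolated vertices is more explicit than the paper's treatment but not substantively different.
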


\begin{proof}
For every digraph $G=(V,A)$ by Observation \ref{prop}
the sequence system $Q=\q(G)$ leads to a set of at most 
$|A|$ sequences such that $G=\g(Q)$. 
If we have two arcs $(u,v)$ and $(v,u)$ between two vertices $u$ and $v$ 
these can be represented by one sequence $[u,v,u]$.
\end{proof}

There are digraphs which even can be defined by one sequence 
(see Example \ref{le-cl}(\ref{le-cl-1})
and Theorem \ref{s11}) and there are digraphs for which  
$|E|$ sequences are really necessary (see Lemma \ref{le-pa} and Lemma \ref{c3}).

For digraphs of bounded vertex degree the sequence system $Q=\q(G)$ leads to sets
whose distribution and number of items of each type can be bounded as follows.

\begin{lemma}\label{lemma-c-da}
For every digraph $G=(V,A)$ where $\max(\Delta^-(G),\Delta^+(G))\leq d$ there is a set $Q$ 
with $d_Q\leq 2d$ and $c_Q\leq 2d$
such that $G=\g(Q)$.
\end{lemma}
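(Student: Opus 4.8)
The plan is to start from the sequence system $Q'=\q(G)$ given by Observation~\ref{prop}, which already satisfies $G=\g(Q')$, and then merge pairs of length-$2$ sequences to control the distribution $d_{Q'}$ and the item count $c_{Q'}$ of each type. Recall that in $\q(G)$ there is exactly one two-item sequence $[x,y]$ for each arc $(x,y)\in A$, so a type $v\in V$ occurs once as the left item of a sequence for each out-arc leaving $v$, and once as the right item for each in-arc entering $v$. Hence in $\q(G)$ we have $c_{\q(G)}(v)=\odeg_G(v)+\ideg_G(v)\le 2d$ already, but $d_{\q(G)}(v)$ equals this same quantity, which is in general far larger than $k$ and need not be bounded by $2d$ unless we reduce the number of sequences.

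First I would pair up sequences to bring $d_Q$ down to $2d$. The key combinatorial observation is that, thinking of the sequences of $\q(G)$ as the arcs of $G$ oriented, an out-arc $(v,w)$ and an in-arc $(u,v)$ at a common vertex $v$ can be merged into the single length-$3$ sequence $[u,v,w]$; by the definition of $\g$, this sequence induces exactly the arcs $(u,v)$, $(v,w)$ and $(u,w)$, the last of which may be new. To avoid creating spurious arcs I would instead merge an in-arc $(u,v)$ with an out-arc $(v,w)$ only when $(u,w)\in A$ already (for $u\ne w$), or handle the digon case as in Lemma~\ref{prop-b} via $[u,v,u]$; more robustly, one can simply concatenate an in-arc and an out-arc at $v$ as $[u,v][v,w]$ kept as the single sequence $[u,v,v,w]=[u,v,w]$ in $C(Q)$ and argue that the only additional arcs are of the form $(\text{in-neighbour of }v,\text{out-neighbour of }v)$, then observe that the construction can be organised so that every such pair is already an arc — e.g.\ by only merging along $2$-cycles or by a local parity/matching argument on the arcs at each vertex. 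The cleanest route is: at each vertex $v$, form a near-perfect matching between the $\ideg_G(v)\le d$ incoming arcs and the $\odeg_G(v)\le d$ outgoing arcs, consistently across all vertices (this is a global edge-pairing, a standard Eulerian-type decomposition of $G$ into directed trails/paths), concatenating matched in-/out-arcs into longer sequences. Each maximal concatenation becomes one sequence, every vertex appears in at most...

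Now I would count. After the Eulerian-style decomposition of the arc set of $G$ into directed trails, a fixed vertex $v$ lies in the interior of at most $\min(\odeg_G(v),\ideg_G(v))\le d$ of these trails and is an endpoint of at most $|\odeg_G(v)-\ideg_G(v)|\le d$ further trails, so $v$ occurs in at most... the total number of trails through $v$ is at most $\max(\odeg_G(v),\ideg_G(v))\le d$ — hence $d_Q(v)\le d\le 2d$. Within a single trail, $v$ occurs at most twice if it is an interior vertex visited once, but a trail may revisit $v$; across the whole decomposition the total number of item-occurrences of type $v$ is still just $\odeg_G(v)+\ideg_G(v)\le 2d$, because concatenation does not duplicate items, so $c_Q(v)\le 2d$. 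Finally I must check $\g(Q)=G$: concatenating an in-arc $(u,v)$ directly before an out-arc $(v,w)$ adds exactly the arc $(u,w)$; but in a directed trail $u\to v\to w$, the arc $(u,w)$ need not be present in $G$, so this is the main obstacle. The fix is to \emph{not} concatenate but to interleave with separators, or, better, to keep each merged pair as its \emph{own} sequence $[u,v,w]$ only when $(u,w)\in A$; otherwise keep $(u,v)$ and $(v,w)$ separate. A matching at $v$ that only pairs in-/out-arcs whose endpoints are adjacent in $G$ need not exist, so the honest argument is to bound $d_Q$ and $c_Q$ by a slightly weaker pairing: pair arcs only within $2$-cycles (giving $[u,v,u]$, which creates no new arc), and leave all remaining arcs as length-$2$ sequences. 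Then $c_Q(v)=\odeg_G(v)+\ideg_G(v)\le 2d$ still, and $d_Q(v)$ counts: one sequence per $2$-cycle at $v$ (at most $d$ of them) plus one per remaining non-digon arc at $v$ (at most $2d$ of them) — giving $d_Q(v)\le 2d$. I would present this last, safe version in the paper, noting that $c_Q\le 2d$ holds because no construction step duplicates an item, and $d_Q\le 2d$ because after pairing digons each type sits in at most $2d$ sequences; the potential-spurious-arc issue is precisely why we merge only digons rather than arbitrary in-/out-arc pairs.
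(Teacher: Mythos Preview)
Your proposal contains a genuine error at the outset that sends you on an unnecessary detour. You correctly compute that in $Q'=\q(G)$ one has
\[
c_{\q(G)}(v)=\odeg_G(v)+\ideg_G(v)\le 2d,
\]
and you then observe that $d_{\q(G)}(v)$ equals this \emph{same} quantity --- but you immediately claim it ``need not be bounded by $2d$ unless we reduce the number of sequences''. This is false: since $\max(\Delta^-(G),\Delta^+(G))\le d$ means $\odeg_G(v)\le d$ and $\ideg_G(v)\le d$, their sum is at most $2d$, so $d_{\q(G)}(v)\le 2d$ holds directly. The paper's intended argument is exactly this one-line observation (note the sentence just before the lemma: ``the sequence system $Q=\q(G)$ leads to sets whose distribution and number of items of each type can be bounded as follows''); no merging, no Eulerian decomposition, no matching is needed.

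Your eventual ``safe version'' (merge only digons into $[u,v,u]$, leave other arcs as length-$2$ sequences) does give a correct $Q$ --- this is precisely the construction of Lemma~\ref{prop-b}, and it is what the paper uses one step later in the proof of Lemma~\ref{lemma-c-d} to get the sharper bound $d_Q\le d$ for complete bioriented digraphs. But your counting there is garbled: ``at most $d$'' sequences from $2$-cycles plus ``at most $2d$'' from remaining arcs does not sum to $2d$; the correct count is $a+(\odeg_G(v)-a)+(\ideg_G(v)-a)=\odeg_G(v)+\ideg_G(v)-a\le 2d$, where $a$ is the number of digons at $v$. In any case, for the present lemma all of this is superfluous: $Q=\q(G)$ already does the job.
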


In case of complete bioriented digraphs we can improve the
latter bounds as follows.

\begin{lemma}\label{lemma-c-d}
For every complete bioriented digraph $G=(V,A)$ where $\max(\Delta^-(G),\Delta^+(G))\leq d$ there is a set $Q$ 
with $d_Q\leq d$ and $c_Q\leq 2d$ (for $d\geq 2$ even $c_Q\leq 2d-1$)
such that $G=\g(Q)$.
\end{lemma}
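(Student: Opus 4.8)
The goal is to prove Lemma \ref{lemma-c-d}: for a complete bioriented digraph $G=(V,A)$ with maximum semi-degree at most $d$, there is a set $Q$ of sequences realizing $G=\g(Q)$ with $d_Q\leq d$ and $c_Q\leq 2d$ (and $c_Q\leq 2d-1$ when $d\geq 2$).

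\medskip

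\textbf{Plan.} Since $G$ is a complete biorientation, it is determined by the undirected graph $H=\un(G)$, and the semi-degree bound $d$ is exactly the bound $\Delta(H)\leq d$ on ordinary degrees. By Vizing's theorem, $E(H)$ can be properly edge-coloured with at most $d+1$ colours; in fact I will aim for a decomposition of $E(H)$ into at most $d$ matchings, which may not always be possible (Vizing gives $d+1$), so the first thing to decide is whether to exploit the extra structure or to merge colour classes cleverly. The key observation is the encoding trick already used in Lemma \ref{prop-b}: the pair of arcs $(u,v),(v,u)$ of a single edge $\{u,v\}$ of $H$ can be realized by the length-three sequence $[u,v,u]$. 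More generally, a matching $\{u_1,v_1\},\dots,\{u_r,v_r\}$ of $H$ can be realized by the single sequence $[u_1,v_1,u_1,u_2,v_2,u_2,\dots,u_r,v_r,u_r]$: within each block $[u_j,v_j,u_j]$ we get both arcs $(u_j,v_j)$ and $(v_j,u_j)$, and because the $u_j,v_j$ are all distinct (matching!), no unwanted arc between different blocks is created — indeed, for $j<j'$ the items of block $j$ precede those of block $j'$, which would create arcs, so I must be careful: a single left-to-right sequence does create cross-block arcs. Hence the correct realization of a matching is one short sequence $[u_j,v_j,u_j]$ \emph{per edge}, but all edges of one colour class can share the colour in the sense that each vertex appears in at most one of them, which is what bounds $d_Q$.

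\medskip

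So the construction is: fix a proper edge-colouring of $H$ with colour classes $M_1,\dots,M_s$ (matchings). For each edge $e=\{u,v\}\in M_c$ create the sequence $q_e=[u,v,u]$. Let $Q$ be the collection of all these sequences. Then $\g(Q)=G$: every sequence $[u,v,u]$ contributes exactly the arcs $(u,v),(v,u)$, and the union over all edges gives exactly all arcs of the complete biorientation of $H$, which is $G$. Now count: a vertex $w$ of type $w$ appears in $q_e$ iff $e$ is incident to $w$; since $w$ has degree $\leq d$ in $H$, it lies in at most $d$ sequences, so $d_Q\leq d$. For $c_Q$: item-type $w$ appears twice in $q_e$ if $w$ is the "outer" endpoint of the block and once if it is the "inner" endpoint. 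If we orient each edge of $H$ (choose which endpoint is the doubled one) so that each vertex is the doubled endpoint in as few of its incident sequences as possible, we can balance this. A vertex of degree $\delta\leq d$ then contributes at most $\lceil\delta/2\rceil + (\delta-\lceil\delta/2\rceil)\cdot\ldots$ — more simply, $c_Q(w)=(\text{number of edges where }w\text{ is outer})\cdot 2 + (\text{edges where }w\text{ is inner})\cdot 1$. To make this $\leq 2d-1$ it suffices that $w$ is the inner endpoint of at least one incident edge, i.e.\ we need an orientation of $H$ where every vertex has at least one in-edge in the auxiliary sense — equivalently, no vertex is "outer" for all its edges. For $d\geq 2$ this is arrangeable when every component has a cycle or when we can push the slack around a spanning structure; for $d=1$, $H$ is a matching and each isolated edge gives $[u,v,u]$ with $c_Q=2=2d$, explaining why the improvement needs $d\geq 2$.

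\medskip

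\textbf{Main obstacle.} The delicate point is the bookkeeping for $c_Q\leq 2d$ and the improvement to $2d-1$ for $d\geq 2$: one must choose, for each edge, which endpoint gets the doubled (outer) occurrence, so that the worst-case per-vertex total of (twice #outer + #inner) over incident edges stays within the bound. The clean way is to give each vertex $w$ at most $\lceil \deg_H(w)/2\rceil$ outer roles; a vertex of full degree $d$ then has at most $\lceil d/2\rceil$ doubles and at least $\lfloor d/2\rfloor$ singles, total $2\lceil d/2\rceil+\lfloor d/2\rfloor$, which is $\tfrac{3d}{2}$ for even $d$ and $\tfrac{3d+1}{2}$ for odd $d$ — already $\leq 2d-1$ for $d\geq 2$, beating the stated bound; to get such a balanced assignment of outer-roles one orients the edges of $H$ so that indegree and outdegree differ by at most one at every vertex, which is possible by the standard Eulerian-orientation argument (add a vertex joined to all odd-degree vertices, take an Eulerian orientation, delete it). The trivial bound $c_Q\leq 2d$ needs no such care — assign outer roles arbitrarily and note each incident edge contributes at most $2$ — so the real content is just the Eulerian-orientation balancing for the $2d-1$ refinement, together with checking the $d=1$ boundary case separately.
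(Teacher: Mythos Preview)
Your final construction is correct and the core of it coincides with the paper's: for each edge $\{u,v\}$ of $H=\un(G)$ take one length-three sequence $[u,v,u]$, which gives exactly the two arcs $(u,v),(v,u)$, hence $\g(Q)=G$; the bounds $d_Q\leq d$ and $c_Q\leq 2d$ then follow immediately from $\Delta(H)\leq d$. (The edge-colouring / Vizing detour in your ``Plan'' is unnecessary noise --- the matchings play no role in the argument you actually carry out, and you rightly abandon the idea of concatenating a whole matching into one sequence.)

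Where you genuinely diverge from the paper is in the refinement to $c_Q\leq 2d-1$ for $d\geq 2$. The paper argues locally: after the basic construction it asserts that each weakly connected component carries at most one type $t$ with $c_Q(t)=2d$, finds a neighbour $t''$ with $c_Q(t'')\leq 2d-2$, and swaps the single sequence $[t,t'',t]$ to $[t'',t,t'']$. Your argument is global: orient $H$ so that $|d^+(v)-d^-(v)|\leq 1$ at every vertex (the standard Eulerian trick of adding a dummy vertex to the odd-degree vertices), let the tail of each oriented edge be the doubled (``outer'') endpoint, and compute $c_Q(w)=\text{outer}(w)+\deg_H(w)\leq \lceil d/2\rceil + d\leq 2d-1$ for $d\geq 2$. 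This is cleaner, needs no per-component case analysis, and in fact yields the stronger bound $c_Q\leq \lceil 3d/2\rceil$, which you notice but do not state explicitly. The paper's swap argument, by contrast, is more ad hoc and its ``obviously at most one such $t$ per component'' step really does depend on how the outer endpoints were chosen initially (for instance, in a $4$-cycle one can make two opposite vertices outer for both their edges), so your approach is also more robust.
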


\begin{proof}
By Lemma \ref{lemma-c-da} we obtain $d_Q\leq 2d$ and $c_Q\leq 2d$.
Since we consider complete bioriented digraphs for every two vertices $u$ and $v$
we now have two sequences $[u,v]$ and $[v,u]$, which can be 
combined to only one sequence $[u,v,u]$.
This leads to  $d_Q\leq d$ and $c_Q\leq 2d$. 
For $d\geq 2$ we can show the stated improvement as follows. 
Obviously for every weakly connected component of $G$
there is at most one type $t\in\types(Q)$ such that
there are exactly $2d$ items of type $t$. If there is one type $t\in\types(Q)$ such that
there are exactly $2d$ items of type $t$, then there are less than $d$
types $t'\in\types(Q)$ such that there are exactly $2d-1$ items of type $t'$.
Thus there is one sequence $[t,t'',t]$ such that there are
at most $2d-2$ items of type $t''$. If we substitute $[t,t'',t]$ by  $[t'',t,t'']$
we achieve $c_Q\leq2d-1$.
\end{proof}

\section{Properties of Sequence Digraphs}\label{properties}

\subsection{Graph Classes and their Relations}\label{classes}

In order to represent some digraph $G=(V,A)$ as a sequence digraph we need exactly
$|V|$ types.
By Lemma \ref{prop-b} every digraph is a sequence digraph using 
a suitable set $Q$ of sequences. Thus we want to consider  digraphs which can be
defined by a given upper bound for the number of sequences. 
Furthermore the first and the last item of each type within a sequence
is the most important one by Observation \ref{main_q}, thus we want to analyze the digraphs which can be
defined by a given upper bound for the number of items of each type. 
We define $S_{k,\ell}$
to be the set of all sequence digraphs defined by sets $Q$ 
on at most $k$ sequences that contain
at most $\ell$ 
items of each type in $\types(Q)$. By the definition we
know for every  two integers $k\geq 1$ and $\ell\geq 1$
the following inclusions between these graph classes.
\begin{eqnarray}
S_{k,\ell}    &\subseteq& S_{k+1,\ell}  \label{prop1a}\\
S_{k,\ell}    &\subseteq& S_{k,\ell+1}  \label{prop2b} 
\end{eqnarray}

By Lemma \ref{prop-b} and Observation \ref{main_q} we obtain the following bounds.

\begin{corollary}
Let $Q$ be a set on $k$ sequences and $\g(Q)=(V,A)\in S_{k,\ell}$  the defined graph
with $\un(\g(Q))=(V,E)$.
Then we can assume that $1\leq \ell\leq 2k$ and $1\leq k\leq |E|$.
\end{corollary}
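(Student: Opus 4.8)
The plan is to derive the two bounds directly from results already established. The statement concerns a set $Q$ on $k$ sequences with $\g(Q)=(V,A)\in S_{k,\ell}$, $\un(\g(Q))=(V,E)$, and asserts that we may assume $1\leq \ell\leq 2k$ and $1\leq k\leq |E|$. The phrase ``we can assume'' signals that we are free to replace $Q$ by another set of sequences defining the same digraph, so the task is to exhibit such a replacement witnessing the claimed ranges.

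First I would handle the upper bound $\ell\leq 2k$. Given the digraph $G=\g(Q)$, apply Observation~\ref{prop} (or Lemma~\ref{prop-b}) to obtain the sequence system $\q(G)$, and then, as in the proof of Lemma~\ref{prop-b}, merge the two sequences $[u,v]$ and $[v,u]$ arising from a pair of opposite arcs into a single sequence $[u,v,u]$. In the resulting set, each vertex $u$ can receive at most one item from each sequence $[u,w]$, $[w,u]$, or $[u,w,u]$ — and at most two from an $[u,w,u]$ — but more to the point, I would instead invoke Observation~\ref{main_q}: replacing $Q$ by $M(Q)$ keeps only the first and last item of each type in each of the $k$ sequences, so every type gets at most $2$ items per sequence, hence at most $2k$ items total. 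This directly gives $\ell\leq 2k$ (and $\ell\geq 1$ is immediate since each of the $|V|\geq 1$ types must occur).

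Next I would handle $k\leq |E|$. Here the idea is that a sequence contributing an item of a type that is isolated, or a sequence inducing no arc at all, or more generally sequences beyond what is needed, can be pruned. Concretely, by Lemma~\ref{prop-b} there is a set $Q'$ of at most $|E|$ sequences with $\g(Q')=G$, so we may simply assume $Q=Q'$ and get $k\leq |E|$. The lower bound $k\geq 1$ holds since $Q$ is nonempty (a digraph on at least one vertex requires at least one sequence to name that type). The one point needing a word is the degenerate case: if $E=\varnothing$ then $G$ is edgeless on $|V|$ vertices and is defined by the single sequence listing all types, so $k=1$; thus the bound $k\leq|E|$ should be read as $k\leq\max(1,|E|)$, or one notes that the corollary implicitly assumes $G$ has at least one edge, consistent with the preceding discussion where $S_{1,1}$ is the class of transitive tournaments.

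The main obstacle, such as it is, is bookkeeping rather than mathematics: one must be careful that the two normalizations (merging opposite arcs for the $\ell$-bound, and using the $\q(G)$-based set for the $k$-bound) are compatible, i.e. that one can choose a single set $Q$ simultaneously witnessing $\ell\leq 2k$ and $k\leq|E|$. This is fine: start from $\q(G)$, merge opposite-arc pairs to get at most $|E|$ sequences, then apply $M(\cdot)$ to each; merging and then truncating to first/last items only can decrease the sequence count and keeps each type's multiplicity at most $2$ per sequence, so both bounds hold for the same $Q$. I would present the argument as a short sequence of appeals to Lemma~\ref{prop-b} and Observation~\ref{main_q}, with a remark disposing of the edgeless corner case.
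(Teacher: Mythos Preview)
Your main argument is correct and matches the paper exactly: the paper simply states ``By Lemma~\ref{prop-b} and Observation~\ref{main_q} we obtain the following bounds'' before the corollary, and you unpack precisely these two references.

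One concrete error in your side remark on the edgeless case: you write that an edgeless digraph on $|V|$ vertices ``is defined by the single sequence listing all types, so $k=1$.'' This is false for $|V|\geq 2$. By Lemma~\ref{le-ud}, any digraph in $S_{1,\ell}$ is semicomplete, so a single sequence $[v_1,\ldots,v_n]$ produces a transitive tournament, not an edgeless digraph. An edgeless digraph on $n\geq 2$ vertices actually needs $n$ singleton sequences $[v_1],\ldots,[v_n]$, and hence $k=n>|E|=0$. Your alternative reading --- that the corollary tacitly assumes $|E|\geq 1$ --- is the right way out; the paper does not address this corner case either, and Lemma~\ref{prop-b} itself (via $\q(G)$) already breaks down when $A=\emptyset$ and $|V|\geq 1$.
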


\begin{lemma}\label{le-ud}
Let $\ell\geq 1$ and $G\in S_{1,\ell}$ be defined by $Q=\{q_1\}$, then digraph $\g(Q)$ is
semicomplete and thus graph $\un(\g(Q))$ is the complete graph on $|\types(Q)|$ 
vertices.
\end{lemma}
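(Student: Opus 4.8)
The plan is to show that for a single sequence $q_1$ with $V = \types(q_1)$, every pair of distinct types is joined by at least one arc in $\g(Q)$, which is exactly the definition of semicompleteness; the statement about $\un(\g(Q))$ then follows immediately since each edge $\{u,v\}$ of the underlying graph is present whenever at least one of $(u,v),(v,u)$ is an arc.

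First I would fix two distinct types $u,v \in \types(q_1)$. By definition of $\types(q_1)$, there is at least one item of type $u$ and at least one item of type $v$ occurring in $q_1$; say item $b_{1,i}$ has type $u$ and item $b_{1,j}$ has type $v$. Since all items of $q_1$ occupy distinct positions, we have either $i < j$ or $j < i$. In the first case the item of type $u$ is on the left of the item of type $v$, so $(u,v) \in A$; in the second case the item of type $v$ is on the left of the item of type $u$, so $(v,u) \in A$. In either case at least one of the two arcs is present, which is precisely the condition for $\g(Q)$ to be semicomplete. (Alternatively, one can phrase this via Lemma~\ref{fiandla-1}(\ref{fiandla-1a}): either $\FI(q_1,u) < \LA(q_1,v)$ or $\FI(q_1,v) < \LA(q_1,u)$ must hold, because $\FI(q_1,u) \le \LA(q_1,u)$ and $\FI(q_1,v) \le \LA(q_1,v)$ for types actually present, and these two intervals of positions cannot be disjoint-and-ordered in both directions simultaneously.)

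For the second claim, recall $\un(\g(Q)) = (V, \{\{x,y\} : (x,y) \in A\})$. By the first part, for every pair of distinct $u,v \in V = \types(Q)$ at least one of $(u,v),(v,u)$ lies in $A$, hence $\{u,v\}$ is an edge of $\un(\g(Q))$. Since this holds for all pairs, $\un(\g(Q))$ contains all $\binom{|\types(Q)|}{2}$ possible edges, i.e.\ it is $K_{|\types(Q)|}$.

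I do not anticipate a genuine obstacle here: the only point requiring a little care is the degenerate bookkeeping around the convention $\FI(q_i,t) = n_i+1$, $\LA(q_i,t) = 0$ for absent types, but this is a non-issue because every type in $V = \types(q_1)$ is by construction present in $q_1$, so its first and last items genuinely exist and $1 \le \FI(q_1,t) \le \LA(q_1,t) \le n_1$. Thus the argument is essentially a one-line case distinction on the relative order of two items, and writing it cleanly is the whole task.
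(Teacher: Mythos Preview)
Your proof is correct and follows essentially the same approach as the paper's own proof: pick two distinct types, observe that each has at least one item in $q_1$, and use the relative order of those two items to obtain at least one of the arcs $(u,v)$ or $(v,u)$. Your write-up is a bit more detailed (including the alternative via Lemma~\ref{fiandla-1} and the remark about the $\FI/\LA$ conventions), but the argument is the same one-line case distinction.
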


\begin{proof}
Let $Q=\{q_1\}$ be some set of one sequence which defines the sequence digraph $\g(Q)$.
For every vertex $v$ in $\g(Q)$ there is exactly one corresponding
type $t_{v}\in\types(q_1)$. Since $G\in S_{1,\ell}$ 
for every two vertices $v$ and $w$
in $\g(Q)$ there are two items $b_{1,i}$ and $b_{1,j}$ of type $t_v$ and 
$t_w$, respectively, which define the arc $(v,w)$ or $(w,v)$.
\end{proof}

\begin{corollary}\label{cc}
Let $G=(V,A)$ be a digraph, such that graph  $\un(G)$ has $k$ connected components. 
Then for every $k'<k$ and $\ell\geq 1$ it holds 
$G\not\in S_{k',\ell}$.
\end{corollary}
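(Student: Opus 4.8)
The plan is to prove the contrapositive-flavoured statement directly: if $G = \g(Q)$ for some set $Q$ of at most $k'$ sequences, then $\un(G)$ has at most $k'$ connected components, so that having strictly more than $k'$ components rules out membership in $S_{k',\ell}$ for every $\ell$. The key observation is that each single sequence $q_i \in Q$ contributes arcs only between types occurring in $q_i$, and by Lemma~\ref{le-ud} (applied to the one-element set $\{q_i\}$) these types together with the arcs they induce form a semicomplete, hence weakly connected, subdigraph of $G$ on the vertex set $\types(q_i)$.

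First I would fix $Q = \{q_1, \ldots, q_{k'}\}$ with $\g(Q) = G$ and note that $V = \types(Q) = \bigcup_{i=1}^{k'} \types(q_i)$. For each $i$, let $V_i = \types(q_i)$. By Lemma~\ref{le-ud}, the induced subdigraph $G[V_i]$ is semicomplete (indeed $\{q_i\}$ alone already puts an arc between every pair of distinct types of $q_i$, and all such arcs are present in $G$), so $\un(G)[V_i]$ is complete and in particular $\un(G)[V_i]$ is connected. Thus the vertex set of $\un(G)$ is covered by $k'$ sets $V_1, \ldots, V_{k'}$, each of which lies inside a single connected component of $\un(G)$.

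Next I would argue that a graph whose vertex set is covered by $m$ connected subsets has at most $m$ connected components: every connected component must contain at least one of the covering sets entirely (since each covering set is connected and hence confined to one component, and the sets together exhaust the vertices), so the number of components is at most the number of covering sets. Applying this with $m = k'$ gives that $\un(G)$ has at most $k'$ connected components. Hence if $\un(G)$ has $k$ connected components with $k' < k$, no such $Q$ can exist, i.e.\ $G \notin S_{k',\ell}$ for every $\ell \geq 1$.

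I do not anticipate a serious obstacle here; the only mild subtlety is making the covering argument precise when some $V_i$ could in principle be empty or a singleton — an empty sequence contributes nothing and a singleton type contributes an isolated vertex, but in both cases $\un(G)[V_i]$ is still (vacuously or trivially) connected, so the counting bound is unaffected. One should also note the argument is vacuous but correct when $Q$ is empty ($G$ has no vertices). The heart of the proof is really just Lemma~\ref{le-ud} plus the elementary fact that $k'$ connected "patches" cannot stitch together more than $k'$ components.
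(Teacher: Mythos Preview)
Your proposal is correct and follows exactly the approach the paper intends: the corollary is stated without proof, immediately after Lemma~\ref{le-ud}, precisely because the argument you give---each sequence $q_i$ yields a semicomplete (hence weakly connected) subdigraph on $\types(q_i)$, and $k'$ such connected patches cannot cover more than $k'$ connected components of $\un(G)$---is the intended one-line deduction. Your write-up just spells out the details the paper leaves implicit.
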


The following generalization of Lemma \ref{le-ud} to $k\geq 2$
sequences is easy to verify.

\begin{lemma}\label{l-con2}
Let $Q$ be some set of $k\geq 2$ sequences, then $\un(\g(Q))$ is connected if and only
if there is no set $Q'$, $\emptyset \neq Q'\subset Q$ such that  
$\types(Q')\cap \types(Q-Q')=\emptyset$.
\end{lemma}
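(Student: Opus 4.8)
The plan is to prove the contrapositive in one direction and a direct argument in the other, treating the statement as an equivalence about weak connectivity of $\un(\g(Q))$. The key observation is that the vertex set of $\g(Q)$ is $\types(Q)$, and an arc of $\g(Q)$ (hence an edge of $\un(\g(Q))$) between two types $u$ and $v$ can only arise if $u$ and $v$ occur together in a common sequence $q_i\in Q$ — this is immediate from the definition of the sequence digraph, since an arc requires one item of type $u$ to the left of one item of type $v$ (or vice versa) within a single sequence. So every edge of $\un(\g(Q))$ lies inside $\types(q_i)$ for some $i$.

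For the ``if'' direction (assuming no such partition $Q'$ exists, conclude connectivity), I would argue by contradiction: suppose $\un(\g(Q))$ is disconnected, and let $C$ be the set of types in one connected component, so $\emptyset\neq C\subsetneq \types(Q)$ and there are no edges of $\un(\g(Q))$ between $C$ and $\types(Q)\setminus C$. Set $Q'=\{q_i\in Q \mid \types(q_i)\cap C\neq\emptyset\}$. By the observation above, if some $q_i$ contained a type from $C$ and a type from $\types(Q)\setminus C$, those two types would be joined by an edge of $\un(\g(Q))$ (every pair of distinct types co-occurring in a sequence is adjacent), contradicting the choice of $C$. Hence each $q_i\in Q'$ has $\types(q_i)\subseteq C$ and each $q_i\in Q\setminus Q'$ has $\types(q_i)\subseteq \types(Q)\setminus C$, so $\types(Q')\subseteq C$ and $\types(Q\setminus Q')\subseteq \types(Q)\setminus C$, giving $\types(Q')\cap\types(Q\setminus Q')=\emptyset$. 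One must also check $Q'\neq\emptyset$ and $Q'\neq Q$: $Q'\neq\emptyset$ because $C$ is nonempty and every type appears in some sequence; $Q'\neq Q$ because $\types(Q)\setminus C$ is nonempty and any sequence realizing one of its types is not in $Q'$. This contradicts the hypothesis.

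For the ``only if'' direction (assuming such a partition $Q'$ exists, conclude $\un(\g(Q))$ is disconnected), let $\emptyset\neq Q'\subsetneq Q$ with $\types(Q')\cap\types(Q\setminus Q')=\emptyset$. Then $\types(Q')$ and $\types(Q\setminus Q')$ partition $\types(Q)=V$ into two nonempty parts, and since every edge of $\un(\g(Q))$ lies within some $\types(q_i)$ — which is entirely contained in one of the two parts — there is no edge between the parts, so $\un(\g(Q))$ is disconnected.

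The argument is essentially bookkeeping; the only mildly delicate point is making sure the partition $Q'$ we construct in the ``if'' direction is genuinely nonempty and a proper subset, which relies on the implicit assumption that every sequence $q_i$ is nonempty (so $\types(q_i)\neq\emptyset$) and that $k\geq 2$, both of which are given. I expect no real obstacle here; the lemma is, as the authors say, easy to verify.
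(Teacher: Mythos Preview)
Your proof is correct. The paper does not actually supply a proof of this lemma --- it is introduced with the remark that the statement ``is easy to verify'' and left at that --- so there is no argument to compare yours against; your contrapositive construction of $Q'$ from a connected component, together with the observation that any two distinct types occurring in a common sequence are adjacent in $\un(\g(Q))$, is exactly the kind of routine verification the authors had in mind.
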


Bounds on distribution and number of items for each type in $Q$ can
be used to classify $g(Q)$ into the classes $S_{k,\ell}$ as follows.

\begin{lemma}\label{l-c-d}
Let $Q$ be some set of $k$ sequences, then 
$\g(Q)\in S_{k, 2\cdot d_Q}\subseteq S_{k, 2\cdot k}$ 
and $\g(Q)\in S_{k, c_Q}$. 
\end{lemma}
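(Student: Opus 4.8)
The plan is to prove the three containments in Lemma~\ref{l-c-d} separately, using the reductions from earlier in the section. Recall that $\g(Q)\in S_{k,\ell}$ means there is \emph{some} set $Q'$ on at most $k$ sequences with at most $\ell$ items of each type such that $\g(Q')=\g(Q)$; so in each case I only need to exhibit a suitable witness set of sequences, not modify $Q$ itself in a canonical way.

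First I would handle $\g(Q)\in S_{k,c_Q}$, which is immediate: the set $Q$ itself has $k$ sequences and, by the definition of $c_Q=\max_{t\in\types(Q)}c_Q(t)$, every type occurs in at most $c_Q$ items across all of $Q$. Hence $Q$ is already a witness and $\g(Q)\in S_{k,c_Q}$. Next, for $\g(Q)\in S_{k,2\cdot d_Q}$, I would apply Observation~\ref{main_q}: $\g(Q)=\g(M(Q))$, where $M(Q)=\{M(q_1),\ldots,M(q_k)\}$ retains only the first and last item of each type in each sequence. In $M(q_i)$ each type of $\types(q_i)$ contributes at most two items, so the total number of items of a fixed type $t$ in $M(Q)$ is at most $2\cdot|\{q_i\in Q\mid t\in\types(q_i)\}|=2\cdot d_Q(t)\le 2\cdot d_Q$. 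Thus $M(Q)$ is a set of $k$ sequences with at most $2\cdot d_Q$ items of each type defining the same digraph, so $\g(Q)\in S_{k,2\cdot d_Q}$.

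Finally, the inclusion $S_{k,2\cdot d_Q}\subseteq S_{k,2\cdot k}$ follows from the trivial bound $d_Q\le k$ (stated right after the definitions of $d_Q$ and $c_Q$) together with the monotonicity relation~\eqref{prop2b}, namely $S_{k,\ell}\subseteq S_{k,\ell+1}$, applied $2(k-d_Q)$ times: $S_{k,2\cdot d_Q}\subseteq S_{k,2\cdot d_Q+1}\subseteq\cdots\subseteq S_{k,2\cdot k}$.

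I do not anticipate a genuine obstacle here; the lemma is essentially a bookkeeping consequence of Observation~\ref{main_q} and the elementary inequality $d_Q\le c_Q\le n$ and $d_Q\le k$. The only point requiring a moment's care is the counting argument for $M(Q)$: one must note that $M$ keeps at most two items per type \emph{per sequence}, and that $d_Q(t)$ counts exactly the number of sequences in which $t$ appears, so the per-type total over $M(Q)$ is at most $2d_Q(t)$. Everything else is monotonicity of the classes $S_{k,\ell}$ in $\ell$.
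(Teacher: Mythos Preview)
Your proposal is correct and follows the same approach as the paper: the paper's proof simply says that $\g(Q)\in S_{k,2\cdot d_Q}$ holds by Observation~\ref{main_q} and that the remaining assertions hold by definition. You have unpacked exactly these two sentences---the $M(Q)$ counting argument for the first containment, and the definitions of $c_Q$, $d_Q\le k$, and the monotonicity~\eqref{prop2b} for the rest---so there is nothing to add.
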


\begin{proof}
Relation $\g(Q)\in S_{k, 2\cdot d_Q}$ holds by Observation \ref{main_q}. 
The further results hold by definition. 
\end{proof}

Next we consider the relations of the defined classes for
$k=1$ sequence.
Since $S_{1,1}$ contains only digraphs with exactly
one arc between every pair of vertices (cf. Theorem \ref{s11}
for a more precise characterization) and  $S_{1,\ell}$ for $\ell\geq 2$ 
contains all bidirectional complete
graphs we know that $S_{1,1}\subsetneq S_{1,\ell}$ for every $\ell\geq 2$.
Further by  (\ref{prop2b}) and  by Observation\ref{main_q}   
it follows that all classes $S_{1, \ell}$ for $\ell\geq 2$ are equal.

\begin{lemma}\label{le-1-leq}
For every two integers $\ell,\ell' \geq 2$ it holds $S_{1,\ell}=S_{1, \ell'}$.
\end{lemma}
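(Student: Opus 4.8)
The plan is to show both inclusions $S_{1,\ell}\subseteq S_{1,\ell'}$ and $S_{1,\ell'}\subseteq S_{1,\ell}$ by reducing every instance to its canonical form. The key observation is Observation~\ref{main_q}: for any single sequence $q_1$, the digraph $\g(\{q_1\})$ depends only on the subsequence $M(q_1)$ obtained by keeping, for each type, just its first and last occurrence. Hence every digraph in $S_{1,\ell}$ (for any $\ell\geq 2$) is already realized by a sequence in which each type occurs at most twice, i.e.\ it lies in $S_{1,2}$. Combining this with the trivial inclusions $S_{1,2}\subseteq S_{1,\ell}$ and $S_{1,2}\subseteq S_{1,\ell'}$ from~(\ref{prop2b}) gives the claim.

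First I would argue $S_{1,\ell}\subseteq S_{1,2}$. Let $G\in S_{1,\ell}$, so $G=\g(\{q_1\})$ for some sequence $q_1$ in which each type appears at most $\ell$ times. By Observation~\ref{main_q}, $\g(\{q_1\})=\g(\{M(q_1)\})$, and in $M(q_1)$ every type occurs at most twice; thus $G\in S_{1,2}$. Second, the reverse inclusions $S_{1,2}\subseteq S_{1,\ell}$ and $S_{1,2}\subseteq S_{1,\ell'}$ follow directly from~(\ref{prop2b}) applied repeatedly (since $\ell,\ell'\geq 2$), because padding a sequence with extra copies of types only widens the admissible class. Putting these together: $S_{1,\ell}\subseteq S_{1,2}\subseteq S_{1,\ell'}$ and symmetrically $S_{1,\ell'}\subseteq S_{1,2}\subseteq S_{1,\ell}$, so $S_{1,\ell}=S_{1,\ell'}$.

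There is essentially no hard step here — the entire argument is a one-line consequence of Observation~\ref{main_q} together with the monotonicity~(\ref{prop2b}). The only thing to be careful about is the use of the hypothesis $\ell,\ell'\geq 2$: it is exactly what guarantees that the canonical two-occurrence form is admissible in both classes, which is why the statement fails for $\ell=1$ (indeed $S_{1,1}\subsetneq S_{1,2}$, as already noted in the text via the bidirectional complete graphs). I would present the proof as the short chain of inclusions above, citing Observation~\ref{main_q} for the reduction to $M(q_1)$ and~(\ref{prop2b}) for the padding direction.
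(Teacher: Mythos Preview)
Your proposal is correct and follows essentially the same approach as the paper, which does not give a detailed proof but simply states that the lemma follows from~(\ref{prop2b}) and Observation~\ref{main_q}. Your write-up merely spells out the one-line reasoning behind that remark.
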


\begin{corollary}\label{cor-1-seq} For $\ell \geq 2$ the following inclusions hold.
$$S_{1, 1}\subsetneq  S_{1, 2}= \ldots = S_{1, \ell}$$ 
\end{corollary}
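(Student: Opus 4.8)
The plan is to assemble Corollary~\ref{cor-1-seq} directly from the three facts that have just been established in the surrounding text, so the ``proof'' is really a short bookkeeping argument. First I would note that the strict inclusion $S_{1,1}\subsetneq S_{1,2}$ has two halves. The inclusion $S_{1,1}\subseteq S_{1,2}$ is the special case $k=1$, $\ell=1$ of~(\ref{prop2b}). For strictness I would exhibit a concrete separating digraph: the bidirectional complete digraph $\overleftrightarrow{K_2}$ (equivalently $\overleftrightarrow{K_n}$ for any $n\geq 2$) lies in $S_{1,2}$, since the single sequence $[v_1,v_2,v_1]$ (more generally an interleaving of two copies of each type) has $\g(Q)=\overleftrightarrow{K_n}$ and contains exactly two items of each type, whereas it is not in $S_{1,1}$ because any $G\in S_{1,1}$ is defined by a sequence containing exactly one item of each type, and a sequence with all types distinct puts, for each pair $u,v$, exactly one of $(u,v),(v,u)$ into the arc set — so $S_{1,1}$ contains no digraph with a pair of anti-parallel arcs. (This is exactly the content parenthetically attributed to Theorem~\ref{s11}, which I am entitled to cite.)

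Next I would handle the chain of equalities $S_{1,2}=S_{1,3}=\cdots=S_{1,\ell}$. This is simply Lemma~\ref{le-1-leq} applied to each consecutive pair $(\ell',\ell'+1)$ with $2\leq \ell'<\ell$: Lemma~\ref{le-1-leq} states $S_{1,m}=S_{1,m'}$ for all $m,m'\geq 2$, which already gives the whole chain in one stroke, but writing it as a telescoping chain of equalities is the cleanest way to display it. Stringing the strict inclusion together with this chain yields
$$S_{1,1}\subsetneq S_{1,2}=S_{1,3}=\cdots=S_{1,\ell}$$
for every $\ell\geq 2$, which is the claimed statement.

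There is essentially no obstacle here: every ingredient is already proved. The only place requiring a little care is making the strictness fully rigorous, i.e.\ arguing that no one-sequence set with exactly one item per type can realize a digraph with anti-parallel arcs. I would keep that argument to one sentence, leaning on Lemma~\ref{le-ud} (a single sequence yields a semicomplete digraph) together with the observation that with all items of distinct types the relation ``$u$ before $v$'' is a total order, hence for each pair exactly one arc direction appears — so $S_{1,1}$ is strictly smaller than any class containing $\overleftrightarrow{K_n}$. Everything else is a direct appeal to~(\ref{prop2b}) and Lemma~\ref{le-1-leq}.
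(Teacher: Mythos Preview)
Your proposal is correct and essentially mirrors the paper's own reasoning: the paper also obtains the strict inclusion by noting that $S_{1,1}$ contains only tournaments (citing Theorem~\ref{s11}) while $S_{1,\ell}$ for $\ell\geq 2$ contains bidirectional complete digraphs, and derives the chain of equalities directly from Lemma~\ref{le-1-leq}. Your direct one-line argument that a sequence with one item per type yields a total order (hence no anti-parallel arcs) is a perfectly fine substitute for invoking Theorem~\ref{s11}.
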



The equalities of  Lemma \ref{le-1-leq}
can not be generalized for $k>1$ sequences and $\ell\leq k$ since removing items from
different sequences can change the sequence digraph. In order to
give examples for digraphs which do not belong to some of
the classes $S_{k,\ell}$ we next show some useful results.

For a set of digraphs ${\mathcal F}$ we denote by {\em ${\mathcal F}$-free digraphs} 
the set of all digraphs $G$ such that no induced subdigraph of $G$ is isomorphic
to a member of ${\mathcal F}$.
If  ${\mathcal F}$ consists of only one digraph $F$, we write $F$-free instead of $\{F\}$-free.
For undirected graphs we use this notation as well.

\begin{lemma}\label{c3}
Let $G=(V,E)$ be a triangle free graph, i.e. a $C_3$-free
graph,  with $|E|\geq 2$, such that  $\Delta(G)=\ell$ and $G'=(V,A)$ be an orientation
of $G$. 
Then for $k=|E|$ it holds $G'\in S_{k,\ell}$ but for $k'<k$ or $\ell'<\ell$ it holds 
$G'\not\in S_{k',\ell'}$.
\end{lemma}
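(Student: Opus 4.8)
The statement has two parts: a positive claim that $G' \in S_{k,\ell}$ for $k = |E|$, and a negative claim that $G' \notin S_{k',\ell'}$ whenever $k' < k$ or $\ell' < \ell$. For the positive part I would start from the sequence system $Q = \q(G')$, which by Observation~\ref{prop} satisfies $\g(Q) = G'$ and consists of exactly $|E| = k$ sequences, each of length $2$ and hence containing at most $2$ items of each type. This already shows $G' \in S_{k,2}$, but we need the sharper bound $G' \in S_{k,\ell}$ with $\ell = \Delta(G)$. The key observation is that, because $G$ is $C_3$-free and $G'$ is an \emph{orientation} of $G$ (so at most one of $(u,v),(v,u)$ is an arc), every item of type $t$ occurring in $\q(G')$ comes from a distinct incident arc of $t$ in $G$; thus type $t$ appears at most $\deg_G(t) \le \ell$ times across all of $Q$. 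Hence $c_Q \le \ell$, and by Lemma~\ref{l-c-d} we get $\g(Q) \in S_{k, c_Q} \subseteq S_{k,\ell}$.

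For the negative part I would argue by contradiction on each parameter separately. Suppose $G' \in S_{k',\ell'}$ via some set $Q'$ of $k' \le k$ sequences with at most $\ell'$ items of each type. I would want to show that $|E|$ sequences are genuinely necessary, i.e. $k' \ge |E|$, and similarly $\ell' \ge \Delta(G)$. The natural tool is again the $C_3$-freeness: if $Q'$ had a sequence $q_i$ of reduced form $C(q_i)$ containing three or more distinct types $u,v,w$ appearing in that order, then $(u,v), (u,w), (v,w)$ would all be arcs of $G'$, giving a triangle in $\un(G')= G$ (an orientation of a triangle uses all three edges), contradicting that $G$ is triangle-free. Therefore every $C(q_i)$ contains at most two distinct types, and by Lemma~\ref{le-q}-type reasoning each such sequence can contribute at most the single edge $\{u,v\}$ (in one or both directions) to $\un(\g(Q')) = G$. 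Since the $k'$ sequences must collectively realize all $|E|$ edges of $G$, and each realizes at most one edge, we conclude $k' \ge |E| = k$, contradicting $k' < k$.

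For the bound on $\ell'$: having established that each $C(q_i)$ uses at most two distinct types, consider a vertex $v$ of maximum degree $\Delta(G) = \ell$ in $G$. Each of the $\ell$ edges of $G$ incident to $v$ must be realized by a distinct sequence among $Q'$ (each sequence realizes at most one edge), and each such sequence, since it realizes an edge incident to $v$, must contain at least one item of type $v$. Hence type $v$ occurs at least $\ell$ times across $Q'$ — in at least $\ell$ distinct sequences — so the number of items of type $v$ in $Q'$ is at least $\ell$, forcing $\ell' \ge \ell$ and contradicting $\ell' < \ell$. (One should double-check the degenerate boundary case; the hypothesis $|E| \ge 2$ is there precisely to rule out a single-edge graph where a single sequence $[u,v]$ or $[u,v,u]$ could behave differently, and to ensure $\Delta(G) \ge 1$ with the counting argument nonvacuous.)

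\textbf{Main obstacle.} The only delicate point I foresee is justifying rigorously that in $Q'$ \emph{each} sequence realizes at most one edge of $G$ — this rests entirely on the $C_3$-freeness forcing $|\types(C(q_i))| \le 2$, and then on the fact that a two-type reduced sequence $[u,\dots,v]$ or $[u,\dots,v,\dots,u]$ yields exactly the pair/bidirectional-pair on $\{u,v\}$. Once that structural lemma about the reduced sequences is nailed down, both counting arguments (for $k'$ and for $\ell'$) are straightforward pigeonhole. I would state and prove this ``each sequence $=$ at most one edge'' claim as the first step, then deduce the three inequalities $k' \ge k$, $\ell' \ge \ell$, and (for completeness) re-derive the membership $G' \in S_{k,\ell}$ from it.
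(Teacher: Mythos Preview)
Your proposal is correct and follows essentially the same approach as the paper: both rest on the observation that triangle-freeness forces every sequence in any representing set $Q'$ to contain at most two distinct types (three distinct types would, by Lemma~\ref{le-ud}, induce a $K_3$ in $\un(G')=G$), so each sequence realizes at most one edge of $G$, and pigeonhole then gives both lower bounds $k'\ge |E|$ and $\ell'\ge \Delta(G)$. Your write-up is in fact more careful than the paper's terse proof, which leaves the positive part and the $\ell'$-bound largely implicit.

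One small slip to fix: your intermediate claim ``this already shows $G'\in S_{k,2}$'' is wrong. The parameter $\ell$ in $S_{k,\ell}$ bounds the number of items of each type \emph{across all sequences}, not per sequence; for instance, if $G$ is a star $K_{1,3}$ then the center type appears three times in $\q(G')$. This is harmless, since your very next sentence gives the correct count $c_Q\le \Delta(G)=\ell$, which is exactly what is needed.
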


\begin{proof}
Let $G$ and $G'$ be as in the statement of the lemma. Further let $Q$ be a set 
of sequences such that $G'=\g(Q)$. If some sequence  $q_i\in Q$ defines
more than one arc of $G'$, then $q_i$ has to contain at least three items
of pairwise distinct types. This would induce a $C_3$
in $G$, which is not possible by our assumption.
Thus  we have to represent every arc $(u,v)$ of $G'$
by one sequence $[u,v]$.
Thus $G'$ can not be defined by less than $k=|E|$ sequences or less than  $\ell=\Delta(G)$
items for every type.
\end{proof}

Since for every $k\geq 2$ and every  $\ell=2,\ldots,k$ 
there is a tree $T$ on $k$ edges and $\Delta(T)=\ell$ we 
know by Lemma \ref{c3} that for $k\geq 2$ and $\ell=2,\ldots,k$ we have
$S_{k,\ell-1} \subsetneq S_{k, \ell}$.
%
Further by Observation \ref{main_q}  we know that for $k\geq 2$ and $\ell\geq 2k$ it holds
$S_{k,\ell} = S_{k, \ell+1}$.

\begin{corollary}\label{cor2-seq} For  $k\geq 2$ 
the following inclusions hold.
$$S_{k, 1} \subsetneq S_{k, 2}  \subsetneq \ldots \subsetneq S_{k,k} \subseteq  S_{k,k+1}\subseteq  \ldots \subseteq  S_{k,2k} =  S_{k,2k+1} = \ldots$$
\end{corollary}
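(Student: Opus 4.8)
The plan is to establish Corollary~\ref{cor2-seq} by assembling the chain from results already proved in the excerpt, so that the only new work is to check the two strictness claims at the ends of the chain and to note the trivial inclusions in the middle.

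First I would record the routine middle inclusions $S_{k,\ell}\subseteq S_{k,\ell+1}$ for all $\ell\geq 1$; these are exactly (\ref{prop2b}). That already gives the full chain of (not necessarily strict) inclusions $S_{k,1}\subseteq S_{k,2}\subseteq\cdots$. Next I would invoke Lemma~\ref{c3} to upgrade the first $k-1$ of these to strict inclusions: for each $\ell\in\{2,\ldots,k\}$ there is a tree $T$ on $k$ edges with $\Delta(T)=\ell$ (take a path on $k$ edges suitably augmented with pendant edges at an internal vertex, or more simply the $\ell$-star together with a path padding its edge count to $k$; since $k\geq\ell$ such a tree exists). A tree is $C_3$-free and any orientation $T'$ of it satisfies, by Lemma~\ref{c3}, $T'\in S_{k,\ell}$ but $T'\notin S_{k',\ell'}$ whenever $k'<k$ or $\ell'<\ell$; in particular $T'\in S_{k,\ell}\setminus S_{k,\ell-1}$, witnessing $S_{k,\ell-1}\subsetneq S_{k,\ell}$. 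This is precisely the remark made immediately before the corollary, so this step is already done in the text.

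Then I would handle the stabilization at the top of the chain. By Observation~\ref{main_q} every sequence digraph $\g(Q)$ equals $\g(M(Q))$, where $M(Q)$ keeps only the first and last item of each type in each of the $k$ sequences; hence each type occurs at most $2$ times per sequence and at most $2k$ times in total. Therefore any $\g(Q)$ defined by $k$ sequences already lies in $S_{k,2k}$, which gives $S_{k,\ell}\subseteq S_{k,2k}$ for every $\ell$, and combined with (\ref{prop2b}) yields $S_{k,2k}=S_{k,2k+1}=\cdots$. This is the ``Further by Observation~\ref{main_q}'' sentence preceding the corollary. Assembling the three pieces — strict inclusions for $2\leq\ell\leq k$, trivial inclusions for $k\leq\ell\leq 2k$, and equality from $2k$ onward — produces exactly the displayed chain, completing the proof.

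The main thing to be careful about is the existence of the witnessing trees in the range $2\leq\ell\leq k$: one must exhibit, for each such $\ell$, a tree with exactly $k$ edges and maximum degree exactly $\ell$. This is genuinely elementary (a ``broom'': a star $K_{1,\ell}$ with a path of length $k-\ell$ attached to one of its leaves has $k$ edges and, since $\ell\geq 2$, maximum degree $\ell$), but it is the one place where the argument is not a literal citation of an earlier statement. Everything else is bookkeeping over the already-established Lemma~\ref{c3}, Observation~\ref{main_q}, and the inclusion (\ref{prop2b}); no separate obstacle arises because the hard combinatorial content lives entirely inside those earlier results.
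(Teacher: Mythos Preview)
Your proposal is correct and follows essentially the same approach as the paper: the corollary is stated without a separate proof, being assembled from the two sentences immediately preceding it (strictness via Lemma~\ref{c3} applied to trees on $k$ edges with $\Delta(T)=\ell$, and stabilization at $2k$ via Observation~\ref{main_q}) together with the trivial inclusion~(\ref{prop2b}). If anything you are more careful than the paper, since you actually exhibit the broom construction witnessing a tree with $k$ edges and maximum degree exactly $\ell$ for each $2\le\ell\le k$, whereas the paper merely asserts such trees exist.
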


\begin{proposition}\label{th-ind-sg}
Let $G\in S_{k,\ell}$, then for every induced subdigraph 
$H$ of $G$ it holds  $H\in S_{k,\ell}$.
\end{proposition}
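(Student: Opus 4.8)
The goal is to show that the classes $S_{k,\ell}$ are closed under taking induced subdigraphs. The natural approach is constructive: start from a set $Q=\{q_1,\ldots,q_k\}$ of sequences witnessing $G=\g(Q)\in S_{k,\ell}$, fix a vertex subset $V'\subseteq V(G)$, and manufacture a set $Q'$ of at most $k$ sequences with at most $\ell$ items of each type such that $\g(Q')=G[V']$. The obvious candidate for $Q'$ is obtained from $Q$ by deleting, from each sequence $q_i$, every item whose type lies outside $V'$ (equivalently, restricting each $q_i$ to $\types(q_i)\cap V'$ while preserving the left-to-right order). Call the resulting sequences $q_i'$ and set $Q'=\{q_1',\ldots,q_k'\}$.

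First I would check the two quantitative constraints. Since $Q'$ is obtained by deletions only, $|Q'|\le k$ (some $q_i'$ may become empty, which only helps), and for every type $t\in\types(Q')\subseteq V'$ the number of items of type $t$ in $Q'$ is at most the number in $Q$, hence at most $\ell$. So the membership parameters are preserved; the only real content is the identity $\g(Q')=G[V']$, which I would establish by an arc-by-arc argument using the definition of the sequence digraph. On vertex sets this is immediate: $\types(Q')=\types(Q)\cap V'=V'$ (one may assume without loss of generality that every vertex of $V'$ still appears — if some type of $V'$ had no item in $Q$ at all it could not have been in $G$). For arcs: given $u,v\in V'$ with $u\neq v$, there is an arc $(u,v)$ in $\g(Q')$ iff some $q_i'$ contains an item of type $u$ to the left of an item of type $v$; since deleting items of types outside $\{u,v\}$ does not change the relative order of the surviving items, this happens iff the original $q_i$ contained an item of type $u$ left of an item of type $v$, i.e.\ iff $(u,v)\in A(\g(Q))=A(G)$. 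Combined with $u,v\in V'$, this is exactly the condition $(u,v)\in A(G[V'])$ from the definition of induced subdigraph.

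The verification is entirely routine — there is essentially no obstacle, since order is preserved under deletion of entries from a sequence. The one point deserving a sentence of care is the degenerate case where a sequence $q_i'$ ends up empty or with items of a single type: an empty sequence contributes nothing and may be dropped, and a single-type sequence contributes no arcs, so in either case it is harmless, and in particular one may always arrange $|Q'|\le k$ with the stated bounds. One could phrase the whole thing slightly more slickly by invoking Observation~\ref{main_q} to first pass to $M(Q)$, but this is not needed. Thus $G[V']=\g(Q')$ with $Q'$ a set of at most $k$ sequences containing at most $\ell$ items of each type, so $G[V']\in S_{k,\ell}$, completing the proof.
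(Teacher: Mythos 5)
Your proof is correct and follows exactly the paper's approach: the paper's (one-sentence) proof likewise obtains $Q'$ by restricting each sequence of $Q$ to the items whose types correspond to vertices of $H$. Your version simply spells out the routine verifications (order preservation under deletion, the bounds $|Q'|\le k$ and at most $\ell$ items per type) that the paper leaves implicit.
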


\begin{proof}
A set of sequences $Q'$ for an induced 
subdigraph $H$ can be obtained from a set $Q$ of the original digraph 
$G$ by restricting $Q'$ to the items which are destinated for types corresponding
to vertices of $H$.
\end{proof}

Graph classes which are closed under taking induced subgraphs
are called {\em hereditary}. Hereditary graph classes are exactly
those classes which can be defined by a set of forbidden induced
subgraphs. On the other hand, the classes $S_{k,\ell}$ are not closed 
under taking arbitrary subgraphs by the following example.

\begin{example}\label{le-cl}
\begin{enumerate}
\item \label{le-cl-1}
For every $n\geq 1$ and $\ell \geq 2$  it holds 
$\overleftrightarrow{K_n}\in S_{1, \ell}$, which can be
verified by set  $Q=\{q_1\}$, where
$$q_1=[v_1,v_2,\ldots,v_n,v_1,v_2,\ldots,v_n].$$

\item 
Every orientation $T'_{n,d}$ of a tree $T_{n,d}$ on $n$ vertices and $\Delta(T)=d\leq n-1$  can be obtained
as a subdigraph of $\overleftrightarrow{K_n}$. 
Since by Lemma \ref{c3}  for $n\geq 3$ it holds $T'_{n,d}\in S_{n-1,d}$ but $T'_{n,d}\not\in S_{n-2,d}$ 
and $T'_{n,d}\not\in S_{n-1,d-1}$
the classes  $S_{k, \ell}$ for each $k\geq 1$ and $\ell \geq 2$ 
are not closed under taking arbitrary subgraphs.

\item By Theorem \ref{s1k} digraph $TT_k$ which consists
of the disjoint union of $k$ transitive tournaments is 
in $S_{k,1}-S_{k-1,1}$.

\item If we remove exactly one edge from $TT_k$ we obtain
a digraph $TT'_k$ such that  $TT'_k\not\in S_{k,1}$. Thus 
the classes  $S_{k,1}$ for every $k\geq 1$ 
are not closed under taking arbitrary subgraphs.
\end{enumerate}
\end{example}

For orientations of paths and  cycles 
Lemma \ref{c3} leads to the following bounds.

\begin{lemma}\label{le-pa}
\begin{enumerate}
\item
Let $P'_n$ be an orientation of a path $P_n$ on $n\geq 2$ vertices, e.g. $P'_n=\overrightarrow{P_n}$
for $n\geq 2$. Then for every $\ell \geq 2$  it holds 
$P'_n\in S_{n-1, \ell}$.
But for every $\ell\geq 1$ it holds  $P'_n\not\in S_{n-2, \ell}$.

\item 
Let $C'_n$ be an orientation of a cycle $C_n$ on $n\geq 4$ vertices, e.g. $C'_n=\overrightarrow{C_n}$
for $n\geq 4$. Then for every $\ell \geq 2$  it holds 
$C'_n\in S_{n, \ell}$.
But for every $\ell\geq 1$ it holds  $C'_n\not\in S_{n-1, \ell}$.
\end{enumerate}
\end{lemma}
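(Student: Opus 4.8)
The plan is to apply Lemma \ref{c3} directly to paths and cycles, treating the two parts separately, and then to handle the residual edge cases that fall outside the hypotheses of that lemma (which requires $|E|\geq 2$ and, for the cycle, a triangle-free graph with $n\geq 4$).

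For part (1), the path $P_n$ on $n\geq 2$ vertices has $|E|=n-1$ edges, is triangle-free, and has $\Delta(P_n)=2$ as soon as $n\geq 3$ (and $\Delta(P_2)=1$). First I would dispose of the case $n=2$: here $P'_2$ is a single arc $(u,v)$ (or $(v,u)$), which is realized by $Q=\{[u,v]\}$, so $P'_2\in S_{1,\ell}$ for every $\ell\geq 1$ (note $n-1=1$), and $S_{0,\ell}$ is empty, so the second assertion is vacuous since $n-2=0$. For $n\geq 3$, apply Lemma \ref{c3} with $k=|E|=n-1$ and $\Delta(P_n)=2$: we obtain $P'_n\in S_{n-1,2}\subseteq S_{n-1,\ell}$ for all $\ell\geq 2$ by inclusion (\ref{prop2b}), and $P'_n\notin S_{k',\ell'}$ for $k'<n-1$ or $\ell'<2$; in particular $P'_n\notin S_{n-2,\ell}$ for every $\ell\geq 1$. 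That the specific orientation $\overrightarrow{P_n}$ works is immediate since it is one such orientation.

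For part (2), the cycle $C_n$ on $n\geq 4$ vertices is triangle-free (this is exactly why $n\geq 4$ is required — $C_3$ itself is a triangle and Lemma \ref{c3} would not apply), has $|E|=n\geq 4\geq 2$, and $\Delta(C_n)=2$. Apply Lemma \ref{c3} with $k=|E|=n$ and $\Delta(C_n)=2$: we get $C'_n\in S_{n,2}\subseteq S_{n,\ell}$ for all $\ell\geq 2$, and $C'_n\notin S_{k',\ell'}$ for $k'<n$; in particular $C'_n\notin S_{n-1,\ell}$ for every $\ell\geq 1$. Again the displayed example $\overrightarrow{C_n}$ is a valid instance of an orientation of $C_n$.

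The only real content is checking the hypotheses of Lemma \ref{c3} are met in each case, which they are, so there is no genuine obstacle here; the one point demanding care is the degenerate case $n=2$ in part (1), where $\Delta(P_2)=1$ rather than $2$, so Lemma \ref{c3} must be bypassed by a direct one-line argument, and where the lower-bound claim about $S_{n-2,\ell}=S_{0,\ell}$ is vacuously true. For the cycle part, one should emphasize that the bound $n\geq 4$ is precisely what guarantees triangle-freeness, which is the hypothesis of Lemma \ref{c3} that would otherwise fail.
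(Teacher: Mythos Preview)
Your proposal is correct and follows exactly the paper's approach: the paper does not give a separate proof of this lemma, merely stating that ``for orientations of paths and cycles Lemma~\ref{c3} leads to the following bounds,'' and you have spelled out precisely that application. Your explicit treatment of the degenerate case $n=2$ in part~(1), where $|E|=1$ falls outside the hypothesis $|E|\geq 2$ of Lemma~\ref{c3}, is in fact more careful than the paper itself.
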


By Lemma \ref{le-pa} and Proposition \ref{th-ind-sg} we obtain the following results.

\begin{proposition}\label{le-pa2} Let $\ell \geq 1$, $k\geq 1$, and
$G\in S_{k,\ell}$, then $G$ has no induced path $\overrightarrow{P_{k'}}$ 
for $k'\geq k+2$. 
\end{proposition}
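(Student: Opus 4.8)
The plan is to derive this immediately from Lemma~\ref{le-pa}(1) together with the hereditary property in Proposition~\ref{th-ind-sg}. Suppose for contradiction that $G\in S_{k,\ell}$ contains an induced directed path $\overrightarrow{P_{k'}}$ with $k'\geq k+2$. By Proposition~\ref{th-ind-sg} every induced subdigraph of $G$ is again in $S_{k,\ell}$, so in particular the induced subdigraph $H=G[\{v_1,\ldots,v_{k'}\}]$ isomorphic to $\overrightarrow{P_{k'}}$ lies in $S_{k,\ell}$.

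Now I would invoke the inclusion chain to reduce to the case that matters. Since $S_{k,\ell}\subseteq S_{k,\ell'}$ for $\ell'\geq\ell$ by~(\ref{prop2b}), we may assume $\ell\geq 2$ (if $\ell=1$ just pass to $S_{k,2}$). Then Lemma~\ref{le-pa}(1), applied with $n=k'$, states that $P'_n\in S_{n-1,\ell}$ but $P'_n\not\in S_{n-2,\ell}$ for every $\ell$; more relevantly, since an orientation of $P_n$ is not in $S_{k',\ell}$ for any $k'<n-1$ (this follows from Lemma~\ref{c3}, as a path is triangle-free), we get that $\overrightarrow{P_{k'}}\notin S_{m,\ell}$ whenever $m\leq k'-2$. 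Taking $m=k$ and using $k'\geq k+2$, i.e.\ $k\leq k'-2$, yields $\overrightarrow{P_{k'}}\notin S_{k,\ell}$, contradicting $H\in S_{k,\ell}$.

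There is essentially no obstacle here: the statement is a direct corollary of the two cited results, and the only point requiring a moment's care is matching up the index bounds — namely that ``$k'\geq k+2$'' is exactly ``$k\leq k'-2$'', which is the threshold below which a path on $k'$ vertices fails to be realizable by $k$ sequences. One should also note that the conclusion is about \emph{induced} paths, which is why Proposition~\ref{th-ind-sg} (hereditariness) is needed rather than just closure under subdigraphs (which indeed fails, as Example~\ref{le-cl} shows). Writing this out, the proof is two or three sentences.

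\begin{proof}
Suppose to the contrary that $G\in S_{k,\ell}$ has an induced path $\overrightarrow{P_{k'}}$ with $k'\geq k+2$. By Proposition~\ref{th-ind-sg} this induced subdigraph $\overrightarrow{P_{k'}}$ itself belongs to $S_{k,\ell}$, and by~(\ref{prop2b}) we may assume $\ell\geq 2$. Since $P_{k'}$ is triangle free with $|E(P_{k'})|=k'-1\geq 2$ and $\Delta(P_{k'})\leq 2\leq\ell$, Lemma~\ref{c3} gives $\overrightarrow{P_{k'}}\notin S_{k'-2,\ell}$, hence $\overrightarrow{P_{k'}}\notin S_{k,\ell}$ by~(\ref{prop1a}) because $k\leq k'-2$. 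This contradiction proves the claim.
\end{proof}
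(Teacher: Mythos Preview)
Your proof is correct and follows essentially the same route as the paper, which simply cites Lemma~\ref{le-pa} and Proposition~\ref{th-ind-sg}; you have merely unpacked Lemma~\ref{le-pa}(1) back to its source Lemma~\ref{c3}. One cosmetic remark: the reduction to $\ell\geq 2$ and the clause ``$\Delta(P_{k'})\leq 2\leq\ell$'' are unnecessary, since Lemma~\ref{c3} already gives $\overrightarrow{P_{k'}}\notin S_{m,\ell'}$ for every $m<k'-1$ and \emph{every} $\ell'$ via the first disjunct, so the argument works verbatim for $\ell=1$ as well.
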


\begin{proposition}\label{le-pa3}  
Let $\ell \geq 1$, $k\geq 3$, and $G\in S_{k,\ell }$, then $G$ has no induced cycle 
$\overrightarrow{C_{k'}}$ for $k'\geq k+1$.
\end{proposition}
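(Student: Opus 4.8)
The plan is to prove Proposition~\ref{le-pa3} in close analogy with Proposition~\ref{le-pa2}, using the same combination of Lemma~\ref{le-pa} and Proposition~\ref{th-ind-sg}. Suppose for contradiction that $G\in S_{k,\ell}$ contains an induced directed cycle $\overrightarrow{C_{k'}}$ for some $k'\geq k+1$. Since $k\geq 3$ we have $k'\geq 4$, so $\overrightarrow{C_{k'}}$ is an orientation of a cycle $C_{k'}$ on $k'\geq 4$ vertices, and Lemma~\ref{le-pa}(2) applies to it.

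First I would invoke Proposition~\ref{th-ind-sg}: since $G\in S_{k,\ell}$ and $\overrightarrow{C_{k'}}$ is an induced subdigraph of $G$, it follows that $\overrightarrow{C_{k'}}\in S_{k,\ell}$. Next, by inclusion~(\ref{prop2b}) (iterated) we may assume $\ell\geq 2$ if needed, but in fact Lemma~\ref{le-pa}(2) gives the non-membership statement for \emph{every} $\ell'\geq 1$, so no adjustment is required: it states $\overrightarrow{C_{n}}\not\in S_{n-1,\ell'}$ for every $\ell'\geq 1$ and every $n\geq 4$. Applying this with $n=k'$ yields $\overrightarrow{C_{k'}}\not\in S_{k'-1,\ell}$. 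Finally, since $k'\geq k+1$, i.e.\ $k\leq k'-1$, inclusion~(\ref{prop1a}) gives $S_{k,\ell}\subseteq S_{k'-1,\ell}$, so $\overrightarrow{C_{k'}}\in S_{k,\ell}$ would imply $\overrightarrow{C_{k'}}\in S_{k'-1,\ell}$, a contradiction. Hence $G$ has no induced $\overrightarrow{C_{k'}}$ for $k'\geq k+1$.

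There is essentially no hard part here; the whole argument is a monotonicity chase. The one point to be careful about is that Lemma~\ref{le-pa}(2) is only stated for cycles on $n\geq 4$ vertices, which is exactly why the hypothesis $k\geq 3$ (forcing $k'\geq k+1\geq 4$) is needed — for $k=2$ one would have to worry about $\overrightarrow{C_3}$, which is handled separately elsewhere (cf.\ Lemma~\ref{c3} and Table~\ref{F-co2}). I would also note explicitly that the statement of Lemma~\ref{le-pa}(2) quantifies the non-membership over all $\ell'\geq 1$, so the value of $\ell$ in $S_{k,\ell}$ plays no role in the contradiction; this is what makes the conclusion uniform in $\ell$.
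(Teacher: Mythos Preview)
Your proof is correct and follows exactly the approach indicated in the paper, which simply states that Propositions~\ref{le-pa2} and~\ref{le-pa3} are obtained from Lemma~\ref{le-pa} together with Proposition~\ref{th-ind-sg}. Your observation that the hypothesis $k\geq 3$ is precisely what guarantees $k'\geq 4$ so that Lemma~\ref{le-pa}(2) applies is also spot on.
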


\subsection{Characterizations of Sequence Digraphs for $k=1$ or $\ell=1$}\label{sec-char-1s}

By Proposition \ref{th-ind-sg} all the classes $S_{k,\ell}$ are  hereditary 
and thus can be characterized by a set of forbidden induced subgraphs. 
In this section we even show a finite set of  forbidden induced subgraphs
for all classes $S_{k,\ell}$ where $\ell=1$ and for all classes where $k=1$. 
These characterizations lead to polynomial time  recognition algorithms for 
the corresponding graph classes. Furthermore we give characterizations
in terms of special tournaments and conditions for the complement digraph.

\subsubsection{Sequence Digraphs for $k=1$ and $\ell=1$}\label{sec-1-1}

Regarding the algorithmic use of special digraphs it is often helpful to know
whether they are transitive or at least quasi transitive.
A digraph $G=(V,A)$ is called {\em transitive} if for 
every pair $(u,v)\in A$ and $(v,w)\in A$ of arcs
with $u\neq w$ the arc $(u,w)$ also belongs to $A$.

\begin{lemma}\label{l1}
Every digraph in $S_{1,1}$ is transitive.
\end{lemma}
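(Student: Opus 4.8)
The plan is to use the characterization from Lemma~\ref{fiandla-1}(\ref{fiandla-1a}): for a set $Q=\{q_1\}$ of one sequence, there is an arc $(u,v)$ in $\g(Q)$ if and only if $\FI(q_1,u)<\LA(q_1,v)$. Since we are in $S_{1,1}$, each type has exactly one item in $q_1$, so for every vertex $x$ we have $\FI(q_1,x)=\LA(q_1,x)$; call this common value $p(x)$, the position of the unique item of type $x$. Then the arc condition simply becomes: $(u,v)\in A$ if and only if $p(u)<p(v)$. I would first record this normalization explicitly, as it is the crux that makes the whole argument trivial.

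Given this, transitivity is immediate: suppose $(u,v)\in A$ and $(v,w)\in A$ with $u\neq w$. Then $p(u)<p(v)$ and $p(v)<p(w)$, hence $p(u)<p(w)$, which by the same characterization gives $(u,w)\in A$. So I would carry out exactly two steps: (1) observe that in $S_{1,1}$ the function $\FI(q_1,\cdot)=\LA(q_1,\cdot)=p(\cdot)$ is well-defined and injective (distinct items occupy distinct positions), and invoke Lemma~\ref{fiandla-1}(\ref{fiandla-1a}) to rewrite the arc relation as the strict linear order induced by $p$; (2) conclude transitivity since a strict linear order is transitive.

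There is essentially no obstacle here — the statement is a direct corollary of Lemma~\ref{fiandla-1}(\ref{fiandla-1a}) once one notices that with one item per type the "first" and "last" positions coincide. If one wanted to avoid citing Lemma~\ref{fiandla-1} one could argue directly from the definition of the sequence digraph: an arc $(u,v)$ exists iff the item of type $u$ lies left of the item of type $v$ in $q_1$, and "lies left of" is a transitive relation on the items. Either way the proof is a two-line observation, so I would just present it as such.
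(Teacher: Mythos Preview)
Your proposal is correct and essentially identical to the paper's proof: both observe that with one item per type the arc relation reduces to the position order in $q_1$, and then use transitivity of~$<$. The only cosmetic difference is that you frame it via Lemma~\ref{fiandla-1}(\ref{fiandla-1a}) (and also sketch the direct argument), whereas the paper argues directly from the definition of the sequence digraph.
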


\begin{proof}
Let $Q=\{q_1\}$ be some set of one sequence. 
If for every type there is only one item in $q_1$, then two arcs
$(u,v)$ and $(v,w)$ can only be defined in $\g(Q)$ if $q_1$ contains some item $b_{1,i}$
of type $u$, some item $b_{1,i'}$
of type $v$, and some item $b_{1,i''}$
of type $w$ such that $i<i'$ and $i'<i''$. By the definition of the
sequence digraph $\g(Q)$ also has the arc $(u,w)$.
\end{proof}

In order to characterize digraphs in $S_{1,1}$ we recall 
the following operations which are used in
the definition of oriented threshold graphs in \cite{Boe15}. For 
some digraph $G=(V,A)$ and some operation
$o\in\{\text{IV},\text{OD},\text{ID}\}$ we define by $G'=o(G)$ the digraph 
with vertex set $V\cup \{v\}$ such that $G'[V]=G$ and
\begin{itemize}
\item $N^+_{G'}(v)=\emptyset$ and  $N^-_{G'}(v)=\emptyset$ ($v$ is an isolated vertex, IV for short)
\item $N^+_{G'}(v)=V$ and $N^-_{G'}(v)=\emptyset$   ($v$ is an out-dominating vertex, OD for short)
\item $N^+_{G'}(v)=\emptyset$ and $N^-_{G'}(v)=V$ ($v$ is an in-dominated vertex, ID for short)
\end{itemize}
For some set of operations $\mathcal{O}\subseteq \{\text{IV},\text{OD},\text{ID}\}$ 
let $\mathcal{G_O}$ be the following set of all digraphs.
\begin{itemize}
\item The one vertex graph $(\{v\},\emptyset)$ is in $\mathcal{G_O}$ and
\item for every digraph $G\in \mathcal{G_O}$ and operation $o\in \mathcal{O}$ the 
digraph $o(G)$ is in $\mathcal{G_O}$.
\end{itemize}

For some digraph $G$ and some integer $d$ let $d G$ be the disjoint union
of $k$ copies of $G$.

\begin{theorem}\label{s11}
For every digraph $G$ the following statements are equivalent.
\begin{enumerate}
\item \label{s11a} $G\in S_{1,1}$ 
\item \label{s11b} $G$ is a transitive tournament.
\item \label{s11c} $G$ is an acyclic tournament.
\item \label{s11d} $G$ is a $\overrightarrow{C_3}$-free tournament.
\item \label{s11e} $G$ is a tournament with exactly one Hamiltonian path.
\item \label{s11f} $G$ is a tournament and every vertex in $G$ 
has a different outdegree, i.e. $\{\odeg(v)~|~v\in V\}=\{0,\ldots,|V|-1\}$.
\item \label{s11g} $G$ is $\{2\overleftrightarrow{K_1},\overleftrightarrow{K_2},\overrightarrow{C_3}\}$-free.  
\item \label{s11po} $G\in \{(\{v\},\emptyset)\}\cup\{(\overrightarrow{P_n})^{n-1}~|~n\geq 2\}$, i.e. $G$
is the $(n-1)$-th power of a directed path $\overrightarrow{P_n}$.
\item \label{s11h} $G\in \mathcal{G}_{\{\text{OD}\}}$, i.e. $G$ can 
be constructed from the one-vertex graph $K_1$ by repeatedly adding
an  out-dominating vertex.
\item \label{s11i} $G \in \mathcal{G}_{\{\text{ID}\}}$, i.e. $G$ 
can be constructed from the one-vertex graph $K_1$ by repeatedly adding
an   in-dominated  vertex.
\end{enumerate}
\end{theorem}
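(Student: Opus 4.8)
The plan is to prove the theorem as a cycle of implications, exploiting the fact that most of the listed conditions are already classical equivalent characterizations of a transitive tournament; the only genuinely new content is relating $S_{1,1}$ to this list, and relating the construction-by-operations descriptions \eqref{s11h}--\eqref{s11i} and the power-of-a-path description \eqref{s11po}. I would organize the argument so that the "hard" link, \eqref{s11a}$\Leftrightarrow$\eqref{s11b}, is isolated and everything else is routine.

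First I would establish \eqref{s11a}$\Rightarrow$\eqref{s11b}: if $G=\g(Q)$ with $Q=\{q_1\}$ and $q_1$ contains exactly one item per type, then by Lemma~\ref{le-ud} $G$ is semicomplete, and since each type occurs once, for any two types $u\neq v$ the two items appear in a definite order in $q_1$, so exactly one of $(u,v),(v,u)$ is an arc; hence $G$ is a tournament. Transitivity is Lemma~\ref{l1}. Conversely, for \eqref{s11b}$\Rightarrow$\eqref{s11a}, a transitive tournament has a unique topological order $v_1,\dots,v_n$ (every pair comparable, no directed cycle), and taking $q_1=[v_1,v_2,\dots,v_n]$ gives $\g(\{q_1\})=G$, witnessing $G\in S_{1,1}$. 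That closes the only link that actually uses sequence digraphs.

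Next I would close the loop among the purely graph-theoretic conditions \eqref{s11b}--\eqref{s11g}, \eqref{s11po}--\eqref{s11i}. The equivalences \eqref{s11b}$\Leftrightarrow$\eqref{s11c}$\Leftrightarrow$\eqref{s11d} are standard: a tournament is acyclic iff it has no directed triangle (a shortest directed cycle in a tournament would have a chord producing a shorter one, forcing a $\overrightarrow{C_3}$), and acyclic-plus-tournament is exactly transitive-tournament. For \eqref{s11c}$\Leftrightarrow$\eqref{s11e}: an acyclic tournament has a unique Hamiltonian path, namely the topological order; conversely a tournament always has a Hamiltonian path, and a directed cycle would yield a second one. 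For \eqref{s11c}$\Leftrightarrow$\eqref{s11f}: in the topological order of an acyclic tournament the $i$-th vertex has outdegree $n-i$, so outdegrees are all distinct; conversely distinct outdegrees in a tournament force the multiset $\{0,\dots,n-1\}$, and the outdegree-$0$ vertex, removed iteratively, exhibits acyclicity. For \eqref{s11g}: a semicomplete digraph that is $\{2\overleftrightarrow{K_1},\overleftrightarrow{K_2}\}$-free is exactly a tournament (the first forbids a non-edge, the second forbids a digon), and additionally $\overrightarrow{C_3}$-free gives \eqref{s11d}; conversely a transitive tournament contains none of the three. For \eqref{s11po}: the $(n-1)$-st power of $\overrightarrow{P_n}$ puts an arc from $v_i$ to $v_j$ exactly when $i<j$, which is the transitive tournament on $n$ vertices, and conversely. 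Finally \eqref{s11b}$\Leftrightarrow$\eqref{s11h} and \eqref{s11b}$\Leftrightarrow$\eqref{s11i}: adding an out-dominating vertex to the transitive tournament on $\{v_1,\dots,v_n\}$ as a new source produces the transitive tournament on $n+1$ vertices, and by induction $\mathcal{G}_{\{\text{OD}\}}$ is exactly the class of transitive tournaments; the ID case is symmetric (add a sink instead).

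The step I expect to be the main obstacle is not conceptual but bookkeeping: making the chain of implications genuinely close into a single cycle (or a small number of them) without circular dependence, and being careful in \eqref{s11f} and \eqref{s11e} that "a tournament always has a Hamiltonian path" is invoked correctly and that the uniqueness direction really does force acyclicity. Everything else reduces to the definitions and to Lemmas~\ref{le-ud} and~\ref{l1} already proved above.
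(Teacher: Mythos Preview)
Your proposal is correct and follows essentially the same route as the paper. The only notable difference is presentational: the paper is much terser, dispatching the equivalence of \eqref{s11b}--\eqref{s11f} by a citation to Gould's textbook, showing \eqref{s11a}$\Rightarrow$\eqref{s11b} via Lemma~\ref{l1} and the definition, closing the loop with \eqref{s11c}$\Rightarrow$\eqref{s11a} by iterated source-removal (your topological-order argument in disguise), and declaring \eqref{s11po}, \eqref{s11h}, \eqref{s11i} equivalent to \eqref{s11a} ``by definition''. Your version simply unpacks these steps into self-contained arguments, which is a fair trade of brevity for completeness.
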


\begin{proof}
The equivalence of $(\ref{s11b})-(\ref{s11f})$  is  known from \cite[Chapter 9]{Gou12}.

$(\ref{s11a})\Rightarrow (\ref{s11b})$ By Lemma \ref{l1} digraph in $S_{1,1}$  is transitive and 
by definition of $S_{1,1}$ 
digraph $G$ is a  tournament.

$(\ref{s11c})\Rightarrow (\ref{s11a})$ Every acyclic digraph $G$ 
has a source, i.e.~a vertex $v_1$ of indegree $0$, see \cite{BG09}. Since $G$ is a tournament 
there is an arc $(v_1,v)$ for every vertex $v$ of $G$, i.e. $v_1$ is an 
out-dominating vertex. By removing $v_1$ from $G$, we
obtain a transitive tournament $G^1$ which leads to an out-dominating vertex $v_2$. By removing 
$v_2$ from $G^1$, we
obtain a transitive tournament $G^2$ which leads to  an out-dominating vertex $v_3$ 
and so on. The sequence
$[v_1,v_2,\ldots,v_n]$ shows that $G\in S_{1,1}$.

$(\ref{s11d})\Leftrightarrow (\ref{s11g})$ Obviously

$(\ref{s11a})\Leftrightarrow (\ref{s11h})$, $(\ref{s11a})\Leftrightarrow (\ref{s11i})$, and $(\ref{s11a})\Leftrightarrow (\ref{s11po})$ hold by 
definition.
\end{proof}

\begin{proposition}\label{le-find11}
Let $G=(V,A)\in S_{1,1}$, then a set $Q$ of one sequence $q$, such
that $G=\g(Q)$ can be found in time $\bigo(|V|+|A|)$.
\end{proposition}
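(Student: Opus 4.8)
The plan is to give a constructive linear-time algorithm that exploits the characterization of $S_{1,1}$ from Theorem~\ref{s11}. Recall that by that theorem a digraph $G=(V,A)\in S_{1,1}$ is precisely a transitive (equivalently acyclic) tournament, and the proof of the implication $(\ref{s11c})\Rightarrow(\ref{s11a})$ already shows how to read off a witnessing sequence: repeatedly pull out an out-dominating vertex (a source, which in a tournament dominates everything else), and the order $[v_1,v_2,\ldots,v_n]$ in which the sources are extracted is exactly a valid sequence $q$ with $\g(\{q\})=G$. So the only thing left to do is to argue that this extraction order can be computed in time $\bigo(|V|+|A|)$, and to handle the degenerate case $|V|\le 1$ separately (returning $q=(b)$ for the single vertex, or the empty sequence).

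First I would note that the required vertex order is just a topological order of the DAG $G$: since $G$ is an acyclic tournament, for every pair $u\neq v$ exactly one of $(u,v),(v,u)$ is an arc, so the unique topological order lists the vertices in decreasing order of outdegree (equivalently, the source has outdegree $|V|-1$, the next has outdegree $|V|-2$, and so on, cf.\ statement~(\ref{s11f})). Thus I would compute $\odeg_G(v)$ for all $v$ in one pass over $A$ in time $\bigo(|V|+|A|)$, bucket-sort the vertices by outdegree into an array indexed $0,\ldots,|V|-1$ (each bucket nonempty and of size one, but a safe bucket sort handles this uniformly), and read off $v_1,\ldots,v_n$ as the vertices of outdegree $n-1,n-2,\ldots,0$. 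The output sequence is $q=[v_1,\ldots,v_n]$, built from $|V|$ fresh pairwise-distinct items $b_{1,1},\ldots,b_{1,n}$ with $t(b_{1,j})=v_j$; writing it down costs $\bigo(|V|)$.

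Correctness follows because, in a transitive tournament, $v_i$ precedes $v_j$ in this order iff $(v_i,v_j)\in A$: the order is the topological order, so $i<j$ excludes the arc $(v_j,v_i)$, and tournament-ness then forces $(v_i,v_j)\in A$; conversely an arc $(v_i,v_j)$ rules out $i>j$. Hence in $\g(\{q\})$ there is an arc $(v_i,v_j)$ iff $i<j$ iff $(v_i,v_j)\in A$, so $\g(\{q\})=G$. Alternatively one can invoke characterization~(\ref{s11po}): $G=(\overrightarrow{P_n})^{n-1}$ with the path order $v_1,\ldots,v_n$, and that same order serves as $q$.

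The main obstacle is really just bookkeeping rather than a genuine difficulty: one must make sure the outdegree computation and the sort are truly linear, so a comparison-based sort must be avoided in favour of a counting/bucket sort keyed on the integer outdegrees in $\{0,\ldots,|V|-1\}$; and one must treat the trivial inputs $|V|\in\{0,1\}$ (where $A=\emptyset$) as special cases so that the claimed bound $\bigo(|V|+|A|)$ is not vacuous. Everything else — reading the arcs, allocating the $|V|$ items, assigning types — is a single linear pass. I would also remark, for robustness, that if one does not wish to assume $G\in S_{1,1}$ is given but only verified, the same pass that computes outdegrees can check that the multiset $\{\odeg_G(v)\mid v\in V\}$ equals $\{0,\ldots,|V|-1\}$ and that $|A|=\binom{|V|}{2}$, certifying membership in $S_{1,1}$ in the same $\bigo(|V|+|A|)$ time.
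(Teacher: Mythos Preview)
Your proposal is correct and follows essentially the same approach as the paper: both rely on the source-extraction idea from the proof of $(\ref{s11c})\Rightarrow(\ref{s11a})$ in Theorem~\ref{s11} to produce the sequence $[v_1,\ldots,v_n]$. The paper's own proof is in fact just a one-line pointer to that method and does not spell out the linear-time implementation; your bucket-sort on outdegrees (exploiting characterization~(\ref{s11f})) is a clean way to fill that gap and makes the $\bigo(|V|+|A|)$ bound explicit where the paper leaves it implicit.
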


\begin{proof}
A  method is given in part $(\ref{s11c})\Rightarrow (\ref{s11a})$
of the proof of Theorem \ref{s11}.
\end{proof}

\subsubsection{Sequence Digraphs for $\ell=1$}

The sequence digraph $\g(Q)=(V,A)$ for a set $Q=\{q_1, \ldots ,q_k\}$
can be obtained by the union of $\g(\{q_i\})=(V_i,A_i)$, $1\leq i\leq k$ by
$V=\cup_{i=1}^k V_i$ and $A=\cup_{i=1}^k A_i$.
Since for digraphs in $S_{k,1}$ the vertex sets $V_i=\types(q_i)$
are disjoint, we can follow the next lemma. 

\begin{lemma}\label{unionsk1}
For every integer $k$ every digraph in $S_{k,1}$ is the disjoint union of $k$
digraphs in $S_{1,1}$. 
\end{lemma}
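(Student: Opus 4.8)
The claim is that every digraph $G\in S_{k,1}$ decomposes as a disjoint union of $k$ digraphs, each lying in $S_{1,1}$. The plan is to unwind the definitions: take a set $Q=\{q_1,\ldots,q_k\}$ of at most $k$ sequences, each containing at most one item of every type, with $\g(Q)=G$. First I would invoke the union-of-subdigraphs observation stated just above the lemma: writing $\g(\{q_i\})=(V_i,A_i)$ with $V_i=\types(q_i)$, we have $V=\bigcup_{i=1}^k V_i$ and $A=\bigcup_{i=1}^k A_i$. The key structural point is that the $V_i$ are pairwise disjoint: since there is at most one item of each type in all of $Q$ taken together (this is what $\ell=1$ means, $c_Q\le 1$), no type can occur in two different sequences, so $\types(q_i)\cap\types(q_j)=\emptyset$ for $i\neq j$.

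Next I would argue that because the $V_i$ partition $V$, every arc of $G$ has both endpoints inside a single $V_i$: an arc $(u,v)\in A$ comes from some sequence $q_i$ in which an item of type $u$ lies left of an item of type $v$, hence $u,v\in V_i$. Consequently $A_i=A\cap\{(u,v)\mid u,v\in V_i\}$, i.e.\ $\g(\{q_i\})=G[V_i]$ is an \emph{induced} subdigraph of $G$, and there are no arcs of $G$ between distinct $V_i$'s. This is exactly the statement that $G$ is the disjoint union of the digraphs $G[V_1],\ldots,G[V_k]$. Finally, each $G[V_i]=\g(\{q_i\})$ is defined by a single sequence containing at most one item of each type, hence $G[V_i]\in S_{1,1}$ by definition of that class. (If the original set has fewer than $k$ sequences, or some $q_i$ is empty, one can pad with empty single-vertex components or simply note the union of ``at most $k$'' digraphs is still a union of $k$ digraphs, allowing some to be the empty digraph.)

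I do not anticipate a genuine obstacle here; the lemma is essentially a bookkeeping consequence of the definition of $S_{k,1}$ together with the union decomposition and Proposition~\ref{th-ind-sg} (hereditariness, which guarantees $G[V_i]\in S_{1,1}$). The only point requiring a line of care is the disjointness of the $\types(q_i)$, which must be derived from the global bound ``at most one item of each type in $\types(Q)$'' rather than a per-sequence bound; once that is observed, everything else is immediate. A short two-sentence proof suffices: the sequences partition the type set, so $\g(Q)$ is the disjoint union of the $\g(\{q_i\})$, each of which lies in $S_{1,1}$.
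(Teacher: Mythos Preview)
Your proposal is correct and follows essentially the same route as the paper: both observe that $\g(Q)$ is the union of the $\g(\{q_i\})$, and that when $\ell=1$ the type sets $\types(q_i)$ are pairwise disjoint, making this a disjoint union of digraphs each in $S_{1,1}$. Your write-up is more detailed (spelling out why the $V_i$ are disjoint and why there are no cross-arcs), and the appeal to Proposition~\ref{th-ind-sg} is harmless but unnecessary, since $G[V_i]=\g(\{q_i\})\in S_{1,1}$ already holds directly by definition.
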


This lemma allows us to generalize Lemma \ref{l1} to 
$k$ sequences.

\begin{corollary}
For every integer $k$ every digraph in $S_{k,1}$ is transitive.
\end{corollary}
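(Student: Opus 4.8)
The plan is to chain together two results already established in the excerpt. By Lemma~\ref{unionsk1}, any digraph $G\in S_{k,1}$ decomposes as the disjoint union $G = G_1 \,\dot\cup\, \cdots \,\dot\cup\, G_k$ where each $G_i\in S_{1,1}$, and by the Corollary in Section~\ref{sec-1-1} (equivalently by Lemma~\ref{l1}), each $G_i$ is transitive. So the only thing left to verify is that a disjoint union of transitive digraphs is transitive.

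\medskip\noindent
First I would recall the definition: $G=(V,A)$ is transitive if whenever $(u,v)\in A$ and $(v,w)\in A$ with $u\neq w$, then $(u,w)\in A$. In the disjoint union, $V = V_1 \cup \cdots \cup V_k$ with the $V_i$ pairwise disjoint, and every arc of $A$ has both endpoints in the same $V_i$. Now suppose $(u,v),(v,w)\in A$ with $u\neq w$. Since $(u,v)$ is an arc, $u$ and $v$ lie in a common component, say $V_i$; since $(v,w)$ is an arc, $v$ and $w$ lie in a common component, and as the components are disjoint and $v\in V_i$, that component must also be $V_i$. Hence $u,v,w\in V_i$, and both arcs lie in $A_i$. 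Transitivity of $G_i$ gives $(u,w)\in A_i\subseteq A$. This establishes that $G$ is transitive.

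\medskip\noindent
There is essentially no obstacle here: the statement is an immediate corollary of Lemma~\ref{unionsk1} together with the $S_{1,1}$ case, and the one step requiring a (trivial) argument is that arcs of a disjoint union never cross component boundaries, so the transitivity witness $(u,w)$ can be found within a single component. If one wished, one could alternatively give a direct sequence-based argument mirroring the proof of Lemma~\ref{l1}: for $G=\g(Q)$ with $Q=\{q_1,\ldots,q_k\}$ and $\ell=1$, the $\types(q_i)$ are pairwise disjoint, so any two consecutive arcs $(u,v),(v,w)$ must both be witnessed within the same sequence $q_i$ (the one containing the unique item of type $v$), whence $\FI(q_i,u)<\FI(q_i,w)$ and the arc $(u,w)$ is present. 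I would go with the short proof citing Lemma~\ref{unionsk1} and the transitivity of $S_{1,1}$.
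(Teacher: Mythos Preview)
Your proposal is correct and follows exactly the approach the paper intends: the corollary is stated immediately after Lemma~\ref{unionsk1} with the remark that it generalizes Lemma~\ref{l1}, and your argument (disjoint union of transitive digraphs is transitive) is precisely the omitted verification. Your alternative sequence-based argument is also fine but unnecessary here.
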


On the other hand none of the sets  $S_{k,1}$ contains all
transitive digraphs. Therefore let
$T'$ be an orientation of a tree $T$ with $k+1$ edges, 
such that for every vertex $v$ in
$T'$ either the indegree or outdegree is $0$. Such an
orientation is possible for every tree $T$ and leads to a transitive 
graph $T'$. By Lemma \ref{c3} we know that $T'\not\in S_{k,1}$.

Theorem \ref{s11} can be generalized to $k\geq 1$ sequences as follows.

\begin{theorem}\label{s1k}
For every digraph $G$ and every integer $k\geq 1$ the following statements are equivalent.
\begin{enumerate}
\item \label{s1ka} $G\in S_{k,1}$.
\item \label{s1kaa} $G$ is the disjoint union of $k$  digraphs from $S_{1,1}$.
\item \label{s1kb} $G$ is the disjoint union of $k$  transitive tournaments.
\item \label{s1kc} $G$ is the disjoint union of $k$  acyclic tournaments.
\item \label{s1kd} $G$ is the disjoint union of $k$ $\overrightarrow{C_3}$-free tournaments.
\item \label{s1kg} $G$ is $\{(k+1)\overleftrightarrow{K_1},\overleftrightarrow{K_2},\overrightarrow{C_3}\}$-free.  
\end{enumerate}
\end{theorem}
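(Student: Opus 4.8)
The plan is to prove Theorem~\ref{s1k} by establishing the cycle of implications $(\ref{s1ka})\Rightarrow(\ref{s1kaa})\Rightarrow(\ref{s1kb})\Rightarrow(\ref{s1kc})\Rightarrow(\ref{s1kd})\Rightarrow(\ref{s1kg})\Rightarrow(\ref{s1ka})$, using the already-proven Theorem~\ref{s11} as the engine for the single-component case. The implication $(\ref{s1ka})\Rightarrow(\ref{s1kaa})$ is exactly Lemma~\ref{unionsk1}, which I may quote directly. The chain $(\ref{s1kaa})\Rightarrow(\ref{s1kb})\Rightarrow(\ref{s1kc})\Rightarrow(\ref{s1kd})$ is then immediate componentwise: by the equivalences $(\ref{s11b})\Leftrightarrow(\ref{s11c})\Leftrightarrow(\ref{s11d})$ of Theorem~\ref{s11}, a digraph lies in $S_{1,1}$ iff it is a transitive tournament iff it is an acyclic tournament iff it is a $\overrightarrow{C_3}$-free tournament, and a disjoint union inherits each of these characterizations on each of its $k$ parts.

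The two substantive implications are $(\ref{s1kd})\Rightarrow(\ref{s1kg})$ and $(\ref{s1kg})\Rightarrow(\ref{s1ka})$. For $(\ref{s1kd})\Rightarrow(\ref{s1kg})$: suppose $G$ is the disjoint union of $k$ $\overrightarrow{C_3}$-free tournaments $G_1,\ldots,G_k$. Since each $G_i$ is a tournament, $G$ contains no $\overleftrightarrow{K_2}$ (no pair of opposite arcs) and no $\overrightarrow{C_3}$. It remains to rule out $(k+1)\overleftrightarrow{K_1}$ as an induced subdigraph, i.e.\ an independent set of $k+1$ vertices with no arcs among them; but any such set would need to meet at least $k+1$ distinct components (two vertices in the same tournament $G_i$ are always joined by an arc), contradicting that there are only $k$ components. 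Conversely, for $(\ref{s1kg})\Rightarrow(\ref{s1ka})$: let $G$ be $\{(k+1)\overleftrightarrow{K_1},\overleftrightarrow{K_2},\overrightarrow{C_3}\}$-free. Absence of $\overleftrightarrow{K_2}$ means at most one arc between any two vertices, so within each weakly connected component $G$ restricts to an oriented graph; being $\overrightarrow{C_3}$-free as well, I claim each weakly connected component is a tournament. The key observation is that $\un(G)$ is $\overline{K_{k+1}}$-free (being $(k+1)\overleftrightarrow{K_1}$-free translates precisely to: $\un(G)$ has no independent set of size $k+1$), which caps the number of connected components of $\un(G)$ at $k$; and any $\overrightarrow{C_3}$-free, $\overleftrightarrow{K_2}$-free digraph whose underlying graph has a non-complete connected component contains, as an induced subdigraph on suitable three vertices with a non-edge, a configuration forcing either $\overrightarrow{C_3}$ or a non-arc pair that can be iterated to contradict connectivity --- this is the point to argue carefully.

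The cleaner route for $(\ref{s1kg})\Rightarrow(\ref{s1ka})$ is: first show every weakly connected component $H$ of $G$ is semicomplete, hence a tournament (by $\overleftrightarrow{K_2}$-freeness). To see $H$ is semicomplete, suppose two vertices $u,v$ in $H$ are non-adjacent; take a shortest path between them in $\un(H)$, say with an intermediate vertex $w$ adjacent to both --- then $\{u,w\}$ and $\{w,v\}$ are arcs of $G$ in some orientation while $\{u,v\}$ is a non-edge, and checking the possible orientations of the arcs at $w$ one finds that $u$ and $v$ must in fact be comparable (if the arcs are $u\to w$ and $w\to v$ then $\overrightarrow{C_3}$-freeness is no obstruction but transitivity-type reasoning via a direct argument on $S_{1,1}$ applies), yielding a contradiction; more directly, within $H$ being $\overrightarrow{C_3}$-free and oriented with $\un(H)$ connected forces $\un(H)$ complete by a standard tournament lemma. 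Once each component is a $\overrightarrow{C_3}$-free tournament, Theorem~\ref{s11}$(\ref{s11d})\Rightarrow(\ref{s11a})$ puts each in $S_{1,1}$, and since there are at most $k$ components ($\overline{K_{k+1}}$-freeness of $\un(G)$), padding with empty sequences if necessary gives $G\in S_{k,1}$.

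The main obstacle is the step showing that a $\overrightarrow{C_3}$-free oriented graph with connected underlying graph has a complete underlying graph --- i.e.\ that each weakly connected component is genuinely a tournament and not merely an oriented graph. This requires care because $\overrightarrow{C_3}$-freeness alone does not obviously forbid, say, an induced $\overrightarrow{P_3}$ whose endpoints are non-adjacent; the resolution is that in a $\overrightarrow{C_3}$-free oriented graph, $\overrightarrow{P_3}$ as $u\to w\to v$ forces the arc $u\to v$ (otherwise one of the two missing-arc cases on $\{u,v\}$ either creates $\overrightarrow{C_3}$ or leaves them non-adjacent, and the non-adjacent case propagates along the whole component until an independent pair survives, but then one re-examines using connectivity that this cannot happen) --- so in fact such digraphs are transitive, hence their connected underlying graphs are complete. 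I would isolate this as a short auxiliary claim, prove it by induction on the length of a shortest path witnessing non-adjacency, and then the rest of the theorem follows mechanically from Theorem~\ref{s11} and Lemma~\ref{unionsk1}.
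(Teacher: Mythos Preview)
Your cycle $(\ref{s1ka})\Rightarrow(\ref{s1kaa})\Rightarrow\cdots\Rightarrow(\ref{s1kd})\Rightarrow(\ref{s1kg})$ is fine, and the paper (which gives no proof of Theorem~\ref{s1k} beyond pointing to Theorem~\ref{s11} and Lemma~\ref{unionsk1}) would presumably argue these steps the same way. The problem is your step $(\ref{s1kg})\Rightarrow(\ref{s1ka})$, and it is not repairable along the lines you suggest.

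Your auxiliary claim --- that a $\overrightarrow{C_3}$-free oriented digraph whose underlying graph is connected must have a \emph{complete} underlying graph --- is false. The digraph $\overrightarrow{P_3}$ on vertices $u,w,v$ with arcs $u\to w$ and $w\to v$ is $\overrightarrow{C_3}$-free, is oriented, and has $\un(\overrightarrow{P_3})=P_3$ connected, yet $u$ and $v$ are non-adjacent. Your proposed resolution, that ``$u\to w\to v$ forces the arc $u\to v$'', is exactly what this example refutes: forbidding $\overrightarrow{C_3}$ does not force transitivity, and there is no mechanism by which the non-edge ``propagates'' to a contradiction with connectivity.

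Worse, $\overrightarrow{P_3}$ shows that the implication $(\ref{s1kg})\Rightarrow(\ref{s1ka})$ as stated is itself false for $k\ge 2$. Take $k=2$: the digraph $\overrightarrow{P_3}$ is $\{3\overleftrightarrow{K_1},\overleftrightarrow{K_2},\overrightarrow{C_3}\}$-free (its only pair with no arc is $\{u,v\}$, so there is no independent set of size~$3$; it is oriented; and with three vertices and two arcs it contains no directed triangle), yet $\overrightarrow{P_3}\notin S_{2,1}$ since by Lemma~\ref{unionsk1} every digraph in $S_{2,1}$ is a disjoint union of transitive tournaments, while $\overrightarrow{P_3}$ is connected but not a tournament. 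The case $k=1$ in Theorem~\ref{s11}(\ref{s11g}) works only because $2\overleftrightarrow{K_1}$-freeness already forces semicompleteness; for $k\ge 2$ the list in $(\ref{s1kg})$ is too short --- one would additionally need to forbid the three orientations of $P_3$ with non-adjacent endpoints (equivalently, require that each weakly connected component be semicomplete). So the gap you sensed is genuine, but it lies in the theorem statement as well as in your argument, and no ``short auxiliary claim'' will close it.
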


\begin{proposition}\label{le-find11a}
Let $G=(V,A)\in S_{k,1}$, then a set $Q$ on $k$ sequences, such
that $G=\g(Q)$ can be found in time $\bigo(|V|+|A|)$.
\end{proposition}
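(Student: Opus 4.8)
The plan is to reduce the problem to the $k=1$ case (Proposition~\ref{le-find11}) applied component-wise. By Theorem~\ref{s1k}, membership $G\in S_{k,1}$ is equivalent to $G$ being the disjoint union of $k$ transitive tournaments; moreover, since $S_{k,1}\subseteq S_{k,1}$ and by Lemma~\ref{unionsk1} every digraph in $S_{k,1}$ decomposes as such a disjoint union, the $k$ blocks are forced to be exactly the weakly connected components of $G$ (each transitive tournament is connected, and distinct blocks share no vertices and no arcs). So the first step is to compute the weakly connected components $V_1,\ldots,V_{k'}$ of $G$ together with the induced subdigraphs $G[V_i]$. This is a standard graph search and runs in time $\bigo(|V|+|A|)$.

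Next, for each component $G[V_i]$ I would invoke the construction of Proposition~\ref{le-find11} (equivalently, the procedure in the proof of $(\ref{s11c})\Rightarrow(\ref{s11a})$ of Theorem~\ref{s11}): since $G[V_i]$ is an acyclic tournament, repeatedly extract a source (an out-dominating vertex), removing it and recursing, which yields in time $\bigo(|V_i|+|A_i|)$ a single sequence $q_i$ with $\types(q_i)=V_i$ and $\g(\{q_i\})=G[V_i]$. Collecting $Q=\{q_1,\ldots,q_{k'}\}$ (padding with $k-k'$ empty sequences if one insists on exactly $k$, though $k'\le k$ suffices and typically $k'=k$ by the characterization), we have $\g(Q)=\bigcup_{i=1}^{k'}\g(\{q_i\})=\bigcup_{i=1}^{k'}G[V_i]=G$, using that the vertex sets $\types(q_i)=V_i$ are pairwise disjoint so no spurious arcs between components are created. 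Each item type appears in exactly one sequence, and within that sequence exactly once, so $Q$ witnesses $G\in S_{k,1}$.

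For the running time: computing components is $\bigo(|V|+|A|)$, and the sum of the per-component costs is $\sum_i \bigo(|V_i|+|A_i|) = \bigo(|V|+|A|)$ since the $V_i$ partition $V$ and the $A_i$ partition $A$. Hence the total is $\bigo(|V|+|A|)$ as claimed.

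I expect no serious obstacle here; the only point requiring a word of care is the implementation of the "extract a source" step so that each extraction, together with deleting the vertex and its incident arcs, is charged only against that vertex and its arcs — this is achieved by maintaining indegree counters updated as arcs are scanned during deletion, exactly as in a topological-sort implementation. A secondary bookkeeping point is justifying that the $k$ (or $k'$) blocks are the connected components rather than an arbitrary partition into transitive tournaments: this follows because a transitive tournament on at least two vertices is weakly connected, so any decomposition of $G$ into $k'\le k$ disjoint nonempty transitive tournaments refines to, and in fact coincides with, the component partition after merging — and for the algorithm we only need \emph{some} valid $Q$, which the component-wise construction provides.
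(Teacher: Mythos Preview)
Your proposal is correct and follows essentially the same approach as the paper: decompose $G$ into its (weakly) connected components, each of which is a digraph in $S_{1,1}$, and apply Proposition~\ref{le-find11} to each component. Your write-up is considerably more detailed than the paper's two-sentence proof (you spell out the component computation, the running-time summation, and the topological-sort implementation of source extraction), but the underlying argument is the same.
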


\begin{proof}
If  $G\in S_{k,1}$ then $G$ is the  disjoint union of $k$  digraphs from $S_{1,1}$.
For every of them we can define a sequence $q_i$ by Proposition \ref{le-find11}.
\end{proof}

\subsubsection{Sequence Digraphs for $k=1$}

The next examples show that for $\ell\geq 2$ 
items for each type even one sequence can define
digraphs which are not transitive.

\begin{example}\label{notta}
\begin{enumerate}
\item \label{notta1}
The digraph $D_0$ in Table \ref{F-co2} is not transitive, since
it has among others
the arcs $(b,c)$ and $(c,a)$ but not the arc $(b,a)$.
Further $D_0$ belongs to the set $S_{1,2}$, since
it can be defined by set $Q = \{q_1\}$ of one sequence
$q_1=[c,a,b,c]$.

\item \label{notta2}
The digraph $D_6$ in Table \ref{F-co2} (which will be of further
interest in Section \ref{sec-co}) is not transitive, since
it has among others
the arcs $(c,b)$ and $(b,a)$ but not the arc $(c,a)$.
Further $D_6$ belongs to the set $S_{1,2}$, since
it can be defined by set $Q = \{q_1\}$ of one sequence
$q_1=[a,b,a,c,b,d,c,d]$.
\end{enumerate}
\end{example}

\begin{table}
\begin{center}
\begin{tabular}{cccccccc}

\epsfig{figure=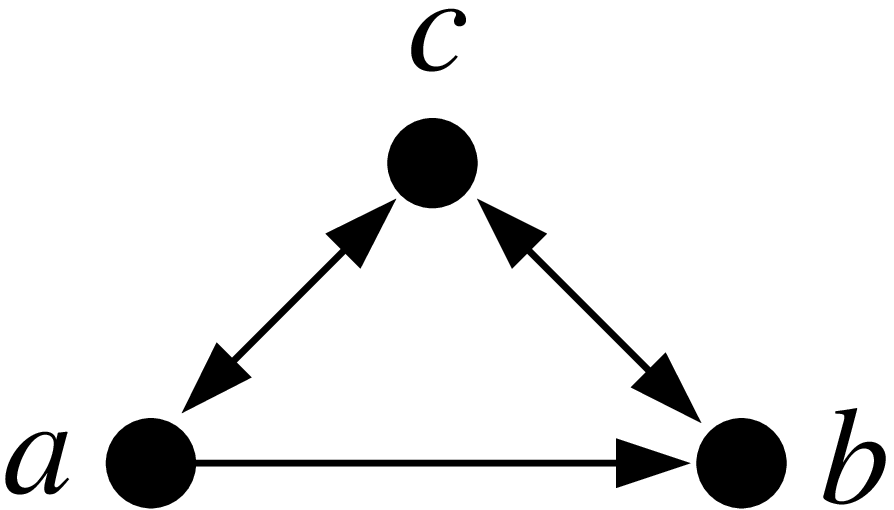,width=2.5cm} &&\epsfig{figure=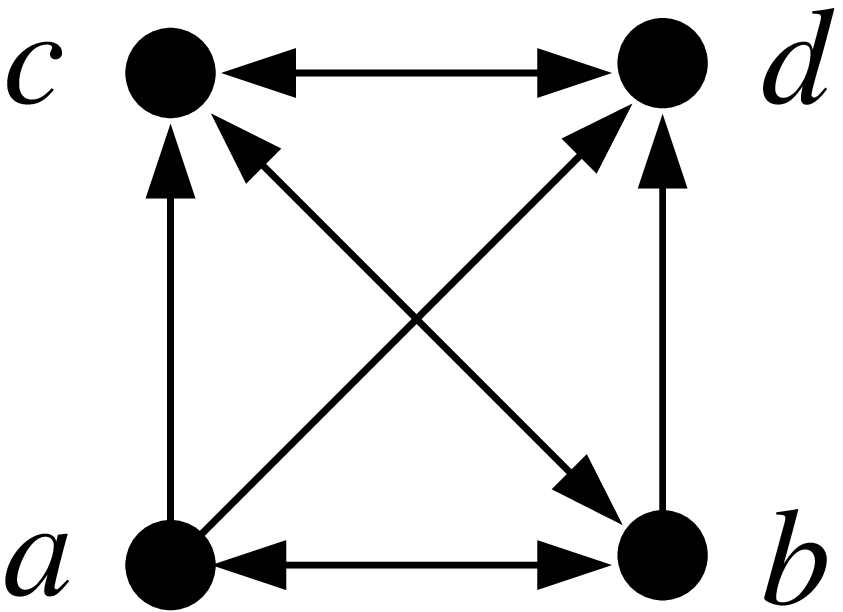,width=2.4cm} &&\epsfig{figure=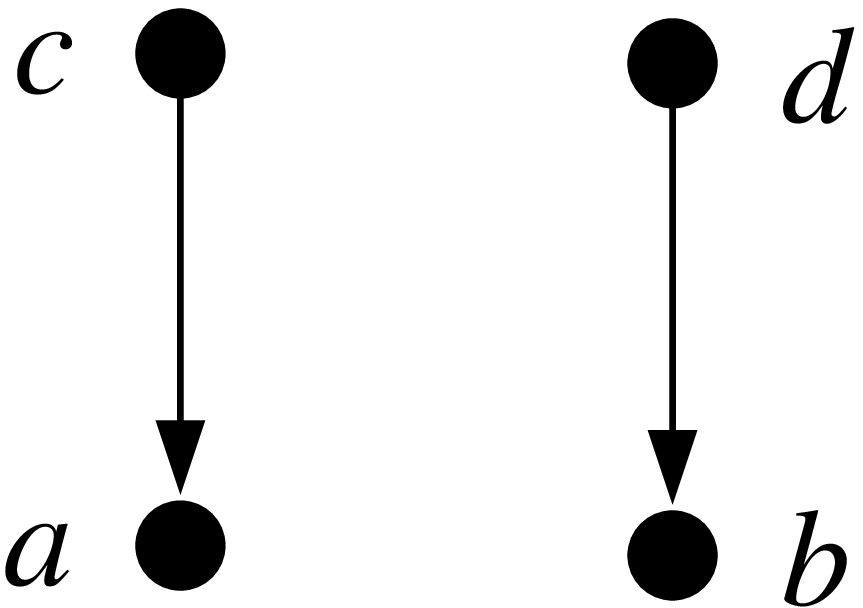,width=2.4cm} &&\epsfig{figure=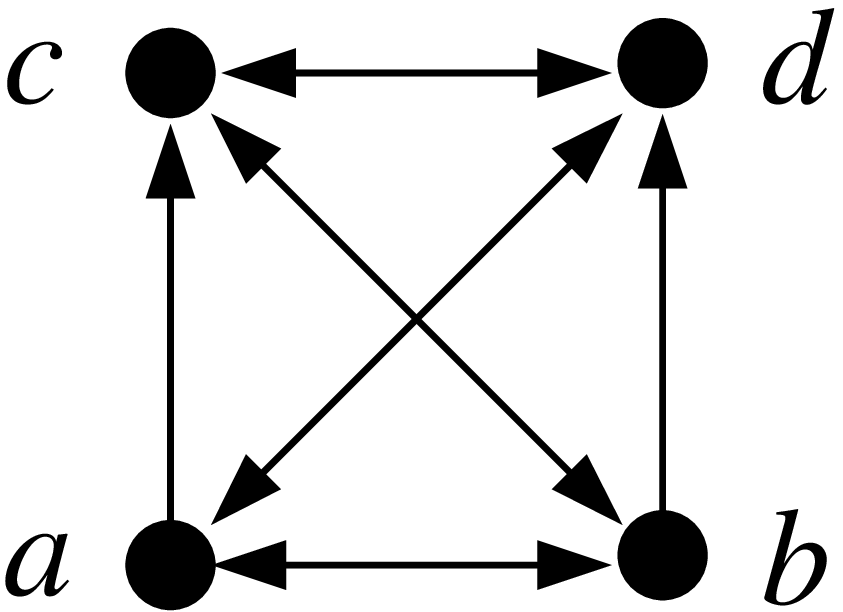,width=2.4cm}\\
$D_0$                              &&$D_6$                                     && $2\overrightarrow{P_2}$            &&  $\co(2\overrightarrow{P_2})$      \\ 
\end{tabular}
\end{center}
\caption{Special digraphs.}
\label{F-co2}
\end{table}

A digraph $G=(V,A)$ is called {\em quasi transitive} if for 
every pair $(u,v)\in A$ and $(v,w)\in A$ of arcs 
with $u\neq w$ there is at least one arc between $u$ and $w$ in $A$.
Since every semicomplete digraph is quasi transitive, 
by Lemma \ref{le-ud} the next result holds true.

\begin{lemma}\label{l2}
For every integer  $\ell\geq 1$ every digraph in $S_{1,\ell}$ is quasi transitive.
\end{lemma}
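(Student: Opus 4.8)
The statement to prove is Lemma~\ref{l2}: for every integer $\ell\geq 1$ every digraph in $S_{1,\ell}$ is quasi transitive. The proof is essentially immediate from two facts already established in the excerpt, so the plan is short.

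\medskip

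\textbf{Plan.} The key observation is that by Lemma~\ref{le-ud}, every digraph $G\in S_{1,\ell}$ (defined by a set $Q=\{q_1\}$ of a single sequence) is semicomplete: between any two distinct vertices $u$ and $w$ there is at least one arc, since the single sequence $q_1$ contains items of both types $t_u$ and $t_w$, and whichever appears first induces an arc. Given this, quasi transitivity is trivial: if $(u,v)\in A$ and $(v,w)\in A$ with $u\neq w$, then because $G$ is semicomplete there is already at least one arc between $u$ and $w$, which is exactly the defining condition of quasi transitivity. So the whole argument is: invoke Lemma~\ref{le-ud} to get semicompleteness, then note that semicomplete digraphs are quasi transitive by definition, hence so is $G$. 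The excerpt even flags this as a one-liner (``Since every semicomplete digraph is quasi transitive, by Lemma~\ref{le-ud} the next result holds true''), so I would simply write that out.

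\medskip

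\textbf{Main obstacle.} There is essentially no obstacle here; the only thing to be careful about is making sure the definition of quasi transitive is being applied with the correct clause (``at least one arc between $u$ and $w$'' rather than specifically the arc $(u,w)$), which is precisely what semicompleteness delivers. A reader might also want the observation that every induced subdigraph argument is not even needed: quasi transitivity for the whole digraph follows directly. So the proof below is complete as stated.

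\medskip

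\begin{proof}
Let $G=(V,A)\in S_{1,\ell}$ be defined by a set $Q=\{q_1\}$ of one sequence. By Lemma~\ref{le-ud} the digraph $G$ is semicomplete, i.e.\ for every two distinct vertices $u,w\in V$ at least one of $(u,w)$ and $(w,u)$ belongs to $A$. Now let $(u,v)\in A$ and $(v,w)\in A$ be two arcs with $u\neq w$. Since $G$ is semicomplete, at least one of the pairs $(u,w)$, $(w,u)$ is an arc of $A$, hence there is at least one arc between $u$ and $w$ in $A$. This is exactly the condition in the definition of a quasi transitive digraph, so $G$ is quasi transitive.
\end{proof}
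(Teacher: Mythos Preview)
Your proof is correct and follows exactly the same approach as the paper: invoke Lemma~\ref{le-ud} to obtain that every digraph in $S_{1,\ell}$ is semicomplete, and then observe that every semicomplete digraph is trivially quasi transitive. The paper's own justification is the one-liner you quoted, and your write-up simply unpacks it.
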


On the other hand none of the sets $S_{1,\ell}$ contains all
quasi transitive digraphs. Therefore let $G$ be a tournament
without any out-dominating (or without any in-dominated) vertex.
Every such digraph $G$ is quasi transitive.
Then $G\not\in S_{1,1}$ by Theorem \ref{s11} (\ref{s11h}.) (or (\ref{s11i}.)) and
$G\not\in S_{1,2}-S_{1,1}$ since every digraph
of this set has a cycle on two vertices. Since $S_{1,2}= S_{1,\ell}$
for $\ell\geq 3$, see Lemma \ref{le-1-leq}, we know that $G\not\in S_{1,\ell}-S_{1,1}$
for $\ell \geq 2$.

Although we want to characterize graphs in $S_{k,\ell}$ for $k=1$
we give a small remark for $k\geq 1$.

\begin{remark}\label{cor-union-transi}
For every $\ell\geq 1$ and every $k\geq 1$ every digraph in $S_{k,\ell}$ 
is the union of at most $k$ quasi 
transitive digraphs. There are even digraphs in  $S_{k,2}$ for which we need the union of $k$ quasi 
transitive digraphs, e.g. the directed path on $n$ vertices
$\overrightarrow{P_n}$ can only be obtained by the union of $n-1$ quasi 
transitive digraphs each defining one arc, see Lemma \ref{le-pa}.
\end{remark}

In order to show a characterization for the class $S_{1,2}$
we next prove lemmas.

\begin{lemma}\label{le-transi}
Let $\ell\geq 1$ and $G\in S_{1,\ell}$, then the 
complement digraph $\co G$ is transitive.
\end{lemma}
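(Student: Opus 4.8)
The plan is to argue directly with the single sequence $q_1$, where $G=\g(Q)$ and $Q=\{q_1\}$, using the position-based description of the arcs of $\co G=(V,A^c)$ supplied by Lemma \ref{fiandla-1}. Recall that for distinct vertices $x,y$, part (\ref{fiandla-1b}) of that lemma states $(x,y)\in A^c$ if and only if $\LA(q_1,y)<\FI(q_1,x)$, i.e.\ the last item of type $y$ lies strictly to the left of the first item of type $x$ in $q_1$.

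First I would record the elementary fact that every vertex of $G$ is a type that actually occurs in $q_1$, since $V=\types(Q)=\types(q_1)$; consequently $\FI(q_1,t)\le\LA(q_1,t)$ for every $t\in V$ (with equality exactly when type $t$ has a single item). This is the only point where the restriction to one sequence really enters, and it is precisely what the argument needs.

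To show that $\co G$ is transitive, I would take arcs $(u,v),(v,w)\in A^c$ with $u\neq w$ and chain the position inequalities:
\[
\LA(q_1,w)\;<\;\FI(q_1,v)\;\le\;\LA(q_1,v)\;<\;\FI(q_1,u),
\]
where the two strict inequalities come from applying Lemma \ref{fiandla-1}(\ref{fiandla-1b}) to the arcs $(v,w)$ and $(u,v)$, and the middle inequality is the fact recorded above. Hence $\LA(q_1,w)<\FI(q_1,u)$, and since $u\neq w$, Lemma \ref{fiandla-1}(\ref{fiandla-1b}) again yields $(u,w)\in A^c$, which is exactly the transitivity condition for $\co G$.

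I do not expect any genuine obstacle: the claim collapses to a three-term comparison of positions within $q_1$. The only subtlety worth flagging is that the middle step $\FI(q_1,v)\le\LA(q_1,v)$ would fail for a type $v$ not appearing in $q_1$ (the conventions would then give $\FI(q_1,v)=n_1+1>0=\LA(q_1,v)$), so it genuinely matters that $v$, being a vertex of $G$, occurs in $q_1$.
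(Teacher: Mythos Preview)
Your proof is correct and follows essentially the same route as the paper: both apply Lemma~\ref{fiandla-1}(\ref{fiandla-1b}) to the two given arcs, chain the resulting position inequalities through $\FI(q_1,v)\le\LA(q_1,v)$, and invoke the lemma once more to conclude $(u,w)\in A^c$. Your version is slightly more careful in making explicit why the middle inequality holds (namely that $v$ actually occurs in $q_1$), a point the paper uses tacitly.
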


\begin{proof}
Let $Q=\{q_1\}$ be some sequence which defines the sequence digraph $\g(Q)$ and 
$(u,v)$ and $(v,w)$ with $u\neq w$ be two arcs of $\co(\g(Q))$.
By
Lemma \ref{fiandla-1} (\ref{fiandla-1b}) we conclude that
$\LA(q_1,v)<\FI(q_1,u)$
and 
$\LA(q_1,w)<\FI(q_1,v)$
thus we have
$\LA(q_1,w)<\FI(q_1,v) \leq \LA(q_1,v)<\FI(q_1,u)$
and $\LA(q_1,w)<\FI(q_1,u)$ implies by 
Lemma \ref{fiandla-1} (\ref{fiandla-1b}) 
that arc $(u,w)$ is in digraph $\co(\g(Q))$.
\end{proof}

\begin{lemma}\label{le-p2free}
Let $\ell\geq 1$ and $G\in S_{1,\ell}$, then 
the complement digraph $\co G$ is
$2\overrightarrow{P_2}$-free.
\end{lemma}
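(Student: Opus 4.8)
The statement asserts that for $G\in S_{1,\ell}$, the complement digraph $\co G$ contains no induced copy of $2\overrightarrow{P_2}$, i.e. no induced subdigraph on four vertices whose only arcs are two disjoint directed edges. The natural approach is exactly the one used for the preceding lemma (Lemma \ref{le-transi}): work directly with the single sequence $q_1$ defining $G=\g(Q)$, $Q=\{q_1\}$, and translate the existence of the supposed induced $2\overrightarrow{P_2}$ in $\co(\g(Q))$ into position inequalities via Lemma \ref{fiandla-1}(\ref{fiandla-1b}), namely that $(x,y)\in A^c$ iff $\LA(q_1,y)<\FI(q_1,x)$.

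So I would argue by contradiction. Suppose $\co G$ has an induced $2\overrightarrow{P_2}$ on vertices $a,b,c,d$ with arcs $(a,b)$ and $(c,d)$ in $A^c$, and no other arc among these four vertices in $A^c$. First I would use Lemma \ref{fiandla-1}(\ref{fiandla-1b}) on the two present arcs to get $\LA(q_1,b)<\FI(q_1,a)$ and $\LA(q_1,d)<\FI(q_1,c)$. Since $\FI(q_1,\cdot)\le\LA(q_1,\cdot)$ always, each type occupies an interval of positions, and the two intervals for $\{a,b\}$ and for $\{c,d\}$ are each "ordered" in the sense that $b$'s block lies entirely left of $a$'s first occurrence, and $d$'s block lies entirely left of $c$'s first occurrence. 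The key point is then to locate the two pairs relative to each other: compare, say, $\FI(q_1,a)$ with $\LA(q_1,c)$ and $\LA(q_1,d)$, etc., and derive that at least one more of the ordered pairs $(a,c),(a,d),(c,a),(c,b),\dots$ must satisfy the $\LA<\FI$ inequality, hence be an arc of $\co G$, contradicting that the four vertices induce only $(a,b)$ and $(c,d)$ in $\co G$.

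Concretely, consider the four values $\FI(q_1,a),\FI(q_1,c)$ and $\LA(q_1,b),\LA(q_1,d)$. Because $\co G$ is transitive (Lemma \ref{le-transi}) and a transitive digraph with an induced $2\overrightarrow{P_2}$ already forces structure, one clean route is: by transitivity of $\co G$, neither $a\to b$ can be "crossed" by $c$ or $d$ without creating an extra arc, so I would do a short case analysis on the cyclic order of the (at most four) relevant endpoints $\FI(q_1,a),\LA(q_1,b),\FI(q_1,c),\LA(q_1,d)$ on the line $\{1,\dots,n_1\}$. In every placement, some pair among $\{a,c\},\{a,d\},\{c,b\},\{b,d\}$ will have the last occurrence of one strictly left of the first occurrence of the other, producing — by Lemma \ref{fiandla-1}(\ref{fiandla-1b}) — an arc of $\co G$ between two of the four vertices other than the prescribed $(a,b),(c,d)$; and one checks this arc is not among the two we allowed, contradicting inducedness.

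The main obstacle is organizing the case analysis cleanly: there are several ways the two "blocks" $[\LA(q_1,b),\FI(q_1,a)]$-side and $[\LA(q_1,d),\FI(q_1,c)]$-side can interleave, and one must be careful that the extra arc produced is genuinely a new arc in the induced subdigraph (not, say, the reverse of an allowed arc, which would be fine anyway, or an arc that happens to coincide). Using transitivity of $\co G$ (Lemma \ref{le-transi}) upfront should collapse most cases: in a transitive digraph, $2\overrightarrow{P_2}$ being induced already severely restricts the mutual positions, and I expect the argument reduces to observing that the four first/last positions cannot be arranged to avoid a fifth inequality of the form $\LA(q_1,\cdot)<\FI(q_1,\cdot)$ among the four vertices. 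I would keep the write-up short by invoking Lemma \ref{le-transi} and then doing one or two displayed chains of inequalities exactly as in the proof of Lemma \ref{le-transi}.
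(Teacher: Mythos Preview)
Your plan is correct and rests on the same tool as the paper---translating arcs of $\co G$ into position inequalities via Lemma~\ref{fiandla-1} for the single defining sequence $q_1$---but the paper's execution avoids both the case analysis and the appeal to Lemma~\ref{le-transi} that you propose. With the paper's labeling of the induced $2\overrightarrow{P_2}$ in $\co G$ as $(\{a,b,c,d\},\{(c,a),(d,b)\})$, it uses not only the two \emph{present} arcs of $\co G$ (via part~(\ref{fiandla-1b})) but also the two ``cross'' arcs $(c,b)$ and $(d,a)$, which are \emph{absent} from $\co G$ by inducedness and hence present in $G$; part~(\ref{fiandla-1a}) then supplies two further strict inequalities, and the four chain directly into the cycle
\[
\LA(q_1,a)<\FI(q_1,c)<\LA(q_1,b)<\FI(q_1,d)<\LA(q_1,a),
\]
a contradiction in one line. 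Your route---deriving an \emph{extra} arc of $\co G$ among the four vertices---also works (for instance, in your labeling split on whether $\LA(q_1,d)<\FI(q_1,a)$: if so, $(a,d)\in A^c$; if not, then $\LA(q_1,b)<\FI(q_1,a)\le\LA(q_1,d)<\FI(q_1,c)$ gives $(c,b)\in A^c$), but the paper's choice of which non-arcs to exploit yields a single chain with no branching and no need for transitivity.
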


\begin{proof}
If we assume that $\co G$  contains 
$2\overrightarrow{P_2} =(\{a,b,c,d\},\{(c,a),(d,b)\})$, see
Table \ref{F-co2}, as an induced subdigraph, then $G$
has (among others) the arcs 
$(c,b)$ and $(d,a)$.
By applying Lemma \ref{fiandla-1}(\ref{fiandla-1a}) and (\ref{fiandla-1b}) we obtain
$\LA(q_1,a)<\FI(q_1,c)<\LA(q_1,b)<\FI(q_1,d)<\LA(q_1,a)$,
which leads to a contradiction.
\end{proof}

\begin{lemma}\label{le-ind-out-d}
Let $G$ be a semicomplete   $\{\overrightarrow{C_3},D_4\}$-free digraph on $n$
vertices, then $G$ has a vertex $v$ 
such that $\odeg(v)=n-1$ and a vertex $v'$ such that $\ideg(v')=n-1$. ($D_4$ is shown in Table \ref{F-co}.)
\end{lemma}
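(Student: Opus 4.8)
The plan is to prove the ``out-dominating'' part (a vertex $v$ with $\odeg(v)=n-1$) and then obtain the ``in-dominated'' part for free by applying that result to the converse digraph $\con G$: since $\con\overrightarrow{C_3}\cong\overrightarrow{C_3}$ and $\con D_4\cong D_4$, the digraph $\con G$ is again semicomplete, $\overrightarrow{C_3}$-free and $D_4$-free, and a vertex of outdegree $n-1$ in $\con G$ is exactly a vertex of indegree $n-1$ in $G$; in any case the whole argument below is symmetric under reversing all arcs.

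For the out-dominating part I would argue by contraposition. Assume no vertex of $G$ has outdegree $n-1$. Since $G$ is semicomplete, for every vertex $v$ there is a vertex $u\neq v$ with $(v,u)\notin A$, and then $(u,v)\in A$; thus every vertex of $G$ has a ``strict in-neighbour'' $u$, i.e.\ one with $(u,v)\in A$ and $(v,u)\notin A$. I would then pass to the strict subdigraph $G_s=(V,A_s)$ with $A_s=\{(x,y)\in A\mid (y,x)\notin A\}$. By the previous observation every vertex has indegree at least $1$ in $G_s$, so $G_s$ contains a directed cycle; fix a shortest one, $C=v_1\to v_2\to\cdots\to v_\ell\to v_1$, inside $G_s$. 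As $G_s$ is an oriented graph (no digons), $\ell\ge 3$; and if $\ell=3$ then $(v_1,v_2),(v_2,v_3),(v_3,v_1)\in A$ while all reversed arcs are absent, so $G[\{v_1,v_2,v_3\}]\cong\overrightarrow{C_3}$, contradicting $\overrightarrow{C_3}$-freeness. Hence $\ell\ge 4$.

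Now I would examine three consecutive vertices $v_1,v_2,v_3$ of $C$, where $(v_1,v_2)$ and $(v_2,v_3)$ are strict arcs, and determine the pair $\{v_1,v_3\}$. If $(v_3,v_1)$ were a strict arc, then $v_1\to v_2\to v_3\to v_1$ would be a $3$-cycle in $G_s$, contradicting $\ell\ge 4$; if $(v_1,v_3)$ were a strict arc, then $v_1\to v_3\to v_4\to\cdots\to v_\ell\to v_1$ would be a cycle of length $\ell-1$ in $G_s$, contradicting minimality. By semicompleteness the only remaining possibility is that both $(v_1,v_3)$ and $(v_3,v_1)$ lie in $A$. Consequently $G[\{v_1,v_2,v_3\}]$ has arc set exactly $\{(v_1,v_2),(v_2,v_3),(v_1,v_3),(v_3,v_1)\}$ --- a directed path of length two with a digon joining its endpoints --- which is (a copy of) $D_4$, contradicting $D_4$-freeness. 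This contradiction finishes the contrapositive, giving a vertex of outdegree $n-1$, and the in-dominated statement then follows as above.

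The only delicate point --- and the single place where $D_4$-freeness enters --- is the last step: checking that a shortest cycle of the strict subdigraph of length $\ge 4$ really forces its three consecutive vertices to induce precisely $D_4$ and nothing larger, which rests entirely on minimality of $C$ excluding both the ``short-cut'' chord $v_1\to v_3$ and the ``closing'' chord $v_3\to v_1$; it is worth spelling these two cases out explicitly as above. If the reader's copy of $D_4$ is in fact a four-vertex digraph rather than this three-vertex one, the same idea applies after inspecting a window of four consecutive vertices of $C$ instead of three. An alternative, slightly longer route is to first reduce to the case that $G$ is strongly connected, using that the strong components of a semicomplete digraph are linearly ordered and that a vertex of full outdegree inside the initial component has full outdegree in $G$, and then run the identical cycle argument inside that component.
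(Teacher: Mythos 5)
Your proof is correct, but it takes a genuinely different route from the paper's. The paper argues by induction on $n$: it deletes a vertex $w$, takes the out-dominating vertex $v$ of the smaller digraph guaranteed by the induction hypothesis, and shows that either $v$ or $w$ is out-dominating in the whole digraph, using $\{\overrightarrow{C_3},D_4\}$-freeness to prove that if $(w,v)$ is an arc and $(v,w)$ is not, then $w$ inherits all out-neighbours of $v$. Your argument is instead a contrapositive/extremal one: if no vertex is out-dominating, the subdigraph $G_s$ of non-reversible arcs has minimum in-degree at least one, hence a shortest directed cycle, whose length cannot be $3$ (that would be an induced $\overrightarrow{C_3}$, since $G_s$ carries no digons) and cannot be $\geq 4$ (three consecutive vertices would induce a $D_4$, because minimality forbids both chords $v_1\to v_3$ and $v_3\to v_1$ being strict, and semicompleteness then forces the digon). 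All steps check out, including the reduction of the in-degree claim to the out-degree claim via $\con G$ and the self-converse nature of $\overrightarrow{C_3}$ and $D_4$. Your identification of $D_4$ as the three-vertex digraph ``directed $2$-path plus a digon between its endpoints'' is the right one -- it is forced by the paper's own use of $D_4$ in Lemma \ref{le-co-t}, where $D_4$-freeness of $G$ is translated into $\overrightarrow{P_3}$-freeness of $\co G$, so $D_4\cong\co(\overrightarrow{P_3})$ -- and your fallback remark about a four-vertex $D_4$ is therefore unnecessary. The trade-off between the two proofs: the paper's induction is constructive and is reused verbatim in Lemma \ref{le-sp-tt} to peel off out-dominating vertices and extract a spanning transitive tournament, whereas your argument is shorter, isolates exactly where each of the two forbidden subdigraphs is needed, and avoids the small case analysis on whether $(u,v)$ is present that the paper's step implicitly performs when it says the triple ``leads to a $\overrightarrow{C_3}$ or to a $D_4$''.
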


\begin{proof}
We show the result by induction on the number of vertices $n$ in digraph $G$. 
For digraphs on $n\leq 3$ vertices the statement is true.\footnote{A list of
all digraphs on at most four vertices can be found in Appendix 2 of \cite{Har69}.}

Let $n\geq 3$ and $G_{n+1}=(V_{n+1},A_{n+1})$ be a  semicomplete   $\{\overrightarrow{C_3},D_4\}$-free 
digraph on $n+1$ vertices. Further let $w\in V_{n+1}$ and $G_n=G_{n+1}[V_{n+1}-\{w\}]$. 
Then $G_n=(V_n,A_n)$ is a  semicomplete $\{\overrightarrow{C_3},D_4\}$-free 
digraph on $n$ vertices. By  the induction hypothesis there is some $v\in V_n$ 
such that $\odeg(v)=n-1$ in  $G_n$.

If $(v,w)\in A_{n+1}$ then  $\odeg(v)=n$ in $G_{n+1}$ and the proof is done.
If $(v,w)\not\in A_{n+1}$ then the semicompleteness of $G_{n+1}$  implies that 
$(w,v)\in A_{n+1}$.  Then for every $(v,u)\in E_n$ it holds $(w,u)\in E_{n+1}$.
Otherwise  the semicompleteness of $G_{n+1}$  implies that 
$(u,w)\in E_{n+1}$ and thus $G_{n+1}[\{u,v,w\}]$ leads to a $\overrightarrow{C_3}$ or to a $D_4$.
Thus in $G_{n+1}$ it holds  $\odeg(w)=\odeg(v)+1=n-1+1=n$. 

The proof for the existence of a vertex  $v'$ such that $\ideg(v')=n-1$ is similar.
\end{proof}

\begin{lemma}\label{le-co-t}
Let $G$ be a semicomplete  $\{\overrightarrow{C_3},D_4\}$-free digraph, then
its complement digraph $\co G$ is transitive. ($D_4$ is shown in Table \ref{F-co}.)
\end{lemma}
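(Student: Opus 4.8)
The statement asserts that if $G$ is semicomplete and $\{\overrightarrow{C_3},D_4\}$-free, then $\co G$ is transitive. The plan is to argue directly from the definition of transitivity in $\co G$: suppose $(u,v)$ and $(v,w)$ are arcs of $\co G$ with $u\neq w$; I must produce the arc $(u,w)$ in $\co G$, i.e.\ show $(u,w)\notin A$. Since $(u,v),(v,w)\notin A$, semicompleteness of $G$ forces $(v,u)\in A$ and $(w,v)\in A$. Now suppose for contradiction that $(u,w)\in A$. Then on the vertex set $\{u,v,w\}$ the digraph $G$ contains the arcs $(v,u)$, $(w,v)$, $(u,w)$ — which is exactly a $\overrightarrow{C_3}$ on these three vertices — unless some of the reverse arcs are also present, in which case we get a semicomplete digraph on three vertices containing this directed triangle as a subdigraph. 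The key point is to check that every semicomplete digraph on three vertices that contains a directed $3$-cycle as a subdigraph is isomorphic either to $\overrightarrow{C_3}$ itself or to $D_4$ (consulting the list of digraphs on at most four vertices referenced in the footnote to Lemma \ref{le-ind-out-d}, i.e.\ Appendix 2 of \cite{Har69}). Either way $G[\{u,v,w\}]$ contains a forbidden induced subdigraph, contradicting the hypothesis. Hence $(u,w)\notin A$, so $(u,w)\in A^c$, and $\co G$ is transitive.

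A cleaner route, which I would actually prefer to write, uses Lemma \ref{le-ind-out-d} as a black box together with induction on $|V(G)|$. By Lemma \ref{le-ind-out-d}, $G$ has a vertex $v$ with $\odeg_G(v)=n-1$; then in $\co G$ the vertex $v$ is a source with $\ideg_{\co G}(v)=\odeg_{\co G}(v)=0$ restricted to... more precisely $v$ has no in-arcs in $\co G$ and its only possible role in a violation of transitivity would be as the middle vertex of a path $(u,v),(v,w)$, which is impossible since $v$ has no in-neighbours in $\co G$ other than — wait, one must be careful: $v$ could still have out-neighbours in $\co G$. Instead, delete $v$: $G-v$ is again semicomplete and $\{\overrightarrow{C_3},D_4\}$-free (both properties are hereditary under induced subdigraphs, the latter by definition of $\mathcal F$-free), so by induction $\co(G-v)=\co G - v$ is transitive. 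It then suffices to check that adding back $v$ as a source of $\co G$ (out-neighbours being exactly $V\setminus N^+_G(v)\setminus\{v\}=\emptyset$ since $\odeg_G(v)=n-1$, so in fact $v$ is isolated in $\co G$!) cannot create a transitivity violation. Since $\odeg_G(v)=n-1$ means $v$ dominates everything in $G$, $v$ is an isolated vertex of $\co G$, and an isolated vertex never participates in a path of length two, so $\co G$ is transitive. The base case $n\le 3$ is the same small-case check as above, which is where I consult the enumeration of small digraphs.

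I expect the main obstacle to be the small-case verification: one genuinely has to know (or recheck) that among semicomplete digraphs on three vertices, the only ones containing a directed triangle are $\overrightarrow{C_3}$ and $D_4$, and that $\{\overrightarrow{C_3},D_4\}$-free semicomplete digraphs on $\le 3$ vertices are transitive after complementation. This is a finite check but it is the crux; the rest of the argument is the routine observation that $v$ becomes isolated in $\co G$ once $\odeg_G(v)=n-1$, so the inductive step is immediate. I would phrase the final proof via the induction argument of the previous paragraph, stating the base case by reference to the classification of small digraphs in \cite{Har69} (as already done for Lemma \ref{le-ind-out-d}), and invoking Lemma \ref{le-ind-out-d} for the dominating vertex in the inductive step.
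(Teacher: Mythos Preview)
Your first approach is essentially the paper's argument, dualised from $\co G$ to $G$. The paper simply notes that $\co(\overrightarrow{C_3})\cong\overrightarrow{C_3}$ and $\co(D_4)\cong\overrightarrow{P_3}$, so $\co G$ is $\{\overrightarrow{C_3},\overrightarrow{P_3}\}$-free; combined with the fact that $\co G$ is an oriented graph (since $G$ is semicomplete), this forces $(u,w)\in A^c$ immediately. Your version reaches the same conclusion but your intermediate claim ``every semicomplete digraph on three vertices that contains a directed triangle as a subdigraph is isomorphic either to $\overrightarrow{C_3}$ or to $D_4$'' is false as stated ($\overleftrightarrow{K_3}$ and the $5$-arc digraphs are counterexamples). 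You don't actually need that claim: since $(u,v),(v,w)\notin A$, the only undetermined arc on $\{u,v,w\}$ is $(w,u)$, so $G[\{u,v,w\}]$ is either $\overrightarrow{C_3}$ (if $(w,u)\notin A$) or $D_4$ (if $(w,u)\in A$). That two-case check is the whole proof; no appeal to the digraph tables in \cite{Har69} is required.

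Your second, preferred, inductive approach contains a genuine error. A vertex $v$ with $\odeg_G(v)=n-1$ has no out-neighbours in $\co G$, but it can certainly have in-neighbours: any $u$ with $(u,v)\notin A$ gives $(u,v)\in A^c$. So $v$ is a \emph{sink} of $\co G$, not an isolated vertex, and a sink can participate in a transitivity violation as the terminal vertex of a path $(a,b),(b,v)$ with $(a,v)\notin A^c$. The induction can be rescued --- one checks that such a configuration forces $G[\{a,b,v\}]$ to have exactly the arcs $(b,a),(v,b),(v,a),(a,v)$, which is $D_4$ --- but this extra verification is precisely the same three-vertex case analysis as in the direct approach, so the induction buys nothing. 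Stick with the direct argument (or the paper's complemented-forbidden-subgraph formulation), and drop the ``isolated vertex'' claim.
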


\begin{proof}
Let  $(u,v),(v,w)\in A^c$ be two arcs of  $\co G=(V,A^c)$.
Since $G=(V,A)$ is semicomplete we know that $(v,u),(w,v)\not\in A^c$.
If $G$ is a $\{\overrightarrow{C_3},D_4\}$-free digraph, then $\co G$ 
is a $\{\overrightarrow{C_3},\overrightarrow{P_3}\}$-free digraph. 
Thus $u$ and $w$ are connected either only by
$(u,w)\in A^c$ or by $(u,w)\in A^c$ and $(w,u)\in A^c$, which
implies that $\co G$ is transitive.
\end{proof}

A {\em spanning} subdigraph is a subdigraph obtained by the deletion of arcs.

\begin{lemma}\label{le-sp-tt}
Every semicomplete $\{\overrightarrow{C_3},D_4\}$-free digraph
has a spanning transitive tournament subdigraph. ($D_4$ is shown in Table \ref{F-co}.)
\end{lemma}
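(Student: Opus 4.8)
The plan is to prove this by induction on the number of vertices $n$, mirroring the structure of the proof of Lemma~\ref{le-ind-out-d}. The base case $n\leq 1$ is trivial, and small cases $n\leq 3$ can be checked directly. For the inductive step, suppose $G_{n+1}=(V_{n+1},A_{n+1})$ is a semicomplete $\{\overrightarrow{C_3},D_4\}$-free digraph on $n+1$ vertices. By Lemma~\ref{le-ind-out-d}, $G_{n+1}$ has a vertex $v$ with $\odeg(v)=n$, i.e.\ an out-dominating vertex (it must be a source because if some $(u,v)\in A_{n+1}$ then together with $(v,u)\in A_{n+1}$ we would either get $\overleftrightarrow{K_2}$ inside a larger forbidden configuration — more carefully, one uses that $v$ out-dominates everything, so any in-arc $(u,v)$ creates a 2-cycle $v,u$; this on its own is allowed in semicomplete digraphs, so I instead argue differently: pick $v$ out-dominating and note it suffices to find a spanning transitive tournament of $G_n := G_{n+1}[V_{n+1}\setminus\{v\}]$ and prepend $v$).

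More precisely, here is the cleaner argument I would write. Let $v$ be an out-dominating vertex of $G_{n+1}$, which exists by Lemma~\ref{le-ind-out-d}. The induced subdigraph $G_n = G_{n+1}[V_{n+1}\setminus\{v\}]$ is again semicomplete and $\{\overrightarrow{C_3},D_4\}$-free, since these properties are hereditary. By the induction hypothesis $G_n$ has a spanning transitive tournament subdigraph $T_n$ on vertex set $V_{n+1}\setminus\{v\}$. Now form $T_{n+1}$ by adding $v$ to $T_n$ together with all arcs $(v,u)$ for $u\in V_{n+1}\setminus\{v\}$; these arcs are present in $G_{n+1}$ because $v$ is out-dominating. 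Then $T_{n+1}$ is a spanning subdigraph of $G_{n+1}$, it is a tournament (exactly one arc between every pair: between $v$ and any $u$ we have $(v,u)$, and between two vertices of $V_{n+1}\setminus\{v\}$ we have the arc chosen by $T_n$), and it is transitive: any potential violation involving only vertices of $V_{n+1}\setminus\{v\}$ is ruled out by transitivity of $T_n$, and any pair $(v,u),(u,w)$ has the closing arc $(v,u')$... wait, it needs $(v,w)$, which is present since $v$ is out-dominating in $T_{n+1}$; finally $v$ has indegree $0$ in $T_{n+1}$ so it is never the middle or last vertex of a path of length two. Hence $T_{n+1}$ is a spanning transitive tournament subdigraph of $G_{n+1}$, completing the induction.

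The main obstacle I anticipate is simply making sure Lemma~\ref{le-ind-out-d} is invoked correctly and that the out-dominating vertex really has indegree zero in the \emph{constructed} tournament $T_{n+1}$ — note we only keep the out-arcs of $v$, so indegree zero there is by construction, not an extra hypothesis about $G_{n+1}$. A secondary point worth a sentence is that one could equivalently phrase the whole thing non-inductively: repeatedly peel off an out-dominating vertex (which exists at every stage by Lemma~\ref{le-ind-out-d}), record it, and the resulting ordering $[v_1,\dots,v_n]$ is exactly a Hamiltonian path of a spanning transitive tournament, i.e.\ this is the same vertex ordering that will later be used to realize such digraphs as sequence digraphs. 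I would present the inductive version for rigor but remark on this peeling interpretation since it connects directly to the $S_{1,2}$ characterization the section is building toward.
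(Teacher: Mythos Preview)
Your argument is correct and is essentially the paper's proof: both repeatedly apply Lemma~\ref{le-ind-out-d} to extract an out-dominating vertex, with the paper phrasing this as the iterative ``peeling'' procedure you describe at the end (recording $q=[v_1,\ldots,v_n]$ and noting $\g(\{q\})\in S_{1,1}$), while you recast it as a formal induction. The confused aside in your first paragraph about $v$ being a source in $G_{n+1}$ is unnecessary and should be cut---as you yourself recognize, indegree zero is only needed in the constructed $T_{n+1}$, where it holds by construction.
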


\begin{proof}
By Lemma \ref{le-ind-out-d} every  semicomplete $\{\overrightarrow{C_3},D_4\}$-free digraph $G$
on $n$ vertices has a
vertex $v_1$  such that $\odeg(v_1)=n-1$. By removing $v_1$ from $G$, we obtain 
a semicomplete $\{\overrightarrow{C_3},D_4\}$-free digraph $G^1$
on $n-1$ vertices, which leads to a vertex $v_2$  such that $\odeg(v_2)=n-2$. 
By removing $v_2$ from $G^1$, we obtain a semicomplete $\{\overrightarrow{C_3},D_4\}$-free 
digraph $G^2$ on $n-2$ vertices, which leads to a vertex $v_3$  such that $\odeg(v_3)=n-3$ and so on.
The sequence
$q=[v_1,v_2,\ldots,v_n]$ defines a digraph $\g(\{q\})\in S_{1,1}$ which is a spanning 
transitive tournament subdigraph
of $G$.
\end{proof}

These results allow us to show the following characterizations
for the class $S_{1,2}$. Since we use several forbidden induced subdigraphs
the  semicompleteness is expressed by excluding 
$2\overleftrightarrow{K_1}$.

\begin{theorem}\label{s12}
For every digraph $G$ the following statements are equivalent.
\begin{enumerate}
\item \label{s12-a} $G\in S_{1,2}$ 
\item \label{s12-aa} $G\in S_{1,\ell}$ for some $\ell\geq 2$ 
\item \label{s12-b} $\co G$ is transitive, $\co G$ is
$2\overrightarrow{P_2}$-free, and  
$G$ has a spanning transitive tournament subdigraph. 
\item  \label{s12-c} $G$ is $\{\co(2\overrightarrow{P_2}),2\overleftrightarrow{K_1},\overrightarrow{C_3},D_4\}$-free. 
($D_4$ is shown in Table \ref{F-co}.)
\end{enumerate}
\end{theorem}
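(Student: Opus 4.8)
The plan is to prove Theorem~\ref{s12} by establishing the cycle of implications
$(\ref{s12-a})\Rightarrow(\ref{s12-aa})\Rightarrow(\ref{s12-b})\Rightarrow(\ref{s12-c})\Rightarrow(\ref{s12-a})$,
using the structural lemmas already proved in this section. The implication $(\ref{s12-a})\Rightarrow(\ref{s12-aa})$ is immediate since $S_{1,2}\subseteq S_{1,\ell}$ for $\ell\geq2$ by~(\ref{prop2b}). For $(\ref{s12-aa})\Rightarrow(\ref{s12-b})$: transitivity of $\co G$ is Lemma~\ref{le-transi}, $2\overrightarrow{P_2}$-freeness of $\co G$ is Lemma~\ref{le-p2free}, and the existence of a spanning transitive tournament subdigraph of $G$ follows from Lemma~\ref{le-sp-tt} once we observe that $G\in S_{1,\ell}$ is semicomplete (Lemma~\ref{le-ud}) and $\{\overrightarrow{C_3},D_4\}$-free --- the latter because $\overrightarrow{C_3}$ is not quasi transitive (contradicting Lemma~\ref{l2}) and $D_4$, if present as an induced subdigraph, would force a forbidden configuration among the $\FI/\LA$ values via Lemma~\ref{fiandla-1}; this last point needs to be checked from the structure of $D_4$ in Table~\ref{F-co}.

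For $(\ref{s12-b})\Rightarrow(\ref{s12-c})$: I would argue contrapositively. Each of the four digraphs $\co(2\overrightarrow{P_2})$, $2\overleftrightarrow{K_1}$, $\overrightarrow{C_3}$, $D_4$ must be shown to violate at least one of the three conditions in~(\ref{s12-b}), and since these conditions are hereditary under induced subdigraphs --- transitivity and $F$-freeness obviously are, and ``$G$ has a spanning transitive tournament subdigraph'' needs the extra remark that it is equivalent to $G$ being semicomplete and $\{\overrightarrow{C_3},D_4\}$-free by Lemmas~\ref{le-ind-out-d} and~\ref{le-sp-tt}, hence hereditary --- a $G$ satisfying~(\ref{s12-b}) cannot contain any of them as an induced subdigraph. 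Concretely: $2\overleftrightarrow{K_1}$ is edgeless on two vertices, so $G$ containing it is not semicomplete and thus has no spanning transitive tournament; $\overrightarrow{C_3}$ and $D_4$ directly fail the third condition by Lemma~\ref{le-sp-tt}'s characterization; and $\co(2\overrightarrow{P_2})$ has the property that its complement is $2\overrightarrow{P_2}$, violating $2\overrightarrow{P_2}$-freeness of $\co G$.

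The main work is $(\ref{s12-c})\Rightarrow(\ref{s12-a})$: from a $\{\co(2\overrightarrow{P_2}),2\overleftrightarrow{K_1},\overrightarrow{C_3},D_4\}$-free digraph $G$ I must construct a single sequence $q_1$ with at most two items per type such that $\g(\{q_1\})=G$. Excluding $2\overleftrightarrow{K_1}$ makes $G$ semicomplete, so by Lemma~\ref{le-sp-tt} it has a spanning transitive tournament, giving an ordering $v_1,\dots,v_n$ of $V$ with all arcs $(v_i,v_j)$, $i<j$, present. The remaining arcs of $G$ are ``backward'' arcs $(v_j,v_i)$ with $i<j$; by Lemma~\ref{le-co-t} the complement $\co G$ is transitive, so the non-arcs of $G$ form a transitive relation, and I would use this together with $\co(2\overrightarrow{P_2})$-freeness to show the backward arcs have an interval-like structure: for each $v_j$ the set $\{i<j : (v_j,v_i)\in A\}$ is an initial segment $\{1,\dots,p_j\}$ and the $p_j$ are monotone. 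Given such a structure one builds $q_1$ by placing a first copy of each $v_i$ in the order $v_1,\dots,v_n$ and then inserting second copies of exactly those vertices that have backward arcs, positioned so that $\LA(q_1,v_i)>\FI(q_1,v_j)$ exactly when $(v_j,v_i)\in A$; Lemma~\ref{fiandla-1}(\ref{fiandla-1a}) then certifies $\g(\{q_1\})=G$. The hard part is verifying that $\co(2\overrightarrow{P_2})$-freeness plus transitivity of $\co G$ really do force the backward arcs into this nested/interval pattern --- this is the combinatorial core of the theorem, and I expect it to require a careful case analysis on three or four vertices showing that any ``crossing'' pair of backward arcs produces either an induced $\co(2\overrightarrow{P_2})$ or a $D_4$ or a $\overrightarrow{C_3}$.
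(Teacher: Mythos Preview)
Your overall architecture is close to the paper's --- both arguments cycle through the four conditions and use Lemmas~\ref{le-transi}, \ref{le-p2free}, \ref{le-co-t}, \ref{le-sp-tt} for the easy directions --- but two points need correction.

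\medskip
\textbf{A small slip in $(\ref{s12-aa})\Rightarrow(\ref{s12-b})$.} Your argument that $\overrightarrow{C_3}\notin S_{1,\ell}$ because ``$\overrightarrow{C_3}$ is not quasi transitive'' is wrong: $\overrightarrow{C_3}$ \emph{is} quasi transitive (every pair of vertices is joined by an arc). The right way to exclude $\overrightarrow{C_3}$ and $D_4$ here is via Lemma~\ref{le-transi}: $\co(\overrightarrow{C_3})$ is again a $3$-cycle and $\co(D_4)=\overrightarrow{P_3}$, neither of which is transitive. Alternatively, and this is what the paper does, you can sidestep the issue entirely: if $G=\g(\{q\})$ then the subsequence $F(q)$ of first occurrences directly yields a spanning transitive tournament, without needing $\{\overrightarrow{C_3},D_4\}$-freeness as an intermediate step.

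\medskip
\textbf{A genuine gap in $(\ref{s12-c})\Rightarrow(\ref{s12-a})$.} Your proposed structural claim --- that for each $v_j$ the set $\{i<j:(v_j,v_i)\in A\}$ is an initial segment $\{1,\dots,p_j\}$ with the $p_j$ monotone --- is false. Take $D_6\in S_{1,2}$, realized by $q_1=[a,b,a,c,b,d,c,d]$; the $F(q_1)$-order is $a,b,c,d$, and the backward arcs are exactly $(b,a),(c,b),(d,c)$. For $v_3=c$ the backward-arc set is $\{2\}$, not an initial segment. Conversely, $q=[a,b,c,a]$ gives order $a,b,c$ with backward arcs from $c$ only to $a$, so the set is $\{1\}$, not a final segment either. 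The backward arcs come from an interval overlap structure (the intervals $[\FI(q,t),\LA(q,t)]$), and such overlaps need not form initial or final segments along the left-endpoint order.

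The paper handles the hard direction differently: it proves $(\ref{s12-b})\Rightarrow(\ref{s12-aa})$ rather than $(\ref{s12-c})\Rightarrow(\ref{s12-a})$ directly. Starting from the spanning-tournament sequence $q'=[v_1,\dots,v_n]$, it shows (Claim~\ref{cl}) that each missing backward arc $(v_i,v_j)$ can be realized by inserting a pair of new items at some position $k$ with $j<k\le i$, and that all these pairwise insertions are mutually compatible --- the $34$-case check verifies that the extra arcs created between inserted items are already in $A$, and the final contradiction argument uses transitivity of $\co G$ together with $2\overrightarrow{P_2}$-freeness to show a valid position $k$ always exists. This produces a sequence with possibly many items per type, landing in $S_{1,\ell}$ for some $\ell\ge2$; Lemma~\ref{le-1-leq} then brings it down to $S_{1,2}$. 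If you want to rescue your direct-construction idea, you would need to replace the initial-segment claim by the correct interval-overlap description and argue that the $\{\co(2\overrightarrow{P_2}),\overrightarrow{C_3},D_4\}$-freeness forces the right endpoints $\LA$ to be arrangeable consistently; but that is essentially equivalent in difficulty to the paper's insertion argument.
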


\begin{proof}
$(\ref{s12-a})\Rightarrow (\ref{s12-aa})$ By definition. 
$(\ref{s12-aa})\Rightarrow (\ref{s12-a})$ 
By Lemma \ref{le-1-leq}.

$(\ref{s12-b})\Rightarrow (\ref{s12-c})$ If $G$ has a  spanning transitive tournament subdigraph
then it is semicomplete and thus $2\overleftrightarrow{K_1}$-free. If $\co G$ is transitive
it has no induced $\overrightarrow{C_3}$ and no $\overrightarrow{P_3}$. Thus $G$ has
no induced  $\overrightarrow{C_3}$ and $D_4$.
$(\ref{s12-c})\Rightarrow (\ref{s12-b})$ Follows by Lemma \ref{le-co-t} and Lemma \ref{le-sp-tt}.

$(\ref{s12-a})\Rightarrow (\ref{s12-b})$ Digraph $\co G$ is transitive
by Lemma \ref{le-transi} and $2\overrightarrow{P_2}$-free by Lemma \ref{le-p2free}. 
Further if $G$ is 
defined by sequence $q$ then the subsequence $F(q)$ 
which is obtained from $q$ by removing all except the first item for each type
leads to a subdigraph $(V,A')$ which is a transitive tournament.

$(\ref{s12-b})\Rightarrow (\ref{s12-aa})$ Let $G'=(V,A')$ be a subdigraph of $G=(V,A)$ 
which is a transitive tournament. 
By Theorem \ref{s11} we know that $G'\in S_{1,1}$ and thus 
there is some sequence 
$$q'=[v_1,\ldots,v_n]$$ such that $\g(\{q'\})=G'$. 
If $A'=A$ we know that $G\in S_{1,1}\subseteq S_{1,\ell}$ for every $\ell\geq 2$. 
So we can assume that $A'\subsetneq A$. Obviously for every 
arc  $(v_i,v_j)\in A-A'$  there are two positions $j<i$  in
$$q'=[v_1,\ldots,v_j,\ldots, v_i,\ldots, v_n].$$

In order to define a subdigraph of $G$ which contains
all arcs of $G'$ and arc $(v_i,v_j)$ we can 
insert (cf. Section \ref{SCgb} for the definition of inserting an item) an additional item for type $v_i$ 
on position $k\leq j$, or  an additional item for type $v_j$ 
on position $k> i$, or first an additional item 
for type $v_j$ and then an additional item for type $v_i$  
on a position $k$, $j<k\leq i$, into $q'$ without creating 
an arc which is not in $A$.
This is  possible if and only if there is some position $k$, $j\leq k \leq i$, in 
$$q'=[v_1,\ldots,v_j,\ldots, v_{m'}, \ldots, v_k, \ldots, v_{m''}, \ldots, v_i, \ldots,  v_n]$$
such that for every $m'$, $j< m ' \leq k$, it holds $(v_{m'},v_j)\in A$
and for every $m''$, $k\leq m''< i$, it holds $(v_i,v_{m''})\in A$.

If it is possible to insert all arcs of $A-A'$ 
by adding a set of additional items into sequence $q'$ resulting in a sequence $q$ 
such that $G=\g(q)$, then we have $G\in S_{1,\ell}$ for some $\ell\geq 2$.
%
Next we show a condition using the new items of every single arc  of $A-A'$  independently
from each other. 

\begin{claim}\label{cl} If for every arc  $(v_i,v_j)\in A-A'$ 
there is  a position $k$, $j<k\leq i$ such that first inserting 
an additional item for type $v_j$ and then an additional item for type $v_i$ 
at position $k$ into $q'$ defines a subdigraph of $G$ which contains
all arcs of $G'$ and arc $(v_i,v_j)$, then $G\in S_{1,\ell}$ for some $\ell\geq 2$.
\end{claim}

\begin{proof}{(\em of Claim \ref{cl})}
Every single arc  $(v_i,v_j)\in A-A'$  can be inserted by 
first inserting first an additional item for type $v_j$ and then 
an additional item for type $v_i$ at some  position $k$, $j<k\leq i$. 
Next we show that the new inserted 
items for two missing arcs  $(v_i,v_j)\in A-A'$ at position $k$ and $(v_{i'},v_{j'})\in A-A'$
at position $k'$ do not create an arc which is not in $A$. This is done by a case
distinction w.r.t.~the 34 possible positions of $i,i',j,j',k,k'$
shown in  Table \ref{cases-tab}.

\begin{enumerate}
\item If $j'<k'<i'<j<k<i$, then $(v_{j'},v_i)\in A$, $(v_{i'},v_i)\in A$, $(v_{j'},v_j)\in A$, and $(v_{i'},v_j)\in A$.
\item If $j'<k'<i'=j<k<i$, then $(v_{j'},v_i)\in A$, $(v_{i'},v_i)\in A$, and $(v_{j'},v_j)\in A$.
\item If $j'<k'<j<i'<k<i$, then $(v_{j'},v_i)\in A$, $(v_{i'},v_i)\in A$, $(v_{j'},v_j)\in A$, and $(v_{i'},v_j)\in A$.
\item If $j'<k'<j<k<i'<i$, then $(v_{j'},v_i)\in A$, $(v_{i'},v_i)\in A$, $(v_{j'},v_j)\in A$, and $(v_{i'},v_j)\in A$.
\item If $j'<k'<j<k<i'=i$, then $(v_{j'},v_i)\in A$, $(v_{i'},v_i)\in A$, $(v_{j'},v_j)\in A$, and $(v_{i'},v_j)\in A$.
\item If $j'<k'<j<k<i<i'$, then $(v_{j'},v_i)\in A$, $(v_{i'},v_i)\in A$, $(v_{j'},v_j)\in A$, and $(v_{i'},v_j)\in A$.
\item If $j'<j<k'<i'<k<i$, then $(v_{j'},v_i)\in A$, $(v_{i'},v_i)\in A$, $(v_{j'},v_j)\in A$, and $(v_{i'},v_j)\in A$.
\item\label{i-8} If $j'<j<k'<k<i'<i$, then $(v_{j'},v_i)\in A$, $(v_{i'},v_i)\in A$, $(v_{j'},v_j)\in A$. 
But arc $(v_{i'},v_j)$ does not belong to $A$ because of
the positions of $i,i',j,j'$. Next we show that $(v_{i'},v_j)\in A$ using  $(v_{i},v_{j'})$
which also does not belong to $A$ because of
the positions of $i,i',j,j'$.
If $(v_{i'},v_j)\not\in A$ and $(v_{i},v_{j'})\not\in A$ then  $j'<j<k'<k<i'<i$ implies that
$\co G$ has an induced $2\overrightarrow{P_2}$ which is not possible by our assumption.
Thus we know that $(v_{i'},v_j)\in A$ or $(v_{i},v_{j'})\in A$.
If $(v_{i},v_{j'})\in A$ we can exchange the positions of $k'$ and $k$ (which implies
that we are in case \ref{i-11}) or we also have $(v_{i'},v_j)\in A$.

\item If $j'<j<k'<k<i'=i$, then  $(v_{j'},v_i)\in A$, $(v_{j'},v_j)\in A$, and $(v_{i'},v_j)=(v_i,v_j)\in A$.

\item If $j'<j<k'<k<i<i'$, then  $(v_{j'},v_i)\in A$, $(v_{i'},v_i)\in A$, $(v_{j'},v_j)\in A$, and $(v_{i'},v_j)\in A$ 
(same situation as \ref{i-8}).

\item\label{i-11} If $j'<j<k<k'<i'<i$, then  $(v_{j},v_{i'})\in A$, $(v_{i},v_{i'})\in A$, $(v_{j},v_{j'})\in A$.
But arc $(v_{i},v_{j'})$ does not belong to $A$ because of
the positions of $i,i',j,j'$. Next we show that $(v_{i},v_{j'})\in A$ using  $(v_{i'},v_{j})$
which also does not belong to $A$ because of
the positions of $i,i',j,j'$.
If $(v_{i},v_{j'})\not\in A$ and $(v_{i'},v_{j})\not\in A$ then $j'<j<k<k'<i'<i$  implies that
$\co G$ has an induced $2\overrightarrow{P_2}$ which is not possible by our assumption.
Thus we know that $(v_{i},v_{j'})\in A$ or $(v_{i'},v_{j})\in A$.
If $(v_{i'},v_{j})\in A$ we can exchange the positions of $k'$ and $k$ (which implies
that we are in case \ref{i-8}) or we also have $(v_{i},v_{j'})\in A$.

\item If $j'<j<k<k'<i'=i$, then  $(v_{j},v_{i'})\in A$, $(v_{j},v_{j'})\in A$, and $(v_{i},v_{j'})=(v_{i'},v_{j'})\in A$.
\item If $j'<j<k<k'<i<i'$, then  $(v_{j},v_{i'})\in A$, $(v_{i},v_{i'})\in A$, $(v_{j},v_{j'})\in A$, and $(v_{i},v_{j'})\in A$ (same situation as \ref{i-11}).
\item If $j'<j<k<i<k'<i'$, then  $(v_{j},v_{i'})\in A$, $(v_{i},v_{i'})\in A$, $(v_{j},v_{j'})\in A$, and $(v_{i},v_{j'})\in A$.
\item If $j'=j<k'<i'<k<i$, then $(v_{j'},v_i)\in A$, $(v_{i'},v_i)\in A$, and $(v_{i'},v_j)\in A$.
\item If $j'=j<k'<k<i'<i$, then $(v_{j'},v_i)\in A$, $(v_{i'},v_i)\in A$, and $(v_{i'},v_j)=(v_{i'},v_{j'})\in A$.
\item If $j'=j<k'<k<i<i'$, then $(v_{j'},v_i)\in A$, $(v_{i'},v_i)\in A$, and $(v_{i'},v_j)=(v_{i'},v_{j'})\in A$.
\end{enumerate}

Every remaining case $i\in\{18,\ldots,34\}$ is symmetric to shown case $35-i$.
Thus for every two missing arcs  $(v_i,v_j)\in A-A'$ and $(v_{i'},v_{j'})\in A-A'$
which are inserted by additionally items all arcs between these items belong to $A$.

\medskip
Now we can show 
the claim by induction on the number $m=|A-A'|$ of missing arcs. Let $a_1,\ldots,a_m$
be an arbitrary order of the missing arcs.
For $m=1$ we can insert the missing arc by the assumption of the claim.
Let $m\geq 1$, $q_m$ be the sequence obtained from $q'$ by inserting $m$ 
missing arcs, and $G_m=\g(q_m))=(V,A_m)$. In order to insert arc  $a_{m+1}=(v_i,v_j)$ we
neglect the $2m$ new items in $q_m$ and  consider only the items of $q'$
for determining a position $k_{a_{m+1}}$, $j<k_{a_{m+1}}\leq i$, such that $a_{m+1}$ can be 
inserted into $G'$ by first inserting 
an additional item for type $v_j$ and then an additional item  for type $v_i$ 
at position $k_{a_{m+1}}$. The existence of this position follows by the assumption of the claim. 
%
The insertion of the two new items  at position $k_{a_{m+1}}$  
for arc  $a_{m+1}=(v_i,v_j)$ defines the following sets of arcs.
\begin{enumerate}
\item $A_{1,m+1}=\{(v_i,v_j)\}$ 
\item Set $A_{2,m+1}$ of all arcs from vertices
corresponding to items of $q'$ on position $\leq k_{a_{m+1}}$ to $v_i$ and to $v_j$ and 
all arcs from  $v_i$ and from $v_j$
to all vertices
corresponding to items of $q'$ on position $> k_{a_{m+1}}$.
\item Set $A_{3,m+1}$ of all arcs from vertices
corresponding to new items of $q_m$ on positions $k_{a_{j}}$, $j\leq m$,  to $v_i$ and to $v_j$ and 
all arcs from  $v_i$ and from $v_j$
to all vertices
corresponding to new items of $q_m$ on positions $k_{a_{j}}$, $j\geq m$.
\end{enumerate}

It remains to show that $A_{1,m+1}\cup A_{2,m+1} \cup A_{3,m+1} \subseteq A$. $A_{1,m+1} \subseteq A$ holds by definition and 
$A_{2,m+1} \subseteq A$ holds by  the assumption of the claim.
Set $A_{3,m+1}$ is equal to the union of all sets of arcs between 
the vertices corresponding
to the two new inserted items for $a_{m+1}$ and the vertices corresponding
to the two new inserted items for every $a_{j}$, $j\leq m$ following the order of the items.
As our case distinction above implies that all arcs between the new items inserted at any
$a_{j}$, $j\leq m$ and the new items inserted at  $a_{m+1}$  are in $A$ the union of all
these arcs $A_{3,m+1}$ is also in $A$.

So we can insert $a_{m+1}$ by first inserting 
an additional item for type $v_j$ and then an additional item for type $v_i$ 
at position $k'_{a_{m+1}}$, where  $k'_{a_{m+1}}$ is the position in $q_m$ which corresponds to 
position $k_{a_{m+1}}$ in $q'$.

Thus we have shown that $\g(\{q_m\})=G\in S_{1,\ell}$ for some $\ell\geq 2$.
\end{proof}

Assume that $G\not\in S_{1,\ell}$ for every $\ell\geq 2$. By Claim \ref{cl}
there is some arc $(v_i,v_j)\in A-A'$ such that for every position $k$, $j<k\leq i$  inserting 
an additional item for type $v_i$ and an additional item for type $v_j$ 
at position $k$ defines an arc which is not in $A$. 
That is, for every position $k$, $j < k \leq i$, in $q'$ 
there exists some $m'$, $j< m ' \leq k$, such that it holds $(v_{m'},v_j)\not\in A$
{\em or} there exists some $m''$, $k\leq m''< i$, such that it holds $(v_i,v_{m''})\not\in A$.
By the transitivity of $\co G$ it follows that there is one 
position $k$, $j < k \leq i$, in $q'$ 
such that there exists some $m'$, $j< m ' \leq k$, such that it holds $(v_{m'},v_j)\not\in A$
{\em and} there exists some $m''$, $k\leq m''< i$, such that it holds $(v_i,v_{m''})\not\in A$.
%

\medskip
If  $\co G=(V,A^c)$ is the complement digraph of $G$
we know that
\begin{equation}
(v_{m'},v_j)\in A^c \text{ and } (v_i,v_{m''})\in A^c. \label{b2miss}
\end{equation}

Since $m'\leq m''$ we know that $(v_{m'},v_{m''})\in A$. Further we know
that  $(v_{m''},v_{m'})\in A$, since otherwise $(v_{m''},v_{m'})\in A^c$, property
(\ref{b2miss}), and the transitivity of $\co G$ would imply that  $(v_i,v_j)\in 
A^c$ which is not possible. Thus
we know that
\begin{equation}
(v_{m'},v_{m''})\not \in A^c \text{ and } (v_{m''},v_{m'})\not\in A^c. \label{b12miss}
\end{equation}

Further the arcs $(v_j,v_{m'}),(v_j,v_{m''}),(v_{m'},v_i),(v_{m''},v_i)$ belong to
$A'\subseteq A$ and thus 
\begin{equation}
(v_j,v_{m'})\not \in A^c, \text{ } (v_j,v_{m''})\not\in A^c, \text{ } (v_{m'},v_i)\not\in A^c\text{ and } (v_{m''},v_i)\not\in A^c. \label{b123miss}
\end{equation}

If $(v_{i},v_{m'})\in A^c$ or $(v_{m''},v_{j})\in A^c$
then (\ref{b2miss}) and the transitivity of $\co G$ would imply 
that $(v_i,v_j)\in A^c$, thus we know
\begin{equation}
(v_{i},v_{m'})\not \in A^c \text{ and } (v_{m''},v_{j})\not\in A^c. \label{b1234miss}
\end{equation}

Properties
(\ref{b2miss})-(\ref{b1234miss}) imply that the subdigraph  $(\{v_i,v_j,v_{m'},v_{m''}\},\{(v_i,v_{m''}),(v_{m'},v_j)\})$
of $\co G$  leads to a $2\overrightarrow{P_2}$, which implies that $G$ contains a $\co(2\overrightarrow{P_2})$.
%
%
%
%
\end{proof}

\begin{table}[h!]
$$
\begin{array}{|r|c|c|c|c|c|c|c|c|}
   &      & v_j &       & v_k &       & v_i &        \\
\hline
1.  &  v_{j'} v_{k'} v_{i'}  &   &   &&&&       \\
2.  &  v_{j'} v_{k'}  & v_{i'}   &    &&&&    \\
3.  &  v_{j'} v_{k'}  & & v_{i'}   &  &&&     \\
4.  &  v_{j'} v_{k'}  & & & & v_{i'}   & &    \\
5.  &  v_{j'} v_{k'}  & & & && v_{i'}   &     \\
6.  &  v_{j'} v_{k'}  & & & && & v_{i'}      \\
7.  &  v_{j'} & & v_{k'} v_{i'} & & &&        \\
8.  &  v_{j'} & & v_{k'} & & v_{i'} & &       \\
9.  &  v_{j'} & & v_{k'} &&&v_{i'} &        \\
10. &  v_{j'} &  & v_{k'} &&& & v_{i'}       \\
11. &  v_{j'} &  & &&v_{k'} v_{i'}  &&      \\
12. &  v_{j'} &  & &&v_{k'} &v_{i'}  &       \\
13. &  v_{j'} &  & &&v_{k'} &&v_{i'}      \\
14. &  v_{j'} &  & && &&v_{k'}v_{i'}       \\
15. &         &v_{j'} & v_{k'}v_{i'} & && &     \\
16. &         &v_{j'} & v_{k'} & &v_{i'}& &     \\
17. &         &v_{j'} & v_{k'} & &&&v_{i'}      \\
18. &         &v_{j'} && & v_{k'} v_{i'} &  &  \\
19. &         &v_{j'} && & v_{k'} &&v_{i'}      \\
20. &         &v_{j'} && &&& v_{k'} v_{i'}      \\
21. &         && v_{j'} v_{k'} v_{i'}  &      &&&   \\
22. &  && v_{j'} v_{k'}  &  & v_{i'} &   &   \\
23. &  && v_{j'} v_{k'}  &  & &v_{i'} &       \\
24. &  && v_{j'} v_{k'}  &  & & &v_{i'}      \\
25. &  && v_{j'}   &  &v_{k'}  v_{i'} &   &   \\
26. &  && v_{j'}   &  &v_{k'}  &v_{i'} &     \\
27. &  && v_{j'}   &  &v_{k'}  && v_{i'}      \\
28. &  && v_{j'}   &    && &v_{k'}v_{i'}       \\
29. &  &&    &    &v_{j'}v_{k'}v_{i'} & &       \\
30. &  &&    &    &v_{j'}v_{k'} &v_{i'} &       \\
31. &  &&    &    &v_{j'}v_{k'} &&v_{i'}        \\
32. &  &&    &    &v_{j'} &&v_{k'}v_{i'}       \\
33. &  &&    &    & &v_{j'}&v_{k'}v_{i'}        \\
34. &  &&    &    & &&v_{j'}v_{k'}v_{i'}        \\
\hline
\end{array}
$$
\caption{Cases within the proof of Claim \ref{cl}\label{cases-tab}}
\end{table}

\begin{corollary}\label{theo-no1a} Every digraph  in $S_{k,\ell}$ can be obtained by
the union of at most $k$ many digraphs from $S_{1,2}$ and thus by the union of at most $k$ many 
$\{\co(2\overrightarrow{P_2}),2\overleftrightarrow{K_1},\overrightarrow{C_3},D_4\}$-free
digraphs. 
\end{corollary}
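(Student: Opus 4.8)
The plan is to combine the union decomposition of sequence digraphs from Section~\ref{sec-char-1s} with the characterization of $S_{1,2}$ in Theorem~\ref{s12}. Recall that if $G=\g(Q)$ for a set $Q=\{q_1,\ldots,q_k\}$, then writing $G_i=\g(\{q_i\})=(V_i,A_i)$ for the sequence digraph defined by the single sequence $q_i$, we have $V=\bigcup_{i=1}^k V_i$ and $A=\bigcup_{i=1}^k A_i$, since an arc $(u,v)$ lies in $A$ precisely when some single sequence $q_i$ puts an item of type $u$ left of an item of type $v$. This is exactly the observation already used just before Lemma~\ref{unionsk1}, and it identifies $G$ as the (arc-set) union of the $k$ digraphs $G_1,\ldots,G_k$, each defined by one sequence.

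The key step is then to note that each $G_i$ lies in $S_{1,\ell_i}$ for some $\ell_i$, since it is defined by a single sequence $q_i$ containing at most $\ell$ items of each type. If $\ell_i=1$ then $G_i\in S_{1,1}\subseteq S_{1,2}$ by Corollary~\ref{cor-1-seq}; if $\ell_i\geq 2$ then $G_i\in S_{1,\ell_i}=S_{1,2}$ by Lemma~\ref{le-1-leq}. Either way $G_i\in S_{1,2}$. By Theorem~\ref{s12} (the equivalence of (\ref{s12-a}) and (\ref{s12-c})), membership in $S_{1,2}$ is the same as being $\{\co(2\overrightarrow{P_2}),2\overleftrightarrow{K_1},\overrightarrow{C_3},D_4\}$-free, so each $G_i$ is such a digraph, and $G$ is the union of at most $k$ of them.

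First I would state the union decomposition explicitly (citing the paragraph preceding Lemma~\ref{unionsk1}), then invoke Corollary~\ref{cor-1-seq} and Lemma~\ref{le-1-leq} to place each single-sequence factor in $S_{1,2}$, and finally apply Theorem~\ref{s12} to rewrite $S_{1,2}$ as the forbidden-subdigraph class. There is essentially no obstacle here: the corollary is a short bookkeeping consequence of results already proved, and the only mild point to be careful about is that the factors $G_i$ share vertices in general (so this is a union of digraphs on a common vertex set, not a disjoint union), which is why we only get "union" and not "disjoint union" as in the $\ell=1$ case of Lemma~\ref{unionsk1}. A one-line proof suffices. Concretely:

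\begin{proof}
Let $G\in S_{k,\ell}$, say $G=\g(Q)$ with $Q=\{q_1,\ldots,q_k\}$ a set of at most $k$ sequences containing at most $\ell$ items of each type. For $1\leq i\leq k$ let $G_i=\g(\{q_i\})$. As noted before Lemma~\ref{unionsk1}, $G$ is the union of the digraphs $G_1,\ldots,G_k$ (on vertex set $\types(Q)$, taking the union of the arc sets). Each $G_i$ is defined by the single sequence $q_i$, which contains at most $\ell$ items of each type, so $G_i\in S_{1,\ell}$. If $\ell=1$ then $G_i\in S_{1,1}\subseteq S_{1,2}$ by Corollary~\ref{cor-1-seq}; if $\ell\geq 2$ then $G_i\in S_{1,\ell}=S_{1,2}$ by Lemma~\ref{le-1-leq}. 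Hence $G$ is the union of at most $k$ digraphs from $S_{1,2}$, and by the equivalence of (\ref{s12-a}) and (\ref{s12-c}) in Theorem~\ref{s12} these are exactly the $\{\co(2\overrightarrow{P_2}),2\overleftrightarrow{K_1},\overrightarrow{C_3},D_4\}$-free digraphs.
\end{proof}
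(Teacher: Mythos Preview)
Your proof is correct and is exactly the argument the paper intends: the corollary is stated immediately after Theorem~\ref{s12} without proof, since it follows directly from the union decomposition $\g(Q)=\bigcup_i \g(\{q_i\})$ (the remark preceding Lemma~\ref{unionsk1}) together with Lemma~\ref{le-1-leq}/Corollary~\ref{cor-1-seq} and the characterization of $S_{1,2}$ in Theorem~\ref{s12}. One tiny wording point: the $G_i$ have vertex sets $\types(q_i)$, not all of $\types(Q)$, so the union is of digraphs on possibly different vertex sets rather than on the common vertex set $\types(Q)$; this does not affect the argument.
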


\begin{proposition}\label{le-find}
Let $G\in S_{1,2}$, then a set $Q$ of one sequence $q$, such
that $G=\g(Q)$ can be found in time $\bigo(|V|+|A|)$.
\end{proposition}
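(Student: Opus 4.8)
The plan is to turn the constructive argument behind Theorem~\ref{s12}, specifically the implication $(\ref{s12-b})\Rightarrow(\ref{s12-aa})$ together with Claim~\ref{cl}, into an algorithm and to bound its running time. First I would compute a spanning transitive tournament subdigraph $G'=(V,A')$ of $G$ together with a vertex order $q'=[v_1,\dots,v_n]$ with $A'=\{(v_i,v_j)\mid i<j\}$; such a $G'$ exists by Lemma~\ref{le-sp-tt}, since by Theorem~\ref{s12} the digraph $G$ is semicomplete and $\{\overrightarrow{C_3},D_4\}$-free. I would then insert, for every \emph{back arc} $(v_i,v_j)\in A\setminus A'$ (so $i>j$), a new item of type $v_j$ followed by a new item of type $v_i$ at a suitable position, exactly as in the proof of Claim~\ref{cl}; the resulting sequence $q$ satisfies $\g(\{q\})=G$. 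Finally, applying the reduction $M(\cdot)$ of Observation~\ref{main_q} to $q$ even yields a sequence with at most two items of each type, so that $Q=\{M(q)\}$ is also a witness for $G\in S_{1,2}$.

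For the first step I would realize the ``peel off a vertex of outdegree $n-1$'' procedure from the proofs of Lemma~\ref{le-ind-out-d} and Lemma~\ref{le-sp-tt} in time $\bigo(|V|+|A|)$ by a bucketed selection: precompute all outdegrees and all in-adjacency lists of $G$, keep the remaining vertices of each current outdegree value in a bucket, then repeatedly take a vertex $v$ from bucket $N-1$ (where $N$ is the number of remaining vertices; this bucket is nonempty by Lemma~\ref{le-ind-out-d} applied to the remaining induced subdigraph, which is again semicomplete and $\{\overrightarrow{C_3},D_4\}$-free), append $v$ to $q'$, remove $v$, and for each not-yet-removed in-neighbour of $v$ decrement the stored outdegree and move that vertex one bucket down. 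Each arc is inspected once and each bucket move costs $\bigo(1)$, so this is $\bigo(|V|+|A|)$. It is worth noting that $G$ being semicomplete forces $|A|\ge\binom{|V|}{2}$, so the target bound $\bigo(|V|+|A|)$ is really a $\bigo(|V|^2)$ budget, which makes the remaining bookkeeping comfortable.

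For the insertions I would build a boolean adjacency matrix of $G$ (in $\bigo(|V|^2)$) and then, from $q'$ alone, precompute for every index $j$ the largest $b(j)$ with $(v_m,v_j)\in A$ for all $m$ with $j<m\le b(j)$, and for every index $i$ the smallest $a(i)$ with $(v_i,v_m)\in A$ for all $m$ with $a(i)\le m< i$; both tables cost $\bigo(|V|^2)$ overall by one scan per index. For a back arc $(v_i,v_j)$, the positions $k$ at which inserting an item of type $v_j$ and then an item of type $v_i$ creates only arcs of $A$ are, by the analysis in the proof of Theorem~\ref{s12}, exactly the integers $k$ with $\max(j+1,a(i))\le k\le\min(b(j),i)$, and this range is nonempty precisely by the contradiction argument at the end of that proof (a back arc admitting no such $k$ forces an induced $\co(2\overrightarrow{P_2})$ in $G$, which is excluded). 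Choosing such a $k$ for each back arc takes $\bigo(1)$; I would then bucket-sort the at most $|A|$ back arcs by their $q'$-position and realize all (at most $2|A|$) insertions in a single left-to-right pass over $q'$, maintaining the correspondence between positions of $q'$ and of the growing sequence. That the simultaneously inserted item pairs do not create spurious arcs is exactly the case distinction of Table~\ref{cases-tab} in the proof of Claim~\ref{cl}. Hence both phases, and the final $M(\cdot)$-reduction, run in $\bigo(|V|+|A|)$.

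The main obstacle is purely one of efficiency: existence and placement of the new items are already furnished by Theorem~\ref{s12}, so the work is to make the two phases run in aggregate $\bigo(|V|+|A|)$ time — in particular to compute $G'$ by the bucketed selection above rather than by a naïve peeling that recomputes outdegrees from scratch at each step, and to batch all insertions into one pass instead of re-deriving an insertion position inside the growing sequence separately for each back arc. Everything else is inherited verbatim from the proofs of Lemma~\ref{le-sp-tt}, Claim~\ref{cl}, and Theorem~\ref{s12}.
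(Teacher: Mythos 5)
Your construction is correct in substance, but it takes a genuinely different route from the paper's. The paper proves Proposition~\ref{le-find} by a direct one-pass greedy: starting from the empty sequence, repeatedly append a vertex $v$ that out-dominates all remaining vertices, delete its outgoing arcs, and append (and delete) every vertex that thereby becomes isolated; correctness is argued by comparing the greedy order with the order of first occurrences $F(q')$ in a witness sequence $q'$, and each vertex is appended at most twice (once for its outgoing and once for its incoming arcs). You instead constructivize the proof of Theorem~\ref{s12}: build a spanning transitive tournament by the peeling of Lemma~\ref{le-sp-tt} (your bucketed selection is a sound $\bigo(|V|+|A|)$ implementation, and your remark that semicompleteness forces $|A|\in\Theta(|V|^2)$ is exactly the observation that also justifies the paper's stated bound), then insert one item pair per back arc at a position in the interval $[\max(j+1,a(i)),\min(b(j),i)]$, whose nonemptiness is the $\co(2\overrightarrow{P_2})$ contradiction at the end of that proof, and finally reduce with $M(\cdot)$ via Observation~\ref{main_q}. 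Both routes are legitimate; the paper's is considerably shorter and never touches the back-arc machinery, while yours makes the witness structure explicit (exactly the endpoints of back arcs receive a second item) and works from any spanning transitive tournament.

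One caveat on your batching step: when you assert that Table~\ref{cases-tab} guarantees that simultaneously inserted pairs create no spurious arcs, note that the paper's own treatment of cases 8 and 11 in Claim~\ref{cl} does not show that \emph{every} independently chosen valid position works; in the unfavourable subcase it resorts to \emph{exchanging} the positions $k$ and $k'$ of the two interfering pairs. If you pick an arbitrary $k$ from the valid interval for each back arc and then do a single left-to-right pass, you inherit this possible need for exchanges, so you should either fix a canonical choice (e.g.\ the leftmost valid $k$) and check that the problematic configuration cannot arise, or build the exchange step into the pass. This does not threaten existence, which Theorem~\ref{s12} already supplies, but it is the one point where your algorithm is not fully pinned down.
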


\begin{proof}
Let $G=(V,A)\in S_{1,2}$ and $q=[]$. We perform the following steps until $G=(\emptyset,\emptyset)$.
\begin{itemize}
\item Choose $v\in V$ such that $(v,u)\in A$ for all $u\in V-\{v\}$ and append $v$ to $q$.

\item Remove all arcs $(v,u)$ from $A$. 

\item If $\ideg(v)=\odeg(v)=0$, remove $v$ from $V$.

\item If there are vertices $u$ such that $\ideg(u)=\odeg(u)=0$, remove $u$ from $V$
and append $u$ to $q$.
\end{itemize}

Next we show that for every $G=(V,A)\in S_{1,2}$ this method leads to a sequence $q$, such that 
$G=\g(\{q\})$ and for every type in $\types(q)$ there are at most two items in $q$.

In order to perform the algorithm there has to be an ordering $v_1,\ldots,v_n$ of $V$ such that 
for $1\leq i <n$ vertex $v_i$ has maximum possible outdegree  in subdigraph 
obtained by removing the outgoing arcs of $v_1,\ldots,v_{i-1}$ and thereby 
created isolated vertices from $G$.
Since $G \in S_{1,2}$ there is a sequence $q'$ such that $G=\g(\{q'\})$. The order in which
the types corresponding to the vertices of $V$ appear in subsequence $F(q')$, defined in the proof of Theorem \ref{s12}, 
ensures the existence of such an ordering.


By performing the operations of the algorithm  along
such a vertex ordering we observe that in every
iteration only the outdegree of $v$ is changed and only vertices with defined incoming and outgoing arcs 
are removed. Thus by the definition of the sequence digraph 
it holds $G=\g(\{q\})$. Since every vertex which has only outgoing or only incoming arcs will be inserted 
once into $q$ and every vertex which has outgoing and incoming arcs
will be inserted at most twice into $q$ this sequence fulfils the properties stated in the theorem.
\end{proof}

\begin{example} 
We apply the method given in the proof of
Proposition \ref{le-find} on the digraph $D_6$ in Table \ref{F-co2}
and obtain $q=[b,a,c,b,d,c]$.
This even leads to a shorter sequence than we used in Example \ref{notta}(\ref{notta2}.).
\end{example}

\subsection{Sequence Digraphs and Directed Co-Graphs}\label{sec-co}

In Section \ref{sec-1-1} we introduced the operations $\text{IV},\text{OD},\text{ID}$.
The set of all digraphs which can be defined by these operations $\mathcal{G}_{\{\text{IV},\text{OD},\text{ID}\}}$
is defined as the set of {\em oriented threshold graph}, see \cite{Boe15}.
By Theorem \ref{s11} the class  $S_{1,1}$ is a subclass of oriented threshold graphs.
Digraphs in $S_{1,2}$  can be bidirectional complete (see Example \ref{le-cl}(\ref{le-cl-1})), 
which is not
possible for oriented threshold graphs by their definition. This motivates to consider their
relation to the more general class of directed co-graphs.

Let $G_1=(V_1,E_1)$ and $G_2=(V_2,E_2)$ be two vertex-disjoint directed graphs. 
\begin{itemize}
\item
The {\em disjoint union} of $G_1$ and $G_2$, 
denoted by $G_1 \oplus G_2$, 
is the digraph with vertex set $V_1 \cup V_2$ and 
arc set $E_1\cup E_2$. 

\item
The {\em series composition} of $G_1$ and $G_2$, 
denoted by $G_1\otimes G_2$, 
is the digraph with vertex set $V_1 \cup V_2$ and 
arc set $E_1\cup E_2\cup\{(u,v),(v,u)~|~u\in V_1, v\in V_2\}$. 

\item
The {\em order composition} of $G_1$ and $G_2$, 
denoted by $G_1\oslash  G_2$, 
is the digraph with vertex set $V_1 \cup V_2$ and 
arc set $E_1\cup E_2\cup\{(u,v)~|~u\in V_1, v\in V_2\}$. 
\end{itemize}

The class of directed co-graphs has been defined recursively by Bechet et al.~in \cite{BGR97}.
\begin{enumerate}[(i)]
\item Every digraph on a single vertex $(\{v\},\emptyset)$, 
denoted by $\bullet$, is a directed co-graph.

\item If $G_1$ and $G_2$ are directed co-graphs, 
then $G_1\oplus G_2$  is a directed co-graph.

\item  If $G_1$ and $G_2$ are directed co-graphs, then 
$G_1 \otimes G_2$  is a directed co-graph. 

\item  If $G_1$ and $G_2$ are directed co-graphs, then 
$G_1\oslash G_2$  is a directed co-graph.
\end{enumerate}

Directed co-graphs can be characterized by the  forbidden induced
subdigraphs shown in Table \ref{F-co}.

\begin{theorem}[\cite{CP06}]\label{co-for}
Digraph $G$ is a directed co-graph if and only if 
$G$ is $\{D_1, \ldots, D_8\}$-free.
\end{theorem}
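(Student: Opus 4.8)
The plan is to prove the two implications separately, the forward one being routine and the converse carrying all the weight.

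\medskip
\noindent\textbf{Forward direction.}
First I would observe that the class of directed co-graphs is closed under taking induced subdigraphs: this is immediate from the recursive definition, since restricting $G_1\oplus G_2$, $G_1\otimes G_2$ or $G_1\oslash G_2$ to a vertex subset again produces a disjoint union, series composition or order composition of the corresponding restrictions (or a single vertex, or the empty digraph). Hence it suffices to check that none of $D_1,\dots,D_8$ is a directed co-graph. For each $D_i$ I would verify the three obstructions to being composite at the top level: $\un(D_i)$ is connected (so $D_i$ is not a $\oplus$-composition of two nonempty parts), $D_i$ is not the complete bidirectional join of two nonempty parts (not a $\otimes$-composition), and $D_i$ is not the complete one-directional join of two nonempty parts (not an $\oslash$-composition). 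Since any directed co-graph on at least two vertices is one of these three, this rules out each $D_i$. This is a finite check over eight small digraphs.

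\medskip
\noindent\textbf{Converse direction.}
I would argue the contrapositive using modular decomposition of digraphs. Recall that $M\subseteq V(G)$ is a \emph{module} if every vertex of $V(G)\setminus M$ has the same arc relationship to all of $M$; the modular decomposition tree of $G$ has internal nodes whose quotients are of one of four types --- parallel (edgeless), series (bidirectional complete), linear (transitive tournament), or prime --- and the three degenerate types correspond exactly to the operations $\oplus$, $\otimes$ and $\oslash$ (a linear node being a chain of $\oslash$'s along the order of its children). Thus $G$ is a directed co-graph precisely when no prime node occurs. So if $G$ is not a directed co-graph, its decomposition tree has a prime node with quotient $Q$, a prime digraph on $m\ge 3$ vertices; picking one representative vertex from each of the $m$ child modules yields an induced subdigraph of $G$ isomorphic to $Q$ (the arc between two representatives depends only on which modules contain them). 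Hence it suffices to show that every prime digraph on at least three vertices contains one of $D_1,\dots,D_8$ as an induced subdigraph.

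\medskip
\noindent\textbf{The core lemma.}
This last statement I would prove by induction on the number $m$ of vertices of the prime digraph. For the inductive step I would invoke the standard structure theorem for prime structures (Ehrenfeucht--Rozenberg / Schmerl--Trotter): every prime digraph on $m\ge 5$ vertices has an induced prime subdigraph on $m-1$ or $m-2$ vertices. By induction that smaller prime subdigraph already contains some $D_i$, which is then induced in the original digraph. The base cases $m\in\{3,4\}$ are settled by directly enumerating the prime digraphs on that many vertices and checking that each either equals some $D_i$ or contains one as an induced subdigraph; the list $D_1,\dots,D_8$ is precisely the collection of minimal prime digraphs arising this way. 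The main obstacle is this base-case enumeration, together with setting up the modular-decomposition bookkeeping cleanly --- in particular the exact correspondence between the three degenerate quotient types and the operations $\oplus,\otimes,\oslash$, and the lifting of the prime quotient to a genuine induced subdigraph; once the prime-substructure theorem is in hand the inductive step itself is short.
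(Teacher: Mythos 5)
The paper does not prove this statement at all: Theorem~\ref{co-for} is imported verbatim from \cite{CP06}, so there is no in-paper argument to compare yours against. That said, your proposal is sound and follows what is essentially the standard (and, in substance, the original) route to this characterization. The forward direction is exactly right: heredity of directed co-graphs is immediate from the recursive definition, and one then checks that none of $D_1,\ldots,D_8$ decomposes at the top level as a $\oplus$-, $\otimes$- or $\oslash$-composition. For the converse, the modular-decomposition framework you set up is correct for digraphs viewed as 2-structures: the three degenerate quotient types (edgeless, bidirectionally complete, transitive tournament) correspond exactly to $\oplus$, $\otimes$ and iterated $\oslash$, a digraph is a directed co-graph precisely when its decomposition tree has no prime node, and a prime quotient lifts to an induced prime subdigraph by taking one representative per child module. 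The reduction to ``every prime digraph on at least $3$ vertices contains some $D_i$'' is therefore legitimate, and so is the induction via the Schmerl--Trotter result that a prime structure on $m\ge 5$ elements has a prime induced substructure on $m-1$ or $m-2$ elements. The one place where all the real content sits --- and which you correctly flag rather than carry out --- is the finite enumeration of the prime digraphs on $3$ and $4$ vertices and the verification that the minimal ones are exactly $D_1,\ldots,D_8$ of Table~\ref{F-co} (e.g.\ $\overrightarrow{C_3}$ is prime on $3$ vertices, whereas every $4$-vertex tournament is decomposable, so the $4$-vertex entries of the list are non-tournament primes). Since that check is finite and mechanical, I regard the proposal as a correct proof outline; it is just not something one can cross-check against the paper, which delegates the whole burden to \cite{CP06}.
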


\begin{table}[ht!]
\begin{center}
\begin{tabular}{cccccccc}

\epsfig{figure=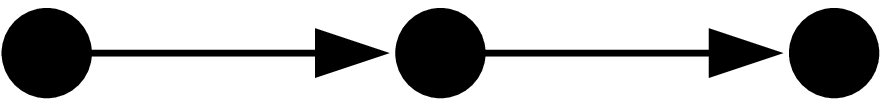,width=2.4cm} &&\epsfig{figure=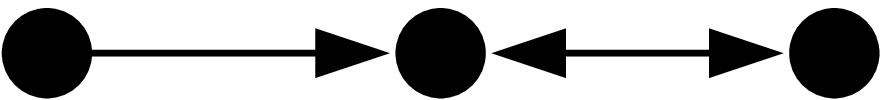,width=2.4cm}&&\epsfig{figure=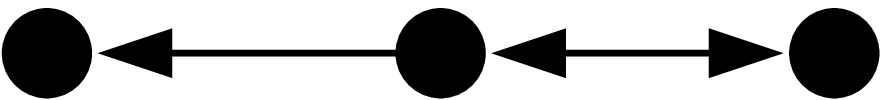,width=2.4cm}&&\epsfig{figure=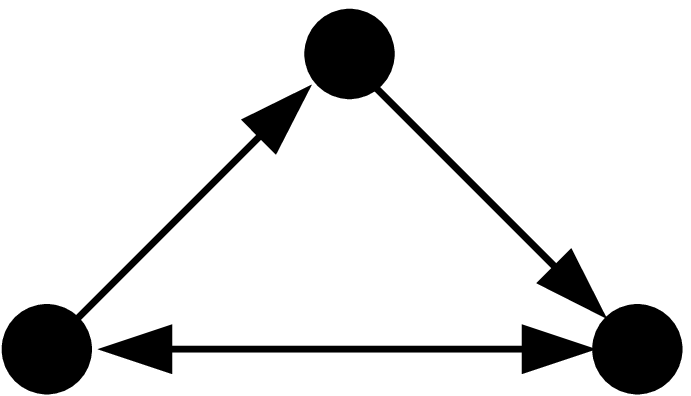,width=1.8cm} &\\
 $D_1$   &  &    $D_2$   &  &   $D_3$   &  &   $D_4$   &  \\ 
&&&&&&&\\
\epsfig{figure=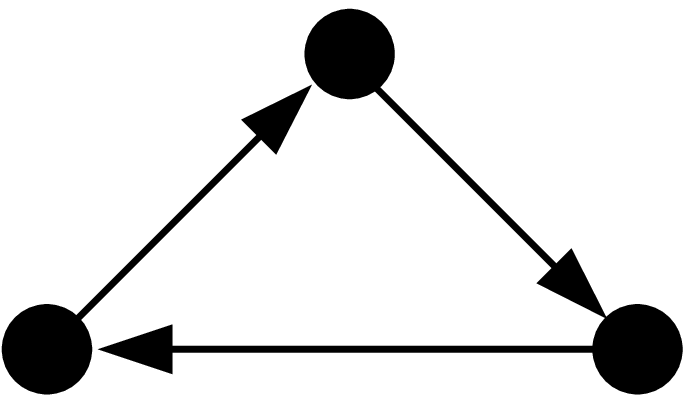,width=1.8cm} &&\epsfig{figure=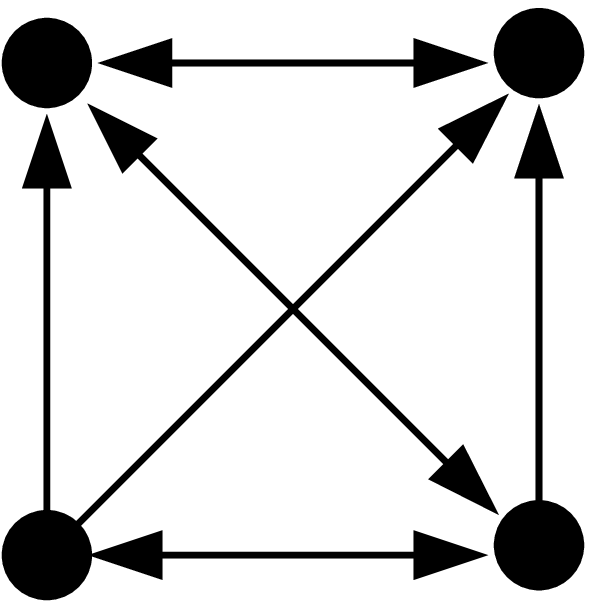,width=1.6cm}&&\epsfig{figure=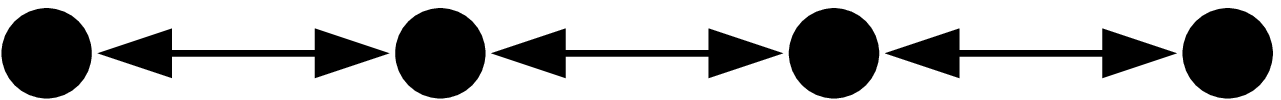,width=3.2cm}&&\epsfig{figure=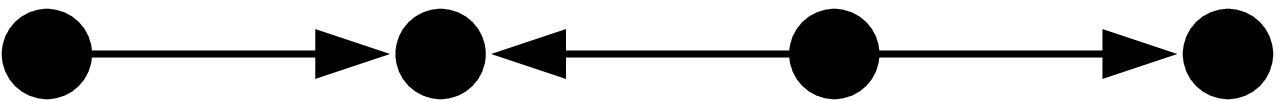,width=3.5cm}&\\
 $D_5$   &  &    $D_6$   &  &   $D_7$   &  &   $D_8$   &  \\ 

\end{tabular}
\end{center}
\caption{The eight forbidden induced subdigraphs for directed co-graphs.}
\label{F-co}
\end{table}



In order to state the relation between the classes $S_{1,\ell}$ and directed co-graphs
we classify the eight forbidden induced
subdigraphs shown in Table \ref{F-co} w.r.t.\ the classes  $S_{1,\ell}$.

\begin{lemma}\label{classi}
\begin{enumerate}
\item For every $\ell\geq 1$ it holds $D_1,D_2,D_3,D_4,D_5,D_7,D_8\not\in S_{1,\ell}$.

\item $D_6\in S_{1,2}-S_{1,1}$
\end{enumerate}
\end{lemma}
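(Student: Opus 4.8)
The plan is to prove the two parts separately, using the characterizations already established for $S_{1,1}$ (Theorem~\ref{s11}) and $S_{1,2}$ (Theorem~\ref{s12}), together with the basic degree/connectivity facts about sequence digraphs from Lemma~\ref{le-ud} and Corollary~\ref{cc}.

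For part~(1), I would first recall from Lemma~\ref{le-ud} that every digraph in $S_{1,\ell}$ is semicomplete, hence has a complete underlying graph; equivalently, by Corollary~\ref{cc}, any digraph whose underlying graph is disconnected cannot lie in $S_{1,\ell}$. Inspecting Table~\ref{F-co}, the digraphs $D_1,D_2,D_3$ each contain a pair of nonadjacent vertices (their underlying graphs are not complete), so none of them is semicomplete and therefore none is in $S_{1,\ell}$ for any $\ell\ge1$. For the remaining cases $D_4,D_5,D_7,D_8$, which are semicomplete, I would invoke Theorem~\ref{s12}: a digraph lies in $S_{1,2}=S_{1,\ell}$ (for $\ell\ge2$, by Lemma~\ref{le-1-leq}) if and only if it is $\{\co(2\overrightarrow{P_2}),2\overleftrightarrow{K_1},\overrightarrow{C_3},D_4\}$-free. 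Thus $D_4\notin S_{1,2}$ immediately; and for $D_5,D_7,D_8$ it suffices to exhibit, in each, an induced copy of one of $\co(2\overrightarrow{P_2})$, $\overrightarrow{C_3}$, or $D_4$ — this is a finite inspection of the three digraphs in Table~\ref{F-co}. Since $S_{1,1}\subseteq S_{1,2}$, failing to be in $S_{1,2}$ also gives $\notin S_{1,1}$, and hence $\notin S_{1,\ell}$ for all $\ell$ by Corollary~\ref{cor-1-seq}. The one caveat worth checking is that $D_4$ itself is excluded by Theorem~\ref{s12} as a named forbidden subdigraph (not via one of the other three), which is exactly how that theorem is phrased.

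For part~(2), I would show both $D_6\in S_{1,2}$ and $D_6\notin S_{1,1}$. The membership $D_6\in S_{1,2}$ is already witnessed in Example~\ref{notta}(\ref{notta2}) by the sequence $q_1=[a,b,a,c,b,d,c,d]$ — or, more cleanly, by the shorter sequence $q=[b,a,c,b,d,c]$ produced by the algorithm in Proposition~\ref{le-find}; either way one checks that each type occurs at most twice and that $\g(\{q\})=D_6$. For $D_6\notin S_{1,1}$, the simplest route is Example~\ref{notta}(\ref{notta2}) again: $D_6$ is not transitive (it has arcs $(c,b)$ and $(b,a)$ but not $(c,a)$), while by Lemma~\ref{l1} every digraph in $S_{1,1}$ is transitive. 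Alternatively one could note $D_6$ contains an induced $\overleftrightarrow{K_2}$ and invoke the $\{2\overleftrightarrow{K_1},\overleftrightarrow{K_2},\overrightarrow{C_3}\}$-free characterization of Theorem~\ref{s11}(\ref{s11g}); I would include whichever reads more directly from the picture of $D_6$ in Table~\ref{F-co}.

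The only real obstacle is the bookkeeping in part~(1): one must correctly read off from the eight pictures in Table~\ref{F-co} which of $D_4,D_5,D_7,D_8$ contain which forbidden subdigraph from Theorem~\ref{s12}, and separate those from $D_1,D_2,D_3$ which fail already at the level of semicompleteness. There is no deep step — it is a case analysis over the fixed small digraphs in the two tables — but the argument should be presented so that each exclusion is justified by pointing to the precise induced subdigraph or to the failure of a complete underlying graph.
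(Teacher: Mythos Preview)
Your strategy is sound and close to the paper's, but there is one factual slip in your case split for part~(1): the paper treats $D_7$ and $D_8$ together with $D_1,D_2,D_3$ as digraphs whose underlying graph is \emph{not} complete, and excludes all five via Lemma~\ref{le-ud}; only $D_4$ and $D_5$ are semicomplete and handled separately (the paper does this by ``a simple case distinction'' rather than by invoking Theorem~\ref{s12}). So your assertion that $D_7,D_8$ are semicomplete is incorrect, and in particular you will not find an induced $\co(2\overrightarrow{P_2})$, $\overrightarrow{C_3}$, or $D_4$ inside them --- the forbidden configuration they contain is $2\overleftrightarrow{K_1}$. This does not break your overall argument, since $2\overleftrightarrow{K_1}$ is also on the forbidden list in Theorem~\ref{s12}, but you should either move $D_7,D_8$ into the first group (handled by Lemma~\ref{le-ud}) or drop the claim that they are semicomplete and allow $2\overleftrightarrow{K_1}$ among the subdigraphs you look for.

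Apart from that, your use of Theorem~\ref{s12} for $D_4$ and $D_5$ is a legitimate alternative to the paper's direct case check, and your treatment of part~(2) matches the paper's proof (Example~\ref{notta}(\ref{notta2}) for membership in $S_{1,2}$, and non-transitivity / Theorem~\ref{s11} for exclusion from $S_{1,1}$).
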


\sproof
\begin{enumerate}
\item For $i\in\{1,2,3,7,8\}$ graph $\un(D_i)$ is not complete and thus these digraphs $D_i$ 
do not belong to $S_{1,\ell}$ for some
$\ell\geq 1$
by Lemma \ref{le-ud}. Digraphs $D_4$ and $D_5$ are not in $S_{1,\ell}$ for some
$\ell\geq 1$ by
a simple case distinction. 

\item $D_6\in S_{1,2}$ by the sequence given in Example \ref{notta}(\ref{notta2})
and $D_6\not\in S_{1,1}$ by Theorem \ref{s11}(\ref{s11b}).\eproof
\end{enumerate}

Thus the classes directed co-graphs and $S_{1,2}$ are incomparable 
w.r.t.~inclusions, by the example  $D_6$  and $\co(2\overrightarrow{P_2})$.

By Lemma \ref{classi} we obtain the next lemma.

\begin{lemma}\label{le-d6-co}
For every $\ell\geq 1$ every $D_6$-free digraph in $S_{1,\ell}$ 
is a directed co-graph.
\end{lemma}

By the following Example  there are directed co-graphs which
are not in $S_{1,\ell}$. Thus 
the reverse direction of Lemma \ref{le-d6-co} does not hold true.

\begin{example}
For every $n\geq m\geq 1$ it holds  $\overleftrightarrow{K_{n,m}}\in S_{n\cdot m, 2n}$.
But for $k<nm$ or $\ell <2n$ by Lemma \ref{c3}
it holds $\overleftrightarrow{K_{n,m}}\not\in S_{k,\ell}$.
\end{example}


Theorem \ref{co-for}, Theorem \ref{s11}, and Theorem \ref{s12} imply the following results.

\begin{proposition}\label{le-d7-co}
Let $\ell\geq 1$ and $G\in S_{1,\ell}$. 
Then $G$ is a directed co-graph if and only if 
$G$ is $D_6$-free.
\end{proposition}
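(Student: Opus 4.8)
The plan is to combine the directed co-graph characterization (Theorem~\ref{co-for}) with the $S_{1,\ell}$-classification of the eight forbidden digraphs already established in Lemma~\ref{classi}. The statement is a biconditional, so I would prove the two directions separately.

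For the forward direction, assume $G\in S_{1,\ell}$ is a directed co-graph; then by Theorem~\ref{co-for} it is $\{D_1,\dots,D_8\}$-free, in particular $D_6$-free. (This direction does not even use the hypothesis $G\in S_{1,\ell}$.) For the converse, assume $G\in S_{1,\ell}$ is $D_6$-free. By Lemma~\ref{classi}(1), none of $D_1,D_2,D_3,D_4,D_5,D_7,D_8$ lies in $S_{1,\ell}$, and by Proposition~\ref{th-ind-sg} every induced subdigraph of $G$ is again in $S_{1,\ell}$; hence $G$ can contain none of $D_1,\dots,D_5,D_7,D_8$ as an induced subdigraph. Together with the assumed $D_6$-freeness this makes $G$ a $\{D_1,\dots,D_8\}$-free digraph, so $G$ is a directed co-graph by Theorem~\ref{co-for}. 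This is exactly the content of Lemma~\ref{le-d6-co} phrased as a biconditional, so the real work has already been done there; what remains is only to note the trivial reverse implication.

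I would actually present this as a one- or two-sentence argument: the ``if'' part is Lemma~\ref{le-d6-co}, and the ``only if'' part follows because every directed co-graph is $D_6$-free by Theorem~\ref{co-for}. Since the paper says ``Theorem~\ref{co-for}, Theorem~\ref{s11}, and Theorem~\ref{s12} imply the following results,'' I would make sure to phrase the converse so that the forbidden-subgraph characterizations of $S_{1,1}$ and $S_{1,2}$ (Theorems~\ref{s11} and~\ref{s12}) are the source of the fact that $D_1,\dots,D_5,D_7,D_8\notin S_{1,\ell}$, rather than merely citing Lemma~\ref{classi}; concretely, $\un(D_i)$ being disconnected or not complete rules out membership via Lemma~\ref{le-ud}, and $D_4$ is explicitly excluded by the $S_{1,2}$-characterization in Theorem~\ref{s12}(4).

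There is essentially no obstacle here: the statement is a bookkeeping corollary, and the only care needed is to invoke Proposition~\ref{th-ind-sg} (hereditariness of $S_{1,\ell}$) so that ``$G\in S_{1,\ell}$ and $D_i\notin S_{1,\ell}$'' correctly yields ``$G$ is $D_i$-free,'' rather than just ``$G\ne D_i$.'' The hardest previously-proved ingredient, and the one doing all the real mathematical work, is Theorem~\ref{s12} together with Lemma~\ref{classi}; given those, this proposition is immediate.
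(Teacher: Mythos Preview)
Your proposal is correct and matches the paper's approach: the paper gives no explicit proof here beyond the sentence ``Theorem~\ref{co-for}, Theorem~\ref{s11}, and Theorem~\ref{s12} imply the following results,'' and you have correctly unpacked that into the two directions via the forbidden-subgraph characterization of directed co-graphs together with hereditariness of $S_{1,\ell}$. One small imprecision in your alternative phrasing: when you sketch how to derive the exclusions directly from Theorems~\ref{s11}/\ref{s12} rather than Lemma~\ref{classi}, you account for $D_1,D_2,D_3,D_7,D_8$ (non-complete underlying graph) and $D_4$ (explicit in Theorem~\ref{s12}(\ref{s12-c})), but you do not say how $D_5$ is handled; it is excluded because $D_5$ contains $\overrightarrow{C_3}$ as an induced subdigraph, which is again forbidden by Theorem~\ref{s12}(\ref{s12-c}). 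Your primary argument via Lemma~\ref{classi} is complete as stated.
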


\begin{proposition}\label{le-d8-co}
Let $G$ be a directed co-graph. Then $G\in S_{1,1}$ if and only if 
$G$ is $\{2\overleftrightarrow{K_1},\overleftrightarrow{K_2}\}$-free
and $G\in S_{1,2}$ if and only if 
$G$ is $\{\co(2\overrightarrow{P_2}),2\overleftrightarrow{K_1}\}$-free.
\end{proposition}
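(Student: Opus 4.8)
The plan is to derive Proposition~\ref{le-d8-co} from the characterizations already established, namely Theorem~\ref{co-for}, Theorem~\ref{s11}, and Theorem~\ref{s12}, combined with Lemma~\ref{classi}. Since $G$ is assumed to be a directed co-graph, by Theorem~\ref{co-for} it is $\{D_1,\ldots,D_8\}$-free. I want to show that, under this hypothesis, membership in $S_{1,1}$ is equivalent to being $\{2\overleftrightarrow{K_1},\overleftrightarrow{K_2}\}$-free, and membership in $S_{1,2}$ is equivalent to being $\{\co(2\overrightarrow{P_2}),2\overleftrightarrow{K_1}\}$-free.

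For the first equivalence, the forward direction is immediate: if $G\in S_{1,1}$ then by Theorem~\ref{s11}(\ref{s11g}) $G$ is $\{2\overleftrightarrow{K_1},\overleftrightarrow{K_2},\overrightarrow{C_3}\}$-free, hence in particular $\{2\overleftrightarrow{K_1},\overleftrightarrow{K_2}\}$-free. For the converse, suppose $G$ is a directed co-graph that is $\{2\overleftrightarrow{K_1},\overleftrightarrow{K_2}\}$-free; I need to conclude it is also $\overrightarrow{C_3}$-free so that Theorem~\ref{s11}(\ref{s11g}) applies. Here I would use that $\overrightarrow{C_3}$ is one of the eight forbidden digraphs for directed co-graphs (it should coincide with one of $D_1,\ldots,D_8$ in Table~\ref{F-co}, most plausibly $D_1$), so a directed co-graph is automatically $\overrightarrow{C_3}$-free. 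Then $G$ being $\{2\overleftrightarrow{K_1},\overleftrightarrow{K_2},\overrightarrow{C_3}\}$-free gives $G\in S_{1,1}$ by Theorem~\ref{s11}.

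For the second equivalence, the forward direction again follows from the characterization: if $G\in S_{1,2}$ then by Theorem~\ref{s12}(\ref{s12-c}) $G$ is $\{\co(2\overrightarrow{P_2}),2\overleftrightarrow{K_1},\overrightarrow{C_3},D_4\}$-free, so in particular $\{\co(2\overrightarrow{P_2}),2\overleftrightarrow{K_1}\}$-free. For the converse, suppose $G$ is a directed co-graph that is $\{\co(2\overrightarrow{P_2}),2\overleftrightarrow{K_1}\}$-free. By Theorem~\ref{s12}(\ref{s12-c}) it suffices to show $G$ is also $\overrightarrow{C_3}$-free and $D_4$-free. But both $\overrightarrow{C_3}$ and $D_4$ are forbidden induced subdigraphs for directed co-graphs ($D_4$ appears explicitly in Table~\ref{F-co}, and $\overrightarrow{C_3}$ is among $D_1,\ldots,D_8$), so any directed co-graph excludes them automatically. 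Hence $G$ is $\{\co(2\overrightarrow{P_2}),2\overleftrightarrow{K_1},\overrightarrow{C_3},D_4\}$-free, and Theorem~\ref{s12} yields $G\in S_{1,2}$.

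The only real point requiring care — and the one I expect to be the main (minor) obstacle — is bookkeeping which of the labelled digraphs $D_1,\ldots,D_8$ in Table~\ref{F-co} equals $\overrightarrow{C_3}$, and confirming that $D_4$ as drawn is indeed the digraph called $D_4$ in the statement of Theorem~\ref{s12}; once the identification of the figures is pinned down, the argument is a direct combination of the three cited theorems with no further computation. It is also worth noting explicitly in the write-up that $\overrightarrow{C_3}$ is one of the eight obstructions (so that the co-graph hypothesis genuinely does the work of removing the $\overrightarrow{C_3}$ and $D_4$ conditions), since this is the conceptual reason the two short forbidden-subgraph lists suffice in the presence of the co-graph assumption.
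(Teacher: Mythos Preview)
Your approach is correct and is exactly the route the paper takes: the proposition is stated immediately after the sentence ``Theorem~\ref{co-for}, Theorem~\ref{s11}, and Theorem~\ref{s12} imply the following results,'' with no further argument given. For the bookkeeping you flagged, note that in Table~\ref{F-co} the digraph $\overrightarrow{C_3}$ is $D_5$ (not $D_1$): $D_1,D_2,D_3,D_7,D_8$ have non-complete underlying graphs by the proof of Lemma~\ref{classi}, $D_6$ has four vertices, and $D_4=\co(\overrightarrow{P_3})$ by the proof of Lemma~\ref{le-co-t}, leaving $D_5=\overrightarrow{C_3}$.
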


Next we characterize the intersections of the sets $S_{1,\ell}$ and
directed co-graphs. For $\ell=1$ this leads to the set $S_{1,1}$
itself, since it is a subset of directed co-graphs.

\begin{proposition}\label{le-d9-co} 
Let $G$ be some digraph. 
Then
$G\in S_{1,1}$ and $G$ is a directed co-graph if and only if 
$G$ is $\{2\overleftrightarrow{K_1},\overleftrightarrow{K_2},\overrightarrow{C_3}\}$-free. 
Furthermore
$G\in S_{1,2}$ and $G$ is a directed co-graph if and only if 
$G$ is 
$\{\co(2\overrightarrow{P_2}),2\overleftrightarrow{K_1},\overrightarrow{C_3},D_4,D_6\}$-free.
\end{proposition}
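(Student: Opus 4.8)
The plan is to combine the three characterization theorems that are already available. For the first equivalence, recall that by Theorem \ref{s11} we have $G\in S_{1,1}$ if and only if $G$ is $\{2\overleftrightarrow{K_1},\overleftrightarrow{K_2},\overrightarrow{C_3}\}$-free. Since $S_{1,1}\subseteq S_{1,2}$ and every digraph in $S_{1,1}$ is a directed co-graph (this follows since $S_{1,1}=\mathcal{G}_{\{\text{OD}\}}$, and repeatedly adding an out-dominating vertex is an order composition $\bullet \oslash G$), one direction is immediate. For the converse, I would check that the three forbidden subdigraphs $2\overleftrightarrow{K_1},\overleftrightarrow{K_2},\overrightarrow{C_3}$ being excluded already forces $G$ to be a directed co-graph, i.e.\ none of $D_1,\dots,D_8$ can occur: each $D_i$ either contains an induced $2\overleftrightarrow{K_1}$ or $\overleftrightarrow{K_2}$ (this handles the disconnected or ``bidirectional-edge'' cases) or an induced $\overrightarrow{C_3}$, which can be verified from the list of digraphs on at most four vertices. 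Hence the $\{2\overleftrightarrow{K_1},\overleftrightarrow{K_2},\overrightarrow{C_3}\}$-free condition implies both ``directed co-graph'' and (by Theorem \ref{s11}) ``$\in S_{1,1}$''.

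For the second equivalence the plan is analogous but uses Theorem \ref{s12}(\ref{s12-c}), which says $G\in S_{1,2}$ iff $G$ is $\{\co(2\overrightarrow{P_2}),2\overleftrightarrow{K_1},\overrightarrow{C_3},D_4\}$-free. Therefore ``$G\in S_{1,2}$ and $G$ is a directed co-graph'' is equivalent to $G$ being free of all of $\{\co(2\overrightarrow{P_2}),2\overleftrightarrow{K_1},\overrightarrow{C_3},D_4\}\cup\{D_1,\dots,D_8\}$. So the task reduces to showing that this combined forbidden list is equivalent to the claimed list $\{\co(2\overrightarrow{P_2}),2\overleftrightarrow{K_1},\overrightarrow{C_3},D_4,D_6\}$. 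One inclusion is trivial: each member of the claimed list appears in the combined list ($D_6$ is among the $D_i$, and the others are explicitly listed). For the other inclusion I need to show that every one of $D_1,\dots,D_8$ that is not already named is ``covered'' — i.e.\ contains an induced copy of something in the short list. By Lemma \ref{classi}, none of $D_1,D_2,D_3,D_5,D_7,D_8$ lies in $S_{1,2}$, and inspecting them shows each contains an induced $2\overleftrightarrow{K_1}$, $\overrightarrow{C_3}$, or (for $D_2,D_3$) a $\co(2\overrightarrow{P_2})$ or $D_4$; this is a finite check using the classification of small digraphs. Meanwhile $D_4$ and $D_6$ are kept explicitly because, by Lemma \ref{classi}, $D_6\in S_{1,2}$ (so it is not already excluded by the $S_{1,2}$ characterization and must be listed to rule out non-co-graphs), and $D_4$ is already in the $S_{1,2}$ list.

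The main obstacle I anticipate is the bookkeeping in that last step: one must verify, for each $D_i$ with $i\in\{1,2,3,5,7,8\}$, that it indeed contains an induced subdigraph from $\{\co(2\overrightarrow{P_2}),2\overleftrightarrow{K_1},\overrightarrow{C_3}\}$, and conversely that no spurious digraph sneaks in, so that the two forbidden-subdigraph sets define exactly the same hereditary class. This is purely a finite case analysis over the pictures in Table \ref{F-co} and Table \ref{F-co2}, appealing to the list of all digraphs on at most four vertices (as cited from \cite{Har69} in the proof of Lemma \ref{le-ind-out-d}); the references to Lemma \ref{classi} already do most of the heavy lifting, since it records precisely which $D_i$ belong to or are excluded from $S_{1,\ell}$. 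Once the equivalence of forbidden sets is established, the statement follows immediately by intersecting Theorem \ref{co-for} with Theorem \ref{s11} and Theorem \ref{s12} respectively. I would also remark that the first part can be stated more cleanly by noting $S_{1,1}$ is itself a subclass of directed co-graphs, so the ``and $G$ is a directed co-graph'' clause is redundant there — which is why the statement reduces exactly to Theorem \ref{s11}(\ref{s11g}).
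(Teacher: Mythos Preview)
Your proposal is correct and follows the same route the paper intends: the paper gives no explicit proof but states that the result follows from Theorem~\ref{co-for}, Theorem~\ref{s11}, and Theorem~\ref{s12}, which is exactly the combination you carry out. One streamlining worth noting: for the second equivalence you do not actually need the finite case check on $D_1,D_2,D_3,D_5,D_7,D_8$. Once $G$ is $\{\co(2\overrightarrow{P_2}),2\overleftrightarrow{K_1},\overrightarrow{C_3},D_4\}$-free, Theorem~\ref{s12} gives $G\in S_{1,2}$, and then Proposition~\ref{le-d7-co} (equivalently Lemma~\ref{le-d6-co}, which rests on Lemma~\ref{classi}) says directly that $G$ is a directed co-graph if and only if it is $D_6$-free; the other $D_i$ are automatically excluded by hereditarity of $S_{1,2}$. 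Your remark that the first clause is redundant because $S_{1,1}$ is already a subclass of directed co-graphs is precisely what the paper says just before the proposition.
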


While  $S_{1,1}$ is a subclass of the set of  oriented threshold graphs, 
the classes  $S_{1,2}$ and directed co-graphs (even without allowing the 
series composition) are not comparable
w.r.t. set inclusions. This follows by the fact, that $D_6\in S_{1,2}$ 
but not a directed co-graph. Further there are no constants $k$,$\ell$ 
such that every directed co-graph is in $S_{k,\ell}$. This can be verified 
by the directed co-graph $\overrightarrow{K_{n,m}}$ (see Example \ref{ex-knm}).
By Lemma \ref{c3} it holds  $\overrightarrow{K_{n,m}}\in S_{n\cdot m,\max(n,m)}$
but $\overrightarrow{K_{n,m}}\not \in S_{k,\ell}$ for $k<n\cdot m$ or $\ell<\max(n,m)$.

\subsection{Converse Sequence Digraphs and Complements of Sequence Digraphs}\label{sec-converse}

For some sequence $q_i=(b_{i,1},\ldots,b_{i,n_i})$ we define by
$q^{-1}_i=(b_{i,n_i},\ldots,b_{i,1})$ its {\em converse sequence} and
for some set $Q$ we define by $Q^{-1}=\{q^{-1}_i~|~q_i\in Q\}$  
the set of {\em converse sequences}.
Next we state a close relation between the sequence digraph
of some set $Q$ and of the converse digraph of the sequence digraph of 
$Q^{-1}$.

\begin{lemma}\label{inv}
Let $Q$  be a set of sequences. Then it holds $G=\g(Q)$ if and only
if $\con G=\g(Q^{-1})$.
\end{lemma}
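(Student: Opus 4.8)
The plan is to unwind the definitions on both sides and show that each arc of $\g(Q^{-1})$ corresponds to exactly one arc of $\con G$, where $G = \g(Q)$. Recall that $\con G$ has vertex set $V = \types(Q) = \types(Q^{-1})$ (reversing a sequence does not change the set of types appearing in it) and an arc $(u,v)$ precisely when $(v,u)$ is an arc of $G$. So it suffices to prove: for all distinct $u,v \in V$, there is an arc $(u,v)$ in $\g(Q^{-1})$ if and only if there is an arc $(v,u)$ in $\g(Q) = G$.

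First I would fix distinct types $u,v$ and apply the definition of the sequence digraph to $Q^{-1}$: the arc $(u,v)$ belongs to $\g(Q^{-1})$ iff some converse sequence $q_i^{-1} = (b_{i,n_i}, \ldots, b_{i,1})$ contains an item of type $u$ to the left of an item of type $v$. Writing the positions in $q_i^{-1}$ in terms of positions in $q_i$, ``position $p$ in $q_i^{-1}$'' is ``position $n_i + 1 - p$ in $q_i$'', so being to the left in $q_i^{-1}$ (smaller index) corresponds exactly to being to the right in $q_i$ (larger index). Hence an item of type $u$ is left of an item of type $v$ in $q_i^{-1}$ iff an item of type $u$ is right of an item of type $v$ in $q_i$, i.e.\ iff an item of type $v$ is left of an item of type $u$ in $q_i$. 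Quantifying over $i$, this says $(u,v) \in A(\g(Q^{-1}))$ iff $(v,u) \in A(\g(Q)) = A(G)$, which by definition of the converse digraph is exactly $(u,v) \in A(\con G)$. Since the two digraphs have the same vertex set and the same arcs, $\g(Q^{-1}) = \con G$.

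For the converse direction of the biconditional in the statement, I would simply observe that the argument is symmetric: $(Q^{-1})^{-1} = Q$, so applying what was just shown to the set $Q^{-1}$ gives $\g(Q) = \con(\g(Q^{-1}))$, and $\con$ is an involution ($\con(\con H) = H$ for every digraph $H$). Thus $\con G = \g(Q^{-1})$ implies $G = \con(\con G) = \con(\g(Q^{-1})) = \g((Q^{-1})^{-1}) = \g(Q)$. Alternatively one can just note that the equivalence ``$(u,v) \in A(\g(Q^{-1})) \Leftrightarrow (v,u)\in A(G)$'' established above is itself an ``if and only if'', so reading it in the other direction already yields both implications of the lemma.

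I do not expect any real obstacle here; the only thing to be careful about is the index bookkeeping when translating ``left of'' in $q_i^{-1}$ into ``right of'' in $q_i$ — one should state explicitly that the map $p \mapsto n_i + 1 - p$ is an order-reversing bijection on $\{1,\ldots,n_i\}$, and that it respects the type function since the items themselves are merely permuted. One might also remark, for cleanliness, that by Observation~\ref{main-con} or Observation~\ref{main_q} one could reduce to sequences with no repeated consecutive types, but this is not needed: the argument above works verbatim for arbitrary sequences.
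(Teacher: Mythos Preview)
Your proposal is correct. The paper itself states Lemma~\ref{inv} without proof, treating it as immediate from the definitions; your argument is precisely the definitional unwinding that justifies this, and the bookkeeping with the order-reversing bijection $p\mapsto n_i+1-p$ as well as the handling of the biconditional via $(Q^{-1})^{-1}=Q$ and $\con(\con H)=H$ are all fine.
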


This relation implies that
the classes $S_{k,\ell}$ are closed under taking converse digraphs.

\begin{proposition}\label{le-converse}
Let $G\in S_{k,\ell}$, then $\con G\in S_{k,\ell}$.
\end{proposition}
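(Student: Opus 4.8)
The plan is to derive Proposition~\ref{le-converse} directly from Lemma~\ref{inv} together with the observation that the converse operation on a set of sequences preserves all the relevant size parameters. First I would take a digraph $G\in S_{k,\ell}$ and fix a witnessing set $Q=\{q_1,\ldots,q_{k'}\}$ with $k'\leq k$ such that $G=\g(Q)$ and such that every type in $\types(Q)$ has at most $\ell$ items among the sequences of $Q$. Now consider the set of converse sequences $Q^{-1}=\{q_1^{-1},\ldots,q_{k'}^{-1}\}$. Since each $q_i^{-1}$ uses exactly the same items (hence the same types, with the same multiplicities) as $q_i$, and merely reverses their order, the number of sequences in $Q^{-1}$ is still $k'\leq k$ and the number of items of each type in $Q^{-1}$ equals the number of items of that type in $Q$, which is at most $\ell$. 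Thus $\g(Q^{-1})\in S_{k,\ell}$ by definition of the class.

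The key step is then to invoke Lemma~\ref{inv}: since $G=\g(Q)$, that lemma gives $\con G=\g(Q^{-1})$. Combining this with the previous paragraph yields $\con G\in S_{k,\ell}$, which is exactly the claim.

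There is no real obstacle here; the only thing one must be slightly careful about is the bookkeeping of parameters, namely that ``at most $\ell$ items of each type'' is an invariant of reversing sequences and that reversing does not merge or split sequences, so the count $k$ is preserved as well. Everything else is immediate from Lemma~\ref{inv}, whose proof (a symmetric unfolding of the definition of the arc set of a sequence digraph, swapping ``left of'' with ``right of'') has already been established earlier in the excerpt and may be used as a black box.
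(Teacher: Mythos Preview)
Your proof is correct and follows exactly the approach indicated in the paper: the paper simply states that the proposition is implied by Lemma~\ref{inv}, and your argument makes explicit the (trivial) observation that passing from $Q$ to $Q^{-1}$ preserves both the number of sequences and the multiset of items of each type, so the parameters $k$ and $\ell$ are unchanged.
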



Next we want to address the relationship between the complement digraph of a 
sequence digraph $\g(Q)$ for some set $Q$
and the sequence digraph $\g(Q^{-1})$ of the converse sequences $Q^{-1}$. 
In general these digraphs are different. This can be 
verified by the two sequence digraphs considered in Example \ref{notta}.
Under certain conditions for some set of
sequences $Q$ the complement digraph of $\g(Q)$ can be obtained by the sequence digraph of 
the converse sequences $Q^{-1}$, which is useful since $Q^{-1}$ has the same number
of sequences and the same number of items per types as $Q$.


\begin{lemma}\label{rel-compl-c}
Digraph $G$ is a tournament if and only if $\co G=\con G$.
\end{lemma}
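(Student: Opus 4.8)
The plan is to prove both implications of the equivalence ``$G$ is a tournament $\iff$ $\co G=\con G$'' directly from the definitions of tournament, complement, and converse.

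First I would unfold the definitions. For a digraph $G=(V,A)$ on vertex set $V$, the complement is $\co G=(V,A^c)$ with $A^c=\{(u,v)\mid (u,v)\notin A,\ u\neq v\}$, and the converse is $\con G=(V,A^r)$ with $A^r=\{(u,v)\mid (v,u)\in A,\ u\neq v\}$. Both digraphs share the vertex set $V$, so $\co G=\con G$ if and only if $A^c=A^r$, i.e.\ for all distinct $u,v\in V$ we have $(u,v)\notin A \iff (v,u)\in A$. The whole proof reduces to showing that this last condition is equivalent to $G$ being a tournament.

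For the forward direction, suppose $G$ is a tournament. Then for every pair of distinct vertices $u,v$, exactly one of $(u,v),(v,u)$ is in $A$. Hence $(u,v)\notin A$ holds precisely when $(v,u)\in A$, which is exactly the condition $A^c=A^r$; therefore $\co G=\con G$. For the converse direction, suppose $\co G=\con G$, i.e.\ for all distinct $u,v$, $(u,v)\notin A \iff (v,u)\in A$. Reading this for the ordered pair $(u,v)$: if $(u,v)\in A$ then $(v,u)\notin A$ (contrapositive of the ``$\Leftarrow$'' part applied to $(v,u)$, or directly: were $(v,u)\in A$ as well, the ``$\Rightarrow$'' direction applied to $(u,v)$ would force $(u,v)\notin A$, a contradiction), so at most one of the two arcs is present; and if $(u,v)\notin A$ then by the ``$\Rightarrow$'' direction $(v,u)\in A$, so at least one of the two arcs is present. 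Thus exactly one of $(u,v),(v,u)$ lies in $A$ for every distinct pair, which is the definition of a tournament.

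This lemma is essentially a definition-chase, so I do not anticipate a genuine obstacle; the only point requiring a little care is bookkeeping the two directions of the biconditional $(u,v)\notin A \iff (v,u)\in A$ and making sure ``at most one'' and ``at least one'' are each derived from the appropriate direction, as spelled out above. One could alternatively phrase the argument more symmetrically by noting that $A^c=A^r$ is self-dual under swapping $u\leftrightarrow v$, but the case analysis above is the most transparent route.
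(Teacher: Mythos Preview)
Your proof is correct and follows essentially the same approach as the paper's own proof, which is a terse definition chase (``exactly one arc between every two vertices'' $\iff$ $\con G=\co G$). You simply spell out the two directions of the biconditional $(u,v)\notin A \Leftrightarrow (v,u)\in A$ more carefully than the paper does.
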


\begin{proof}
Digraph $G$ is a tournament, if and only if between every two vertices of $G$
there is exactly one arc, if and only if the converse digraph $\con G$ corresponds to the
complement digraph $\co G$. 
\end{proof}

\begin{lemma}\label{complb}
Let $G\in S_{k,\ell}$ defined by some set of sequences $Q$, such that 
graph $\g(Q)$ is a tournament, then $\co(\g(Q))= \g(Q^{-1})$.
\end{lemma}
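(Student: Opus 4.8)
The plan is to combine Lemma~\ref{inv} with Lemma~\ref{rel-compl-c}. By Lemma~\ref{inv} we have $\con(\g(Q)) = \g(Q^{-1})$, so it suffices to show that $\co(\g(Q)) = \con(\g(Q))$. Since $G = \g(Q)$ is assumed to be a tournament, Lemma~\ref{rel-compl-c} gives exactly $\co G = \con G$, and chaining the two equalities yields $\co(\g(Q)) = \g(Q^{-1})$.

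First I would invoke Lemma~\ref{inv} with the given set $Q$: setting $G = \g(Q)$, the lemma asserts $\con G = \g(Q^{-1})$. Next I would note that the hypothesis says $\g(Q)$ is a tournament, which is precisely the condition under which Lemma~\ref{rel-compl-c} applies; it gives $\co(\g(Q)) = \con(\g(Q))$. Substituting the first identity into the second, $\co(\g(Q)) = \con(\g(Q)) = \g(Q^{-1})$, which is the claim. The membership $G \in S_{k,\ell}$ plays no role beyond guaranteeing that $Q$ and hence $Q^{-1}$ are legitimate sets of $k$ sequences with at most $\ell$ items per type, so nothing further is needed there.

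Honestly there is no real obstacle here: the statement is a one-line corollary of the two preceding lemmas, and the only thing to be careful about is that the tournament hypothesis is stated for $\g(Q)$ itself (not for $\g(Q^{-1})$), which is fine because $\g(Q)$ is a tournament if and only if $\g(Q^{-1}) = \con(\g(Q))$ is, converses preserving the tournament property. So the proof is essentially: apply Lemma~\ref{rel-compl-c}, then apply Lemma~\ref{inv}.

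\begin{proof}
Let $G = \g(Q)$. By assumption $G$ is a tournament, so Lemma~\ref{rel-compl-c} yields $\co G = \con G$. On the other hand, Lemma~\ref{inv} applied to the set $Q$ gives $\con G = \g(Q^{-1})$. Combining these two equalities we obtain $\co(\g(Q)) = \co G = \con G = \g(Q^{-1})$, as claimed.
\end{proof}
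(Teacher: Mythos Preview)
Your proof is correct and follows essentially the same approach as the paper's own proof: both invoke Lemma~\ref{rel-compl-c} to obtain $\co(\g(Q))=\con(\g(Q))$ from the tournament hypothesis and then Lemma~\ref{inv} to identify $\con(\g(Q))$ with $\g(Q^{-1})$.
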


\begin{proof}
Since $\g(Q)$ is a tournament by Lemma  \ref{rel-compl-c} it holds  $\co(\g(Q))=\con(\g(Q))$
and by  Lemma \ref{inv} it holds $\con(\g(Q))=\g(Q^{-1})$.
\end{proof}

Since $Q$ and $Q^{-1}$ are defined on the same number of sequences
and items of each type the following result holds.

\begin{corollary}\label{cor-comp}
Let $G\in S_{k,\ell}$ be a tournament, 
then $\co G\in S_{k,\ell}$.
\end{corollary}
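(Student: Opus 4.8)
The statement to prove is Corollary~\ref{cor-comp}: if $G\in S_{k,\ell}$ is a tournament, then $\co G\in S_{k,\ell}$.

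The plan is to chain together the three immediately preceding results. First I would invoke the hypothesis that $G\in S_{k,\ell}$, which by definition means there is a set $Q$ of at most $k$ sequences, each type occurring at most $\ell$ times, with $G=\g(Q)$. Since $G$ is a tournament, Lemma~\ref{complb} applies and gives $\co(\g(Q))=\g(Q^{-1})$. Now the key observation is purely structural: the converse-sequence operation $q_i\mapsto q_i^{-1}$ only reverses the order of items inside each sequence; it does not change the number of sequences, nor does it change, for any type $t$, the multiset of items of type $t$ appearing in the sequences. Hence $Q^{-1}$ is again a set of at most $k$ sequences with at most $\ell$ items of each type, so $\g(Q^{-1})\in S_{k,\ell}$ by the definition of the class. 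Combining the two equalities, $\co G=\co(\g(Q))=\g(Q^{-1})\in S_{k,\ell}$, which is exactly the claim.

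There is essentially no obstacle here — the corollary is a one-line consequence of Lemma~\ref{complb} together with the trivial remark (already stated in the excerpt just before the corollary) that $Q$ and $Q^{-1}$ have the same number of sequences and the same number of items of each type. The only thing worth being careful about is making explicit why $Q^{-1}$ certifies membership in the \emph{same} class $S_{k,\ell}$ rather than some larger class: this is precisely because $|Q^{-1}|=|Q|\le k$ and $c_{Q^{-1}}(t)=c_Q(t)$ for every type $t$, so in particular each type still has at most $\ell$ occurrences. I would write the proof in two sentences, citing Lemma~\ref{complb} for the identity $\co(\g(Q))=\g(Q^{-1})$ and the preceding remark for the parameter bounds.

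\begin{proof}
Let $G\in S_{k,\ell}$ be a tournament and let $Q$ be a set of at most $k$ sequences with at most $\ell$ items of each type such that $G=\g(Q)$. Since $\g(Q)$ is a tournament, Lemma~\ref{complb} yields $\co G=\co(\g(Q))=\g(Q^{-1})$. The set $Q^{-1}$ consists of the same number of sequences as $Q$ and contains, for every type, the same number of items as $Q$; hence $Q^{-1}$ is a set of at most $k$ sequences with at most $\ell$ items of each type, so $\g(Q^{-1})\in S_{k,\ell}$. Therefore $\co G\in S_{k,\ell}$.
\end{proof}
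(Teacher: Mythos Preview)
Your proof is correct and follows exactly the same route as the paper: apply Lemma~\ref{complb} to obtain $\co G=\g(Q^{-1})$, then use the remark (stated immediately before the corollary) that $Q^{-1}$ has the same number of sequences and the same number of items of each type as $Q$, so $\g(Q^{-1})\in S_{k,\ell}$.
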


Since the class $S_{1,1}$ is equivalent to the class
of  transitive tournaments (Theorem \ref{s11}),
the last two results hold for digraphs in $S_{1,1}$.

\begin{lemma}\label{compl}
Let $G\in S_{1,1}$ defined by some set $Q=\{q_1\}$ of one sequence, 
then $\co(\g(Q))= \g(Q^{-1})$.
\end{lemma}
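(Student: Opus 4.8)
\textbf{Plan for the proof of Lemma \ref{compl}.}
The statement is a special case of Lemma \ref{complb}, so the plan is simply to verify that the hypotheses of that lemma are met. By Theorem \ref{s11}, every digraph $G\in S_{1,1}$ is a transitive tournament; in particular $\g(Q)=G$ is a tournament. Hence the hypothesis of Lemma \ref{complb} ``$\g(Q)$ is a tournament'' holds, and applying Lemma \ref{complb} directly gives $\co(\g(Q))=\g(Q^{-1})$.

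\textbf{Key steps, in order.} First I would recall from Theorem \ref{s11} the equivalence $G\in S_{1,1}\iff G$ is a transitive tournament, to conclude that $\g(Q)$ is a tournament. Second, I would invoke Lemma \ref{complb} with $k=\ell=1$: since $\g(Q)$ is a tournament, $\co(\g(Q))=\g(Q^{-1})$. (If one prefers a self-contained argument rather than citing Lemma \ref{complb}, one can instead chain Lemma \ref{rel-compl-c} — which gives $\co(\g(Q))=\con(\g(Q))$ because $\g(Q)$ is a tournament — with Lemma \ref{inv}, which gives $\con(\g(Q))=\g(Q^{-1})$.)

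\textbf{Main obstacle.} There is essentially no obstacle: the lemma is a corollary of results already established earlier in the section, and the only thing to check is that a digraph in $S_{1,1}$ is a tournament, which is immediate from Theorem \ref{s11}. The proof will therefore be one or two lines. The only point requiring a moment's care is making sure the right earlier lemma is cited — that Lemma \ref{complb}'s tournament hypothesis is exactly what $S_{1,1}$ supplies, and that $Q^{-1}$ is defined (as it is, just before Lemma \ref{inv}) so that the statement even typechecks.
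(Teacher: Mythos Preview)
Your proposal is correct and follows essentially the same approach as the paper: the paper states Lemma \ref{compl} without a standalone proof, remarking just before it that since $S_{1,1}$ equals the class of transitive tournaments (Theorem \ref{s11}), the preceding results (Lemma \ref{complb} and Corollary \ref{cor-comp}) apply to digraphs in $S_{1,1}$. Your plan to invoke Theorem \ref{s11} for the tournament property and then apply Lemma \ref{complb} is exactly this reasoning.
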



\begin{proposition}\label{le-comp}
Let $G\in S_{1,1}$, then $\co G\in S_{1,1}$.
\end{proposition}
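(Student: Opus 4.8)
The statement to prove is Proposition~\ref{le-comp}: if $G\in S_{1,1}$, then $\co G\in S_{1,1}$. The most economical route is to chain together results already available in the excerpt. By Theorem~\ref{s11}, membership in $S_{1,1}$ is equivalent to $G$ being a transitive tournament; in particular every $G\in S_{1,1}$ is a tournament. Hence Corollary~\ref{cor-comp}, applied to $k=\ell=1$, immediately yields $\co G\in S_{1,1}$. So the one-line proof is simply: ``$G\in S_{1,1}$ is a tournament by Theorem~\ref{s11}, so the claim follows from Corollary~\ref{cor-comp}.''

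\textbf{A more self-contained route.} Alternatively, and perhaps more in the spirit of Lemma~\ref{compl} which directly precedes the proposition, I would argue constructively. Let $G\in S_{1,1}$ be defined by a set $Q=\{q_1\}$ consisting of one sequence in which every type occurs exactly once. By Lemma~\ref{compl} we have $\co(\g(Q)) = \g(Q^{-1})$, where $Q^{-1}=\{q_1^{-1}\}$ is the set containing the reversed sequence. Since reversing $q_1$ does not change the number of sequences (still one) and does not change the number of items of any type (still exactly one of each), the set $Q^{-1}$ witnesses $\g(Q^{-1})\in S_{1,1}$. Therefore $\co G = \co(\g(Q)) = \g(Q^{-1}) \in S_{1,1}$, as desired.

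\textbf{Where the work actually sits.} There is essentially no obstacle in Proposition~\ref{le-comp} itself — it is a corollary. The genuine content has already been discharged earlier: Lemma~\ref{inv} (the converse sequence digraph equals the converse of the sequence digraph), Lemma~\ref{rel-compl-c} (for tournaments, complement and converse coincide), and the characterization Theorem~\ref{s11} that pins $S_{1,1}$ down to transitive tournaments. Given those, the only thing to check is the trivial bookkeeping that $Q\mapsto Q^{-1}$ preserves the parameters $k=1$ and $\ell=1$, which is immediate from the definition of $q_i^{-1}$. I would present the proof as the two-line deduction from Lemma~\ref{compl} (or equivalently from Corollary~\ref{cor-comp}), noting in passing that the result also follows because the class of transitive tournaments is visibly closed under complementation: reversing the unique Hamiltonian path order $[v_1,\dots,v_n]$ to $[v_n,\dots,v_1]$ exhibits a transitive-tournament structure on $\co G$.
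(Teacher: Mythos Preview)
Your proposal is correct and matches the paper's own reasoning: the paper states Proposition~\ref{le-comp} without a separate proof environment, deriving it (together with Lemma~\ref{compl}) from the remark that $S_{1,1}$ coincides with the class of transitive tournaments (Theorem~\ref{s11}), so that Corollary~\ref{cor-comp} applies with $k=\ell=1$. Both of your routes---via Corollary~\ref{cor-comp} and via Lemma~\ref{compl} together with the observation that $Q\mapsto Q^{-1}$ preserves the parameters---are exactly the deductions the paper has in mind.
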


The results can not be generalized to graphs from $S_{1,2}$ 
and also not sets 
$Q$ on more than one sequence
since these are not tournaments in general.


Beside transitivity shown in Lemma \ref{le-transi} we observe that the complements 
digraphs of sequence digraphs are special directed acyclic graphs.

\begin{lemma}\label{le-ac}
Let $\ell\geq 1$ and  $G\in S_{1,\ell}$, then 
the complement digraph $\co G $ is acyclic.
\end{lemma}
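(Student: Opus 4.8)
The plan is to use the characterization from Lemma~\ref{le-transi}, which tells us that for $G\in S_{1,\ell}$ the complement digraph $\co G$ is transitive. A directed cycle $\overrightarrow{C_n}$ with $n\geq 2$ in a transitive digraph forces, by repeated application of transitivity around the cycle, a loop at each of its vertices; since our digraphs have no loops, no such cycle can exist. So the one-line argument is: $\co G$ is transitive by Lemma~\ref{le-transi}, and a loopless transitive digraph is acyclic.

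In a little more detail, first I would suppose for contradiction that $\co G$ contains a directed cycle $(v_1,v_2),(v_2,v_3),\ldots,(v_{n-1},v_n),(v_n,v_1)$ of minimum length $n\geq 2$. If $n=2$ we have both $(v_1,v_2)$ and $(v_2,v_1)$ in $\co G$, which means neither arc is in $G$; but $G\in S_{1,\ell}$ is semicomplete by Lemma~\ref{le-ud}, so at least one of $(v_1,v_2),(v_2,v_1)$ lies in $G$ --- contradiction. (Alternatively, one can avoid invoking semicompleteness: transitivity applied to $(v_1,v_2)$ and $(v_2,v_1)$ with $v_1\neq v_1$ is vacuous, so instead handle $n=2$ via Lemma~\ref{le-ud}, and for $n\geq 3$ use transitivity directly.) For $n\geq 3$, transitivity applied to $(v_1,v_2)$ and $(v_2,v_3)$ gives $(v_1,v_3)\in\co G$; iterating, $(v_1,v_k)\in\co G$ for all $k$, and finally $(v_1,v_n)$ together with $(v_n,v_1)$ yields a $2$-cycle, reducing to the previous case. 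Hence $\co G$ has no directed cycle at all, i.e.\ it is acyclic (a DAG in the terminology of Section~\ref{sec-pre}).

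I do not expect any real obstacle here: the statement is an immediate corollary of Lemma~\ref{le-transi} once one recalls that transitive loopless digraphs cannot contain directed cycles. The only mild care needed is the base case $n=2$, where ``transitivity forces a loop'' is vacuous and one must instead appeal to semicompleteness (Lemma~\ref{le-ud}) or simply note that digraphs here are loopless and a $2$-cycle already contradicts being an induced subdigraph of the complement of a semicomplete digraph. Since Lemma~\ref{le-ac} is stated only for $k=1$, nothing about the general $S_{k,\ell}$ case is needed, and indeed the result would fail there (e.g.\ the complement of an orientation of a long path can contain directed cycles).

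\begin{proof}
By Lemma~\ref{le-transi} the complement digraph $\co G$ is transitive. Suppose, for contradiction, that $\co G$ is not acyclic, and let $(v_1,v_2),(v_2,v_3),\ldots,(v_{n-1},v_n),(v_n,v_1)$ be the arcs of a directed cycle $\overrightarrow{C_n}$ in $\co G$ with $n\geq 2$ chosen minimum. If $n\geq 3$, then applying transitivity to $(v_1,v_2)$ and $(v_2,v_3)$ yields $(v_1,v_3)\in\co G$, and iterating this argument gives $(v_1,v_n)\in\co G$; together with $(v_n,v_1)\in\co G$ this produces a directed cycle on the two vertices $v_1,v_n$, contradicting the minimality of $n$. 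Hence $n=2$, so both $(v_1,v_2)$ and $(v_2,v_1)$ belong to $\co G$, which means neither $(v_1,v_2)$ nor $(v_2,v_1)$ belongs to $A$. But by Lemma~\ref{le-ud} the digraph $G=\g(Q)$ is semicomplete, so at least one of $(v_1,v_2),(v_2,v_1)$ lies in $A$ --- a contradiction. Therefore $\co G$ contains no directed cycle, i.e.\ $\co G$ is acyclic.
\end{proof}
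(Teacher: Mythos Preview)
Your proof is correct and follows essentially the same approach as the paper: both use Lemma~\ref{le-transi} (transitivity of $\co G$) to turn a hypothetical directed cycle into a $2$-cycle in $\co G$, and then invoke Lemma~\ref{le-ud} (semicompleteness of $G$) for the contradiction. The only cosmetic difference is that the paper phrases the transitivity step as ``the cycle's vertex set induces a bidirectional complete subdigraph in $\co G$'' rather than your minimal-counterexample reduction, but the underlying argument and the two lemmas invoked are identical.
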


\begin{proof}
Let $Q=\{q_1\}$ be some sequence which defines the sequence digraph $\g(Q)$.
If the complement digraph $\co(\g(Q))$ contains a cycle 
$(\{v_1,\ldots,v_n\},\{(v_1,v_2),\ldots, (v_{n-1},v_n),(v_n,v_1)\})$, $n\geq 2$,
then by Lemma \ref{le-transi} vertex set  $\{v_1,\ldots,v_n\}$
induces a bidirectional complete subdigraph in
digraph $\co(\g(Q))$. Thus the vertex set  $\{v_1,\ldots,v_n\}$
induces an edgeless subdigraph in $\g(Q)$ which contradicts Lemma \ref{le-ud}.
\end{proof}

It is well known that every DAG has at least one sink, i.e.~a vertex $v_1$ of
outdegree $0$, and one source, i.e.~a vertex $v_2$ of indegree $0$  by \cite{BG09}.
This means by Lemma \ref{le-ac} that for every $\ell\geq 1$  every digraph 
$G=(V,A)\in S_{1,\ell}$
has a vertex $v_1$ of outdegree $|V|-1$ and a vertex $v_2$ of 
indegree  $|V|-1$. For $\ell= 1$  this is known by  Theorem \ref{s11}.

\begin{lemma}\label{not-so-comp}
Let $\ell\geq 1$ and  $G=(V,A)\in S_{1,\ell}$, $u$ and $v$ be two
vertices from $V$, then 
the complement digraph $\co G$ has at most one of
the two arcs $(u,v)$ and $(v,u)$, i.e. $\co G$ is an oriented graph.
\end{lemma}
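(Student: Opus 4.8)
The plan is to derive this directly from the already-established structure of $\co G$ for $G \in S_{1,\ell}$. By Lemma~\ref{le-ud}, $G = \g(Q)$ for $Q = \{q_1\}$ is semicomplete, so for every pair of distinct vertices $u,v$ at least one of $(u,v)$, $(v,u)$ lies in $A$. Consequently the complement digraph $\co G = (V, A^c)$ can contain the pair $(u,v)$ and $(v,u)$ simultaneously only if \emph{neither} $(u,v)$ nor $(v,u)$ is in $A$ — but that contradicts semicompleteness. Hence $\co G$ has at most one of $(u,v)$, $(v,u)$, which is precisely the statement that $\co G$ is an oriented graph.

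More carefully, I would argue as follows. Suppose for contradiction that $\co G$ contains both $(u,v)$ and $(v,u)$ for some distinct $u,v \in V$. By the definition of the complement digraph, $(u,v) \in A^c$ means $(u,v) \notin A$, and $(v,u) \in A^c$ means $(v,u) \notin A$. Thus neither ordered pair between $u$ and $v$ belongs to $A$, i.e.\ $u$ and $v$ are non-adjacent in $G$. But Lemma~\ref{le-ud} tells us that $\un(\g(Q))$ is the complete graph on $|\types(Q)|$ vertices, so every two distinct vertices of $G$ are adjacent — a contradiction. Therefore $\co G$ contains at most one of $(u,v)$ and $(v,u)$ for every pair of distinct vertices, which is the definition of an oriented graph.

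Alternatively, and perhaps more in the spirit of the surrounding lemmas, one could phrase the argument via Lemma~\ref{fiandla-1}(\ref{fiandla-1b}): if both $(u,v) \in A^c$ and $(v,u) \in A^c$, then $\LA(q_1,v) < \FI(q_1,u)$ and $\LA(q_1,u) < \FI(q_1,v)$, giving the impossible chain $\LA(q_1,v) < \FI(q_1,u) \leq \LA(q_1,u) < \FI(q_1,v) \leq \LA(q_1,v)$. This uses only that $\FI(q_1,t) \leq \LA(q_1,t)$ whenever type $t$ actually occurs in $q_1$, which holds here since all vertices of $V = \types(Q)$ do occur. I would present the first (semicompleteness) version as the clean proof, since it is shortest and quotes exactly the earlier result.

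There is no real obstacle here: the statement is an immediate consequence of semicompleteness of $\g(Q)$, and the only thing to be careful about is not conflating "oriented graph" with "acyclic" — the lemma asserts only the absence of digons (2-cycles), which is weaker than the acyclicity already shown in Lemma~\ref{le-ac}. Indeed this lemma is strictly implied by Lemma~\ref{le-ac} together with the observation that an acyclic digraph has no 2-cycle, so I would note that the result also follows a fortiori from Lemma~\ref{le-ac}; but giving the self-contained one-line semicompleteness argument keeps the exposition transparent.
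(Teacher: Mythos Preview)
Your proposal is correct and your primary argument---that semicompleteness of $G$ (Lemma~\ref{le-ud}) forces at least one of $(u,v),(v,u)$ into $A$, hence at most one into $A^c$---is precisely the paper's own proof. Your alternative via Lemma~\ref{fiandla-1}(\ref{fiandla-1b}) and the remark that the result also follows from Lemma~\ref{le-ac} are valid extras, but the core approach matches the paper.
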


\begin{proof}
Let $G=(V,A)\in S_{1,\ell}$, $u$ and $v$ be two distinct
vertices from $V$. Then $G$ contains at least one of the two arcs $(u,v)$
and $(v,u)$ which implies that $\co G$ has at most one of
the two arcs $(u,v)$ and $(v,u)$.
\end{proof}

\begin{proposition}\label{le-xx}
Let $\ell\geq 1$ and $G\in S_{1,\ell}$, then 
the complement digraph $\co G$ has at most one non-trivial component 
which contains arcs.
\end{proposition}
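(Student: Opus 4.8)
The plan is to exploit the structural facts about $\co G$ that have already been established for $G\in S_{1,\ell}$, namely that $\co G$ is transitive (Lemma~\ref{le-transi}), acyclic (Lemma~\ref{le-ac}), and an oriented graph (Lemma~\ref{not-so-comp}), together with $2\overrightarrow{P_2}$-freeness (Lemma~\ref{le-p2free}). Concretely, suppose for contradiction that $\co G$ has two distinct weakly connected components $C_1$ and $C_2$ each containing at least one arc. Pick an arc $(a,b)$ inside $C_1$ and an arc $(c,d)$ inside $C_2$. Since the components are disjoint, there is no arc of $\co G$ between $\{a,b\}$ and $\{c,d\}$ in either direction, so the induced subdigraph $\co G[\{a,b,c,d\}]$ contains exactly the two arcs $(a,b)$ and $(c,d)$ — that is precisely an induced $2\overrightarrow{P_2}$, contradicting Lemma~\ref{le-p2free}.

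\textbf{Main obstacle.} The only delicate point is making sure the four chosen vertices are genuinely distinct and that the induced subdigraph on them really is $2\overrightarrow{P_2}$ and not something larger. Distinctness of $a,b$ and of $c,d$ follows because each is the pair of endpoints of a (loopless) arc; distinctness across the two pairs follows because $a,b\in C_1$ and $c,d\in C_2$ with $C_1\neq C_2$. The absence of any further arc among the four vertices is exactly the statement that there is no edge of $\un(\co G)$ joining $C_1$ to $C_2$, which holds by definition of weakly connected component. Thus no extra arc can appear, and since $\co G$ is an oriented graph (Lemma~\ref{not-so-comp}) neither $(a,b)$ nor $(c,d)$ is accompanied by its reverse, so the induced subdigraph is exactly $2\overrightarrow{P_2}$. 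This yields the contradiction and proves that at most one weakly connected component of $\co G$ can contain an arc.

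One should also check the degenerate cases: if $\co G$ has no arc at all the statement is vacuously true, and if $\co G$ has arcs only within a single component there is nothing to prove. So the argument above covers the only nontrivial situation, and the proposition follows.

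\begin{proof}
Assume for contradiction that $\co G$ has two distinct weakly connected components $C_1$ and $C_2$, each containing at least one arc. Choose an arc $(a,b)$ with $a,b\in C_1$ and an arc $(c,d)$ with $c,d\in C_2$. Since arcs are loopless, $a\neq b$ and $c\neq d$; since $C_1\neq C_2$ are disjoint, $\{a,b\}\cap\{c,d\}=\emptyset$, so $a,b,c,d$ are four distinct vertices. Because $C_1$ and $C_2$ are distinct components of $\co G$, there is no arc of $\co G$ with one endpoint in $\{a,b\}$ and the other in $\{c,d\}$. Moreover, by Lemma~\ref{not-so-comp}, $\co G$ is an oriented graph, so $(b,a)\notin A^c$ and $(d,c)\notin A^c$. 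Hence the induced subdigraph $\co G[\{a,b,c,d\}]$ has exactly the two arcs $(a,b)$ and $(c,d)$, which means $\co G$ contains an induced $2\overrightarrow{P_2}$. This contradicts Lemma~\ref{le-p2free}. Therefore $\co G$ has at most one non-trivial component containing arcs.
\end{proof}
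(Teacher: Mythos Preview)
Your proof is correct and follows essentially the same approach as the paper: assume two components with arcs, use Lemma~\ref{not-so-comp} to ensure the two chosen arcs are not accompanied by their reverses, observe that the four vertices induce a $2\overrightarrow{P_2}$ in $\co G$, and invoke Lemma~\ref{le-p2free} for the contradiction. Your version is more explicit about verifying distinctness of the four vertices and the absence of cross-arcs, but the argument is the same.
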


\begin{proof}
If we assume that $\co G$ has more than one non-trivial component 
which contains arcs, then by Lemma \ref{not-so-comp} digraph 
$\co G$ contains 
$2\overrightarrow{P_2} =(\{a,b,c,d\},\{(c,a),(d,b)\})$, see
Table \ref{F-co2}, as an induced subdigraph. Lemma
\ref{le-p2free} leads to a contradiction.
\end{proof}

Digraphs in $S_{2,1}$ consist of the disjoint union of two 
digraphs in $S_{1,1}$ (see Lemma \ref{unionsk1}), thus the corresponding
complements have exactly one component. The 
complements of digraphs in $S_{2,2}$ can have more than one non-trivial component 
which contains arcs. This can
be verified by digraph $\co G\in S_{2,2}$  in the following example,
whose complement $G$ has two non-trivial components
which contains arcs.

\begin{example}\label{ex-2-join}
Let $G$ be the disjoint union of 
two digraphs $G'=(V',A')$ and $G''=(V'',A'')$ from  $S_{1,1}$. Then $G$ can be
defined by $Q=\{q_1,q_2\}$, where $q_1=[v'_1,v'_2,\ldots,v'_n]$ and $q_2=[v''_1,v''_2,\ldots,v''_m]$. 
W.l.o.g.~we assume the vertices in $V'$ and $V''$ to be enumerated in the order 
in which the graphs can be defined by inserting in-dominated vertices (cf.~Theorem \ref{s11}.(\ref{s11h})
or \ref{s11i}).
Thus it holds $G\in S_{2,1}$.

The complement digraph $\co G$ can be defined as follows. The digraph
complement of a digraph in $S_{1,1}$ also belongs to $S_{1,1}$ and the additional
arcs between $\co G'$ and $\co G''$ can be defined by 
two sequences, one for the arcs from $\co G'$ to $\co G''$  and one for the arcs from 
$\co G''$ to $\co G'$.
The arcs within $\co G'$ and $\co G''$  can be defined in both sequences. 
Formally $\co G$ can be
defined by $Q'=\{q'_1,q'_2\}$, where
$$q'_1=[v'_n,v'_{n-1},\ldots,v'_1, v''_m,v''_{m-1}\ldots,v''_1]$$
and
$$q'_2=[v''_m,v''_{m-1}\ldots,v''_1,v'_n,v'_{n-1},\ldots,v'_1].$$
Thus it holds $\co G\in S_{2,2}$.
\end{example}

Next we want to study for some digraph $G\in S_{k,\ell}$
whether it holds $\co G \in S_{k,\ell}$.
%
By Proposition \ref{le-comp}
the class $S_{1,1}$ is closed under complementations.
The next example shows that  the class $S_{1,2}$ is not  
closed under complementations and that the values of
$k$ and $\ell$ can grow arbitrary.

\begin{example}\label{ex-knm}
We consider 
$$\overrightarrow{K_{n,m}}=(\{v_1,\ldots,v_n,w_1,\ldots,w_m\},
\{(v_i,w_j)~|~ 1\leq i\leq n, 1\leq j \leq m\})$$
as an orientation of a complete
bipartite graph $K_{n,m}$.
The complement digraph of $\overrightarrow{K_{n,m}}$ can be defined by one sequence
$$q=[w_1,w_2,\ldots,w_m,w_1,w_2,\ldots,w_m,v_1,v_2,\ldots,v_n,v_1,v_2,\ldots,v_n]$$
which implies that the complement digraph of $\overrightarrow{K_{n,m}}$ is in $S_{1,2}$.
By Lemma \ref{c3} it holds  $\overrightarrow{K_{n,m}}\in S_{n\cdot m,\max(n,m)}$
but $\overrightarrow{K_{n,m}}\not \in S_{k,\ell}$ for $k<n\cdot m$ or $\ell<\max(n,m)$.
\end{example}

Since $S_{1,2}\subseteq S_{k,\ell}$ for $k\geq 1$ and $\ell\geq 2$ 
we know that the classes $S_{k,\ell}$ for every $k\geq 1$ and $\ell\geq 2$ 
are not closed under  complementations.
The next examples show that  $S_{2,1}$ is not closed under  complementations.

\begin{example}
\begin{enumerate}
\item Let $G_1$ be the disjoint union of two $1$-vertex digraphs in $S_{1,1}$.
Then $G_1\in S_{2,1}$. Further $\co G_1=\overleftrightarrow{K_2}\in S_{1,2}-S_{2,1}$.

\item  Let $G_n$ for $n\geq 2$ be the disjoint union of two $n$-vertex digraphs in $S_{1,1}$.
Then $G_n\in S_{2,1}$ and $\co G_n \in S_{2,2}-(S_{2,1}\cup S_{1,2})$, see 
Example \ref{ex-2-join}.
\end{enumerate}
\end{example}


\begin{lemma}\label{not-so-comp2ge}
Let $G\in S_{k,1}$ for $k\geq 2$, then $\co G\in S_{k(k-1),2(k-1)}$.
\end{lemma}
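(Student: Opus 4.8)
The plan is to start from the structural description of $S_{k,1}$ given in Theorem \ref{s1k}: $G$ is the disjoint union of $k$ transitive tournaments $G_1,\ldots,G_k$ on pairwise disjoint vertex sets $V_1,\ldots,V_k$, where $V_i$ has $n_i$ vertices. First I would make $\co G$ explicit. Inside a part $V_i$ we have $\co G[V_i]=\co G_i$, which is again a transitive tournament, so by Theorem \ref{s11} (or Proposition \ref{le-comp}) it lies in $S_{1,1}$; hence there is a sequence $q^i=[w^i_1,\ldots,w^i_{n_i}]$ with $\g(\{q^i\})=\co G_i$, and $(w^i_a,w^i_b)$ is an arc of $\co G_i$ precisely when $a<b$. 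Between two distinct parts $V_i$ and $V_j$, the digraph $G$ has no arcs, so $\co G$ contains every arc $(u,v)$ and $(v,u)$ with $u\in V_i$, $v\in V_j$.

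Next I would realise $\co G$ as a sequence digraph using one sequence for each ordered pair of distinct parts. For $1\leq i\neq j\leq k$ let $q_{i,j}$ be the concatenation $[w^i_1,\ldots,w^i_{n_i},w^j_1,\ldots,w^j_{n_j}]$ and put $Q=\{\,q_{i,j}\mid 1\leq i\neq j\leq k\,\}$, so that $|Q|=k(k-1)$. The verification $\g(Q)=\co G$ has two parts. No arc outside $\co G$ is created, because inside $q_{i,j}$ the only left-to-right pairs of types are pairs within the $V_i$-block, which occur in the order of $q^i$ and therefore yield only arcs of $\co G_i$; pairs within the $V_j$-block, yielding only arcs of $\co G_j$; and pairs directed from $V_i$ to $V_j$, all of which are arcs of $\co G$. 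Conversely every arc of $\co G$ is produced: an arc inside $\co G_i$ already appears inside the $V_i$-block of any $q_{i,j}$ with $j\neq i$, and an arc from $u\in V_i$ to $v\in V_j$ with $i\neq j$ is produced by $q_{i,j}$.

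Finally I would count occurrences of each type: a type $v\in V_i$ appears exactly once in each of the $2(k-1)$ sequences $q_{i,j}$ and $q_{j,i}$ with $j\neq i$, and in no other sequence of $Q$, so $Q$ contains at most $2(k-1)$ items of every type. Together with $|Q|=k(k-1)$ this gives $\co G\in S_{k(k-1),2(k-1)}$. I do not expect a real obstacle here; the only steps that need some care are checking that each $q_{i,j}$ introduces no arc outside $\co G$ and the exact occurrence count $2(k-1)$ of a fixed type. As a sanity check, for $k=2$ this specialises to $\co G\in S_{2,2}$, in agreement with Example \ref{ex-2-join}.
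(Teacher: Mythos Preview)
Your proposal is correct and follows essentially the same approach as the paper: decompose $G$ into $k$ transitive tournaments, observe that each $\co G_i\in S_{1,1}$, and for every ordered pair $(i,j)$ with $i\neq j$ concatenate the sequence for $\co G_i$ with that for $\co G_j$, yielding $k(k-1)$ sequences in which each type occurs exactly $2(k-1)$ times. The paper's proof is terser and points to Example~\ref{ex-2-join} for the concrete shape of the sequences, whereas you spell out the verification that $\g(Q)=\co G$ and the occurrence count explicitly, but the construction and the counting are identical.
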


\begin{proof}
By Lemma \ref{unionsk1}
every digraph $G\in S_{k,1}$ is the disjoint union of $k$ digraphs $G_1,\ldots,G_k$ 
from $S_{1,1}$. 
For every $i$ it holds $\co G_i \in S_{1,1}$  (Proposition \ref{le-comp}).
Thus digraph $\co G$ can be obtained by $k$ digraphs from $S_{1,1}$
and all possible arcs between two of these $k$ graphs.
A sequence for  $\co G_i$ 
can be obtained by the reverse sequence of that for $G_i$ (Lemma \ref{compl}).
The arcs between two digraphs $\co G_i$  and $\co G_j$ can be 
created by two sequences.
For $k=2$ this is shown in  Example  \ref{ex-2-join}.
This leads to a set $Q$ of $k(k-1)$ sequences and for every type there are 
$2(k-1)$ items.
\end{proof}

\subsection{Simple Sequences}\label{sec-sim}

Let $Q=\{q_1,\ldots,q_k\}$ be a set of $k$ sequences. By Observation \ref{main_q} 
only the first and the last item of each type in every $q_i\in Q$ are important
for the arcs in the  corresponding digraph. 
Next we want to analyze how we even can use subsequences defined by the first
or last item of each type.
Let $F(q_i)$ be the subsequence of $q_i$
which is obtained from $q_i$ by removing all except the first item for each type
and $L(q_i)$ be the subsequence of $q_i$
which is obtained from $q_i$ by removing all except the last item for each type.\footnote{If
for some type there is only one item in $q_i$, then this item
remains in $F(q_i)$ and in $L(q_i)$.}

\begin{example}[$F(q_i)$, $L(q_i)$]\label{ex-simp}
For the sequence 
$q_1=[b,a,c,b,a]$ 
we obtain the two subsequences 
$F(q_1)=[b,a,c]$ 
and 
$L(q_1)=[c,b,a]$.
\end{example}

The given example shows that sequence digraph $\g(\{q_1\})$ differers from the
union of the two sequence digraphs  $\g(\{F(q_1)\})$ and $\g(\{L(q_1)\})$ 
only by arc $(a,b)$. In order to formalize this difference we define
special sequences.
Sequence $q_i\in Q$ 
is {\em simple}, if there do not exist $t,t'\in\types(q_i)$
such that  
\begin{equation}
\FI(q_i,t)<\FI(q_i,t')<\LA(q_i,t)<\LA(q_i,t'). \label{eq-si}
\end{equation}
Set $Q$ is {\em simple}, if every sequence 
$q_i\in Q$ is simple. 
We define $S'_{k,\ell}$
to be the set of all sequence digraphs defined by simple sets $Q$ 
on at most $k$ sequences that contain
at most $\ell$ items of each type in $\types(Q)$.
By the definition we
know for every  two integers $k\geq 1$ and $\ell\geq 1$
the following inclusions between these graph classes.
\begin{eqnarray}
S'_{k,\ell}    &\subseteq& S_{k,\ell}  \label{prop1a-s}\\
S'_{k,1}       &   =     & S_{k,1}     \label{prop2b-s} 
\end{eqnarray}

The sequence $q_1=[b,a,c,c,b,a]$ given in Example \ref{ex-simp} 
is not simple and sequence digraph $\g(\{q_1\})$ differers from the
union of the two sequence digraphs  $\g(\{F(q_1)\})$ and $\g(\{L(q_1)\})$.
The next theorem says that this is no coincidence.

\begin{theorem}\label{le-simple} 
A sequence $q_1$ is simple if and only if $\g(\{q_1\})$  can be obtained by
the union of  $\g(\{F(q_1)\})$ and $\g(\{L(q_1)\})$.
\end{theorem}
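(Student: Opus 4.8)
The plan is to prove both implications by relating arcs of $\g(\{q_1\})$ directly to the positions $\FI(q_1,t)$ and $\LA(q_1,t)$, using Lemma \ref{fiandla-1}(\ref{fiandla-1a}). First I would record the key observation that for a single type $t$, the subsequence $F(q_1)$ keeps an item at position $\FI(q_1,t)$ and $L(q_1)$ keeps an item at position $\LA(q_1,t)$; hence, applying Lemma \ref{fiandla-1}(\ref{fiandla-1a}) to each of the three sequences $q_1$, $F(q_1)$, $L(q_1)$, we get: $(u,v)\in A(\g(\{q_1\}))$ iff $\FI(q_1,u)<\LA(q_1,v)$; $(u,v)\in A(\g(\{F(q_1)\}))$ iff $\FI(q_1,u)<\FI(q_1,v)$; and $(u,v)\in A(\g(\{L(q_1)\}))$ iff $\LA(q_1,u)<\LA(q_1,v)$. (One must be slightly careful that $F$ and $L$ preserve the relative order of first/last occurrences, which is immediate since deletion does not reorder.) So the union $\g(\{F(q_1)\})\cup\g(\{L(q_1)\})$ has arc $(u,v)$ iff $\FI(q_1,u)<\FI(q_1,v)$ or $\LA(q_1,u)<\LA(q_1,v)$, and it is always a subdigraph of $\g(\{q_1\})$ since $\FI(q_1,u)\le\LA(q_1,u)$ and $\FI(q_1,v)\le\LA(q_1,v)$ force each of these two conditions to imply $\FI(q_1,u)<\LA(q_1,v)$.

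For the forward direction, suppose $q_1$ is simple and let $(u,v)\in A(\g(\{q_1\}))$, i.e.\ $\FI(q_1,u)<\LA(q_1,v)$; I want to show $\FI(q_1,u)<\FI(q_1,v)$ or $\LA(q_1,u)<\LA(q_1,v)$. Assume for contradiction that both fail, so $\FI(q_1,v)\le\FI(q_1,u)$ and $\LA(q_1,v)\le\LA(q_1,u)$. Combining with $\FI(q_1,u)<\LA(q_1,v)$ and the trivial $\FI(q_1,u)\le\LA(q_1,u)$, $\FI(q_1,v)\le\LA(q_1,v)$, one gets the chain $\FI(q_1,v)\le\FI(q_1,u)<\LA(q_1,v)\le\LA(q_1,u)$. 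If all inequalities are strict we have exactly the forbidden pattern \eqref{eq-si} with $t=v$, $t'=u$, contradicting simplicity. The remaining work is to handle the boundary cases where some of the $\le$'s are equalities (e.g.\ $\FI(q_1,v)=\FI(q_1,u)$ is impossible since items are distinct and occupy distinct positions, so actually all the "$\le$" coming from position comparisons of distinct items are automatically strict; the only genuine equalities $\FI(q_1,t)=\LA(q_1,t)$ happen when a type has a single item). I would dispatch these cases by noting that if $u$ (or $v$) is a singleton type, the chain collapses and forces $\FI(q_1,u)<\FI(q_1,v)$ or $\LA(q_1,u)<\LA(q_1,v)$ directly, contradicting our assumption.

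For the converse, suppose $q_1$ is not simple, witnessed by types $t,t'$ with $\FI(q_1,t)<\FI(q_1,t')<\LA(q_1,t)<\LA(q_1,t')$. Then $\FI(q_1,t')<\LA(q_1,t)$ gives $(t',t)\in A(\g(\{q_1\}))$ by Lemma \ref{fiandla-1}(\ref{fiandla-1a}). But $\FI(q_1,t)<\FI(q_1,t')$ means $(t',t)\notin A(\g(\{F(q_1)\}))$, and $\LA(q_1,t)<\LA(q_1,t')$ means $(t',t)\notin A(\g(\{L(q_1)\}))$; hence $(t',t)$ is not in the union, so $\g(\{q_1\})$ strictly contains that union. (The vertex sets agree because $F$ and $L$ both retain at least one item of every type, so no types are lost.) This contradicts the assumed equality, finishing the proof.

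The main obstacle I anticipate is the careful bookkeeping of the degenerate cases in the forward direction — singleton types and the interaction between weak and strict inequalities among $\FI$ and $\LA$ values — since the clean statement of \eqref{eq-si} uses strict inequalities throughout, and one has to argue that any non-strict configuration simply cannot arise or trivially yields the desired disjunction.
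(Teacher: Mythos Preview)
Your proof is correct and takes essentially the same approach as the paper; the paper's forward direction passes through $M(q_1)$ and a four-way case split on whether each endpoint of the arc arises from a first or last occurrence, but the decisive case is exactly your contradiction, and the backward direction is identical. Your anticipated obstacle is not one: since all items are pairwise distinct and $u\neq v$, the positions $\FI(q_1,u)$ and $\FI(q_1,v)$ (respectively $\LA(q_1,u)$ and $\LA(q_1,v)$) are automatically distinct, so every weak inequality in your chain is already strict and no further case analysis is needed.
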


\begin{proof}
($\Rightarrow$)
Let $q_1$ be a simple sequence and $\g(\{q_1\})=(V,A)$ its sequence digraph. Further
let $\g(\{F(q_1)\})=(V,A')$ and $\g(\{L(q_1)\})=(V,A'')$ be the  sequence digraphs
of $F(q_1)$ and $L(q_1)$. We have to show that $A=A'\cup A''$. 

To show $A\subseteq A'\cup A''$ we consider subsequence $q'_1=M(q_1)$ 
(defined in Section \ref{SCgb}), which is also simple, defines the same
digraph as $q_1$, and  contains 
at most two items for every type. This implies that for every arc $(u,v)\in A$
vertices $u$ and $v$ correspond to the first or the last item of its type.
\begin{itemize}
\item If $(u,v)\in A$ since $\FI(q'_1,u)=b_i$  and $\FI(q'_1,v)=b_j$ for some $i<j$, then
$(u,v)\in A'$.
\item If $(u,v)\in A$ since $\LA(q'_1,u)=b_i$  and $\LA(q'_1,v)=b_j$ for some $i<j$, then
$(u,v)\in A''$.
\item If $(u,v)\in A$ since $\LA(q'_1,u)=b_i$  and $\FI(q'_1,v)=b_j$ for some $i<j$, then 
there is some $i'\leq i$ such that  $\FI(q_1,u)=b_{i'}$, which
implies that $(u,v)\in A'$. Further there is also    some $j'\geq j$ such that  $\LA(q_1,v)=b_{j'}$, which
implies that $(u,v)\in A''$.

\item If $(u,v)\in A$ since $\FI(q'_1,u)=b_i$  and $\LA(q'_1,v)=b_j$ for some $i<j$ 
we consider the following cases. If there is also some $i'$, $i\leq i' <j$, such that $\LA(q'_1,u)=b_{i'}$
then $(u,v)\in A''$. If there is also some $j'$, $i < j' \leq j$, such that $\FI(q'_1,v)=b_{j'}$
then $(u,v)\in A'$. 
If none of the two cases is fulfilled, then
$\FI(q'_1,v)<\FI(q'_1,u)<\LA(q'_1,v)< \LA(q'_1,u)$, which is not possible
within a simple sequence.
\end{itemize}
Further by the definition $\g(\{F(q_1)\})$ and $\g(\{L(q_1)\})$ are subgraphs
of $G$ and thus it holds 
that $A'\cup A''\subseteq A$.

($\Leftarrow$)
If $q_1$ is not  simple, there exist $t,t'\in\types(q_1)$
such that (\ref{eq-si}) holds for $i=1$. Thus arc $(t',t)$ is in $\g(\{q_1\})$ but
not in $\g(\{F(q_1)\})$ or $\g(\{L(q_1)\})$, which implies that
$\g(\{q_1\})$  can not be obtained by
the union of  $\g(\{F(q_1)\})$ and $\g(\{L(q_1)\})$.
\end{proof}

By the definition for every sequence $q_i$ the digraphs $\g(\{F(q_i)\})$ and $\g(\{L(q_i)\})$
both are in $S_{1,1}$ (see Theorem \ref{s11} for a precise characterization).

\begin{corollary}\label{theo-no} Let $G\in S'_{k,\ell}$, then $G$ can be obtained by
the union of $2k$ digraphs from $S_{1,1}$. 
\end{corollary}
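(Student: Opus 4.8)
The plan is to derive this directly from Theorem~\ref{le-simple} together with the characterization of $S_{1,1}$. First I would unpack the definition: if $G\in S'_{k,\ell}$, then there is a simple set $Q=\{q_1,\ldots,q_k\}$ on at most $k$ sequences with at most $\ell$ items of each type such that $G=\g(Q)$. Since the sequence digraph of a union of sequences is the union of the individual sequence digraphs (i.e.\ $\g(Q)=(V,A)$ with $V=\bigcup_{i=1}^k \types(q_i)$ and $A=\bigcup_{i=1}^k A_i$ where $\g(\{q_i\})=(V_i,A_i)$, as noted just before Lemma~\ref{unionsk1}), we have $G=\bigcup_{i=1}^k \g(\{q_i\})$.

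Next I would apply Theorem~\ref{le-simple} to each $q_i$: because $Q$ is simple, every $q_i$ is simple, so $\g(\{q_i\})$ is the union of $\g(\{F(q_i)\})$ and $\g(\{L(q_i)\})$. Substituting this into the previous displayed union gives that $G$ is the union of the $2k$ digraphs $\g(\{F(q_1)\}),\g(\{L(q_1)\}),\ldots,\g(\{F(q_k)\}),\g(\{L(q_k)\})$. Finally, each of $F(q_i)$ and $L(q_i)$ is a sequence containing exactly one item per type, so each $\g(\{F(q_i)\})$ and $\g(\{L(q_i)\})$ lies in $S_{1,1}$; this is exactly the remark made immediately after the proof of Theorem~\ref{le-simple}, and it can be seen directly from the definition of $S_{1,1}$ (one sequence, one item per type). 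Taking the union over $V$ on the vertex side and over all $2k$ arc sets on the arc side finishes the argument.

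There is really no serious obstacle here — this corollary is a bookkeeping consequence of Theorem~\ref{le-simple}. The only point that deserves a sentence of care is that ``union'' of digraphs on possibly different vertex sets is meant in the natural sense (union of vertex sets, union of arc sets), and that although the $2k$ sub-digraphs need not have pairwise disjoint vertex sets, their union still recovers $G$ exactly because each $\g(\{F(q_i)\})$ and $\g(\{L(q_i)\})$ is an induced (indeed spanning, on its own vertex set) subdigraph of $\g(\{q_i\})$, hence a subdigraph of $G$, so no spurious arcs are introduced. Thus the proof is a two-line chain of inclusions.

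\begin{proof}
Let $G\in S'_{k,\ell}$, say $G=\g(Q)$ for a simple set $Q=\{q_1,\ldots,q_k\}$. As the sequence digraph of a set of sequences is the union of the sequence digraphs of its members, $G=\bigcup_{i=1}^{k}\g(\{q_i\})$. Since $Q$ is simple, each $q_i$ is simple, so by Theorem~\ref{le-simple} we have $\g(\{q_i\})=\g(\{F(q_i)\})\cup\g(\{L(q_i)\})$ for every $i$. Hence
\[
G=\bigcup_{i=1}^{k}\bigl(\g(\{F(q_i)\})\cup\g(\{L(q_i)\})\bigr),
\]
a union of $2k$ digraphs. Finally, each of $F(q_i)$ and $L(q_i)$ contains exactly one item of each of its types, so $\g(\{F(q_i)\}),\g(\{L(q_i)\})\in S_{1,1}$ (cf.\ Theorem~\ref{s11}). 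Thus $G$ is the union of $2k$ digraphs from $S_{1,1}$.
\end{proof}
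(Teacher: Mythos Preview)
Your proof is correct and follows exactly the approach the paper intends: the corollary is immediate from Theorem~\ref{le-simple} applied to each simple $q_i$, together with the remark (stated just before the corollary) that $\g(\{F(q_i)\}),\g(\{L(q_i)\})\in S_{1,1}$, and the decomposition $\g(Q)=\bigcup_i \g(\{q_i\})$. You have simply spelled out the two-line derivation the paper leaves implicit.
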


In order to generalize this result to arbitrary sequences it remains to 
add the missing arcs  $(t',t)$ for every $t,t'\in\types(q_1)$
such that (\ref{eq-si}) holds by using further sequences. This is not
possible within one additionally sequence for each non-simple sequence.

%


%


\section{Directed Path-width of Sequence Digraphs}\label{sec-problems}

\subsection{Directed Path-width}\label{sec-dp}

According to Bar{\'a}t \cite{Bar06}, the notion of directed path-width was
introduced by Reed, Seymour, and Thomas around 1995 and relates to directed
tree-width introduced by Johnson, Robertson, Seymour, and Thomas in
\cite{JRST01}. 
%
%
A {\em directed path-decomposition} of a digraph $G=(V,A)$
is a sequence $(X_1, \ldots, X_r)$ of subsets of $V$, called {\em bags},  such 
that the following three conditions hold true.
\begin{enumerate}[(dpw-1)]
\item $X_1 \cup \ldots \cup X_r ~=~ V$.
\item For each $(u,v) \in A$ there is a pair $i \leq j$ such that
  $u \in X_i$ and $v \in X_j$.
\item  If $u \in X_i$ and $u \in X_j$ for some  $u\in V$ and two 
indices $i,j$ with $i \leq j$, then $u \in X_\ell$ for all indices $\ell$ 
with $i \leq \ell \leq j$.
\end{enumerate}
The {\em width} of a directed path-decomposition ${\cal X}=(X_1, \ldots, X_r)$ 
is $$\max_{1 \leq i \leq r} |X_i|-1.$$ The {\em directed path-width} of $G$,
$\dpw(G)$ for short, is 
the smallest integer $w$ such that there is a directed path-de\-com\-po\-sition for 
$G$ of width $w$. 

A  directed path-decomposition ${\mathcal X} =(X_1, \ldots, X_r)$  is called {\em nice}, if the symmetric difference
of $X_{i-1}$ and $X_i$ contains exactly one element. A bag
$X_i = X_{i-1} \cup \{t\}$, $t \not\in X_{i-1}$, introduces $t$ and
is called {\em introduce bag}. A bag $X_i = X_{i-1} - \{t\}$,
$t \in X_{i-1}$, forgets $t$ and is called {\em forget bag}. 
By the definition of a path-decomposition, within a
nice directed path-decomposition  every graph vertex
is introduced and forgotten exactly once. Thus every nice 
path-decomposition has $r=2|V|+1$ bags.
It is not hard to see that any  directed path-decomposition can be refined into a
nice  directed path-decomposition without increasing the 
width \cite{Bod97}.

There is a close relation between the directed path-width of 
some digraph and the  (undirected)  path-width (see \cite{RS83}) 
of its underlying undirected graph.

\begin{lemma}[Lemma 1 of \cite{Bar06}]\label{le-c-bi}
Let $G$ be some  complete bioriented digraph, then $\dpw(G)= \pw(\un(G))$.
\end{lemma}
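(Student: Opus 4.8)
The plan is to show both inequalities $\dpw(G) \le \pw(\un(G))$ and $\pw(\un(G)) \le \dpw(G)$ for a complete bioriented digraph $G = \overleftrightarrow{H}$, where $H = \un(G)$. The key observation is that, because every edge $\{u,v\}$ of $H$ corresponds to \emph{both} arcs $(u,v)$ and $(v,u)$ in $G$, condition (dpw-2) for a directed path-decomposition becomes symmetric: for an arc $(u,v)$ we need some $i \le j$ with $u \in X_i$, $v \in X_j$, and for $(v,u)$ we need some $i' \le j'$ with $v \in X_{i'}$, $u \in X_{j'}$; together these force a single bag to contain both $u$ and $v$ (take the overlap of the intervals $[i,j]$ and $[j',i']$, which is nonempty and, by condition (dpw-3), each of $u$ and $v$ occupies a contiguous block of bags, so the two blocks must intersect). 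Hence (dpw-2) for $G$ is equivalent to the usual path-decomposition edge condition for $H$: every edge of $H$ has both endpoints together in some bag.

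First I would prove $\dpw(G) \le \pw(\un(G))$: take an optimal (undirected) path-decomposition $(X_1,\dots,X_r)$ of $H$ of width $\pw(H)$. Conditions (pw-1) and (pw-3) for undirected path-decompositions are literally conditions (dpw-1) and (dpw-3). For (dpw-2), given any arc $(u,v) \in A$, the edge $\{u,v\}$ lies in $E(H)$, so some bag $X_i$ contains both $u$ and $v$; then $i \le i$ witnesses (dpw-2). Thus the same sequence is a directed path-decomposition of $G$ of the same width, giving $\dpw(G) \le \pw(\un(G))$.

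For the reverse inequality, I would take an optimal directed path-decomposition $(X_1,\dots,X_r)$ of $G$ of width $\dpw(G)$ and argue it is already an undirected path-decomposition of $H$. Again (dpw-1) and (dpw-3) give (pw-1) and (pw-3) directly. For (pw-2), let $\{u,v\} \in E(H)$; then both $(u,v)$ and $(v,u)$ lie in $A$. Applying (dpw-2) to $(u,v)$ gives $i \le j$ with $u \in X_i$, $v \in X_j$; applying it to $(v,u)$ gives $i' \le j'$ with $v \in X_{i'}$, $u \in X_{j'}$. By (dpw-3) the set of indices $\ell$ with $u \in X_\ell$ is an interval $I_u \ni i, j'$, and the set with $v \in X_\ell$ is an interval $I_v \ni j, i'$. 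Since $i \le j$ and $i' \le j'$, a short case analysis on the relative order of these indices shows $I_u \cap I_v \ne \emptyset$ (e.g. if $j < i'$ then $j \in I_u$ forces $j \in I_u \cap I_v$ because $i \le j \le j'$ puts $j \in I_u$; the symmetric case is analogous). So some bag contains both $u$ and $v$, establishing (pw-2). Hence $\pw(\un(G)) \le \dpw(G)$, and combining the two bounds yields $\dpw(G) = \pw(\un(G))$. The only mildly delicate point is this interval-intersection case analysis; everything else is a direct translation of the definitions. \eproof
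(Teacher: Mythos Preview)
Your proof is correct. The paper does not supply its own proof of this lemma: it is quoted as Lemma~1 of Bar{\'a}t~\cite{Bar06} and used as a black box, so there is no in-paper argument to compare against. What you wrote is the standard argument and matches what one finds in Bar{\'a}t's paper.

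One small remark: your interval-intersection step is correct but the sample case you sketch (``if $j<i'$ \ldots; the symmetric case is analogous'') is a bit loose as written. A cleaner way to finish is by contradiction: if $I_u$ and $I_v$ are disjoint intervals, then either $\max I_u < \min I_v$ or $\max I_v < \min I_u$. In the first case $j' \le \max I_u < \min I_v \le i'$, contradicting $i' \le j'$; in the second case $j \le \max I_v < \min I_u \le i$, contradicting $i \le j$. This replaces the case analysis entirely and removes the only ``mildly delicate point'' you flagged.
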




Determining whether the (undirected) path-width of some given (undirected) graph  is 
at most some given value $w$ is NP-complete \cite{KF79} 
even for bipartite graphs, complements
of bipartite graphs \cite{ACP87}, 
chordal graphs \cite{Gus93},
bipartite distance hereditary graphs \cite{KBMK93},  
and planar graphs with maximum vertex
degree 3 \cite{MS88}. 
Lemma \ref{le-c-bi} implies
that determining whether the directed path-width of some given digraph  is 
at most some given value $w$ is NP-complete even for digraphs whose underlying 
graphs lie in the mentioned classes.  For example 
determining whether the directed path-width of some given
digraph with maximum semi-degree $\Delta^0(G)=\max\{\Delta^-(D),\Delta^+(D)\}\leq 3$
is  at most some given value $w$ is NP-complete, which will be useful in 
Proposition \ref{hard-dpw-deg}.
While undirected path-width can be solved by an FPT-algorithm \cite{Bod96}, 
the existence of such an algorithm for directed path-width is still open.
%
The directed path-width of a digraph $G=(V,A)$ can be computed in time 
$\bigo(\nicefrac{|A|\cdot |V|^{2\dpw(G)}}{(\dpw(G)-1)!})$ 
by \cite{KKKTT16} and in time 
$\bigo(\dpw(G)\cdot|A|\cdot |V|^{2\dpw(G)})$ 
by \cite{Nag12}. This leads to  XP-algorithms
for directed path-width w.r.t.~the standard parameter
and implies that for each constant $w$, it is decidable in polynomial time whether a given 
digraph has directed path-width at most $w$. 
Further in \cite{KKT15} it is shown how to decide whether the directed
path-width of an $\ell$-semicomplete digraph is at most $w$ 
in time $(\ell+2w+1)^{2w}\cdot |V|^{\bigo(1)}$.

The next lemma follows by the definition of converse digraphs and
path-decompositions.

\begin{lemma}\label{le-pw} Let $G$ be a digraph.
Sequence $(X_1, \ldots, X_r)$ is a directed path-decomposition 
for $G$ if and only if sequence $(X_r, \ldots, X_1)$ 
is a directed path-decomposition of $\con G$. 
\end{lemma}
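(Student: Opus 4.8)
Lemma \ref{le-pw}: Let $G$ be a digraph. The sequence $(X_1, \ldots, X_r)$ is a directed path-decomposition for $G$ if and only if $(X_r, \ldots, X_1)$ is a directed path-decomposition of $\con G$.

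The plan is to verify the three defining conditions (dpw-1), (dpw-2), (dpw-3) of a directed path-decomposition directly, exploiting the symmetry of the claim: since $\con(\con G) = G$ and reversing a sequence twice gives back the original sequence, it suffices to prove one direction, say the ``only if'' direction, and the converse follows by applying it to $\con G$ with the reversed sequence. So I would first make that reduction explicit, then fix a directed path-decomposition ${\cal X} = (X_1, \ldots, X_r)$ of $G$ and set ${\cal X}' = (X_r, \ldots, X_1)$, writing $X'_i = X_{r+1-i}$ for $1 \le i \le r$, and check the three conditions for ${\cal X}'$ as a decomposition of $\con G$.

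For the details: condition (dpw-1) is immediate, since $\{X'_1, \ldots, X'_r\} = \{X_1, \ldots, X_r\}$ as a collection of sets, so their union is still $V$, and $\con G$ has the same vertex set as $G$. Condition (dpw-3), the interval/consecutiveness property, is also purely about the sequence of bags and is preserved under reversal: if $u \in X'_i$ and $u \in X'_j$ with $i \le j$, then $u \in X_{r+1-j}$ and $u \in X_{r+1-i}$ with $r+1-j \le r+1-i$, so by (dpw-3) for ${\cal X}$ we get $u \in X_\ell$ for all $\ell$ with $r+1-j \le \ell \le r+1-i$, which re-indexes to $u \in X'_m$ for all $m$ with $i \le m \le j$. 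The only condition that actually uses the arc reversal is (dpw-2): an arc $(u,v) \in A(\con G)$ means $(v,u) \in A(G)$, so there is a pair $i \le j$ with $v \in X_i$ and $u \in X_j$; setting $i' = r+1-j$ and $j' = r+1-i$ we have $i' \le j'$, $u \in X_{j} = X'_{i'}$ and $v \in X_{i} = X'_{j'}$, which is exactly condition (dpw-2) for the arc $(u,v)$ in ${\cal X}'$. This completes the ``only if'' direction, and as noted the ``if'' direction follows by symmetry.

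There is no real obstacle here — the lemma is essentially a bookkeeping observation — so the main thing to be careful about is the index arithmetic ($r+1-i$ substitutions) and making sure the direction of the inequality $i \le j$ is correctly flipped and then flipped back, so that (dpw-2), which is the only ``directed'' (order-sensitive) axiom, comes out right. I would present the argument compactly, emphasizing that (dpw-1) and (dpw-3) are order-insensitive and only (dpw-2) interacts with taking the converse, and invoke $\con(\con G) = G$ to get the reverse implication for free.
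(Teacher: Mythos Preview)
Your proof is correct and is essentially the same approach the paper has in mind: the paper does not give a detailed proof of this lemma at all, only the remark that it ``follows by the definition of converse digraphs and path-decompositions,'' and your argument is precisely the explicit verification of that claim. Your observation that only (dpw-2) is order-sensitive, together with the $\con(\con G)=G$ symmetry for the converse implication, is exactly the right way to unpack the paper's one-line justification.
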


\begin{lemma}\label{le-pw2} 
Let $G$ be some digraph, then $\dpw(G)=\dpw(\con G)$. 
\end{lemma}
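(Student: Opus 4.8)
The plan is to derive this immediately from Lemma~\ref{le-pw}, which already establishes the bijection between directed path-decompositions of $G$ and those of $\con G$ via reversal of the sequence of bags. First I would let $w=\dpw(G)$ and take an optimal directed path-decomposition $(X_1,\ldots,X_r)$ of $G$ of width $w$. By Lemma~\ref{le-pw}, the reversed sequence $(X_r,\ldots,X_1)$ is a directed path-decomposition of $\con G$. Since reversing the order of the bags does not change the multiset of bags, its width is still $\max_{1\leq i\leq r}|X_i|-1 = w$, so $\dpw(\con G)\leq w = \dpw(G)$.

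For the reverse inequality I would apply the same argument to $\con G$ in place of $G$, using the fact that $\con(\con G)=G$: starting from an optimal directed path-decomposition of $\con G$ of width $\dpw(\con G)$, Lemma~\ref{le-pw} yields a directed path-decomposition of $\con(\con G)=G$ of the same width, hence $\dpw(G)\leq\dpw(\con G)$. Combining the two inequalities gives $\dpw(G)=\dpw(\con G)$.

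There is essentially no obstacle here; the only thing to be careful about is the observation that reversing a finite sequence preserves the maximum bag size, which is what makes the widths equal rather than merely comparable. Everything else is a direct invocation of the previously stated lemma and the involutive property of the converse operation.

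\begin{proof}
Let $w=\dpw(G)$ and let $(X_1,\ldots,X_r)$ be a directed path-decomposition of $G$ of width $w$. By Lemma~\ref{le-pw}, the sequence $(X_r,\ldots,X_1)$ is a directed path-decomposition of $\con G$. Its width equals $\max_{1\leq i\leq r}|X_i|-1=w$, since reversing the order of the bags leaves the collection of bags unchanged. Hence $\dpw(\con G)\leq w=\dpw(G)$.

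Applying the same reasoning to $\con G$ and using $\con(\con G)=G$, from an optimal directed path-decomposition of $\con G$ we obtain, by Lemma~\ref{le-pw}, a directed path-decomposition of $G$ of the same width, so $\dpw(G)\leq\dpw(\con G)$. Together with the first inequality this yields $\dpw(G)=\dpw(\con G)$.
\end{proof}
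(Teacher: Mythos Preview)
Your proof is correct and is exactly the intended argument: the paper states Lemma~\ref{le-pw2} without proof, clearly meaning it as an immediate corollary of Lemma~\ref{le-pw}, and your derivation via reversing an optimal decomposition and invoking $\con(\con G)=G$ is precisely that.
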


The directed path-width 
can change when taking the complement digraph, this is not possible
if we restrict to tournaments.

\begin{lemma}\label{pw-g-gc}
For every tournament $G$ it holds $\dpw(G)=\dpw(\co G)$.
\end{lemma}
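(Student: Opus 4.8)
The plan is to use the previously established fact (Lemma~\ref{rel-compl-c}) that for a tournament $G$ the complement digraph $\co G$ coincides with the converse digraph $\con G$. Given this, the claim $\dpw(G)=\dpw(\co G)$ reduces immediately to $\dpw(G)=\dpw(\con G)$, which is exactly Lemma~\ref{le-pw2}. So the argument is a two-line chaining of earlier results.

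More concretely, first I would observe that $G$ is a tournament, so between every two distinct vertices $u,v$ exactly one of $(u,v)$, $(v,u)$ belongs to the arc set. By Lemma~\ref{rel-compl-c} this gives $\co G=\con G$ as digraphs on the same vertex set. Second, I would invoke Lemma~\ref{le-pw2}, which states $\dpw(G)=\dpw(\con G)$ for \emph{every} digraph $G$; that lemma in turn rests on Lemma~\ref{le-pw}, the observation that reversing a directed path-decomposition $(X_1,\ldots,X_r)$ yields a valid directed path-decomposition $(X_r,\ldots,X_1)$ of the converse digraph of the same width (conditions (dpw-1) and (dpw-3) are symmetric under reversal, and (dpw-2) becomes the condition for $\con G$ since the roles of $i\le j$ and the arc orientation both flip). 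Combining, $\dpw(\co G)=\dpw(\con G)=\dpw(G)$.

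I do not anticipate a real obstacle here: the entire content has been factored into Lemma~\ref{rel-compl-c} (the tournament characterization $\co G=\con G$) and Lemma~\ref{le-pw2} (path-width invariance under converse). The only thing to be careful about is that Lemma~\ref{rel-compl-c} is an ``if and only if'' statement about the equality of digraphs, and one should apply only the direction ``$G$ tournament $\Rightarrow \co G=\con G$''; once that equality of digraphs is in hand, $\dpw$ is of course the same for both, and the rest is the general Lemma~\ref{le-pw2}. Thus the proof is essentially: ``By Lemma~\ref{rel-compl-c}, $\co G=\con G$. By Lemma~\ref{le-pw2}, $\dpw(\con G)=\dpw(G)$. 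Hence $\dpw(\co G)=\dpw(G)$.''
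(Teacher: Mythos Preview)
Your proposal is correct and matches the paper's own proof essentially verbatim: the paper also chains Lemma~\ref{le-pw2} ($\dpw(G)=\dpw(\con G)$) with Lemma~\ref{rel-compl-c} ($\co G=\con G$ for tournaments) to conclude $\dpw(G)=\dpw(\co G)$. There is nothing to add.
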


\begin{proof}
By Lemma \ref{le-pw2} we know that $\dpw(G)=\dpw(\con G)$ and for
tournaments $G$  by \ref{rel-compl-c} it 
holds $\dpw(\con G) = \dpw(\co G)$.
\end{proof}

\subsection{Hardness of Directed Path-width on Sequence Digraphs}\label{sec-h}

Next we give some conditions on the sequences in $Q$
such that for the corresponding digraph $\g(Q)$ 
computing its directed path-width is NP-hard.

\begin{proposition}\label{th-hard-sq-g}
Given some set $Q$ on $k$ sequences  such that
$n_i=2$ for $1\leq i \leq k$ and some integer $p$, then the problem
of deciding whether $\dpw(\g(Q))\leq p$ is NP-complete.
\end{proposition}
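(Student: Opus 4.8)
The plan is to establish both membership in NP and NP-hardness. Membership in NP is routine: given a set $Q$ of $k$ length-$2$ sequences, the digraph $\g(Q)=(V,A)$ has $|V|=|\types(Q)|\leq 2k$ vertices and can be constructed in polynomial time by Proposition~\ref{comp-sg}. A polynomial-size certificate for a yes-instance is a nice directed path-decomposition of $\g(Q)$ of width at most $p$; as recalled in Section~\ref{sec-dp}, every directed path-decomposition refines into a nice one of the same width, a nice decomposition has exactly $2|V|+1$ bags, and checking conditions (dpw-1)--(dpw-3) together with the width bound takes polynomial time.

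For NP-hardness I would reduce from the problem of deciding, for a digraph $G$ and an integer $p$, whether $\dpw(G)\leq p$. By Lemma~\ref{le-c-bi} this problem is NP-complete, since $\dpw(\overleftrightarrow{H})=\pw(\un(\overleftrightarrow{H}))=\pw(H)$ and deciding undirected path-width is NP-complete already for (connected) planar graphs of maximum vertex degree~$3$. Given an instance $(G,p)$ with $G=(V,A)$, I would set $Q:=\q(G)$, the sequence system of $G$ from Section~\ref{sec-hardness}. By definition $Q$ consists of $|A|$ sequences, each of length exactly~$2$, so $Q$ is a legal instance of the problem in question, and $Q$ is computable from $G$ in polynomial time. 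By Observation~\ref{prop} we have $\g(Q)=\g(\q(G))=G$, hence $\dpw(\g(Q))=\dpw(G)$, so $\dpw(\g(Q))\leq p$ if and only if $\dpw(G)\leq p$; this is the desired polynomial many-one reduction.

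The only step needing care is a boundary technicality rather than a genuine obstacle: $\types(\q(G))$ is the set of non-isolated vertices of $G$, so the identity $\g(\q(G))=G$ of Observation~\ref{prop} presupposes that $G$ has no isolated vertex. This costs nothing here, since the hardness instances $\overleftrightarrow{H}$ coming from connected $H$ on at least two vertices have no isolated vertex; alternatively, one observes that directed path-width is invariant under adding or deleting isolated vertices (they can be placed in extra leading singleton bags), so $\dpw(\g(\q(G)))=\dpw(G)$ holds in any case, the edgeless case being decidable directly. I also note that, because $\g(\q(G))=G$, the reduction keeps the restriction $\Delta^0(G)\leq 3$ intact, which is precisely what makes it reusable inside the proof of Proposition~\ref{hard-dpw-deg}.
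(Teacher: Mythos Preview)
Your proof is correct and follows essentially the same approach as the paper: NP membership via nice directed path-decompositions of polynomial size, and NP-hardness by the polynomial reduction $(G,p)\mapsto(\q(G),p)$ with correctness from Observation~\ref{prop}. Your treatment is in fact more careful than the paper's, which does not explicitly address the isolated-vertex technicality you flag.
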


\begin{proof}
The stated problem is  in NP since we can restrict to nice directed
path-decompositions having a polynomial number of bags. 
To show the NP-hardness by a
reduction from the directed path-width problem we transform instance 
$(G,p)$ in linear time into instance $(\q(G),p)$ for the
stated problem. The correctness follows by Observation  \ref{prop}.
\end{proof}

\begin{proposition}\label{hard-dpw-deg}
Given some set $Q$ such that  
$d_Q=3$ or $c_Q=5$ and some integer $p$, then the problem
of deciding whether  $\dpw(\g(Q))\leq p$ is NP-complete.
\end{proposition}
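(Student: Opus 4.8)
The plan is to combine the NP-hardness of undirected path-width on graphs of maximum degree $3$ with the encoding of complete bioriented digraphs by short sequences (Lemma~\ref{lemma-c-d}) and the identity $\dpw(\overleftrightarrow{H})=\pw(\un(\overleftrightarrow{H}))$ from Lemma~\ref{le-c-bi}.

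Membership in NP is routine: any directed path-decomposition of width at most $p$ can be refined, without increasing the width, into a nice one with $2|\types(Q)|+1$ bags, each of size at most $p+1$, hence of total size polynomial in the length of $Q$; since $\g(Q)$ is computable from $Q$ in polynomial time (Proposition~\ref{comp-sg}), conditions (dpw-1)--(dpw-3) are verifiable in polynomial time.

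For the hardness I would reduce from the problem of deciding, given an undirected graph $H$ with $\Delta(H)\le 3$ and an integer $p$, whether $\pw(H)\le p$; this is NP-complete \cite{MS88}, even for planar $H$. If $\Delta(H)\le 2$ then $H$ is a disjoint union of paths and cycles, so $\pw(H)\le 2$ and the instance is decided outright; thus we may assume $\Delta(H)=3$. Given such an $H$, form the complete bioriented digraph $\overleftrightarrow{H}$, which satisfies $\max(\Delta^-(\overleftrightarrow{H}),\Delta^+(\overleftrightarrow{H}))=\Delta(H)=3$. Lemma~\ref{lemma-c-d}, applied with $d=3\ge 2$, then yields a set $Q$ with $d_Q\le 3$, $c_Q\le 2d-1=5$ and $\g(Q)=\overleftrightarrow{H}$; the set $Q$ is the one produced by the explicit construction in the proof of that lemma (start from $\q(\overleftrightarrow{H})$, merge each pair $[u,v]$, $[v,u]$ into a single sequence $[u,v,u]$, and apply the single swap that lowers the maximal multiplicity), which runs in polynomial time. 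Since every vertex of degree $3$ in $H$ is an endpoint of three edges and hence occurs in three of these sequences, we in fact get $d_Q=3$.

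Finally, by Lemma~\ref{le-c-bi} and $\un(\overleftrightarrow{H})=H$ we have $\dpw(\g(Q))=\dpw(\overleftrightarrow{H})=\pw(\un(\overleftrightarrow{H}))=\pw(H)$, so $\dpw(\g(Q))\le p$ if and only if $\pw(H)\le p$, and $(Q,p)$ is computed from $(H,p)$ in polynomial time. This gives NP-completeness already for sets $Q$ with $d_Q=3$ and, simultaneously, $c_Q\le 5$. The one point to watch is to invoke Lemma~\ref{lemma-c-d} and not the weaker Lemma~\ref{lemma-c-da}: it is exactly because $\overleftrightarrow{H}$ is complete bioriented that the distribution bound drops from $2d=6$ to $d=3$ and the multiplicity bound from $6$ to $2d-1=5$, matching the parameter values in the statement; no genuinely hard step occurs, and checking these bounds is the only delicate part.
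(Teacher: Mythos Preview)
Your proof is correct and follows essentially the same route as the paper: reduce from (undirected) path-width on graphs of maximum degree~$3$, pass to the complete biorientation via Lemma~\ref{le-c-bi}, and encode it as a set of sequences using Lemma~\ref{lemma-c-d}; the paper merely phrases the source problem as directed path-width on digraphs of maximum semi-degree~$3$, which is the same thing once unpacked. One tiny point (also glossed over in the paper's proof): you obtain $c_Q\le 5$ rather than $c_Q=5$ exactly, but if the equality is needed you can pad any sequence by duplicating an item next to itself, which by Observation~\ref{main-con} leaves $\g(Q)$ and $d_Q$ unchanged.
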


\begin{proof}
To show the NP-hardness by a
reduction from the directed path-width problem for digraphs $G$ such that
$\max(\Delta^-(G),\Delta^+(G))\leq 3$, we transform instance 
$(G,p)$ in linear time into instance $(\q(G),p)$ for the
stated problem.  The correctness follows by Lemma  \ref{lemma-c-d}.
\end{proof}

Similar results can be shown for related parameters such as
directed cut-width \cite{CFS12}.

\subsection{Polynomial Cases of Directed Path-width on Sequence Digraphs}\label{sec-p}

Next we consider the directed path-width of sequence digraphs for  $k=1$ or $\ell=1$.

\begin{proposition}\label{th-pw1}
Let $G\in S_{k,1}$, then $\dpw(G)=0$.
\end{proposition}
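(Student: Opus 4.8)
The plan is to use the structural characterization of $S_{k,1}$ established earlier in the paper together with the well-known fact that transitive tournaments have directed path-width $0$. By Theorem~\ref{s1k} (specifically the equivalence of \eqref{s1ka} and \eqref{s1kb}), every digraph $G\in S_{k,1}$ is the disjoint union of $k$ transitive tournaments $T_1,\ldots,T_k$. Hence it suffices to prove two things: first, that a single transitive tournament has directed path-width $0$; and second, that the disjoint union of digraphs each of directed path-width $0$ again has directed path-width $0$.

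For the first step, I would exhibit an explicit directed path-decomposition of width $0$ of a transitive tournament $T$ on vertex set $\{v_1,\ldots,v_n\}$ enumerated so that $(v_i,v_j)\in A(T)$ exactly when $i<j$ (such an ordering exists since $T$ is acyclic, cf.\ Theorem~\ref{s11}). Take the sequence of singleton bags $(X_1,\ldots,X_n)$ with $X_i=\{v_i\}$. Condition (dpw-1) is immediate; condition (dpw-3) holds because each vertex appears in exactly one bag; and for condition (dpw-2), given an arc $(v_i,v_j)$ we have $i<j$, so the pair $i\le j$ witnesses $v_i\in X_i$ and $v_j\in X_j$. The width is $\max_i |X_i|-1 = 0$.

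For the second step, if $G = G_1 \oplus \cdots \oplus G_k$ (disjoint union) with each $G_m$ admitting a directed path-decomposition $(X^m_1,\ldots,X^m_{r_m})$ of width $0$, then concatenating them as $(X^1_1,\ldots,X^1_{r_1},X^2_1,\ldots,X^k_{r_k})$ yields a valid directed path-decomposition of $G$ of width $0$: (dpw-1) and (dpw-3) are clear since the vertex sets are disjoint and each component's decomposition already satisfies them, and (dpw-2) holds because $G$ has no arcs between distinct components, so every arc lies within a single block of the concatenation where the original witness survives. Combining the two steps gives $\dpw(G)\le 0$, and since directed path-width is non-negative, $\dpw(G)=0$.

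The argument is essentially routine; there is no real obstacle once Theorem~\ref{s1k} is invoked. The only point requiring a little care is checking that disjoint union does not increase directed path-width — one must make sure that the concatenated decomposition still satisfies the "interval" condition (dpw-3), which is why the disjointness of the component vertex sets is used explicitly. Alternatively, one could skip the explicit construction and note that a digraph $G$ has $\dpw(G)=0$ if and only if $G$ is a directed acyclic graph whose vertices admit a total order extending the reachability order into a single chain, i.e.\ exactly the disjoint unions of transitive tournaments; but giving the explicit width-$0$ decomposition is cleaner and self-contained.
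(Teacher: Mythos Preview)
Your proof is correct and follows essentially the same route as the paper: both invoke Theorem~\ref{s1k} to decompose $G$ into transitive tournaments and then argue that each piece has directed path-width $0$; the paper does the latter by citing acyclicity (Theorem~\ref{s11}) and the standard fact that DAGs have directed path-width $0$, whereas you give the explicit singleton-bag decomposition and also spell out that disjoint union preserves width $0$, which the paper leaves implicit. One small slip in your closing aside: the characterization ``$\dpw(G)=0$ iff $G$ is a disjoint union of transitive tournaments'' is false (for instance $\overrightarrow{P_3}$ is a DAG with $\dpw=0$ but is not a tournament); the correct equivalence is simply $\dpw(G)=0$ iff $G$ is a DAG.
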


\begin{proof}
By Theorem \ref{s1k}
every digraph in $S_{k,1}$ is the disjoint union of $k$
digraphs in $S_{1,1}$. By Theorem \ref{s11} every digraph in $S_{1,1}$ 
is acyclic and thus has directed path-width $0$.
\end{proof}

On the other hand, there are no constants $k$,$\ell$ 
such that every digraph of directed path-width $0$ is in $S_{k,\ell}$. 
This can be verified 
by an orientation $T'$ of a tree $T$ on $k$ edges and $\Delta(T)=\ell$.
Then $\dpw(T')=0$ and by Lemma \ref{c3} it holds  $T'\in S_{k,\ell}$
but $T'\not \in S_{k',\ell'}$ for $k'<k$ or $\ell'<\ell$.

For digraphs in $S_{1,2}$ the directed path-width can be arbitrary 
large, since this class includes all bidirectional complete 
digraphs. We can compute this value as follows.
Let $Q=\{q\}$. For type $t \in \types(q)$ let $I_t = [\FI(q,t), \LA(q,t)]$ be the
interval representing $t$, and let $I_q = \{I_t \mid t \in \types(q)\}$
be the set of all intervals for sequence $q$. Let $I(q) = (V,E)$
be the interval graph where $V = \types(q)$ and $E = \{\{u,v\} \mid
u \neq v,~ I_u \cap I_v \neq \emptyset, I_u,I_v\in I_q\}$, see Figure \ref{F03}.

%

\begin{figure}[hbtp]
\centerline{\epsfxsize=.75\textwidth \epsfbox{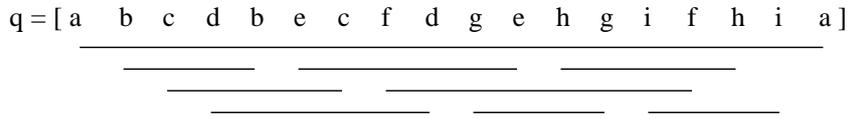}}
\caption{Sequence $q=[a,b,c,d,b,e,c,f,d,g,e,h,g,i,f,h,i,a]$ represented
  as an interval graph $I(q)$.}
\label{F03}
\end{figure}

\begin{proposition}\label{th-pw2}
Let $G\in S_{1,2}$ defined by some set $Q=\{q_1\}$ of one sequence, 
then  $\dpw(G)=\omega(I(q))-1=\pw(I(q))$.
\end{proposition}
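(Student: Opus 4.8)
The plan is to prove the two equalities $\dpw(G)=\omega(I(q_1))-1$ and $\omega(I(q_1))-1=\pw(I(q_1))$ separately. The second equality is the classical fact that the path-width of an interval graph equals its clique number minus one (indeed, for any interval graph, a path-decomposition of optimal width is obtained by sweeping a vertical line left to right and recording, at each event point, the set of intervals containing it; the maximum bag size is exactly $\omega$). Since this is standard, I would cite it or include a one-line argument. So the real content is the first equality, and here the natural strategy is to exhibit a directed path-decomposition of $G$ of width $\omega(I(q_1))-1$, and then a matching lower bound $\dpw(G)\ge \omega(I(q_1))-1$.

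For the \textbf{upper bound}, first I would invoke Observation~\ref{main_q} (or rather $M(Q)$) to assume each type has at most two items, and then read off the decomposition directly from the sequence. Concretely: let $q_1=(b_{1,1},\dots,b_{1,n_1})$ and for $j=0,1,\dots,n_1$ define the bag $X_j$ to be the set of types $t$ whose interval $I_t=[\FI(q_1,t),\LA(q_1,t)]$ straddles position $j$ and $j{+}1$, i.e.\ $\FI(q_1,t)\le j$ and $\LA(q_1,t)\ge j{+}1$ (equivalently, $t$ has been "opened" but not yet "closed" after scanning $j$ items). I would check the three path-decomposition axioms: (dpw-1) every type appears in some bag because $\FI\le\LA$; (dpw-3) connectivity is immediate since a type sits in a contiguous block of bags (those $j$ with $\FI(q_1,t)\le j<\LA(q_1,t)$); and (dpw-2), the arc condition, is the key point: if $(u,v)\in A$ then by Lemma~\ref{fiandla-1}(\ref{fiandla-1a}) we have $\FI(q_1,u)<\LA(q_1,v)$, so taking any index $j$ with $\FI(q_1,u)\le j<\LA(q_1,v)$ one can find $i\le j$ with $u\in X_i$ and $j'\ge j$ with $v\in X_{j'}$; I would spell out that $i\le j\le j'$ works. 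Finally the width of this decomposition is $\max_j |X_j|$, and $|X_j|$ is exactly the number of intervals in $I_{q_1}$ containing the "gap" between positions $j$ and $j{+}1$, so $\max_j|X_j|$ equals the maximum number of pairwise-overlapping intervals, which for interval graphs is the clique number $\omega(I(q_1))$. Hence the width is $\omega(I(q_1))-1$.

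For the \textbf{lower bound}, I would argue that $G$ contains $\overleftrightarrow{K_{\omega}}$ as a subdigraph where $\omega=\omega(I(q_1))$: a maximum clique of $I(q_1)$ corresponds to a set $T$ of $\omega$ types whose intervals share a common point $p$, meaning for every $t\in T$ we have $\FI(q_1,t)\le p\le\LA(q_1,t)$; then for any two distinct $t,t'\in T$ both orderings of first/last items occur across the point $p$, so by Lemma~\ref{fiandla-1}(\ref{fiandla-1a}) both $(t,t')$ and $(t',t)$ are arcs, i.e.\ $G[T]=\overleftrightarrow{K_\omega}$. Then $\dpw(G)\ge\dpw(\overleftrightarrow{K_\omega})$ — here I need the monotonicity of directed path-width under taking subdigraphs that preserve both arcs between a vertex pair (this follows by restricting bags, analogous to the undirected case and to Proposition~\ref{th-ind-sg}'s spirit) — and $\dpw(\overleftrightarrow{K_\omega})=\pw(\un(\overleftrightarrow{K_\omega}))=\pw(K_\omega)=\omega-1$ by Lemma~\ref{le-c-bi}. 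Combining gives $\dpw(G)\ge\omega(I(q_1))-1$, completing the proof.

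The main obstacle I expect is making the arc axiom (dpw-2) airtight under the $M(Q)$-reduction and the indexing conventions for $\FI$/$\LA$ (especially the boundary conventions $\FI=n_1{+}1$, $\LA=0$ for absent types, which here do not arise since all types are present, but the off-by-one bookkeeping between "positions $1,\dots,n_1$" and "gaps $0,\dots,n_1$" needs care). A secondary subtlety is justifying that directed path-width is monotone under the relevant notion of subdigraph for the lower bound; if a clean citation is unavailable I would give the two-line restriction argument inline.
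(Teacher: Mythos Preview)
Your approach is essentially the same as the paper's: upper bound via a sweep-line directed path-decomposition along the interval model, lower bound via the bidirectional complete subdigraph induced on a maximum set of pairwise-overlapping intervals (then invoking Lemma~\ref{le-c-bi}), and the second equality cited as the standard fact $\pw=\omega-1$ for interval graphs. The paper's proof is much terser---it declares the upper-bound decomposition ``obvious'' and states the lower bound in one line---so your version simply fills in details the paper omits.

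One small glitch in your explicit construction: with bags $X_j=\{t:\FI(q_1,t)\le j\text{ and }\LA(q_1,t)\ge j{+}1\}$, a type with a single item (which $S_{1,2}$ permits) satisfies $\FI=\LA$ and hence lies in \emph{no} bag, so your justification of (dpw-1) via ``$\FI\le\LA$'' does not go through as stated. The fix is easy---index bags by positions $j=1,\dots,n_1$ with $X_j=\{t:\FI(q_1,t)\le j\le\LA(q_1,t)\}$, which still satisfies (dpw-2) by taking $i=\FI(q_1,u)$ and $j=\LA(q_1,v)$---and since you already flagged the off-by-one bookkeeping as the place needing care, this is a technical slip rather than a conceptual gap.
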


\begin{proof}
It holds $\dpw(G)\leq\omega(I(q))-1$ by an obvious directed path-decomposition
along $I(q)$. Further for every integer $r$ the set $I(r)=\{I_t ~|~ r\in I_t\}$ defines a 
complete subgraph $K_{|I(r)|}$ in $I(q)$ 
and also a bidirectional complete subdigraph $\overleftrightarrow{K_{|I(r)|}}$ in $G$.
Thus it holds 
 $\dpw(G)\geq\omega(I(q))-1$.
The second equality holds since
the (undirected) path-width of an interval graph is equal to the size of 
a maximum clique \cite{Bod98}.
\end{proof}


In contrast to Proposition \ref{hard-dpw-deg} sets $Q$ where $d_Q=1$
can be handled in polynomial time.

\begin{proposition}\label{th-p-dq1}
Given some set $Q$ such that  
$d_Q=1$ and some integer $p$, then the problem
of deciding whether $\dpw(\g(Q))\leq p$
can be solved in time $\bigo(|\types(Q)|^2+n)$.
\end{proposition}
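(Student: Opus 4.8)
The plan is to reduce the problem to a polynomial number of independent small instances, one per weakly connected component of $\g(Q)$, and to show that each component is in fact a sequence digraph defined by a \emph{single} sequence, so that Proposition~\ref{th-pw2} applies. First I would observe the following structural fact: if $d_Q=1$, then every type $t\in\types(Q)$ occurs in exactly one sequence $q_{i(t)}\in Q$. Consequently, for any sequence $q_i$, all arcs incident to a type of $\types(q_i)$ are produced \emph{within} $q_i$ itself: an arc $(u,v)\in A$ of $\g(Q)$ only exists if some single sequence contains an item of type $u$ to the left of an item of type $v$ (Lemma~\ref{fiandl-k}(\ref{fiandla-k-a})), and since each type sits in a unique sequence this forces $i(u)=i(v)=i$. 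Hence the vertex sets $\types(q_1),\dots,\types(q_k)$ partition $V$, and there are no arcs between distinct parts. Therefore $\g(Q)$ is exactly the disjoint union $\bigoplus_{i=1}^k \g(\{q_i\})$, and each $\g(\{q_i\})\in S_{1,\ell_i}$ where $\ell_i$ is the maximum multiplicity of a type in $q_i$.

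Next I would use the fact that directed path-width is additive over disjoint unions, more precisely $\dpw(\g(Q)) = \max_{1\le i\le k}\dpw(\g(\{q_i\}))$ (a directed path-decomposition of a disjoint union can be taken as the concatenation of decompositions of the components, and restricting a decomposition of the union to a component is a decomposition of that component; this is immediate from conditions (dpw-1)--(dpw-3)). So it suffices to compute $\dpw(\g(\{q_i\}))$ for each $i$ and take the maximum. For a single sequence $q_i$, I would invoke Observation~\ref{main_q} to replace $q_i$ by the reduced sequence $M(q_i)$, which defines the same digraph and contains at most two items of each type, so $\g(\{q_i\})\in S_{1,2}$ (or $S_{1,1}$, a subcase). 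Then Proposition~\ref{th-pw2} gives $\dpw(\g(\{q_i\})) = \omega(I(M(q_i)))-1$, where $I(M(q_i))$ is the interval graph associated to $M(q_i)$; equivalently this is $\max_r |I_i(r)|-1$, the maximum number of types whose first-to-last interval covers a common position $r$, minus one.

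Finally I would account for the running time. Computing, for each sequence $q_i$, the values $\FI(q_i,t)$ and $\LA(q_i,t)$ for all $t\in\types(q_i)$ takes $\bigo(n)$ overall by one left-to-right scan of all sequences. The maximum clique size of the interval graph $I(M(q_i))$ — equivalently the maximum number of simultaneously active intervals — can be obtained by a sweep over the positions of $q_i$, incrementing a counter at each $\FI$ position and decrementing it after each $\LA$ position; this is $\bigo(n_i)$ per sequence, hence $\bigo(n)$ in total, but since we must at least touch every vertex and may organize the sweep via sorted endpoint lists, a bound of $\bigo(|\types(Q)|^2+n)$ is certainly sufficient (and matches the cost already charged in Proposition~\ref{comp-sg} for building $\g(Q)$ in the first place). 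Comparing the resulting value against $p$ is then trivial. The only subtle point, and the one I would be most careful about, is the very first step — justifying that $d_Q=1$ forces the disjoint-union decomposition and that no cross-component arcs can arise — but this follows cleanly from Lemma~\ref{fiandl-k}(\ref{fiandla-k-a}) as sketched above; everything after that is a routine assembly of Observation~\ref{main_q} and Proposition~\ref{th-pw2}.
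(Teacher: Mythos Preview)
Your proposal is correct and follows essentially the same approach as the paper: from $d_Q=1$ conclude that the type sets $\types(q_i)$ are pairwise disjoint, so $\g(Q)$ is the disjoint union of the single-sequence digraphs $\g(\{q_i\})\in S_{1,2}$, and then apply Proposition~\ref{th-pw2} to each component. The paper's proof is terser---it does not spell out the use of Lemma~\ref{fiandl-k}, Observation~\ref{main_q}, or the additivity of directed path-width over disjoint unions---but the argument is the same.
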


\begin{proof}
Let $Q=\{q_1,\ldots,q_k\}$. 
If $d_Q=1$ the vertex sets $V_i=\types(q_i)$ are disjoint. That is,
$\g(Q)$ is the disjoint union of digraphs in $S_{1,2}$
for which the directed path-width can be computed in 
time $\bigo(\sum_{i=1}^k|\types(\{q_i\})|^2+n_i)\subseteq \bigo(|\types(Q)|^2+n)$ by Proposition \ref{th-pw2}.
\end{proof}


\subsection{An XP-Algorithm for Directed Path-width}\label{sec-dpw-a}

We next give an XP-algorithm for directed path-width w.r.t.\ the parameter $k$, 
which implies that for every constant $k$ for a given set $Q$ on at most $k$ sequences the value
$\dpw(\g(Q))$ can be computed in polynomial time.
The main idea is to discover an optimal directed path-decomposition 
by scanning the $k$ sequences left-to-right and keeping in a state 
the numbers of scanned items of every sequence 
and a certain  number of active types.

Let $Q=\{q_1,\ldots,q_k\}$ be a set of $k$ sequences. Every $k$-tuple $(i_1,\ldots,i_k)$
where $0\leq i_j\leq n_j$ for $1\leq j \leq k$ is a {\em state} of $Q$.
State $(0,0, \ldots, 0)$ is the {\em initial state} and
$(n_1,\ldots,n_k)$ is the {\em final state}.
The {\em state digraph} $\s(Q)$ for a set $Q$ has a vertex for each possible state. 
There is an arc from vertex $u$
labeled by $(u_1, \ldots, u_k)$ to vertex $v$ labeled by $(v_1, \ldots, v_k)$ if
and only if $u_i = v_i - 1$ for exactly one element of the vector
and for all other elements of the vector $u_j = v_j$.
Let  $(i_1,\ldots,i_k)$ be a state of $Q$. We define 
$L(i_1,\ldots,i_k)$ to be the set of all items on the positions  $1,\ldots, i_j$ for $1\leq j\leq k$
and $R(i_1,\ldots,i_k)$ is 
the set of all items on the remaining positions $i_j+1,\ldots,n_j$
for $1\leq j\leq k$. 
Further let $M(i_1,\ldots,i_k)$ be the set of all items 
on the positions  $i_j$ for  $1\leq j\leq k$ such that there is exactly
one type of these items in $Q$.
Obviously, for every state $(i_1,\ldots,i_k)$ it holds that
\begin{itemize}
\item
$L(i_1,\ldots,i_k)\cup R(i_1,\ldots,i_k)$ leads to a disjoint partition of the
items in $Q$ and
\item
$M(i_1,\ldots,i_k)\subseteq L(i_1,\ldots,i_k)$.
\end{itemize}

Further each vertex $v$ of the state digraph is labeled by
the value $f(v)$.
This value is the number of types $t$ such that either
there is at least one item of type $t$ in $L(v)$ and at least one
item of type $t$ in $R(v)$ or there is one item of type $t$ 
in $M(v)$. Formally we define 
$\ac(v)=\{t\in\types(Q)~|~ b\in L(v), t(b)=t, b'\in R(v), t(b')=t\} \cup\{t\in\types(Q)~|~ b\in M(v), t(b)=t\}$
and $f(v)=|\ac(v)|$.
Obviously for the initial state $v$ it holds $|\ac(v)|=0$.
Since the state digraph $\s(Q)$ is a 
directed acyclic graph
we can compute all values $|\ac(v)|$ using
a topological ordering $topol$ of the vertices. 
Every arc $(u,v)$ in  $\s(Q)$ represents one item $b_{i,j}$ if item $b_{i,j-1}\not\in M(v)$ 
and two items $b_{i,j}$ and $b_{i,j-1}$ if item $b_{i,j-1}\in M(v)$  of some types $t(b_{i,j})=t$ 
and $t(b_{i,j-1})=t'$ from some 
sequence $q_j$, thus
\begin{eqnarray}
  |\ACT((i_1, \ldots, i_{j-1}, i_j+1, i_{j+1}, \ldots, i_k))|  = 
   |\ACT((i_1, \ldots, i_{j-1}, i_j, i_{j+1}, \ldots, i_k))| + c_{j}
  \label{transStep}
\end{eqnarray}
where 
\[
   c_{j} = \left\{
     \begin{array}{rll}
       1, & \multicolumn{2}{l}{\mbox{if } \FI(q_j,t) = i_j+1 \mbox{ and } \FI(q_\ell,t) > i_\ell 
              ~~ \forall ~ \ell \neq j} \mbox{ and }\\
          & \mbox{not}(\FI(q_j,t')=\LA(q_j,t')=i_j  \mbox{ and } \LA(q_\ell,t')=0  ~~ \forall ~ \ell \neq j) \\ 
       0, & \multicolumn{2}{l}{\mbox{if } \FI(q_j,t) = i_j+1 \mbox{ and } \FI(q_\ell,t) > i_\ell
              ~~ \forall ~ \ell \neq j} \mbox{ and }\\
         & \FI(q_j,t')=\LA(q_j,t')=i_j  \mbox{ and } \LA(q_\ell,t')=0  ~~ \forall ~ \ell \neq j \\
      -1, & \multicolumn{2}{l}{\mbox{if } \LA(q_j,t) = i_j+1 \mbox{ and } \LA(q_\ell,t) \leq i_\ell
              ~~ \forall ~ \ell \neq j} \mbox{ and }\\
         & \mbox{not}(\FI(q_j,t')=\LA(q_j,t')=i_j  \mbox{ and } \LA(q_\ell,t')=0  ~~ \forall ~ \ell \neq j) \\ 
   -2, &  \multicolumn{2}{l}{\mbox{if } \LA(q_j,t) = i_j+1 \mbox{ and } \LA(q_\ell,t) \leq i_\ell
              ~~ \forall ~ \ell \neq j} \mbox{ and }\\
    & \FI(q_j,t')=\LA(q_j,t')=i_j  \mbox{ and } \LA(q_\ell,t')=0  ~~ \forall ~ \ell \neq j \\
       0, & \multicolumn{2}{l}{\mbox{otherwise.}}
     \end{array}
   \right.
\]
Please remember our technical definition of $\FI(q,t)$ and $\LA(q,t)$
from Section \ref{SCgb}
for the case that $t\not\in \types(q)$.
Thus,
the calculation of value $|\ac(i_1, \ldots, i_k)|$ for the vertex
labeled $(i_1, \ldots, i_k)$ depends only on already calculated values,
which is necessary in order to use dynamic programming.\footnote{For sets $Q$ such that 
the number of items for which there is no further item of the same type in $Q$ is small,
we suggest to modify $Q$ by inserting a dummy item of the same type at the position
after such items. This does not change the sequence digraph but increases the size of the
sequence digraph but allows to make a case distinct within three instead of five cases when 
defining $c_j$.}

\begin{example}[State digraph]\label{EX5}
Let $Q =\{q_1, q_2\}$ of the sequences $q_1 = [a,b,c,b,d,a]$
and $q_2 = [e,f,e,c]$. The sequence digraph $\g(Q)$ is given in
Figure \ref{FFS} and the state digraph $\s(Q)$ is 
given in Figure \ref{FFP}.
For every
vertex $v$ of  $\s(Q)$ the vector $h(v)$ in the vertex represents the state of $v$ and  
every arc is labeled with the type of the
item, that is considered in order to obtain the next state, and  
the number $f(v)$ is given on the left of the vertex that represents state $h(v)$.
\end{example}

\begin{figure}[hbtp]
\centering
\parbox[b]{.45\textwidth}{
\centerline{\epsfxsize=50mm \epsfbox{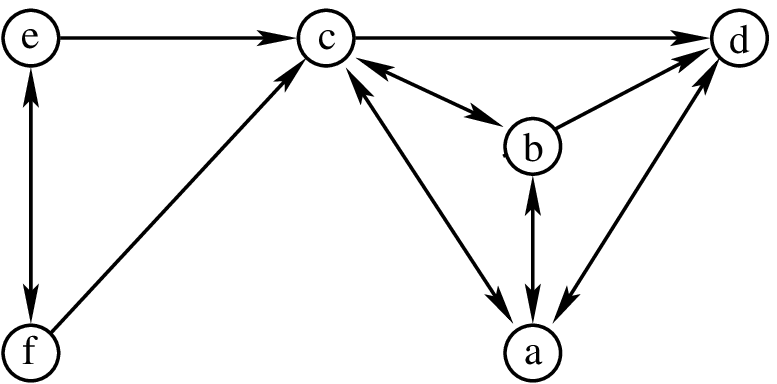}}
\caption{Sequence digraph  of Example \ref{EX5}.}
\label{FFS}
}
\hfill
\parbox[b]{.53\textwidth}{
\centerline{\epsfxsize=70mm \epsfbox{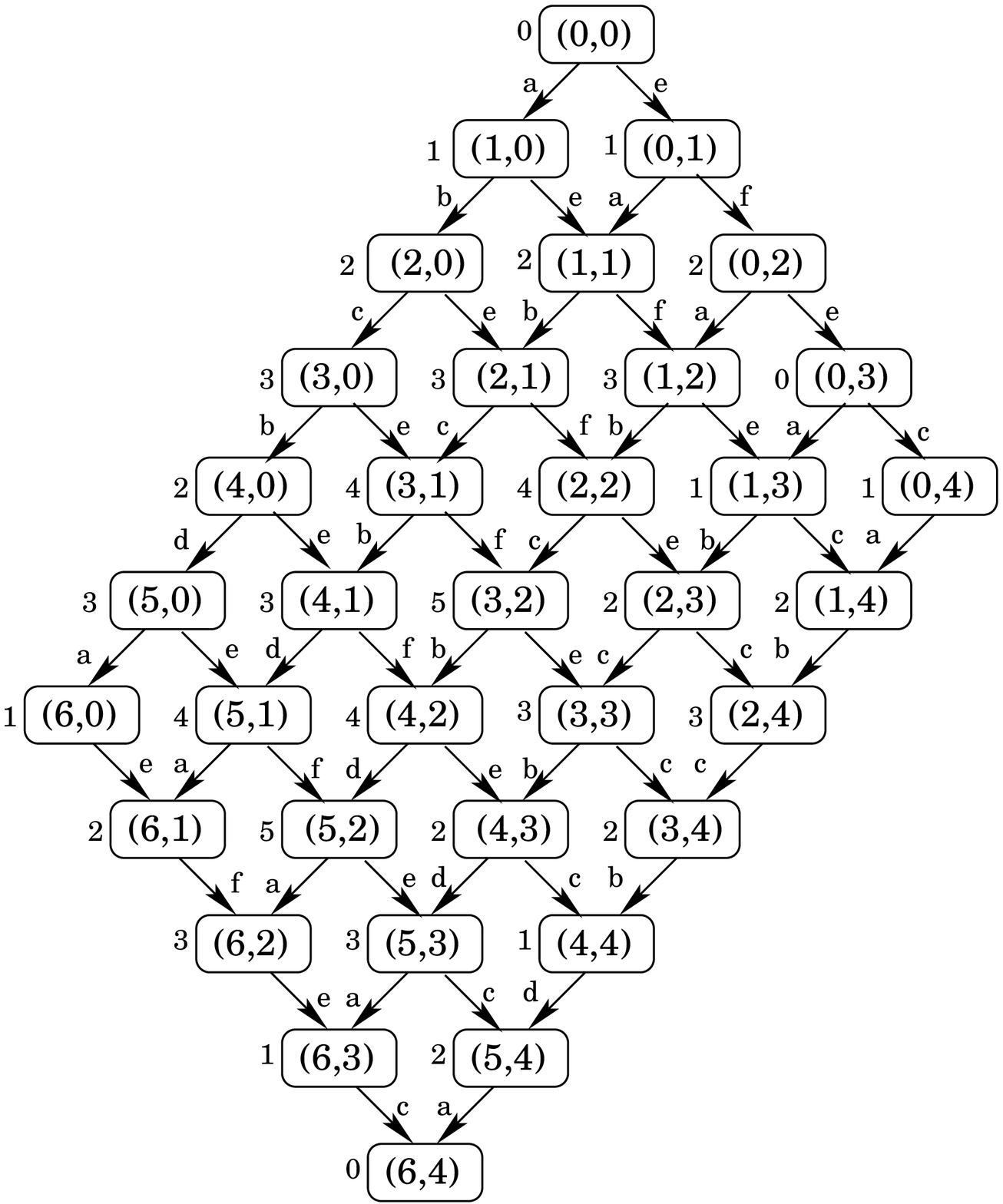}}
\caption{State digraph  of Example \ref{EX5}.}
\label{FFP}
}
\end{figure}

Let $\mathcal{P}(Q)$ the set of all paths from the initial state
to the final state in $\s(Q)$. Every path $P\in\mathcal{P}(Q)$ 
has $r=1+\sum_{i=1}^{k}n_i$ vertices, i.e. $P=(v_0,\ldots,v_r)$. 
First we observe that every path in $\mathcal{P}(Q)$
leads to a directed path-decomposition for  $\g(Q)$. 

\begin{lemma}\label{lem-paths}
Let $Q$ be some set of $k$ sequences and $(v_0,\ldots,v_r)\in\mathcal{P}(Q)$.
Then the sequence
$(\ac(v_1),\ldots,\ac(v_{r-1}))$ is a directed path-decomposition  for $\g(Q)$. 
\end{lemma}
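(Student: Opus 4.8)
The plan is to verify the three defining conditions (dpw-1), (dpw-2), (dpw-3) directly for the sequence $\mathcal{X}=(\ac(v_1),\ldots,\ac(v_{r-1}))$, using the fact that along a path in $\mathcal P(Q)$ exactly one item (or, in the $M$-case, a pair consisting of the single item together with the one it shadows) is ``scanned'' at each step, so the sets $L(v_j)$ grow monotonically from $\emptyset$ to all of the items of $Q$ and $R(v_j)$ shrinks correspondingly. The key observation to record first is that for a vertex $v=v_j$ on the path, $t\in\ac(v)$ exactly when type $t$ has an item already scanned (in $L(v)$) and an item not yet scanned (in $R(v)$), or $t$ is represented by a singleton item currently at a scanned position of $M(v)$; this is just the definition of $\ac$, and it is the form I will use throughout.

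First I would prove (dpw-3), the interval/consecutiveness condition, since it is the cleanest: fix a type $t$ and look at the set of indices $j$ with $t\in\ac(v_j)$. As $j$ increases, the last item of type $t$ stays in $R(v_j)$ until the step that scans it, and the first item of type $t$ enters $L(v_j)$ at the step that scans it; between the scan of the first item of type $t$ and the scan of the last item of type $t$, type $t$ has both a scanned and an unscanned item, hence lies in $\ac(v_j)$, and outside that range it does not (for the singleton case the $M$-term handles the one borderline step). So the indices form a contiguous block, giving (dpw-3). Next (dpw-1): every type $t$ has, at the moment just after its first item is scanned, that item in $L$ and — if another item of the same type exists — a later item in $R$, so $t\in\ac$ at that bag; if $t$ has only one item, that item lies in $M$ at the step it is scanned, so again $t\in\ac$ at that bag. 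Hence every type appears in some bag. (Here I use Observation~\ref{main_q} implicitly: whether a type is a ``singleton'' is with respect to all of $Q$, matching the $M(v)$ definition.)

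The main work is (dpw-2): for every arc $(u,v)\in A$ of $\g(Q)$ I must produce bags $X_a\ni u$, $X_b\ni v$ with $a\le b$. By Lemma~\ref{fiandl-k}(\ref{fiandla-k-a}) there is a sequence $q_i$ with $\FI(q_i,u)<\LA(q_i,v)$. Consider the step along the path that scans the item of $q_i$ at position $\FI(q_i,u)$; call the resulting vertex $v_a$. At $v_a$, that item of type $u$ is in $L(v_a)$, and since $\FI(q_i,u)<\LA(q_i,v)\le n_i$ there is still an unscanned item of type $u$ at position $\LA(q_i,u)\ge\FI(q_i,u)$ — wait, that only helps if $\LA(q_i,u)>\FI(q_i,u)$; if $u$ is a singleton in $q_i$ I instead use that at this step the item sits in $M$ or that $u$ has an occurrence in another sequence, and in the worst case I simply take $a$ to be the step where $u$'s first scanned item is first matched, noting $u\in\ac(v_a)$ by the (dpw-1) argument. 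Symmetrically, consider the step that scans position $\LA(q_i,v)$ of $q_i$, with resulting vertex $v_b$; just before that step, the item of type $v$ at position $\LA(q_i,v)$ is in $R$, and its first occurrence at position $\FI(q_i,v)\le\LA(q_i,v)$ may or may not be scanned — if $\FI(q_i,v)<\LA(q_i,v)$ then after scanning $\FI(q_i,v)$ type $v$ is active until $\LA(q_i,v)$ is scanned, so pick $v_b$ anywhere in that active range; if $v$ is a singleton in $q_i$ its single item lies in $M$ at the step it is scanned. The crucial inequality to check is $a\le b$: the step scanning position $\FI(q_i,u)$ of $q_i$ precedes the step scanning position $\LA(q_i,v)$ of $q_i$ because $\FI(q_i,u)<\LA(q_i,v)$ and along any path in $\s(Q)$ the positions within each fixed sequence are scanned in increasing order; choosing $v_a,v_b$ within the respective active ranges as described keeps $a\le b$. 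Thus $u\in\ac(v_a)$, $v\in\ac(v_b)$, $a\le b$, giving (dpw-2).

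The main obstacle I expect is the bookkeeping around singleton types and the $M(v)$ correction terms — i.e.\ making the ``$u$ is active at bag $v_a$'' and ``$v$ is active at bag $v_b$'' claims airtight in the boundary cases where a type occurs exactly once in the relevant sequence (or, across several sequences, has all but one occurrence already scanned). The clean way to handle this is to reduce to $M(Q)$ first via Observation~\ref{main_q} so that each type has at most two items per sequence, and then to push through the case analysis exactly as the definition of $c_j$ (the five cases for the increment of $|\ac|$) already anticipates; once that is set up, (dpw-1)--(dpw-3) all follow from the single monotonicity fact that $L(v_j)$ is non-decreasing along the path and the definition of $\ac$. I would present the monotonicity fact as a short preliminary claim and then dispatch the three conditions in order, spending most of the prose on (dpw-2).
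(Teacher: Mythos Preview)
Your approach matches the paper's: verify (dpw-1)--(dpw-3) directly from the monotonicity of $L(v_j)$ along the path together with the definition of $\ac$. The paper is in fact terser than your sketch and glosses over precisely the singleton/$M(v)$ boundary cases you flag, so your caution there is well placed.

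Two points to tighten. First, $M(v)$ singles out types that occur exactly once in all of $Q$, not merely in one sequence $q_i$; so the phrase ``$u$ is a singleton in $q_i$'' is the wrong trigger for the $M$-clause, and your (dpw-2) case split should be on whether $u$ (resp.\ $v$) has one or several items in $Q$. Second, reducing to $M(Q)$ via Observation~\ref{main_q} does not help here: the lemma concerns a fixed path in $\s(Q)$, and passing to $M(Q)$ changes the state digraph and hence the path. For (dpw-2) the clean repair is: with $q_\ell$ and positions $m<m'$ witnessing the arc $(u,v)$, take $a$ to be the step at which the \emph{globally} first item of type $u$ is scanned (so $u\in\ac(v_a)$ by your (dpw-1) argument); since $b_{\ell,m'}$ of type $v$ still lies in $R$ at the step where $b_{\ell,m}$ is scanned, the last scan of a type-$v$ item occurs at some step strictly after $a\ge f_u$, and any step in $v$'s active interval at or above $a$ then serves as $b$.
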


\begin{proof}
Let $\g(Q)=(V,A)$ be the sequence digraph, $\s(Q)=(S,B)$ be the state digraph,
and $P=(v_0,\ldots,v_r)$ be an arbitrary path from the initial state
to the final state in $\s(Q)$. We next will verify the three conditions of
a directed path-decomposition. Since  $\types(Q)=V$ we identify
the vertices with their types.
\begin{itemize}
\item For every item $b$ of $Q$ there is a vertex $v_i\in S$, $1\leq i \leq r-1$, in $P$ such
that $b\in L(v_i)-L(v_{i-1})$. 
The definition of $\ac(v_i)$ implies that $t(b)\in \ac(v_i)$. It follows that  
$V=\ac(v_1)\cup\ldots \cup \ac(v_{r-1})$.

\item Let $(t,t')\in A$. By the definition of $\g(Q)$ there is one sequence $q_\ell$
such that there are two items $b_{\ell,m}$ of type $t$ and $b_{\ell,m'}$
of type $t'$ such that $m<m'$. Thus there is a vertex $v_i$ in $P$ such
that $b_{\ell,m}\in L(v_i)-L(v_{i-1})$ and a vertex $v_j$ in $P$ such
that $b_{\ell,m'}\in L(v_j)-L(v_{j-1})$. Thus 
$t \in \ac(v_i)$ and $t' \in \ac(v_j)$ for  $i \leq j$.

\item Let  $t \in \ac(v_i)$ and $t \in \ac(v_j)$ for
$i < j$. 
Thus there are four items $b$, $b'$, $b''$, and $b'''$ of type $t$ in $Q$
such that there is a vertex $v_i$ in $P$ such
that $b\in L(v_i)$ and $b'\in R(v_i)$ and a vertex $v_j$ in $P$ such
that $b''\in L(v_j)$  and $b'''\in R(v_j)$. Thus it
holds $b\in L(v_\ell)$ and $b'''\in R(v_\ell)$ for all $i\leq \ell\leq j$.
Thus $t\in \ac(v_\ell)$ for all indices $\ell$ 
with $i \leq \ell \leq j$.

\end{itemize}
Thus $(\ac(v_1),\ldots,\ac(v_{r-1}))$ is a directed path-decomposition  for $\g(Q)$.
\end{proof}

\begin{example}[Paths in $\s(Q)$]\label{EX5a} We consider two paths in  $\s(Q)$
of Example \ref{EX5}.
\begin{enumerate}
\item $P_1=((0,0),(0,1),(0,2),(1,2),(2,2),(3,2),(3,3),(3,4),(4,4),(5,4),(6,4))$
leads to $$(\{e\},\{e,f\},\{a,e,f\},\{a,b,e,f\},\{a,b,c,e,f\},\{a,b,c\},\{a,b\},\{a\},\{a,d\})$$
a directed path-decomposition of width $4$ for  $\g(Q)$.

\item  $P_2=((0,0),(1,0),(1,1),(1,2),(1,3),(2,3),(3,3),(4,3),(5,3),(6,3),(6,4))$
leads to $$(\{a\},\{a,e\},\{a,e,f\},\{a\},\{a,b\},\{a,b,c\},\{a,c\},\{a,c,d\},\{c\})$$
a directed path-decomposition of width $2$ for  $\g(Q)$.

\end{enumerate}
 \end{example}

Example \ref{EX5a} shows that $P\in\mathcal{P}(Q)$  can lead to directed
path-decompositions of non-optimal
width.
Lemma \ref{lem-paths} leads  to an upper bound on the directed
path-width of $\g(Q)$ using the state graph. The reverse direction is considered 
next.

\begin{lemma}\label{lem-paths2}
Let $Q$ be some set of $k$ sequences.
If there is a  directed path-decomposition of width $p-1$ for $\g(Q)$,
then there is a path $(v_0,\ldots,v_r)\in\mathcal{P}(Q)$ such that 
for every $1\leq i\leq r$ it holds $|\ac(v_i)|\leq p$.
\end{lemma}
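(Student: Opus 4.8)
The idea is to reverse the construction in Lemma~\ref{lem-paths}: given a directed path-decomposition of small width, I would first normalize it, then read off a path in $\s(Q)$ by tracking, for each sequence $q_j$, how many of its items have been ``processed'' so far. The natural normalization is to pass to a \emph{nice} directed path-decomposition $(X_1,\ldots,X_r)$, so that consecutive bags differ by exactly one vertex (an introduce bag or a forget bag), without increasing the width; this is legitimate by the remark after the definition of nice decompositions. Every vertex is then introduced exactly once and forgotten exactly once, so the bags induce a linear order $\pi$ on $V=\types(Q)$ by the time of introduction (equivalently, by the time of forgetting, since the ``interval'' property forces these two orders to be compatible in the relevant sense). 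The key point I would extract is: for every arc $(t,t')\in A$ there are indices $i\le j$ with $t\in X_i$, $t'\in X_j$, hence $t$ is introduced no later than $t'$ is forgotten; combined with property (dpw-1) this says the linear order $\pi$ respects, for each sequence $q_\ell$, the left-to-right order of (first occurrences of) its types — more precisely, if $\FI(q_\ell,t)<\LA(q_\ell,t')$ then $t$ must be introduced before $t'$ is forgotten.

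\medskip
\textbf{Building the path.} Given this linear order $\pi=(t_1,t_2,\ldots,t_n)$ on $\types(Q)$, I would construct the path $(v_0,\ldots,v_r)\in\mathcal{P}(Q)$ greedily. Starting from the initial state $v_0=(0,\ldots,0)$, I process the types in the order $\pi$; when I reach type $t_s$, I advance each sequence pointer $i_j$ forward (by repeated single-step arcs in $\s(Q)$, one per item) until the pointer in $q_j$ has passed exactly the items of $q_j$ whose type is among $\{t_1,\ldots,t_s\}$ and that lie in the ``$F$-block'' of $t_s$ — i.e.\ I advance just far enough in every sequence to have scanned the first occurrence of $t_s$ there, but no first occurrence of any later type. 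The subtlety is that I must interleave the single-step arcs from different sequences so that no intermediate state has too large an $|\ac(\cdot)|$; this is where the order $\pi$ is used, since $\pi$ tells me a consistent global order in which types ``become active.'' Since $\sum_j n_j$ single steps are taken in total and the final state $(n_1,\ldots,n_k)$ is reached, the constructed walk is indeed a path in $\mathcal{P}(Q)$ with $r=1+\sum_i n_i$ vertices.

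\medskip
\textbf{The width bound.} It remains to show $|\ac(v_i)|\le p$ for every intermediate state $v_i$ on the constructed path. Here I would argue that $\ac(v_i)$, the set of currently active types, injects into some bag $X_{m(i)}$ of the given decomposition. Concretely, a type $t$ is active at $v_i$ precisely when some item of type $t$ has already been scanned (it lies in $L(v_i)$) while some item of type $t$ has not (it lies in $R(v_i)$), or the unique scanned item of type $t$ is in $M(v_i)$. By the construction, ``$t$ already partially scanned'' means $t$ appears among the first so-many types of $\pi$, i.e.\ $t$ has been introduced; ``$t$ not fully scanned'' translates, via the order-respecting property of $\pi$, into $t$ not yet forgotten. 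Hence each active type is a vertex that is introduced-but-not-forgotten at the corresponding moment, so all of them sit in a common bag $X_{m(i)}$, giving $|\ac(v_i)|\le |X_{m(i)}|\le p$. I would handle the $M(v_i)$ contribution (types with a single item in $Q$) separately: such a type $t$ has only one item, so $\FI(q,t)=\LA(q,t)$ in the one sequence $q$ containing it; I would check that the greedy construction is set up so that the state at which this lone item is scanned is exactly the state where $t$ would otherwise ``appear and disappear'' — i.e.\ $t$ is counted in $\ac$ at exactly one state, and at that state $t$ is simultaneously not-yet-forgotten and already-introduced in the nice decomposition, so it again lies in the relevant bag.

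\medskip
\textbf{Main obstacle.} The delicate part is not the counting argument but the \emph{interleaving}: showing that the single-step arcs of $\s(Q)$ coming from different sequences can be ordered so that no intermediate state exceeds the target width. The danger is that advancing sequence $q_1$ to reach the first occurrence of $t_s$ there might force scanning, as a side effect, the last occurrence of some already-active type (harmless, it decreases $|\ac|$) or, worse, create a state where a type that is ``active in the decomposition sense'' has not yet been activated in $\s(Q)$ along this particular interleaving. I expect the resolution to be that one should process the types strictly in the $\pi$-order and, for each $t_s$, advance \emph{all} sequences' pointers up to $t_s$'s first occurrence before moving on to $t_{s+1}$; then the set of active types after finishing $t_s$ is exactly an initial segment of $\pi$ minus those whose last occurrences have all been passed, which is contained in the bag of the nice decomposition right after $t_s$ is introduced. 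Verifying that no \emph{intermediate} state within the block for $t_s$ violates the bound — because within that block one is only adding items of types $t_1,\ldots,t_s$ (already introduced) or removing items — is the technical heart, and I would carry it out by a careful case analysis mirroring the five cases in the definition of $c_j$ above.
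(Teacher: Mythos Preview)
Your overall strategy coincides with the paper's: pass to a nice directed path-decomposition, use the order in which types are introduced to drive a greedy left-to-right scan of the $k$ sequences, and bound $|\ac(v_i)|$ by showing $\ac(v_i)$ sits inside some bag. However, you omit one step that the paper's proof relies on, and without it your key invariant is false.

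The paper does \emph{not} use the introduce order of an arbitrary nice decomposition. It first defines the partial order $\alpha_Q$ on $\types(Q)$ by $\alpha_Q(t)<\alpha_Q(t')$ whenever $\FI(q_i,t)<\FI(q_i,t')$ for every $q_i$ containing both, and then refines the given decomposition into a nice one whose introduce bags are ordered compatibly with $\alpha_Q$. This refinement is what guarantees that when you reach $t_s$ there is always a sequence whose current front item has type $t_s$, so that ``within the block for $t_s$ one is only adding items of types $t_1,\ldots,t_s$''. With an arbitrary $\pi$ this fails. Take $Q=\{q_1\}$ with $q_1=[a,c,b]$; then $(\{a\},\{b,c\})$ is a width-$1$ decomposition, and one legitimate nice refinement gives $\pi=(a,b,c)$. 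Processing $b$ in your scheme requires advancing the pointer past position~$2$, which is the \emph{first} occurrence of the later type $c$, so your stated rule ``scan the first occurrence of $t_s$ but no first occurrence of any later type'' is infeasible, and your asserted invariant that only already-introduced types get activated is violated. The inclusion $\ac(v_i)\subseteq X_{m(i)}$ that your width bound rests on then has no justification.

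The fix is exactly what the paper does: before running the greedy scan, reorder the introduce bags so that whenever $\FI(q_i,t)<\FI(q_i,t')$ holds in every common sequence, $t$ is introduced before $t'$. Once you add this step your argument goes through essentially as written.
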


\begin{proof}
We define a partial order on the types in $Q$.
Let $\alpha_Q: \types(Q)\rightarrow \NN$ such that for all  $t,t'\in \types(Q)$ 
it holds $\alpha_Q(t)<\alpha_Q(t')$,
if for all $q_i\in Q$ such that $t,t'\in \types(q_i)$
it holds $\FI(q_i,t)<\FI(q_i,t')$.

Let ${\mathcal X} =(X_1, \ldots, X_s)$  be a directed path-decomposition of width $p-1$  for $\g(Q)$.
We transform ${\mathcal X}$ into a nice  directed path-decomposition ${\mathcal X}' =(X'_1, \ldots, X'_{s'})$ 
such that the introduce bags are ordered w.r.t.~$\alpha_Q$. 
The order of the introduce bags it denoted by $(t_1,\ldots,t_m)$ which is 
used in algorithm {\sc Transform} given in Figure \ref{fig:algorithm3x}.
By the ordering of the introduce nodes for every $1\leq j \leq m$ in our for loop we find a 
sequence $q$ such that at the first position of the unprocessed part of $q$ 
there is a an item $b$ such that $t(b)=t_j$.
This allows us to remove this item $b$ and define a feasible successor state in $\s(Q)$.
The removal of further items $b$ from the first positions such that $t(b)\in S$ also
defines feasible successor states in $\s(Q)$. Thus the order in which the items
are removed from the sequences in line $(*)$ of algorithm
{\sc Transform} defines a path $(v_0,\ldots,v_r)\in\mathcal{P}(Q)$.

Further the order $\alpha_Q$ implies that for every state $v_i$, $1\leq i \leq r$
there is some bag $X'_j$, $1\leq j\leq s'$ such that $\ac(v_i)\subseteq X'_j$.
This implies that for every $1\leq i\leq r$ it holds $|\ac(v_i)|\leq p$.
\end{proof}

\begin{figure}[ht]
\hrule
{\strut\footnotesize \bf Algorithm {\sc Transform}($(t_1,\ldots,t_m)$)} 
\hrule
\footnotesize 
\smallskip
\begin{tabbing}
xxxx \= xxxx \= xxxx \= xxxx \= \kill
$S := ()$ \>\> {\it Comment:} $S$ contains later on the order of the types in which they are visited first\\
{\bf for} $j=1$ {\bf to} 1 $m$ \{ \\
\> {\bf if} ($\exists q \in Q: t(q.first()) = t_j$) \{ \\
\>\> $S.append(t_j)$ \\
\>\>{\bf while} ($\exists q \in Q: t(q.first()) \in S$) \\
\>\>\> remove the first item of sequence $q$ ~~~~$(*)$\\
\>\} \\

\}
\end{tabbing}
\normalsize
\hrule
\caption{Transforming nice directed path-decomposition of
$\g(Q)$ into a  path in $\mathcal{P}(Q)$.}
\label{fig:algorithm3x}
\end{figure}

\begin{example} \label{EX7}
We consider the set $Q = \{q_1, q_2, q_3\}$ of the sequences $q_1 = [a,a,d,e,d]$,
$q_2 = [c,b,b,d]$, and $q_3 = [c,c,d,e,d]$ from Example \ref{EX6}. 
Then ${\mathcal X} =(X_1,X_2 , X_3)$, where $X_1=\{a,c\}$, $X_2=\{b\}$, and $X_3=\{d,e\}$  is a 
directed path-decomposition of width $1$  for $\g(Q)$ shown in Figure \ref{F04}.

Our partial order $\alpha_Q$ has to fulfill $a<d<e$, $c<b<d$, and $c<d<e$.
Then  ${\mathcal X}' =(X'_1, \ldots, X'_{11})$, where $X'_1=\emptyset$, $X'_2=\{a\}$,
$X'_3=\{a,c\}$, $X'_4=\{a\}$, $X'_5=\emptyset$, $X'_6=\{b\}$, $X'_7=\emptyset$, $X'_8=\{d\}$, $X'_9=\{d,e\}$,
$X'_{10}=\{d\}$, and $X'_{11}=\emptyset$
is a nice directed path-decomposition of width $1$  for $\g(Q)$ such that 
the introduce bags are ordered w.r.t.~$\alpha_Q$.
The execution of Algorithm {\sc Transform}  in Figure \ref{fig:algorithm3x} for the order of introduce
bags $t=(a,c,b,d,e)$ and the defined path in  state digraph  $\s(Q)$ is shown in Table \ref{TB2}. 
\end{example}

\begin{table}[hbtp]
\[\begin{array}{|r|lllll|l|c|c|c|}
\hline
j  & q_1 & ~ & q_2 &  ~ & q_3 & S & (v_0,\ldots,v_{15})\in\mathcal{P}(Q) &\ac(v_i) & |\ac(v_i)| \\
\hline
 0 & [a,a,d,e,d] & ~ & [c,b,b,d] & ~ & [c,c,d,e,d]& () & (0,0,0) &\emptyset & 0 \\
\hline
 1 & [a,d,e,d] & ~ & [c,b,b,d] &  ~ & [c,c,d,e,d] &  (a) & (1,0,0) & \{a\} & 1\\
   & [d,e,d] & ~ & [c,b,b,d] &  ~ & [c,c,d,e,d] &  (a) & (2,0,0) & \emptyset &0 \\
\hline
 2 & [d,e,d] & ~ & [b,b,d] &  ~ & [c,c,d,e,d] &  (a,c)  & (2,1,0)  & \{c\}&1\\
   & [d,e,d] & ~ & [b,b,d] &  ~ & [c,d,e,d] &  (a,c)  & (2,1,1)  & \{c\}&1 \\
   & [d,e,d] & ~ & [b,b,d] &  ~ & [d,e,d] &  (a,c)  & (2,1,2)  & \emptyset &0 \\
\hline
 3 & [d,e,d] & ~ & [b,d] & ~ & [d,e,d] &  (a,c,b) & (2,2,2)  &  \{b\} & 1 \\
   & [d,e,d] & ~ & [d] & ~ & [d,e,d] &  (a,c,b) & (2,3,2)  &  \emptyset &0\\
\hline
 4 & [e,d] & ~ & [d] & ~ & [d,e,d] &  (a,c,b,d) & (3,3,2) &  \{d\} &1\\
& [e,d] & ~ & [] & ~ & [d,e,d] &  (a,c,b,d) & (3,4,2) & \{d\}  &1\\
& [e,d] & ~ & [] & ~ & [e,d] &  (a,c,b,d) &  (3,4,3) &  \{d\} &1\\
\hline
 5 & [d] & ~ & [] & ~ & [e,d]  & (a,c,b,d,e) &  (4,4,3) &  \{d,e\}  & 2 \\
 & [] & ~ & [] & ~ & [e,d]  & (a,c,b,d,e) & (5,4,3)  & \{d,e\} &2\\
& [] & ~ & [] & ~ & [d]  & (a,c,b,d,e) &  (5,4,4)  & \{d\} &1\\
& [] & ~ & [] & ~ & []  & (a,c,b,d,e) &  (5,4,5)  &  \emptyset &0  \\
\hline
\end{array}\]
\caption{The execution of Algorithm   {\sc Transform} for $t=(a,c,b,d,e)$ and the 
list $Q = (q_1, q_2, q_3)$ of the sequences $q_1 = [a,a,d,e,d]$,
$q_2 = [c,b,b,d]$, and $q_3 = [c,c,d,e,c]$ from Example \ref{EX6}.}
\label{TB2} 
\end{table}

\begin{example} \label{EX7_2}
We consider the set $Q =\{q_1, q_2\}$ of the sequences $q_1 = [a,b,c,b,d,a]$
and $q_2 = [e,f,e,c]$ from Example \ref{EX5}.
Then ${\mathcal X} =(X_1, X_2, X_3)$, where $X_1=\{c,e,f\}$, $X_2=\{a,b,c\}$, and $X_3=\{a,b,d\}$  is a 
directed path-decomposition of width $2$  for $\g(Q)$ shown in Figure \ref{FFS}.

Our partial order $\alpha_Q$ has to fulfill $a<b<c<d$  and $e<f<c$.
Then ${\mathcal X}' =(X'_1, \ldots, X'_{13})$, where $X'_1=\emptyset$, $X'_2=\{e\}$,  
$X'_3=\{e,f\}$, $X'_4=\{c,e,f\}$, $X'_5=\{c,f\}$, $X'_6=\{c\}$, $X'_7=\{a,c\}$, $X'_8=\{a,b,c\}$, $X'_9=\{a,b\}$,
$X'_{10}=\{a,b,d\}$, $X'_{11}=\{b,d\}$, $X'_{12}=\{d\}$, and $X'_{13}=\emptyset$
is a nice directed path-decomposition of width $2$  for $\g(Q)$ such that 
the introduce bags are ordered w.r.t.~$\alpha_Q$.
The execution of Algorithm {\sc Transform}  in Figure \ref{fig:algorithm3x} for  the order of introduce
bags $t=(e,f,c,a,b,d)$ and the defined path in state digraph $\s(Q)$ is shown in Table \ref{TB2x}. 
\end{example}

\begin{table}[hbtp]
\[\begin{array}{|r|lll|l|c|c|c|}
\hline
j  & q_1 & ~ & q_2 &   S & (v_0,\ldots,v_r)\in\mathcal{P}(Q) & \ac(v_i) & |\ac(v_i)| \\
\hline
 0 &  [a,b,c,b,d,a] & ~ &[e,f,e,c] &  () & (0,0) & \emptyset &0 \\
\hline
 1 &  [a,b,c,b,d,a] & ~ &[f,e,c] &  (e) & (0,1) & \{e\} &1 \\
\hline
 2 &  [a,b,c,b,d,a] & ~ &[e,c] &  (e,f) & (0,2) & \{e,f\}&  2 \\
 &  [a,b,c,b,d,a] & ~ &[c] &  (e,f) & (0,3)  & \emptyset&  0 \\
\hline
3 &  [a,b,c,b,d,a] & ~ &[] &  (e,f,c) & (0,4)  & \{c\} &  1 \\
\hline
4 &  [b,c,b,d,a] & ~ &[] &  (e,f,c,a) & (1,4) & \{a,c\}&  2 \\
\hline
5 &  [c,b,d,a] & ~ &[] &  (e,f,c,a,b) & (2,4) & \{a,b,c\}&  3 \\
  &  [b,d,a] & ~ &[] &  (e,f,c,a,b) & (3,4) & \{a,b\}&  2 \\
&  [d,a] & ~ &[] &  (e,f,c,a,b) & (4,4) & \{a\}&  1 \\
\hline
6&  [a] & ~ &[] &  (e,f,c,a,b,d) & (5,4) & \{a,d\}&  2 \\
 &  [] & ~ &[] &  (e,f,c,a,b,d) & (6,4) & \emptyset &0 \\
\hline
\end{array}\]
\caption{The execution of Algorithm   {\sc Transform} for $t=(e,f,c,a,b,d)$ and the 
list $Q = (q_1, q_2)$ of the sequences $q_1 = [a,b,c,b,d,a]$
and $q_2 = [e,f,e,c]$ from Example \ref{EX5}.}
\label{TB2x} 
\end{table}

By Lemma \ref{lem-paths} and Lemma \ref{lem-paths2} we obtain the following result.

\begin{corollary}\label{th-ax1}
Given some set $Q$ of $k$ sequences, then
$$\dpw(\g(Q))=\min_{(v_0,\ldots,v_r)\in \mathcal{P}(Q)}\max_{1\leq i \leq r-1}|\ac(v_i)|-1.$$  
\end{corollary}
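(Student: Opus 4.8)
The plan is to derive Corollary~\ref{th-ax1} directly from the two preceding lemmas, which together give matching upper and lower bounds. The key observation is that Lemma~\ref{lem-paths} and Lemma~\ref{lem-paths2} are exactly the two inequalities one needs, so the proof is short and consists of chaining them.

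First I would prove the inequality $\dpw(\g(Q)) \leq \min_{(v_0,\ldots,v_r)\in\mathcal{P}(Q)} \max_{1\leq i\leq r-1}|\ac(v_i)| - 1$. Let $(v_0,\ldots,v_r)\in\mathcal{P}(Q)$ be any path attaining the minimum of $\max_{1\leq i\leq r-1}|\ac(v_i)|$. By Lemma~\ref{lem-paths}, the sequence $(\ac(v_1),\ldots,\ac(v_{r-1}))$ is a directed path-decomposition for $\g(Q)$, and its width is $\max_{1\leq i\leq r-1}|\ac(v_i)|-1$ by definition of the width of a directed path-decomposition. Hence $\dpw(\g(Q))$, being the smallest width over all directed path-decompositions, is at most this value.

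Next I would prove the reverse inequality $\dpw(\g(Q)) \geq \min_{(v_0,\ldots,v_r)\in\mathcal{P}(Q)} \max_{1\leq i\leq r-1}|\ac(v_i)| - 1$. Set $p-1 = \dpw(\g(Q))$, so there is a directed path-decomposition of $\g(Q)$ of width $p-1$. By Lemma~\ref{lem-paths2}, there is a path $(v_0,\ldots,v_r)\in\mathcal{P}(Q)$ with $|\ac(v_i)|\leq p$ for all $1\leq i\leq r$; in particular $\max_{1\leq i\leq r-1}|\ac(v_i)|\leq p$, so this particular path witnesses that the minimum over $\mathcal{P}(Q)$ of $\max_{1\leq i\leq r-1}|\ac(v_i)|$ is at most $p = \dpw(\g(Q))+1$. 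Rearranging gives the claimed lower bound, and combining the two inequalities yields the equality.

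There is essentially no obstacle here, since all the real work is already done in Lemmas~\ref{lem-paths} and~\ref{lem-paths2}; the only point requiring a small amount of care is the bookkeeping with the index ranges (the path has vertices $v_0,\ldots,v_r$, but only the intermediate bags $\ac(v_1),\ldots,\ac(v_{r-1})$ form the decomposition, while Lemma~\ref{lem-paths2} bounds $|\ac(v_i)|$ for all $1\leq i\leq r$) and the off-by-one between ``width'' and ``number of active types''. Both are routine once one writes out the definitions carefully. The proof therefore reduces to: ``Immediate from Lemma~\ref{lem-paths} (upper bound) and Lemma~\ref{lem-paths2} (lower bound).''
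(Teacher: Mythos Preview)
Your proposal is correct and matches the paper's approach exactly: the paper simply states that the corollary follows from Lemma~\ref{lem-paths} and Lemma~\ref{lem-paths2}, and your argument spells out precisely how those two lemmas give the upper and lower bounds. Your attention to the index-range and off-by-one bookkeeping is, if anything, more careful than the paper itself.
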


In order to apply Corollary \ref{th-ax1}
we consider some general digraph problem.
Let $G=(V,A,f)$ be a directed acyclic vertex-labeled graph. 
Function $f: V \to \ZZ$
assigns to every vertex $v \in V$  a value $f(v)$. Let
$s \in V$ and $t \in V$ be two vertices. For some vertex $v \in V$ and some
path $P=(v_1, \ldots, v_{\ell})$ with $v_1 = s$, $v_{\ell} = v$ and
$(v_i, v_{i+1}) \in A$ we define
$val_P(v) := \max_{u \in P} (f(u))$. Let ${\cal P}_s(v)$ denote the set
of all paths from vertex $s$ to vertex $v$. We define
$val(v) := \min_{P \in {\cal P}_s(v)} (val_P(v))$.
Then it holds:
$$val(v) = \max\{ f(v), \min_{u \in N^-(v)}(val(u)) \}.$$
%
%
In Figure \ref{fig:algorithm3dp} we give a dynamic 
programming solution to compute all the values of 
$val(v)$, $v\in V$ in time $\bigo(|V|+|A|)$.
This is possible, since the graph is directed and acyclic.

\begin{figure}[h]
\hrule
{\strut\footnotesize \bf Algorithm {\sc Dynamic Programming} (DP)}
\hrule
\begin{tabbing}
xxxx \= xxxx \= xxxx \= xxxx \= \kill
$val[s] := f(s)$ \\
for every vertex $v \neq s$ in order of $topol$ do $\{$\\
\> $val[v] := \infty$ \\
\> for every $u \in N^-(v)$ do \\
\>\> if ($val[u] < val[v]$) \\
\>\>\> $val[v] := val[u]$;         ~~~~~\textcolor{red}{$val[v]=\min_{u \in N^-(v)}val(u)$}\\
\> if ($val[v] < f(v)$) \\
\>\> $val[v] := f(v)$              ~~~~~~~~~~~~~~~\textcolor{red}{$val[v]=\max\{ f(v), \min_{u \in N^-(v)}val(u) \}$}  \\
\> $\}$
\end{tabbing}
\hrule
\caption{Computing a path where the maximum value is as small as possible.}
\label{fig:algorithm3dp}
\end{figure}

Corollary \ref{th-ax1} allows us to apply the algorithm given in Figure \ref{fig:algorithm3dp} 
on  $\s(Q)$ to compute the directed path-width of $\g(Q)$.


\begin{theorem}\label{th-axx}
Given some set $Q$, such that  $\g(Q)\in S_{k,\ell}$ for some $\ell \geq 1$, then the 
directed path-width of $\g(Q)$ and a directed path-decomposition 
can be computed in time 
$\bigo(k\cdot (1+\max_{1\leq i \leq k}n_i)^{k})$.
\end{theorem}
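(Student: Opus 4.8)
The plan is to read the answer off Corollary~\ref{th-ax1}, which states $\dpw(\g(Q))=\min_{(v_0,\ldots,v_r)\in\mathcal{P}(Q)}\max_{1\le i\le r-1}|\ac(v_i)|-1$: this is a bottleneck (``min of max'') value of a path from the initial to the final state in the state digraph $\s(Q)$, where each state $v$ carries the weight $f(v):=|\ac(v)|$. So I would (a) construct $\s(Q)$ together with a topological order; (b) compute all labels $f(v)$; (c) run Algorithm {\sc Dynamic Programming} of Figure~\ref{fig:algorithm3dp} on $\s(Q)$, whose output $val(\text{final state})$ equals $\dpw(\g(Q))+1$; and (d) trace back a path realising the minimum and output the associated directed path-decomposition. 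The time bound then rests on the observation that $\s(Q)$ has $N:=\prod_{i=1}^{k}(n_i+1)\le(1+\max_{1\le i\le k}n_i)^k$ states and out-degree at most $k$, hence at most $kN$ arcs, and that each of (a)--(d) can be done in $\bigo(kN)$ time. It is convenient --- and, as the footnote to the definition of $c_j$ notes, harmless to $\g(Q)$ --- to first insert a duplicate of every item whose type occurs only once in $Q$; this empties $M(\cdot)$, makes $f$ vanish at both the initial and the final state, shrinks the case distinction for $c_j$ to three cases, and at most doubles each $n_i$; I will assume this has been done.

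For step (a): $\s(Q)$ is acyclic since every arc increases $\sum_j i_j$ by one, so enumerating the $N$ states in ``odometer'' order already yields a topological order, and building $\s(Q)$ along the way costs $\bigo(kN)$. For step (b): after an $\bigo(n+k\cdot|\types(Q)|)$ pass (as in the preprocessing of the algorithm of Figure~\ref{fig:algorithm5}) that tags every item with its type and with the bits ``first of its type in its sequence'' and ``last of its type in its sequence'', I would scan the states in topological order: set $f$ of the initial state to $0$, and for every other state $v$ pick a single in-neighbour $u$ --- say, decrement the smallest nonzero coordinate $j$ --- and put $f(v):=f(u)+c_j$. Since $f(v)=|\ac(v)|$ depends only on $v$ and~(\ref{transStep}) gives exactly this increment, this is correct; evaluating $c_j$ uses the precomputed bits and $\bigo(1)$-time comparisons of $\FI(q_\ell,\cdot),\LA(q_\ell,\cdot)$ against the $i_\ell$ over the at most $k$ sequences, i.e.\ $\bigo(k)$ per state, so $\bigo(kN)$ overall.

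For steps (c) and (d): Algorithm {\sc Dynamic Programming}, run along the topological order from (a), computes $val(v)=\max\{f(v),\min_{u\in N^-(v)}val(u)\}$ for all states in time linear in the number of states plus arcs, i.e.\ $\bigo(kN)$; because the initial and the final state lie on every path in $\mathcal{P}(Q)$ and both have weight $0$, it follows that $val(\text{final state})=\min_{P\in\mathcal{P}(Q)}\max_{1\le i\le r-1}f(v_i)$, which is $\dpw(\g(Q))+1$ by Corollary~\ref{th-ax1}. Recording during the run, for each state, a predecessor attaining the inner minimum, and following these pointers back from the final to the initial state, gives a path $(v_0,\ldots,v_r)\in\mathcal{P}(Q)$ with $\max_{1\le i\le r-1}|\ac(v_i)|-1=\dpw(\g(Q))$; by Lemma~\ref{lem-paths}, $(\ac(v_1),\ldots,\ac(v_{r-1}))$ is a directed path-decomposition of $\g(Q)$ of exactly this width. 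Walking along the path while maintaining $\ac(v_i)$ incrementally --- consecutive active sets differ in $\bigo(1)$ elements --- outputs the decomposition in $\bigo(kN)$ time as well, so the whole algorithm runs in $\bigo(kN)=\bigo\bigl(k\cdot(1+\max_{1\le i\le k}n_i)^k\bigr)$.

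I do not expect a genuinely hard step: the conceptual work is already contained in Corollary~\ref{th-ax1}, Lemma~\ref{lem-paths}, and Algorithm {\sc Dynamic Programming}, and what remains is to assemble them. The part deserving the most care is the running-time accounting --- that $\s(Q)$ has only $\prod_i(n_i+1)\le(1+\max_i n_i)^k$ states (not the cruder $(1+n)^k$) and out-degree $\le k$, and that each label $f(v)$ is derived from one already-processed predecessor via~(\ref{transStep}) in $\bigo(k)$ time, not from all predecessors. The one mildly delicate modelling point is the handling of types that occur only once in $Q$: inserting the dummy duplicates is what makes the two endpoints of every path in $\mathcal{P}(Q)$ weight-free and the $c_j$ case analysis short, and it is what lets the bottleneck-path value of $\s(Q)$ equal $\dpw(\g(Q))+1$ exactly, including degenerate cases such as a single-item sequence.
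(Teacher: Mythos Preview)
Your proposal is correct and follows essentially the same approach as the paper: construct the state digraph $\s(Q)$, use Corollary~\ref{th-ax1} to reduce computing $\dpw(\g(Q))$ to a bottleneck-path problem, solve that with Algorithm {\sc Dynamic Programming}, and bound the running time via the vertex count $\prod_i(n_i+1)\le(1+\max_i n_i)^k$ and out-degree $\le k$. You supply more detail than the paper's own proof --- the incremental computation of $f(v)$ via~(\ref{transStep}), the dummy-item preprocessing (which the paper relegates to a footnote), and the explicit backtracking to output a decomposition via Lemma~\ref{lem-paths} --- but the skeleton is identical.
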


\begin{proof}
Let $Q$ be some set, such that $\g(Q)\in S_{k,\ell}$. 
The state digraph $\s(Q)$  has at most $(1+\max_{1\leq i \leq k}n_i)^k$
vertices and  can be found in time $\bigo(k \cdot (1+\max_{1\leq i \leq k}n_i)^k)$ from $Q$.
By Corollary \ref{th-ax1} the directed path-width of $\g(Q)$ can be computed by considering 
all paths from the initial state to the final state in $\s(Q)$. 
This can be done by the algorithm given in Figure \ref{fig:algorithm3dp} 
on $\s(Q)=(V,A)$ using $f(v)=|\ac(v)|$, $v\in V$, $s$ as the initial state, and $t$ 
as the final state.
Since every vertex of the state digraph has at most $k$ outgoing arcs we have 
$\bigo(|V|+|A|)\subseteq \bigo(k\cdot(1+\max_{1\leq i \leq k}n_i)^{k})$.
Thus we can compute
an optimal path in $\s(Q)$ in time $\bigo(k\cdot(1+\max_{1\leq i \leq k}n_i)^{k})$.
\end{proof}

These results lead to an XP-algorithm
for directed path-width w.r.t.~number of sequences $k$ needed to define the input graph.
This implies that for each constant $k$, it is decidable in polynomial time whether 
for a given set $Q$ on at most $k$ sequences the
digraph $\g(Q)$ has directed path-width at most $w$.
If we know that some graph can be defined by one sequence, we can 
find this in linear time (Proposition \ref{le-find}).
This implies that for each constant $k$, it is decidable in polynomial time whether 
for a digraph $G$, which is given by the union of at most $k$ 
many $\{\co(2\overrightarrow{P_2}),2\overleftrightarrow{K_1},\overrightarrow{C_3},D_4\}$-free
digraphs, digraph $G$ 
has directed path-width at most $w$.

\section{Conclusions and Outlook}\label{sec-concl}

There are several interesting open questions. 
\begin{inparaenum}[\bf(a)]
\item 
In Section \ref{sec-dpw-a} we have shown XP-algorithms for the directed path-width problem
w.r.t.~parameter $k$ (number of sequences). It remains open whether there
are also FPT-algorithms for this parameter. The same question is open for the
standard parameter.
\item How (efficiently) can we also compute the directed tree-width of sequence digraphs?
\item Can Remark \ref{cor-union-transi} be generalized to a (larger) number
of transitive digraphs?
\item 
Does the hardness of Proposition \ref{hard-dpw-deg} also hold for
$c_Q \in \{2,3,4\}$ and for $d_Q = 2$?
\item By  Theorem \ref{s11} one can decide in polynomial time
whether a given digraph belongs to the class $S_{1,1}$. By Lemma
\ref{unionsk1} this is also possible for every class $S_{k,1}$. Furthermore
Theorem \ref{s12} allows to recognize  the classes $S_{1,2}$ in polynomial
time.
It remains to consider this problem for the classes $S_{k,\ell}$ for $k\geq 2$ and $1<\ell\leq 2k$.
\item In order to apply Corollary \ref{th-ax1}
for some given digraph $G$ we need a set $Q$ such that $\g(Q)=G$.
Can we find a set $Q$ with a smallest number of sequences 
in polynomial time?
\item Can we find a {\em simple} set $Q$ with a smallest number of sequences 
in polynomial time?
\item By its equivalence to directed vertex separation number \cite{YC08} 
directed path-width is a vertex ordering problem (layout problem).
This motivates to solve similar problems such as directed cut-width or directed
feedback arc set  (see \cite{BFKKT12} for a survey)
on sequence digraphs. Is there even a general approach to solve  vertex ordering problem 
on  sequence digraphs (see \cite{BFKKT12} for a general approach on undirected
graphs)?
\end{inparaenum}

\section{Acknowledgements} \label{sec-a}

The work of the second author was supported by the German Research 
Association (DFG) grant  GU 970/7-1.



\end{document}